\documentclass[a4paper,10pt]{article}
\usepackage[utf8]{inputenc}
\usepackage{fancyhdr}
\usepackage{graphicx}
\usepackage{amsmath}
\usepackage{amssymb}
\usepackage{hyperref}
\usepackage[all]{hypcap}
\usepackage{pdfpages}
\usepackage{mathrsfs}
\usepackage{caption}
\usepackage{subcaption}
\usepackage{bm}
\usepackage[margin=0.95in]{geometry}
\newtheorem{theorem}{Theorem}[section]
\newtheorem{lemma}[theorem]{Lemma}
\newtheorem{proposition}[theorem]{Proposition}

\newenvironment{proof}[1][Proof]{\begin{trivlist}
\item[\hskip \labelsep {\bfseries #1}]}{\end{trivlist}}
\newenvironment{definition}[1][Definition]{\begin{trivlist}
\item[\hskip \labelsep {\bfseries #1}]}{\end{trivlist}}

\newenvironment{remark}[1][Remark]{\begin{trivlist}
\item[\hskip \labelsep {\bfseries #1}]}{\end{trivlist}}

\newcommand{\qed}{\nobreak \ifvmode \relax \else
      \ifdim\lastskip<1.5em \hskip-\lastskip
      \hskip1.5em plus0em minus0.5em \fi \nobreak
      \vrule height0.75em width0.5em depth0.25em\fi}
\title{Stochastic Recursive Inclusions in two timescales with non-addtive iterate dependent Markov noise}
\author{Vinayaka G. Yaji and Shalabh Bhatnagar,\\ Department of Computer Science and Automation,
        \\ Indian Institute of Science, Bangalore.\\
        vgyaji@gmail.com, shalabh@csa.iisc.ernet.in}
\begin{document}
\maketitle
\begin{abstract}
 In this paper we study the asymptotic behavior of a stochastic approximation scheme on two timescales with set-valued drift functions and 
 in the presence of non-additive iterate-dependent Markov noise. It is shown that the recursion on each timescale tracks the flow of a 
 differential inclusion obtained by averaging the set-valued drift function in the recursion with respect to a set of measures which take into 
 account both the averaging with respect to the stationary distributions of the Markov noise terms and the interdependence between the two 
 recursions on different timescales. The framework studied in this paper builds on the works of \it{A. Ramaswamy et al. }\rm by allowing for the 
 presence of non-additive iterate-dependent Markov noise. As an application, we consider the problem of computing the optimum in a constrained 
 convex optimization problem where the objective function and the constraints are averaged with respect to the stationary distribution of an 
 underlying Markov chain. Further the proposed scheme neither requires the differentiability of the objective function nor the knowledge of the 
 averaging measure. 
\end{abstract}

\section{Introduction}
\label{intro}
Consider the standard two timescale stochastic approximation scheme given by,
\begin{subequations}\label{brktwo}
 \begin{align}\label{sloweg}
  Y_{n+1}-Y_{n}-b(n)M^{(2)}_{n+1}&=b(n)h_2(X_n,Y_n),\\
  \label{fasteg}
  X_{n+1}-X_{n}-a(n)M^{(1)}_{n+1}&=a(n)h_1(X_n,Y_n),
 \end{align}
\end{subequations}
where $n\geq0$ denotes the iteration index, $\{X_n\}_{n\geq0}$ is a sequence of $\mathbb{R}^{d_1}$-valued random variables, $\{Y_n\}_{n\geq0}$ is 
a sequence of $\mathbb{R}^{d_2}$-valued random variables, for any $i\in\{1,2\}$, $h_i:\mathbb{R}^{d_1+d_2}\rightarrow\mathbb{R}^{d_i}$ is a Lipschitz 
continuous function and $\{M^{(i)}_n\}_{n\geq1}$ is sequence of $\mathbb{R}^{d_i}$-valued square integrable martingale difference sequence. The 
step size sequences, $\{a(n)\}_{n\geq0}$ and $\{b(n)\}_{n\geq0}$ are sequences of positive real numbers chosen such that they satisfy 
$\lim_{n\to\infty}\frac{b(n)}{a(n)}=0$ in addition to the Monte Carlo step size conditions. The condition $\lim_{n\to\infty}\frac{b(n)}{a(n)}=0$, 
ensures that after large number of iterations the time step of recursion \eqref{sloweg} is much smaller than that of \eqref{fasteg}. Thus the 
recursion \eqref{sloweg} appears to be static with respect to the recursion \eqref{fasteg}. In \cite{borkartwo}, using the dynamical systems 
approach studied in \cite{benaim2}, the above intuition was shown to hold. More precisely, the faster timescale recursion \eqref{fasteg}, was 
shown to track the ordinary differential equation (o.d.e.) given by,
\begin{equation}\label{fastegode}
 \frac{dx}{dt}=h_1(x,y_0),
\end{equation}
for some $y_0\in\mathbb{R}^{d_2}$ and assuming that for every $y\in\mathbb{R}^{d_2}$, o.d.e. \eqref{fastegode} admits a unique globally 
asymptotically stable equilibrium point, say $\lambda(y)$, the slower timescale recursion \eqref{sloweg} was shown to track the o.d.e. given by,
\begin{equation}\label{slowegode}
 \frac{dy}{dt}=h_2(\lambda(y),y).
\end{equation}
Further, the map $y\rightarrow\lambda(y)$ was assumed to be Lipschitz continuous. 

An important application of the above stochastic approximation scheme is in the computation of a saddle point of a function. Given a function 
$f:\mathbb{R}^{d_1}\times\mathbb{R}^{d_2}\rightarrow\mathbb{R}$, $(x^*,y^*)\in\mathbb{R}^{d_1+d_2}$ ($x^*\in\mathbb{R}^{d_1}$ and 
$y^*\in\mathbb{R}^{d_2}$ respectively) is a saddle point of the function $f(\cdot)$ if, 
\begin{equation}
 \inf_{x\in\mathbb{R}^{d_1}}\sup_{y\in\mathbb{R}^{d_2}}f(x,y)=\sup_{y\in\mathbb{R}^{d_2}}\inf_{x\in\mathbb{R}^{d_1}}f(x,y)=f(x^*,y^*).
\end{equation}
From \cite[Prop.~5.5.6]{bertsekas}, we know that the function $f(\cdot)$ admits a saddle point if for every $(x,y)\in\mathbb{R}^d$,
\begin{itemize}
 \item [(1)] $-f(x,\cdot)$ and $f(\cdot,y)$ are convex functions,
 \item [(2)] the sub level sets of functions $x\rightarrow\sup_{y\in\mathbb{R}^{d_2}}f(x,y)$ and $y\rightarrow-\inf_{x\in\mathbb{R}^{d_1}}f(x,y)$ 
 are compact sets.
\end{itemize}
Over the years significant effort has been devoted for developing algorithms to compute such points (see \cite{nedic,benzi} and references 
therein). Most of the solutions proposed in literature require the computation of partial derivatives of the function $f(\cdot)$. However in practice
the closed form expressions of the partial derivatives are often not known or are expensive to compute and in such cases one often estimates the 
partial derivatives using values of the objective function (see \cite{spall} for one such estimation method). The two timescale stochastic 
approximation scheme can be used to compute a saddle point with noisy partial derivative values by setting $h_1(\cdot):=-\nabla_{x}f(\cdot)$ 
and $h_2(\cdot):=\nabla_{y}f(\cdot)$ where $\nabla_{x}$ and $\nabla_y$ denote the partial derivative operators with respect to $x$ and $y$ 
respectively. In this setting, the sequences $\{M_n^{(1)}\}_{n\geq1}$ and $\{M_n^{(2)}\}_{n\geq1}$ denote the partial derivative estimation 
errors and the map $\lambda(\cdot)$ denotes correspondence between $y\in\mathbb{R}^{d_2}$ and the minimum of the function $f(\cdot,y)$. 
The vector field associated with o.d.e. \eqref{slowegode} is now given by $\nabla_{y}f(x,y)|_{x=\lambda(y)}$ which can be shown to be 
the same as $\nabla_{y}f(\lambda(y),y)$ under some conditions known as \it{envelope theorem }\rm in mathematical economics (see \cite{milgrom}). Thus the 
slower timescale maximizes the function $y\rightarrow\inf_{x\in\mathbb{R}^{d_1}}f(x,y)=f(\lambda(y),y)$, there by in the limit the iterates of 
recursion \eqref{brktwo} converge to a saddle point of the function $f(\cdot)$. 

In some cases the function whose saddle point needs to be computed is itself averaged with respect to a certain probability measure. For example 
consider the function $f:\mathbb{R}^{d_1+d_2}\times\mathcal{S}\rightarrow\mathbb{R}$ where $\mathcal{S}$ is a compact metric space and for some 
probability measure $\mu$ on $\mathcal{S}$, one wishes to compute the saddle point of the function 
$f_{\mu}:\mathbb{R}^{d_1+d_2}\rightarrow\mathbb{R}$ where for every $(x,y)\in\mathbb{R}^{d_1+d_2}$, 
$f_{\mu}(x,y):=\int_{\mathcal{S}}f(x,y,s)\mu(ds)$. If one has access to i.i.d. samples with probability measure $\mu$, then the saddle point 
problem above can be solved using recursion \eqref{brktwo}. But if access to such samples are not available and one uses Markov chain 
Monte Carlo methods to sample from the measure $\mu$, then the recursion \eqref{brktwo} has a non-additive iterate-dependent Markov noise 
component. The recursion now takes the form:
\begin{subequations}\label{pktwo}
 \begin{align}\label{slowpk}
  Y_{n+1}-Y_{n}-b(n)M^{(2)}_{n+1}&=b(n)h_2(X_n,Y_n,S^{(2)}_n),\\
  \label{fastpk}
  X_{n+1}-X_{n}-a(n)M^{(1)}_{n+1}&=a(n)h_1(X_n,Y_n,S^{(1)}_n),
 \end{align}
\end{subequations}
where $\{S^{(1)}_n\}_{n\geq1}$ and $\{S^{(2)}_n\}_{n\geq1}$ denote the Markov noise terms taking values in an appropriate state space. The 
recursion \eqref{pktwo} was studied in \cite{prasan}, under assumptions similar to those in \cite{borkartwo} which include the Lipschitz 
continuity of the maps $h_1(\cdot),\ h_2(\cdot)$ and $\lambda(\cdot)$.  

Often, the maps $h_1(\cdot)$ and $h_2(\cdot)$ in recursion \eqref{brktwo} are not Lipschitz continuous and the map $\lambda(\cdot)$ is not even 
single valued (that is the o.d.e. \eqref{fastegode} has a globally asymptotically stable equilibrium set). This motivates one to study the two 
timescale recursion with set-valued drift functions. The recursion now takes the form:
\begin{subequations}\label{aruntwo}
 \begin{align}\label{slowarun}
  Y_{n+1}-Y_{n}-b(n)M^{(2)}_{n+1}&\in b(n)H_2(X_n,Y_n),\\
  \label{fastarun}
  X_{n+1}-X_{n}-a(n)M^{(1)}_{n+1}&\in a(n)H_1(X_n,Y_n),
 \end{align}
\end{subequations}
where $H_1(\cdot)$ and $H_2(\cdot)$ are set-valued maps and other quantities have similar interpretation to those in recursion \eqref{brktwo}. 
The above recursion was studied in \cite{arun2t} and further the map $\lambda(\cdot)$ was allowed to be set-valued and upper semicontinuous.

\subsection{Contributions of this paper and comparisons with the state of the art}

In this paper we study the asymptotic behavior of the recursion given by,
\begin{subequations}\label{vgytwo}
 \begin{align}\label{slowvgy}
  Y_{n+1}-Y_{n}-b(n)M^{(2)}_{n+1}&\in b(n)H_2(X_n,Y_n,S^{(2)}_n),\\
  \label{fastvgy}
  X_{n+1}-X_{n}-a(n)M^{(1)}_{n+1}&\in a(n)H_1(X_n,Y_n,S^{(1)}_n),
 \end{align}
\end{subequations}
where $H_1(\cdot)$ and $H_2(\cdot)$ are set-valued maps and $\{S^{(1)}_n\}_{n\geq0}$ and $\{S^{(2)}_n\}_{n\geq0}$ are the Markov noise terms 
taking values in compact metric spaces $\mathcal{S}^{(1)}$ and $\mathcal{S}^{(2)}$ respectively. We show that the fast timescale recursion 
\eqref{fastvgy}, tracks the flow of the differential inclusion (DI) given by,
\begin{equation}
 \frac{dx}{dt}\in \cup_{\mu\in D^{(1)}(x,y_0)} \int_{\mathcal{S}^{(1)}}H_1(x,y_0,s^{(1)})\mu(ds^{(1)}),
\end{equation}
for some $y_0\in\mathbb{R}^{d_2}$, where $D^{(1)}(x,y)$ denotes the set of stationary distributions of the Markov noise terms 
$\{S^{(1)}_n\}_{n\geq0}$ for every $(x,y)\in\mathbb{R}^{d}$ and the integral above denotes the integral of a set-valued map with respect 
to measure $\mu$. Further we assume that for every $y\in\mathbb{R}^{d_2}$, the above DI admits a unique globally attracting set $\lambda(y)$.
The map $y\rightarrow\lambda(y)$ is also assumed to be upper semicontinuous. The slower timescale recursion \eqref{slowvgy}, is show to track the 
flow of the DI given by,
\begin{equation}
\label{slowdivgy}
 \frac{dy}{dt}\in \cup_{\mu\in D(y)}\int_{\mathbb{R}^{d_1}\times\mathcal{S}^{(2)}}H_2(x,y,s^{(2)})\mu(dx,ds^{(2)}),
\end{equation}
where $y\rightarrow D(y)$ denotes a set-valued map taking values in the space of probability measures on $\mathcal{S}^{2}$ and the map $D(\cdot)$
is defined such that it captures both the equilibration of the fast timescale iterates to $\lambda(\cdot)$ and the averaging due to the Markov 
noise terms $\{S^{(2)}_n\}_{n\geq0}$. 

In comparison with the two timescale framework studied in \cite{prasan}, our work allows for the drift functions (that is $H_1(\cdot)$ 
and $H_2(\cdot)$) to be set-valued and further the map $\lambda(\cdot)$ is allowed to be set-valued and upper semicontinuous which is much weaker 
than the requirement of single valued and Lipschitz continuity imposed in \cite{prasan}. The generalization to the set-valued case allows one 
to analyze recursion \eqref{pktwo} when the drift functions $h_1(\cdot)$ and $h_2(\cdot)$ are single valued and are just measurable, since 
graph of such a map can be embedded in the graph of a upper semicontinuous set-valued map as in \cite[ch.~5.3(iv)]{borkartxt}. We refer the reader 
to \cite[ch.~5.3]{borkartxt} for several other scenarios where the study of stochastic approximation scheme with set-valued maps becomes 
essential.

Our work further generalizes the two timescale framework studied in \cite{arun2t} by allowing for the presence of Markov noise terms. The 
analysis in this paper does not extend in an straight forward manner from those in \cite{arun2t} and requires results from set-valued map 
approximation, parametrization, integration and the use of probability measure valued functions. However the method of analysis adopted in this paper 
can be adapted appropriately to obtain the same convergence guarantees as in \cite{arun2t} when the Markov noise terms are absent.

\subsection{Overview of the analysis and organization of the paper}
It is known that continuous, convex and compact set-valued maps taking values in a finite dimensional space admit a continuous single-valued 
parametrization. The properties of the set-valued drift function ensure that the drift functions $H_1(\cdot)$ and $H_2(\cdot)$ are 
convex and compact set-valued maps and is upper semicontinuous. However such maps do not admit a continuous parametrization. We can work around 
this problem by enlarging the graph of the drift function since the graph of drift function can be embedded in the graph of a continuous, convex 
and compact set-valued map which admit a continuous single-valued parametrization. Thus a sequence of continuous single-valued maps can be 
obtained which approximate the set-valued drift function from above. This enables us to write the inclusion \eqref{vgytwo} in the form of 
recursion \eqref{pktwo} with an additional parameter. The results needed to accomplish the above are stated in section \ref{ussmata}.

Before proceeding further one needs to identify the mean fields that the recursion \eqref{vgytwo} is expected to track. To this end we need some 
results from the theory of integration of set-valued maps which are reviewed in section \ref{msmai}. Further the measurablility and integrability 
properties of the drift functions of the recursion are investigated and the characterization of the integral of a continuous set-valued map
in terms of its parametrization is established.

In section \ref{diatls} we compile some definitions and results from the theory of differential inclusions which are needed later to characterize the 
asymptotic behavior of recursion \eqref{vgytwo}. Further in section \ref{stsri} we state the assumptions and the main result of the analysis of 
single timescale stochastic recursive inclusions with non-additive iterate dependent Markov noise from \cite{vin1t} and in section \ref{sopmvf} 
we define and compile some results needed from the space of probability measure valued functions.

In section \ref{recass} we state and motivate the assumptions under which the recursion \eqref{vgytwo} is analyzed. Using the results from 
integration of set-valued maps reviewed in section \ref{msmai} the mean fields are defined and the main convergence result is stated. The mean 
fields defined in section \ref{recass} possess some properties which ensure existence of solutions (of their associated differential inclusions).
These properties are established in section \ref{limdiprop}. In section \ref{limdiprop} it is also shown that appropriate modifications of the 
continuous set-valued maps which approximate the drift functions (obtained in section \ref{ussmata}) also approximate the mean fields which 
play an important role in the analysis later.

The analysis of recursion \eqref{vgytwo} consists of two parts. In section \ref{fstranal}, the recursion \eqref{vgytwo} is analyzed along the 
faster timescale. The recursion \eqref{vgytwo} when viewed along the faster timescale appears to be a single timescale stochastic recursive 
inclusion with non-additive iterate dependent Markov noise. In section \ref{fstranal}, we show that recursion \eqref{vgytwo} viewed along the 
faster timescale satisfies all the assumptions associated with the single timescale recursion presented in section \ref{stsri}. Applying the 
main result of single timescale analysis we conclude that the faster timescale iterates converge to $\lambda(\cdot)$ for some $y\in\mathbb{R}^{d_2}$.
In section \ref{stsranal}, the slower timescale recursion is analyzed. It is shown that the linearly interpolated sample path of the slower 
timescale iterates (defined in section \ref{stsranalprelim}) tracks an appropriate DI. Continuous functions tracking the flow of a dynamical 
system are known as \it{asymptotic pseudotrajectories }\rm (see \cite{benaim1} for definition and related results). The asymptotic pseudotrajectory 
argument in this paper presented in section \ref{stsranalapt} comprises of the following steps:
\begin{itemize}
 \item [(1)] First step is to get rid of the additive noise terms, $\{M^{(2)}_n\}_{n\geq1}$. This involves defining an o.d.e. with an appropriate 
 piecewise constant vector field and showing that the limit points of the shifted linearly interpolated trajectory of the slower timescale 
 iterates coincide with the limit points of the solutions of this o.d.e. in the space of continuous functions on $[0,\infty)$ taking values 
 in $\mathbb{R}^{d_2}$. Further a simple argument gives us that the set of limit points of the shifted linearly interpolated trajectories of the 
 slower timescale iterates is non-empty.
 \item [(2)] The second step is to show that the limit point obtained in the first step is in fact a solution of DI \eqref{slowdivgy}. This is 
 accomplished using probability measure valued functions reviewed in section \ref{sopmvf}. This method has also been used in analyzing stochastic 
 approximation schemes such as recursion \eqref{pktwo} in \cite{prasan} and in \cite{borkarmark}. But the analysis in these references made 
 explicit use of the Lipschitz property of the underlying drift functions. We observe that continuity is sufficient to carry out this analysis. 
 This is also where our analysis significantly differs from that in \cite{arun2t}. The equilibration of the faster timescale is also accomplished 
 using probability measures which simplifies the proof compared to that in \cite{arun2t}. 
\end{itemize}
In section \ref{stsranalcls}, the limit sets of the slower timescale iterates are characterized in terms of the dynamics of DI \eqref{slowdivgy}. 
In addition to the above, using the convergence of the faster timescale iterates to $\lambda(\cdot)$ obtained in section \ref{fstranal}, we obtain the 
main convergence result of this paper. 

In section \ref{appl}, as an application, we propose an algorithm to compute a solution of a constrained convex optimization problem. The objective 
function and constraints are assumed to be convex and affine respectively. Further the optimization problem is obtained by averaging the 
quantities involved with respect to the stationary distribution of an underlying Markov chain. Such problems arise in optimal control where the 
controller must find an optimum parameter where the changes in state of the underlying system can be modeled by a Markov chain. The cost function 
and system constraints are dependent on the state of the system and the controller seeks to find the optimum of the long run average of cost 
function while satisfying the long run average constraints. In such applications the stationary distribution of the system states are not known, 
but one has access to a sample path of system state changes. We propose a two timescale scheme which performs primal ascent along the 
faster timescale and dual descent along the slower timescale with the knowledge of the current state at a given iteration. Using the theory 
presented in this paper, it is shown that the limit set of the iterates of the proposed two timescale scheme are contained in the set of 
Lagrangian saddle points of the underlying averaged constrained convex optimization problem. Further the algorithm does not assume the 
differentiability of the objective function and requires only a noisy estimate of the subgradient.

In section \ref{cadffw},  we conclude by providing a few directions for future research and outline certain extensions where we believe the analysis 
remains the same.

\section{Background}
In this section we shall briefly review some results needed from the theory of set-valued maps and differential inclusions, present a brief outline of the 
analysis of the single timescale version of stochastic recursive inclusions with non-additive iterate-dependent Markov noise and define the space 
of probability measure valued functions with a metrizable topology which are needed later in the analysis of the two timescale recursion.

Throughout this paper $\mathcal{S}$ denotes a compact metric space and the metric on $\mathcal{S}$ is denoted by $d_{\mathcal{S}}$. Further 
let $1\leq d_1\in\mathbb{Z}$, $1\leq d_2\in\mathbb{Z}$, $d:=d_1+d_2$ and $(x,y)$ denotes a generic element in $\mathbb{R}^d$ where 
$x\in\mathbb{R}^{d_1}$ and $y\in\mathbb{R}^{d_2}$.

\subsection{Upper semicontinuous set-valued maps and their approximation}
\label{ussmata}
First we shall recall the notions of upper semicontinuity, lower semicontinuity and continuity of set-valued maps. These notions are 
taken from \cite[ch.~1.1]{aubindi}.
\begin{definition}
 A set valued map $F:\mathbb{R}^d\times\mathcal{S}\rightarrow\left\{\text{subsets of }\mathbb{R}^k\right\}$ is,
 \begin{itemize}
  \item \it{Upper semicontinuous }\rm(u.s.c.) if, for every $(x_0,y_0,s_0)\in\mathbb{R}^d\times\mathcal{S}$, for every $\epsilon>0$, there 
  exists $\delta>0$ (depending on $(x_0,y_0,s_0)$ and $\epsilon$) such that,
  \begin{equation*}
   \parallel (x,y)-(x_0,y_0)\parallel<\delta,\ d_{\mathcal{S}}(s,s_0)<\delta\implies F(x,y,s)\subseteq F(x_0,y_0,s_0)+\epsilon U,
  \end{equation*}
  where $U$ denotes the closed unit ball in $\mathbb{R}^k$.
  \item \it{Lower semicontinuous }\rm(l.s.c)  if, for every $(x_0,y_0,s_0)\in\mathbb{R}^d\times\mathcal{S}$, for every $z_0\in F(x_0,y_0,s_0)$, 
  for every sequence $\left\{\left(x_n,y_n,s_n\right)\right\}_{n\geq1}$ converging to $(x_0,y_0,s_0)$, there exists a sequence 
  $\left\{z_n\in F(x_n,y_n,s_n)\right\}$ converging to $z_0$.
  \item \it{Continuous }\rm if, it is both u.s.c. and l.s.c.
 \end{itemize}
\end{definition}
For set valued maps taking compact set values we have the above mentioned notion of u.s.c. to be equivalent to the standard notion of u.s.c.
(see \cite[pg.~45]{aubindi}). In this paper we shall encounter set valued maps which are compact set valued and hence we have chosen to state 
the above as the definition of upper semicontinuity.

Set-valued maps studied later satisfy certain properties under which we will be able to approximate them with a family of 
continuous single-valued maps with an additional parameter. These properties are natural extensions of the properties imposed 
on maps studied in \cite{benaim1,arun2t} to the case of stochastic recursive inclusions with Markov noise and we choose to call 
such maps \it{stochastic approximation maps }\rm(SAM). The definition of SAM is stated below. 
\begin{definition}\emph{[SAM]}
 \label{sam}
  A set-valued map $F:\mathbb{R}^{d}\times\mathcal{S}\rightarrow\left\{\text{subsets of }\mathbb{R}^{k}\right\}$ is a 
 stochastic approximation map if,
 \begin{itemize}
  \item [(a)]for every $\left(x,y,s\right)\in\mathbb{R}^{d}\times\mathcal{S}$, $F\left(x,y,s\right)$ is a convex and compact subset of $\mathbb{R}^{k}$,
  \item [(b)] for every $(x_0,y_0,s_0)\in\mathbb{R}^d\times\mathcal{S}$, for every $\mathbb{R}^{d}\times\mathcal{S}$ sequence, say 
  $\left\{\left(x_n,y_n,s_n\right)\right\}_{n\geq1}$ converging to $\left(x_0,y_0,s_0\right)$ and a sequence $\left\{z_n\in F(x_n,y_n,s_n)\right\}_{n\geq0}$ 
  converging to $z\in \mathbb{R}^{k}$, we have that $z\in F(x_0,y_0,s_0)$, 
  \item [(c)] there exists $K>0$ such that for every $(x,y,s)\in\mathbb{R}^{d}\times\mathcal{S}$, 
  $\sup_{z\in F(x,y,s)}\left\|z\right\|\leq K\left(1+\left\|(x,y)\right\|\right)$.
 \end{itemize}
\end{definition}

For SAM appearing in two-timescale stochastic recursive inclusions the condition $(c)$ stated above is replaced by an equivalent condition,
\begin{itemize}
 \item [(c)$'$] there exists $K>0$ such that for every $(x,y,s)\in\mathbb{R}^{d}\times\mathcal{S}$, 
  $\sup_{z\in F(x,y,s)}\left\|z\right\|\leq K\left(1+\left\|x\right\|+\left\|y\right\|\right)$.
\end{itemize}

The condition $(b)$ in the definition of SAM tells us that the graph of the set-valued map $F$, $\mathscr{G}(F)$, defined as
\begin{equation*}
 \mathscr{G}(F):=\left\{\left(x,y,s,z\right): z\in F(x,y,s),\ (x,y,s)\in\mathbb{R}^{d}\times\mathcal{S}\right\}\subseteq\mathbb{R}^d\times\mathcal{S}
 \times\mathbb{R}^k,
\end{equation*}
is closed and hence the said condition is known as the \it{closed graph property}\rm. The condition $(c)$ (or $(c)'$) is known as the 
\it{point-wise boundedness condition }\rm and it makes sure that the \lq size\rq\ of the sets grow linearly with the distance from the origin. 
This is the only condition where we differ from the conditions imposed in \cite{benaim1,arun2t}. It is easy to show that, when the 
Markov noise component is absent, condition $(c)$ (or $(c)'$) imposed in this paper is the same as the one in \cite{benaim1} (\cite{arun2t}). 

As a consequence of the properties possessed by a SAM, $F$, one can show that the map $F$ is u.s.c. This claim follows from arguments similar to 
that in \cite[ch.~1.1, Cor.~1]{aubindi} and is stated as a lemma below.

\begin{lemma}\emph{[u.s.c.]}\label{usc}
 A set-valued map $F$ which is a SAM is u.s.c.
\end{lemma}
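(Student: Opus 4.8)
The plan is to show that the two defining conditions of a SAM---the closed graph property (b) and the pointwise boundedness condition (c) (equivalently (c)$'$)---together imply the $\epsilon$-$\delta$ form of upper semicontinuity stated in the definition. First I would fix $(x_0,y_0,s_0)\in\mathbb{R}^d\times\mathcal{S}$ and $\epsilon>0$, and argue by contradiction: suppose no $\delta>0$ works, i.e.\ for every $n\geq1$ there is a point $(x_n,y_n,s_n)$ with $\|(x_n,y_n)-(x_0,y_0)\|<1/n$ and $d_{\mathcal{S}}(s_n,s_0)<1/n$ but $F(x_n,y_n,s_n)\not\subseteq F(x_0,y_0,s_0)+\epsilon U$. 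Then one may pick $z_n\in F(x_n,y_n,s_n)$ with $z_n\notin F(x_0,y_0,s_0)+\epsilon U$, and by construction $(x_n,y_n,s_n)\to(x_0,y_0,s_0)$.

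The key step is compactness to extract a convergent subsequence of $\{z_n\}$. Since $(x_n,y_n)\to(x_0,y_0)$, the sequence $\{(x_n,y_n)\}$ is bounded, so by condition (c) (or (c)$'$) the norms $\|z_n\|\leq K(1+\|(x_n,y_n)\|)$ are uniformly bounded; hence $\{z_n\}$ lies in a compact subset of $\mathbb{R}^k$ and along a subsequence $z_{n_j}\to z$ for some $z\in\mathbb{R}^k$. Now apply the closed graph property (b) to the sequences $\{(x_{n_j},y_{n_j},s_{n_j})\}$ and $\{z_{n_j}\in F(x_{n_j},y_{n_j},s_{n_j})\}$: their limits satisfy $z\in F(x_0,y_0,s_0)$. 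But $F(x_0,y_0,s_0)+\epsilon U$ is closed (as $F(x_0,y_0,s_0)$ is compact by (a) and $U$ is closed), and each $z_{n_j}$ lies outside it, so the limit $z$ also lies outside $F(x_0,y_0,s_0)+\epsilon U$; in particular $z\notin F(x_0,y_0,s_0)$. This contradicts $z\in F(x_0,y_0,s_0)$, completing the argument.

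I do not expect any serious obstacle here; this is the standard equivalence between the ``closed graph plus local boundedness'' characterization and the ``$\epsilon$-neighborhood'' characterization of u.s.c.\ for compact-valued maps, as in \cite[ch.~1.1, Cor.~1]{aubindi}. The only mild subtlety is ensuring the boundedness needed for the compactness extraction comes from condition (c)/(c)$'$ applied along a sequence whose $(x,y)$-coordinates are already convergent (hence bounded)---so local boundedness is automatic and one does not need any global growth control beyond what (c) provides. One should also note that the complement being open (equivalently $F(x_0,y_0,s_0)+\epsilon U$ being closed) uses compactness of the values, which is exactly condition (a).
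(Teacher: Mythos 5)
Your proposal is correct and is essentially the argument the paper itself invokes: the paper gives no written proof of Lemma \ref{usc}, only the remark that it ``follows from arguments similar to \cite[ch.~1.1, Cor.~1]{aubindi}'', which is exactly the standard ``closed graph plus local boundedness implies u.s.c.\ for compact-valued maps'' contradiction argument you carry out, with boundedness of $\{z_n\}$ supplied by condition $(c)$/$(c)'$ along the convergent sequence and the limit handled by the closed graph property $(b)$.

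One small repair in your closing step: from ``$z_{n_j}$ lies outside the closed set $F(x_0,y_0,s_0)+\epsilon U$'' you cannot conclude that the limit $z$ also lies outside it, since the complement of a closed set is open and limits may land on the boundary. The correct finish is via the distance function: $z_{n_j}\notin F(x_0,y_0,s_0)+\epsilon U$ means $d(z_{n_j},F(x_0,y_0,s_0))>\epsilon$, and by continuity of $d(\cdot,F(x_0,y_0,s_0))$ the limit satisfies $d(z,F(x_0,y_0,s_0))\geq\epsilon>0$, hence $z\notin F(x_0,y_0,s_0)$, which contradicts the closed graph conclusion $z\in F(x_0,y_0,s_0)$. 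With that one-line fix the proof is complete and matches the cited approach.
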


The graph of a convex and compact u.s.c. set-valued map can be embedded in the graph of a sequence of decreasing continuous 
set-valued maps. The following statement is made precise in the following lemma. 

\begin{lemma}\emph{[continuous embedding]}\label{ctem}
 For any set-valued map $F$, a SAM, there exists a sequence of set-valued maps $\left\{F^{(l)}\right\}_{l\geq1}$ such that for every $l\geq1$, 
 $F^{(l)}:\mathbb{R}^d\times\mathcal{S}\rightarrow\left\{\text{subsets of }\mathbb{R}^k\right\}$ is continuous and satisfies the following.
 \begin{itemize}
  \item [(i)] For every $(x,y,s)\in\mathbb{R}^d\times\mathcal{S}$, $F^{(l)}(x,y,s)$ is a convex and compact subset of $\mathbb{R}^k$.
  \item [(ii)] For every $(x,y,s)\in\mathbb{R}^d\times\mathcal{S}$, $F(x,y,s)\subseteq F^{(l+1)}(x,y,s)\subseteq F^{(l)}(x,y,s)$.
  \item [(iii)] There exists $K^{(l)}>0$ such that for every $(x,y,s)\in\mathbb{R}^d\times\mathcal{S}$, 
  $\sup_{z\in F^{(l)}(x,y,s)}\left\|z\right\|\leq K^{(l)}(1+\left\|(x,y)\right\|)$. (If the set-valued map $F$ satisfies condition $(c)'$ instead of 
  $(c)$ in the definition of SAM, we have $\sup_{z\in F^{(l)}(x,y,s)}\left\|z\right\|\leq K^{(l)}(1+\left\|x\right\|+\left\|y\right\|)$).
 \end{itemize}
Furthermore, for every $(x,y,s)\in\mathbb{R}^d\times\mathcal{S}$, $\cap_{l\geq1}F^{(l)}(x,y,s)=F(x,y,s)$.
\end{lemma}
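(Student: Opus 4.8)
The plan is to realise $F$ as a decreasing intersection of continuous convex‑compact‑valued maps obtained by ``mollifying'' $F$ through partitions of unity, a construction in the spirit of the approximation results for u.s.c.\ maps in \cite{aubindi} (and used, in the noiseless setting, in \cite{benaim1,arun2t}). Concretely: for each $j\ge 1$, since $\mathbb{R}^d\times\mathcal{S}$ is a metric space, hence paracompact, I would take a locally finite open cover $\{U^{(j)}_i\}_{i\in I_j}$ refining $\{B(p,\tfrac1{2j})\}_{p}$ (so $\operatorname{diam}U^{(j)}_i<\tfrac1j$), together with a continuous partition of unity $\{\psi^{(j)}_i\}$ subordinate to it, and set
\[
 G^{(j)}(p):=\sum_{i\in I_j}\psi^{(j)}_i(p)\,\overline{\operatorname{co}}\,F\big(\overline{U^{(j)}_i}\big).
\]
Each $\overline{U^{(j)}_i}$ is compact, so (by Lemma \ref{usc} and compact‑valuedness) $F(\overline{U^{(j)}_i})$ is compact and each coefficient set is convex and compact; thus $G^{(j)}$ is a locally finite convex combination of fixed convex compact sets with continuous weights, hence continuous with convex compact values. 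If $\psi^{(j)}_i(p)>0$ then $p\in\overline{U^{(j)}_i}$, so $F(p)\subseteq\overline{\operatorname{co}}\,F(\overline{U^{(j)}_i})$ for every active $i$, whence $F(p)\subseteq G^{(j)}(p)$. Every active cell lies in $B(p,\tfrac1j)$, so applying $(c)$ (resp.\ $(c)'$) pointwise and using that convex combination does not enlarge the norm bound gives $\sup_{z\in G^{(j)}(p)}\|z\|\le K(1+\tfrac1j)(1+\|(x,y)\|)$ (resp.\ with $1+\|x\|+\|y\|$); and u.s.c.\ of $F$ at $p$ together with the shrinking of the active cells gives $G^{(j)}(p)\to F(p)$ in the Hausdorff metric as $j\to\infty$.

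The maps $G^{(j)}$ are not nested, so I would force monotonicity by passing to tails: set $F^{(l)}(p):=\overline{\operatorname{co}}\big(\bigcup_{j\ge l}G^{(j)}(p)\big)$. Then (ii) and $F(p)\subseteq F^{(l)}(p)$ follow from the corresponding properties of the $G^{(j)}$; the values are convex and, being closed and contained uniformly in $j\ge l$ in the ball of radius $K(1+\tfrac2l)(1+\|(x,y)\|)$ (resp.\ the $(c)'$ analogue), compact, which gives (iii) with $K^{(l)}:=K(1+\tfrac2l)$. Finally, Hausdorff convergence $G^{(j)}(p)\to F(p)$ yields, for each $\epsilon>0$, an $N$ with $G^{(j)}(p)\subseteq F(p)+\epsilon U$ for all $j\ge N$, hence $F^{(l)}(p)\subseteq F(p)+\epsilon U$ for $l\ge N$; intersecting over $\epsilon$ gives $\cap_{l\ge1}F^{(l)}(p)=F(p)$.

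What remains, and what I expect to be the crux, is continuity of each $F^{(l)}$ — delicate precisely because $F^{(l)}(p)$ is a convexified union of infinitely many sets. Lower semicontinuity is the soft direction: given $p_m\to p$ and a point $z=\sum_{i=1}^{k+1}\lambda_i v_i$ of a dense subset of $F^{(l)}(p)$ with $v_i\in G^{(j_i)}(p)$, $j_i\ge l$, I use l.s.c.\ of each (finitely many) $G^{(j_i)}$ to choose $v^m_i\in G^{(j_i)}(p_m)$ with $v^m_i\to v_i$ and take $z_m:=\sum_i\lambda_i v^m_i\in F^{(l)}(p_m)$. Upper semicontinuity I would prove by contradiction via separation: if $z_m\to z$ with $z_m\in F^{(l)}(p_m)$ but $z\notin F^{(l)}(p)$, pick $\xi$ with $\langle\xi,z\rangle>\sup_{j\ge l}h_{G^{(j)}(p)}(\xi)=:\beta$, then along a subsequence choose $j_m\ge l$ with $h_{G^{(j_m)}(p_m)}(\xi)$ close to the supremum $\sup_{j\ge l}h_{G^{(j)}(p_m)}(\xi)$; if $j_m$ stays bounded, continuity of the finitely many $G^{(j)}$ involved forces $h_{G^{(j_m)}(p_m)}(\xi)\to\le\beta$, while if $j_m\to\infty$ the active cells at $p_m$ shrink to $p$, so by u.s.c.\ of $F$ one gets $G^{(j_m)}(p_m)\subseteq F(p)+\epsilon U$ eventually and hence $h_{G^{(j_m)}(p_m)}(\xi)\le h_{F(p)}(\xi)+\epsilon\|\xi\|\le\beta+\epsilon\|\xi\|$; both cases contradict $\langle\xi,z\rangle>\beta$.

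Thus the heart of the argument is the uniform‑in‑$j$ estimate behind this last step — essentially the statement ``$j_m\to\infty,\ p_m\to p\Rightarrow G^{(j_m)}(p_m)\to F(p)$'', which is just u.s.c.\ of $F$ pushed through the collapsing partition cells. Everything else (existence of the small‑diameter locally finite covers and partitions of unity, convex‑compactness, monotonicity, the growth constants, and the identity $\cap_l F^{(l)}=F$) is routine bookkeeping, the only piece genuinely specific to this paper being the careful tracking of $K^{(l)}$ so that (iii) and its $(c)'$ version come out as stated.
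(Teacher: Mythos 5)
Your construction is correct, and at its core it is the same Aubin--Cellina-style mollification that the paper's proof (via \cite[ch.~1.13, Thm.~1]{aubindi} and the outline in \cite[Appendix~A]{vin1t}) relies on: cover $\mathbb{R}^d\times\mathcal{S}$ by small sets, take closed convex hulls of the images, and glue with a locally finite partition of unity; all the verifications you list (convex compactness, $F\subseteq G^{(j)}$, the growth constants $K^{(l)}$ with $\sup_l K^{(l)}<\infty$, Hausdorff convergence $G^{(j)}(p)\to F(p)$ from u.s.c., and hence $\cap_l F^{(l)}=F$) go through. Where you genuinely deviate is the monotonicity step: the classical argument obtains $F^{(l+1)}\subseteq F^{(l)}$ directly by defining the coefficient sets as hulls of $F$ over suitably \emph{enlarged} balls (radius $2\epsilon_l$ for a cover by radius-$\epsilon_l$ balls) with geometrically decreasing $\epsilon_l$, so that every active cell at level $l+1$ sits inside every active enlarged cell at level $l$; each approximant is then a locally finite sum of finitely many fixed convex compacta and its continuity is immediate. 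You instead take tail hulls $F^{(l)}=\overline{\operatorname{co}}\bigl(\cup_{j\geq l}G^{(j)}\bigr)$, which buys you freedom from calibrating radii between successive covers, but at the price of having to re-establish continuity of a convexified infinite union; your support-function dichotomy ($j_m$ bounded versus $j_m\to\infty$) does handle this correctly, with the small bookkeeping remark that what it directly gives is the closed-graph property, which yields u.s.c. in the paper's $\epsilon$--$\delta$ sense only after invoking the linear-growth bound (local boundedness plus compact values), exactly as in Lemma \ref{usc}. So: same basic approach, with a more laborious but valid substitute for the radius-calibration trick that makes nestedness automatic in the paper's reference proof.
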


The statement of the above lemma can be found in \cite[pg.~39]{aubindi} and the proof is similar to the proof of \cite[ch.~1.13, Thm.~1]{aubindi} 
(a brief outline can be found in \cite[Appendix~A]{vin1t}). The following are some useful observations from the proof of Lemma \ref{ctem}.
\begin{itemize}
 \item [(1)] $\sup_{l\geq1}K^{(l)}$ is finite and let $\tilde{K}:=\sup_{l\geq1}K^{(l)}$,
 \item [(2)] for every $(x,y,s)\in\mathbb{R}^{d}\times\mathcal{S}$, for every $\epsilon>0$, there exists $L$ (depending on $\epsilon$ and 
 $(x,y,s)$), such that for every $l\geq L$, $F^{(l)}(x,y,s)\subseteq F(x,y,s)+\epsilon U$ where $U$ denotes the closed unit ball in $\mathbb{R}^k$.
\end{itemize}

Continuous set-valued maps admit a parametrization by which we mean that a 
continuous single-valued map can be obtained which represents the set-valued map in the sense made precise in the lemma below which follows from \cite[ch.~1.7, Thm.~2]{aubindi}

\begin{lemma}\emph{[parametrization]}\label{param}
 Let $F$ be a SAM and for every $l\geq1$, the set-valued map $F^{(l)}$ be as in Lemma \ref{ctem}. Then for every $l\geq1$ there exists a 
 continuous single-valued map $f^{(l)}:\mathbb{R}^d\times\mathcal{S}\times U\rightarrow\mathbb{R}^k$ where $U$ denotes the closed unit ball in 
 $\mathbb{R}^k$, such that,
 \begin{itemize}
  \item [(i)] for every $(x,y,s)\in\mathbb{R}^d\times\mathcal{S}$, $F^{(l)}(x,y,s)=f^{(l)}(x,y,s,U)$ where $f^{(l)}(x,y,s,U)=
  \left\{f^{(l)}(x,y,s,u):u\in U\right\}$,
  \item [(ii)] for $K^{(l)}$ as in Lemma \ref{ctem}$(iii)$, for every $(x,y,s,u)\in\mathbb{R}^d\times\mathcal{S}\times U$, we have that 
  $\left\|f^{(l)}(x,y,s,u)\right\|\leq K^{(l)}(1+\left\|(x,y)\right\|)$ (If the set-valued map $F$ satisfies condition $(c)'$ instead of 
  $(c)$ in the definition of SAM, we have $\left\|f^{(l)}(x,y,s,u)\right\|\leq K^{(l)}(1+\left\|x\right\|+\left\|y\right\|)$).
 \end{itemize}
\end{lemma}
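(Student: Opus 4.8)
The plan is to reduce the statement to the parametrization theorem for continuous convex‑and‑compact‑valued maps, \cite[ch.~1.7, Thm.~2]{aubindi}, and then to read off the growth bound~$(ii)$ as an immediate consequence of the parametrization taking values inside $F^{(l)}$. Fix $l\geq1$. First I would note that $\mathbb{R}^d\times\mathcal{S}$ is a metric space (say, with the metric $\big((x,y,s),(x',y',s')\big)\mapsto\|(x,y)-(x',y')\|+d_{\mathcal{S}}(s,s')$), and that, by Lemma~\ref{ctem}$(i)$ together with the continuity asserted in Lemma~\ref{ctem}, $F^{(l)}$ is a continuous set-valued map from this metric space into the nonempty convex compact subsets of $\mathbb{R}^k$. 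These are precisely the hypotheses of the cited theorem, which then produces a continuous single-valued $f^{(l)}:\mathbb{R}^d\times\mathcal{S}\times U\to\mathbb{R}^k$ with $f^{(l)}(x,y,s,U)=F^{(l)}(x,y,s)$ for all $(x,y,s)$; this is $(i)$, and in particular $f^{(l)}(x,y,s,u)\in F^{(l)}(x,y,s)$ for every $u\in U$.

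I would make the construction behind the cited theorem explicit, since it is short and is where continuity is actually earned. Let $m^{(l)}(x,y,s)$ be the Steiner point of $F^{(l)}(x,y,s)$; this is a continuous selection of $F^{(l)}$. Since $F^{(l)}$ is continuous it is continuous for the Hausdorff metric, so the radius $\rho^{(l)}(x,y,s):=\sup_{z\in F^{(l)}(x,y,s)}\|z-m^{(l)}(x,y,s)\|$ is continuous in $(x,y,s)$ (as $\big|\sup_{z\in K}\|z-c\|-\sup_{z\in K'}\|z-c'\|\big|\leq d_H(K,K')+\|c-c'\|$). One then sets
\begin{equation*}
 f^{(l)}(x,y,s,u):=\pi_{F^{(l)}(x,y,s)}\big(m^{(l)}(x,y,s)+\rho^{(l)}(x,y,s)\,u\big),
\end{equation*}
where $\pi_K$ is metric projection onto the convex compact set $K$. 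Surjectivity onto $F^{(l)}(x,y,s)$ is immediate: any $z\in F^{(l)}(x,y,s)$ satisfies $\|z-m^{(l)}(x,y,s)\|\leq\rho^{(l)}(x,y,s)$, hence $z=m^{(l)}(x,y,s)+\rho^{(l)}(x,y,s)\,u$ for some $u\in U$ (the case $\rho^{(l)}=0$, i.e.\ $F^{(l)}$ a singleton, being trivial), and $\pi_{F^{(l)}(x,y,s)}(z)=z$ since $z$ already lies in the set; the inclusion $f^{(l)}(x,y,s,U)\subseteq F^{(l)}(x,y,s)$ is automatic.

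For $(ii)$, since $f^{(l)}(x,y,s,u)\in F^{(l)}(x,y,s)$, Lemma~\ref{ctem}$(iii)$ gives
\begin{equation*}
 \|f^{(l)}(x,y,s,u)\|\leq\sup_{z\in F^{(l)}(x,y,s)}\|z\|\leq K^{(l)}\big(1+\|(x,y)\|\big),
\end{equation*}
and if $F$ obeys $(c)'$ rather than $(c)$ the same chain yields $\|f^{(l)}(x,y,s,u)\|\leq K^{(l)}(1+\|x\|+\|y\|)$.

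The one genuinely delicate point — the step I would write out in full — is the joint continuity of $(x,y,s,u)\mapsto f^{(l)}(x,y,s,u)$. It rests on three ingredients: continuity of the Steiner point in its set argument for the Hausdorff metric; continuity of $\rho^{(l)}$ (equivalently, continuity of the support function of $F^{(l)}$, which the uniform growth bound of Lemma~\ref{ctem} and the continuity of $F^{(l)}$ supply); and the joint continuity of $(K,p)\mapsto\pi_K(p)$ over (convex compact sets, Hausdorff metric) $\times\,\mathbb{R}^k$, which follows from $\pi_K$ being $1$-Lipschitz in $p$ for fixed $K$ together with $\pi_{K_n}(p)\to\pi_K(p)$ whenever $K_n\to K$ in the Hausdorff metric. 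No compactness of the base space is used anywhere, so the non-compact factor $\mathbb{R}^d$ causes no difficulty — all the estimates involved are local.
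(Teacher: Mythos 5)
Your proposal is correct and follows essentially the same route as the paper, which simply invokes \cite[ch.~1.7, Thm.~2]{aubindi} for the existence of the continuous parametrization of the continuous, convex and compact valued maps $F^{(l)}$ and reads off the bound in $(ii)$ from $f^{(l)}(x,y,s,u)\in F^{(l)}(x,y,s)$ together with Lemma \ref{ctem}$(iii)$. The explicit Steiner-point/projection construction you spell out is just the standard proof of that cited theorem, so it adds detail but not a different argument.
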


Throughout this paper we shall use $U$ to denote the closed unit ball in $\mathbb{R}^k$ where the dimension $k$ will be made clear by the 
context. 

Combining Lemma \ref{ctem} and Lemma \ref{param} we obtain the approximation theorem stated below.

\begin{theorem}\emph{[approximation]}\label{approx}
 For any SAM $F$, there exists a sequence of continuous functions $\left\{f^{(l)}\right\}_{l\geq1}$ such that for every $l\geq1$, 
 $f^{(l)}:\mathbb{R}^d\times\mathcal{S}\times U\rightarrow\mathbb{R}^k$ is such that,
 \begin{itemize}
  \item [(i)] for every $(x,y,s)\in\mathbb{R}^d\times\mathcal{S}$, $F(x,y,s)\subseteq f^{(l+1)}(x,y,s,U)\subseteq f^{(l)}(x,y,s,U)$ and 
  $f^{(l)}(x,y,s,U)$ is a convex and compact subset of $\mathbb{R}^k$, 
  \item [(ii)] for $K^{(l)}$ as in Lemma \ref{ctem}$(iii)$, for every $(x,y,s,u)\in\mathbb{R}^d\times\mathcal{S}\times U$, we have that 
  $\left\|f^{(l)}(x,y,s,u)\right\|\leq K^{(l)}(1+\left\|(x,y)\right\|)$ (If the set-valued map $F$ satisfies condition $(c)'$ instead of 
  $(c)$ in the definition of SAM, we have $\left\|f^{(l)}(x,y,s,u)\right\|\leq K^{(l)}(1+\left\|x\right\|+\left\|y\right\|)$).
 \end{itemize}
Furthermore, for every $(x,y,s)\in\mathbb{R}^d\times\mathcal{S}$, $F(x,y,s)=\cap_{l\geq1}f^{(l)}(x,y,s,U)$. 
\end{theorem}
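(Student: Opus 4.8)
The plan is to obtain the theorem by simply composing the two preceding lemmas, since the statement is essentially their concatenation. First I would apply Lemma \ref{ctem} to the given SAM $F$ to produce the decreasing sequence $\left\{F^{(l)}\right\}_{l\geq1}$ of continuous set-valued maps with convex and compact values, satisfying the nesting $F(x,y,s)\subseteq F^{(l+1)}(x,y,s)\subseteq F^{(l)}(x,y,s)$ for all $(x,y,s)\in\mathbb{R}^d\times\mathcal{S}$, the pointwise bounds with constants $K^{(l)}$ (respectively the $(c)'$ variant), and the intersection identity $\cap_{l\geq1}F^{(l)}(x,y,s)=F(x,y,s)$.

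Next, for each fixed $l\geq1$, since $F^{(l)}$ is continuous with convex compact values I would invoke Lemma \ref{param} to obtain a continuous single-valued map $f^{(l)}:\mathbb{R}^d\times\mathcal{S}\times U\rightarrow\mathbb{R}^k$ (with $U$ the closed unit ball in $\mathbb{R}^k$) such that $f^{(l)}(x,y,s,U)=F^{(l)}(x,y,s)$ for every $(x,y,s)\in\mathbb{R}^d\times\mathcal{S}$, together with the norm bound $\left\|f^{(l)}(x,y,s,u)\right\|\leq K^{(l)}(1+\left\|(x,y)\right\|)$ carried over with the same constant $K^{(l)}$ (and the analogous $(c)'$ bound when applicable). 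This defines the sequence $\left\{f^{(l)}\right\}_{l\geq1}$ claimed in the theorem.

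Finally I would verify that the asserted properties of $\left\{f^{(l)}\right\}$ are just transcriptions of properties of $\left\{F^{(l)}\right\}$ under the identification $f^{(l)}(x,y,s,U)=F^{(l)}(x,y,s)$: property (i) is precisely Lemma \ref{ctem}(ii) rewritten, $F(x,y,s)\subseteq F^{(l+1)}(x,y,s)=f^{(l+1)}(x,y,s,U)\subseteq F^{(l)}(x,y,s)=f^{(l)}(x,y,s,U)$, with convexity and compactness of $f^{(l)}(x,y,s,U)$ coming from Lemma \ref{ctem}(i); property (ii) is exactly Lemma \ref{param}(ii); and the concluding identity $F(x,y,s)=\cap_{l\geq1}f^{(l)}(x,y,s,U)$ is the intersection identity of Lemma \ref{ctem}. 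I do not expect any genuine obstacle here. The only points requiring care are purely bookkeeping: ensuring the $(c)$ versus $(c)'$ distinction is propagated consistently through both lemmas (both statements already accommodate it), and ensuring each $f^{(l)}$ is defined on the common domain $\mathbb{R}^d\times\mathcal{S}\times U$ with a single ambient unit ball $U\subseteq\mathbb{R}^k$, so that the image notation $f^{(l)}(x,y,s,U)$ is unambiguous.
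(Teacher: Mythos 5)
Your proposal is correct and matches the paper's own argument: the paper obtains Theorem \ref{approx} exactly by combining Lemma \ref{ctem} (the decreasing continuous approximations $F^{(l)}$ with the intersection identity) with Lemma \ref{param} (the continuous parametrizations $f^{(l)}$ satisfying $f^{(l)}(x,y,s,U)=F^{(l)}(x,y,s)$ and the $K^{(l)}$ bounds). Nothing further is needed.
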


\subsection{Measurable set-valued maps and integration}
\label{msmai}
In this section we shall review concepts of measurability and integration of set-valued maps. These concepts will be needed to define the 
limiting differential inclusion which the recursion studied later in this paper is expected to track. 

Let $(\mathcal{W},\mathscr{F}_{\mathcal{W}})$ denote a measurable space and $F:\mathcal{W}\rightarrow\left\{\text{subsets of }\mathbb{R}^k\right\}$
be a set-valued map such that, for every $w\in\mathcal{W}$, $F(w)$ is a non-empty closed subset of $\mathbb{R}^k$. Throughout this 
subsection $F$ refers to the set-valued map as defined above.

\begin{definition}
\emph{[measurable set-valued map]} A set-valued map $F$ is measurable if for every $C\subseteq \mathbb{R}^k$, closed,
\begin{equation*}
 F^{-1}(C):=\left\{w\in\mathcal{W}:F(w)\cap C\neq\emptyset\right\}\in \mathscr{F}_{\mathcal{W}}.
\end{equation*}
\end{definition}
We refer the reader to \cite[Thm.~1.2.3]{shoumei} for other notions of measurability and their relation to the definition above.

\begin{definition}
 \emph{[measurable selection]} A function $f:\mathcal{W}\rightarrow\mathbb{R}^k$ is a measurable selection of a set-valued map $F$ if, 
 $f$ is measurable and for every $w\in\mathcal{W}$, $f(w)\in F(w)$.
\end{definition}
For a set-valued map $F$ let, $\mathscr{S}(F)$ denote the set of all measurable selections. The next lemma summarizes some standard
results about measurable set-valued maps and their measurable selections.

\begin{lemma}\label{msel}
 For any measurable set-valued map $F$,
 \begin{itemize}
  \item [(i)] $\mathscr{S}(F)\neq \emptyset$.
  \item [(ii)] \it{(Castaing representation) }\rm there exists $\left\{f_n\right\}_{n\geq1}\subseteq\mathscr{S}(F)$ such that, for every 
  $w\in\mathcal{W}$, $F(w)=cl(\left\{f_n(w)\right\}_{n\geq1})$, where $cl(\cdot)$ denotes the closure of a set.
 \end{itemize}
\end{lemma}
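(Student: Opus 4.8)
The plan is to establish the two assertions from the standard machinery of measurable set-valued maps, treating (i) as a special consequence of (ii). First I would reduce the problem to the case where $\mathbb{R}^k$ is replaced by a countable dense subset. Fix a countable dense set $\{q_m\}_{m\geq1}\subseteq\mathbb{R}^k$ (for instance the points with rational coordinates). For each $m\geq1$ and each $j\geq1$, consider the set $A_{m,j}:=F^{-1}\bigl(\bar B(q_m,1/j)\bigr)=\{w\in\mathcal{W}:F(w)\cap\bar B(q_m,1/j)\neq\emptyset\}$, which lies in $\mathscr{F}_{\mathcal{W}}$ by the measurability hypothesis since closed balls are closed. The heart of the argument is a ``successive approximation'' construction: one builds a sequence of measurable functions $f^{(j)}:\mathcal{W}\to\mathbb{R}^k$ taking values in $\{q_m\}_{m\geq1}$ such that $\mathrm{dist}(f^{(j)}(w),F(w))<1/j$ for all $w$ and $\|f^{(j+1)}(w)-f^{(j)}(w)\|<2^{-j}+1/j$ (or some summable bound), which forces $\{f^{(j)}(w)\}_j$ to be Cauchy pointwise; its limit $f$ is then measurable (pointwise limit of measurable functions) and satisfies $f(w)\in F(w)$ because $F(w)$ is closed and $\mathrm{dist}(f^{(j)}(w),F(w))\to 0$. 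This proves (i).

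For (ii), the Castaing representation, I would apply the selection construction not just to $F$ itself but to a countable family of measurable ``restrictions'' of $F$. Concretely, for each pair $(m,j)$ let $F_{m,j}(w):=F(w)\cap\bar B(q_m,1/j)$ on the measurable set $A_{m,j}$ where this is nonempty, and extend it measurably to all of $\mathcal{W}$ (e.g. by setting it equal to $F$ off $A_{m,j}$); one checks $F_{m,j}$ is again a measurable closed-valued map, so by part (i) it admits a measurable selection $g_{m,j}$. Relabel the countable collection $\{g_{m,j}\}_{m,j\geq1}$ as $\{f_n\}_{n\geq1}\subseteq\mathscr{S}(F)$. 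For fixed $w$, given any $z\in F(w)$ and any $\varepsilon>0$, pick $j$ with $1/j<\varepsilon/2$ and $q_m$ with $\|q_m-z\|<1/j$; then $z\in F(w)\cap\bar B(q_m,1/j)$, so this intersection is nonempty, whence $g_{m,j}(w)\in F(w)\cap\bar B(q_m,1/j)$ and $\|g_{m,j}(w)-z\|\leq\|g_{m,j}(w)-q_m\|+\|q_m-z\|<2/j<\varepsilon$. Thus every point of $F(w)$ is approximated by some $f_n(w)$, giving $F(w)\subseteq cl(\{f_n(w)\}_{n\geq1})$; the reverse inclusion is immediate since each $f_n(w)\in F(w)$ and $F(w)$ is closed. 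This yields $F(w)=cl(\{f_n(w)\}_{n\geq1})$.

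I expect the main obstacle to be the bookkeeping in the successive approximation step for (i): one must verify at each stage that the sets on which the approximating function is defined and updated are genuinely measurable and cover $\mathcal{W}$, and that the increments are controlled well enough to guarantee a pointwise Cauchy sequence while never leaving a small neighborhood of $F(w)$. This is where the closedness of $F(w)$ and the measurability of preimages of closed balls are both used essentially. Since these are by now classical facts, however, I would instead simply cite \cite[Thm.~1.2.3]{shoumei} (already referenced in the excerpt) together with the standard Kuratowski--Ryll-Nardzewski selection theorem and the Castaing representation theorem, and present only the brief reduction sketched above rather than reproducing the full successive-approximation argument.
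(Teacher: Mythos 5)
Your proposal is correct and ends up in the same place as the paper, which offers no proof of its own but simply cites \cite[Thm.~1.2.6]{shoumei} and \cite[Thm.~1.2.7]{shoumei} (the Kuratowski--Ryll-Nardzewski selection theorem and the Castaing representation). Your sketch is the standard argument behind those citations, so citing them as you propose is exactly what the paper does.
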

We refer the reader to \cite[Thm.~1.2.6]{shoumei} and \cite[Thm.~1.2.7]{shoumei} for the proofs of Lemma \ref{msel}$(i)$ and $(ii)$ 
respectively.

\begin{definition}
 \emph{[$\mu$-integrable set-valued map]}
 Let $\mu$ be a probability measure on $(\mathcal{W},\mathscr{F}_{\mathcal{W}})$. A measurable set-valued map $F$ is said to be 
 $\mu$-integrable if, there exists $f\in\mathscr{S}(F)$ which is $\mu$-integrable.
\end{definition}

\begin{definition}
 \emph{[Aumann's integral]} Let $\mu$ be a probability measure on $(\mathcal{W},\mathscr{F}_{\mathcal{W}})$. The integral of a 
 $\mu$-integrable set-valued map $F$ is defined as,
 \begin{equation*}
  \int_{\mathcal{W}} F(w)\mu(dw):=\left\{\int_{\mathcal{W}} f(w)\mu(dw):\ f\in \mathscr{S}(F),\ f\ is\ \mu-integrable\right\}.
 \end{equation*}
\end{definition}

The next lemma states a useful result on the properties of the integral of a set-valued map which is convex and compact set valued.
\begin{lemma}\label{cldint}
 Let $\mu$ be a probability measure on $(\mathcal{W},\mathscr{F}_{\mathcal{W}})$ and $F$ a $\mu$-integrable set-valued map such that, 
 for every $w\in\mathcal{W}$, $F(w)$ is convex and compact. Then, $\int_{\mathcal{W}} F(w)\mu(dw)$ is a convex and closed subset of $\mathbb{R}^k$.
\end{lemma}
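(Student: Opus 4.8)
The plan is to prove convexity and closedness separately, as they rely on quite different tools.

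\textbf{Convexity.} First I would show convexity of $\int_{\mathcal{W}} F(w)\,\mu(dw)$. This is the easier half. Take two points $a_0 = \int f_0\,d\mu$ and $a_1 = \int f_1\,d\mu$ with $f_0, f_1 \in \mathscr{S}(F)$ both $\mu$-integrable, and fix $t \in [0,1]$. Since $F(w)$ is convex for every $w$, the pointwise convex combination $f_t(w) := t f_0(w) + (1-t) f_1(w)$ lies in $F(w)$ for every $w$; moreover $f_t$ is measurable (sum of measurable functions) and $\mu$-integrable (dominated by $|f_0| + |f_1|$). Hence $f_t \in \mathscr{S}(F)$ is an admissible selection, and by linearity of the Lebesgue integral $\int f_t\,d\mu = t a_0 + (1-t) a_1$ belongs to $\int_{\mathcal{W}} F(w)\,\mu(dw)$. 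No compactness is needed here — convexity alone suffices.

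\textbf{Closedness.} For closedness I would take a sequence $a_n \in \int_{\mathcal{W}} F(w)\,\mu(dw)$ with $a_n \to a$ in $\mathbb{R}^k$, say $a_n = \int f_n\,d\mu$ with $f_n \in \mathscr{S}(F)$ $\mu$-integrable, and I must produce a $\mu$-integrable selection $f$ with $\int f\,d\mu = a$. The natural tool is a selection/compactness argument: since each $F(w)$ is compact and (via a Castaing representation, Lemma~\ref{msel}(ii)) $F$ admits a measurable ``support-function'' description, one wants to extract from $\{f_n\}$ a subsequence that converges in a suitable weak sense to some $f \in \mathscr{S}(F)$ and then pass to the limit in the integral. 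The classical route is: first establish a uniform integrability / tightness control on $\{f_n\}$, then invoke a result of the Dunford–Pettis type to get weak-$L^1$ convergence of a subsequence $f_{n_k} \rightharpoonup f$, use Mazur's lemma together with pointwise compactness and convexity of $F(w)$ to conclude $f(w) \in F(w)$ $\mu$-a.e. (adjusting on a null set so that $f$ is a genuine selection — this uses that $\overline{\mathrm{conv}}\,F(w) = F(w)$), and finally $\int f\,d\mu = \lim_k \int f_{n_k}\,d\mu = a$. Alternatively, if the ambient framework guarantees a common integrable bound (which in our applications it does, since $F$ will be a restriction of a SAM satisfying the pointwise boundedness condition $(c)$ or $(c)'$, so on the relevant compact region all selections are bounded by a single constant), the $L^1$-boundedness is immediate and the argument simplifies to a direct application of the Dunford–Pettis theorem.

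\textbf{Main obstacle.} I expect the delicate step to be showing $f(w) \in F(w)$ $\mu$-almost everywhere after passing to the weak limit: weak-$L^1$ convergence does not preserve pointwise membership, so one must invoke Mazur's lemma to obtain a sequence of convex combinations converging strongly (hence a.e. along a further subsequence), and then lean on the pointwise convexity and closedness of $F(w)$ to trap the limit inside $F(w)$. A secondary subtlety is the $\mu$-integrability bookkeeping — ensuring the candidate limit selection is itself $\mu$-integrable and that the convex-combination selections used along the way remain in $\mathscr{S}(F)$. In practice, though, rather than reproving this from scratch I would cite the standard reference (e.g. \cite[Thm.~1.2.3]{shoumei} or the Aumann integral literature), since Lemma~\ref{cldint} is exactly the statement that the Aumann integral of a convex-compact-valued integrable multifunction is convex and closed, and only a short remark on the uniform bound available in our setting is needed to fit the hypotheses.
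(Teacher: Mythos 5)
Your proposal is sound, but it is worth noting that the paper does not actually prove Lemma~\ref{cldint} at all: it simply refers the reader to \cite[Thm.~2.2.2]{shoumei}. So you have done strictly more than the paper. Your convexity argument is exactly the standard one and is complete as written (pointwise convex combination of two integrable selections is again an integrable selection; linearity of the integral finishes it). Your closedness sketch is also the standard route (uniform integrability, Dunford--Pettis, Mazur's lemma plus pointwise convexity and closedness of $F(w)$, then adjust on a null set using a measurable selection from Lemma~\ref{msel}), and your identification of the a.e.-membership step as the delicate point is accurate. One substantive remark: the uniform-integrability step is not a mere technicality that can be ``established'' from the lemma's stated hypotheses --- with $\mu$-integrability defined only as the existence of one integrable selection, the family of all integrable selections need not be uniformly integrable, and in fact closedness of the Aumann integral can fail in that generality (the standard theorem, including the cited one, is proved under integrable boundedness, i.e.\ $w\mapsto\sup_{z\in F(w)}\|z\|$ integrable). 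So your fallback --- that in this paper the lemma is only ever applied to slices of SAMs, where Lemmas~\ref{msble1} and~\ref{msble2} give a single constant bound on all selections (thanks to the linear growth condition and, for $F_y$, the compact support of $\mu$), making uniform integrability immediate --- is not just a simplification but the correct way to place the application within the hypotheses of the cited result. Finally, the reference you would cite for the statement itself is \cite[Thm.~2.2.2]{shoumei} (the one the paper uses), not Thm.~1.2.3, which concerns measurability characterizations.
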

For a proof of the above lemma we refer the reader to \cite[Thm.~2.2.2]{shoumei}. 

Now we shall briefly investigate the measurability properties of a SAM. First we shall define slices of a SAM, $F$, for each 
$(x,y)\in\mathbb{R}^d$ and for each $y\in\mathbb{R}^{d_2}$. As shown in Lemma \ref{ctem}, when $F$ is a SAM there exists 
$\left\{F^{(l)}\right\}_{l\geq1}$ a sequence of continuous set-valued maps which approximate $F$ and for every $l\geq1$, the set-valued map 
$F^{(l)}$ can be parametrized with single-valued maps $f^{(l)}$ as in Lemma \ref{param}. We shall define similar slices of $F^{(l)}$ and 
$f^{(l)}$ as well.
 
\begin{definition}\label{slices} Let $F:\mathbb{R}^d\times\mathcal{S}\rightarrow\left\{\text{subsets of }\mathbb{R}^k\right\}$ be a SAM. Let  
$\left\{F^{(l)}\right\}_{l\geq1}$ and $\left\{f^{(l)}\right\}_{l\geq1}$ be as in Lemma \ref{ctem} and Lemma \ref{param} respectively.
 \begin{itemize}
  \item [(i)]For every $(x,y)\in\mathbb{R}^d$, define $F_{(x,y)}:\mathcal{S}\rightarrow\left\{\text{subsets of }\mathbb{R}^k\right\}$ such that for every 
  $s\in \mathcal{S}$, $F_{(x,y)}(s):=F(x,y,s)$.
  \item [(ii)]For every $l\geq1$, for every $(x,y)\in\mathbb{R}^d$, define $F^{(l)}_{(x,y)}:\mathcal{S}\rightarrow\left\{\text{subsets of }\mathbb{R}^k\right\}$ 
  such that for every $s\in\mathcal{S}$, $F^{(l)}_{(x,y)}(s):=F^{(l)}(x,y,s)$.
  \item [(iii)]For every $l\geq1$, for every $(x,y)\in\mathbb{R}^d$, define $f^{(l)}_{(x,y)}:\mathcal{S}\times U\rightarrow\mathbb{R}^k$ such that for every
  $(s,u)\in\mathcal{S}\times U$, $f^{(l)}_{(x,y)}(s,u):=f^{(l)}(x,y,s,u)$.
  \item [(iv)] For every $y\in\mathbb{R}^{d_2}$, define $F_{y}:\mathbb{R}^{d_1}\times\mathcal{S}\rightarrow\left
  \{\text{subsets of }\mathbb{R}^k\right\}$ such that for every $(x,s)\in\mathbb{R}^{d_1}\times\mathcal{S}$, $F_{y}(x,s):=F(x,y,s)$.
  \item [(v)] For every $l\geq1$, for every $y\in\mathbb{R}^{d_2}$, define $F^{(l)}_{y}:\mathbb{R}^{d_1}\times\mathcal{S}\rightarrow\left
  \{\text{subsets of }\mathbb{R}^k\right\}$ such that for every $(x,s)\in\mathbb{R}^{d_1}\times\mathcal{S}$, $F^{(l)}_{y}(x,s):=F^{(l)}(x,y,s)$.
  \item [(vi)]For every $l\geq1$, for every $y\in\mathbb{R}^{d_2}$, define $f^{(l)}_{y}:\mathbb{R}^{d_1}\times\mathcal{S}\times U\rightarrow
  \mathbb{R}^k$ such that for every $(x,s,u)\in\mathbb{R}^{d_1}\times\mathcal{S}\times U$, $f^{(l)}_{y}(x,s,u):=f^{(l)}(x,y,s,u)$.
 \end{itemize}
\end{definition}

The next two lemmas summarize properties that the slices inherit from the maps $F,\ F^{(l)}$ and $f^{(l)}$. Let $\mathscr{B}(\mathcal{S})$ denote the Borel 
sigma algebra associated with the metric space $(\mathcal{S},d_\mathcal{S})$.

\begin{lemma}\label{msble1}
Let $F:\mathbb{R}^d\times\mathcal{S}\rightarrow\left\{\text{subsets of }\mathbb{R}^k\right\}$ be a SAM. Let  
$\left\{F^{(l)}\right\}_{l\geq1}$ and $\left\{f^{(l)}\right\}_{l\geq1}$ be as in Lemma \ref{ctem} and Lemma \ref{param} respectively. For every 
$(x,y)\in\mathbb{R}^d$, let $F_{(x,y)},\ F_{(x,y)}^{(l)}$ and $f^{(l)}_{(x,y)}$ denote the slices as in Definition \ref{slices}. Then for 
every $(x,y)\in\mathbb{R}^d$,
\begin{itemize}
 \item [(i)] $F_{(x,y)}$ is a measurable set-valued map and for every $s\in\mathcal{S}$, $F_{(x,y)}(s)$ is a convex and compact subset of $\mathbb{R}^k$.
 Further, there exists $C_{(x,y)}=K(1+\left\|(x,y)\right\|)>0$ such that for every $s\in\mathcal{S}$, $\sup_{z\in F_{(x,y)}(s)}\left\|z\right\|\leq C_{(x,y)}$.
 (If $F$ satisfies condition $(c)'$ instead of condition $(c)$ in the definition of SAM, we have $C_{(x,y)}=K(1+\left\|x\right\|+\left\|y\right\|)$).
 \item [(ii)] for every $l\geq1$, $F^{(l)}_{(x,y)}$ is a measurable set-valued map and for every $s\in\mathcal{S}$, $F^{(l)}_{(x,y)}(s)$ is a convex and 
 compact subset of $\mathbb{R}^k$. Further, there exists $C^{(l)}_{(x,y)}=K^{(l)}(1+\left\|(x,y)\right\|)>0$ such that for every $s\in\mathcal{S}$, 
 $\sup_{z\in F^{(l)}_{(x,y)}(s)}\left\|z\right\|\leq C^{(l)}_{(x,y)}$. (If $F$ satisfies condition $(c)'$ instead of condition $(c)$ in the definition of SAM,
 we have $C^{(l)}_{(x,y)}=K^{(l)}(1+\left\|x\right\|+\left\|y\right\|)$).
 \item [(iii)] for any probability measure $\mu$ on $(\mathcal{S},\mathscr{B}(S))$, every measurable selection of $F_{(x,y)}$ is $\mu$-integrable and 
 hence $F_{(x,y)}$ is $\mu$-integrable.
 \item [(iv)] for every $l\geq1$, for any probability measure $\mu$ on $(\mathcal{S},\mathscr{B}(S))$, every measurable selection of $F^{(l)}_{(x,y)}$ 
 is $\mu$-integrable and hence $F^{(l)}_{(x,y)}$ is $\mu$-integrable.
 \item [(v)] for every $l\geq1$, $f^{(l)}_{(x,y)}$ is continuous and for every $s\in\mathcal{S}$, $f^{(l)}_{(x,y)}(s,U)=F_{(x,y)}^{(l)}(s)$ and 
 $\sup_{u\in U}\left\|f^{(l)}_{(x,y)}(s,u)\right\|\leq C^{(l)}_{(x,y)}$ where $C^{(l)}_{(x,y)}$ is as in part $(ii)$ of this lemma. 
\end{itemize}
\end{lemma}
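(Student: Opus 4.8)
The plan is to notice that, once $(x,y)$ is held fixed, almost every assertion in the lemma is just a property of $F$, $F^{(l)}$ or $f^{(l)}$ read off along the slice, and that the single point which genuinely needs an argument is the Borel measurability of $F_{(x,y)}$ and $F^{(l)}_{(x,y)}$; the integrability claims (iii)--(iv) then drop out of the uniform boundedness of the slices. The hard part, such as it is, will be the measurability; everything else is bookkeeping.

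For measurability I would argue directly. Fix $(x,y)\in\mathbb{R}^d$ and a closed set $C\subseteq\mathbb{R}^k$, and show that $F_{(x,y)}^{-1}(C)=\{s\in\mathcal{S}:F(x,y,s)\cap C\neq\emptyset\}$ is closed in $\mathcal{S}$, which yields measurability with respect to $\mathscr{B}(\mathcal{S})$. Take $s_n\to s_0$ in $\mathcal{S}$ and $z_n\in F(x,y,s_n)\cap C$. By the pointwise boundedness condition (Definition \ref{sam}(c) or (c)$'$) one has $\|z_n\|\le K(1+\|(x,y)\|)$ for all $n$, so a subsequence $z_{n_j}\to z_0$ exists; $z_0\in C$ since $C$ is closed, and $z_0\in F(x,y,s_0)$ by the closed graph property (Definition \ref{sam}(b)), applied with the constant first component $(x,y)$. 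Hence $s_0\in F_{(x,y)}^{-1}(C)$, and $F_{(x,y)}^{-1}(C)$ is closed. The identical argument works for $F^{(l)}_{(x,y)}$, now using that $F^{(l)}$ is continuous with convex compact values (Lemma \ref{ctem}), hence has the closed graph property, and using the bound $K^{(l)}(1+\|(x,y)\|)$ from Lemma \ref{ctem}(iii); alternatively one can simply quote that a continuous compact-valued map is measurable, e.g.\ via \cite[Thm.~1.2.3]{shoumei}.

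The remaining parts of (i) and (ii) are immediate: convexity and compactness of $F_{(x,y)}(s)=F(x,y,s)$ and $F^{(l)}_{(x,y)}(s)=F^{(l)}(x,y,s)$ are Definition \ref{sam}(a) and Lemma \ref{ctem}(i), and the uniform bounds $C_{(x,y)}=K(1+\|(x,y)\|)$ and $C^{(l)}_{(x,y)}=K^{(l)}(1+\|(x,y)\|)$ (respectively with $\|x\|+\|y\|$ under (c)$'$) are Definition \ref{sam}(c)/(c)$'$ and Lemma \ref{ctem}(iii) with $(x,y)$ fixed. For (iii), let $\mu$ be any probability measure on $(\mathcal{S},\mathscr{B}(\mathcal{S}))$ and $f\in\mathscr{S}(F_{(x,y)})$; then $\|f(s)\|\le C_{(x,y)}$ for all $s\in\mathcal{S}$, so $\int_{\mathcal{S}}\|f(s)\|\,\mu(ds)\le C_{(x,y)}<\infty$, i.e.\ $f$ is $\mu$-integrable. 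Since $\mathscr{S}(F_{(x,y)})\neq\emptyset$ by Lemma \ref{msel}(i), $F_{(x,y)}$ is $\mu$-integrable; claim (iv) follows verbatim with $C^{(l)}_{(x,y)}$ replacing $C_{(x,y)}$. Finally, for (v), $f^{(l)}_{(x,y)}$ is the restriction of the continuous map $f^{(l)}$ to $\{(x,y)\}\times\mathcal{S}\times U$ and is therefore continuous; by Lemma \ref{param}(i), $f^{(l)}_{(x,y)}(s,U)=f^{(l)}(x,y,s,U)=F^{(l)}(x,y,s)=F^{(l)}_{(x,y)}(s)$; and by Lemma \ref{param}(ii), $\sup_{u\in U}\|f^{(l)}_{(x,y)}(s,u)\|\le K^{(l)}(1+\|(x,y)\|)=C^{(l)}_{(x,y)}$.
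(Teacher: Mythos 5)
Your proposal is correct and follows essentially the same route as the paper: establish measurability of the slice by showing $F_{(x,y)}^{-1}(C)$ is closed via the closed graph (and boundedness) properties of the SAM, read off convexity, compactness and the bounds $C_{(x,y)}$, $C^{(l)}_{(x,y)}$ directly from Definition \ref{sam} and Lemma \ref{ctem}, deduce integrability of all selections from boundedness, and obtain part $(v)$ from Lemma \ref{param}. You merely spell out the subsequence argument that the paper's brief outline leaves implicit, so there is nothing to add.
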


The proof of the above lemma is similar to that of \cite[Lemma~4.1]{vin1t} and we shall provide a brief outline here. Fix $(x,y)\in\mathbb{R}^d$. In order to 
show that $F_{(x,y)}$ is measurable, one needs to establish that $F_{(x,y)}^{-1}(C)\in\mathscr{B}(\mathcal{S})$ for any $C\subseteq\mathbb{R}^k$ closed. 
Using the closed graph property of $F$ one can show that $F_{(x,y)}^{-1}(C)$ is closed subset of $\mathcal{S}$ and hence is in 
$\mathscr{B}(\mathcal{S})$. The bound $C_{(x,y)}$ and the claim that $F_{(x,y)}(s)$ is convex and compact for every $s\in\mathcal{S}$ follows from 
conditions $(c)$ (or $(c)'$) and $(a)$ in the definition of SAM respectively. Since all measurable selections of $F_{(x,y)}$ are bounded, they are 
$\mu$-integrable for any probability measure $\mu$ on $(\mathcal{S},\mathscr{B}(\mathcal{S}))$. The arguments are exactly same for the claims 
associated with the slices of approximating maps $F^{(l)}$, for every $l\geq1$. Finally the part $(v)$ of the above lemma follows from the 
properties of maps $f^{(l)}$ stated in Lemma \ref{param}.

Let $\mu$ be a probability measure on $(\mathbb{R}^{d_1}\times\mathcal{S},\mathscr{B}(\mathbb{R}^{d_1}\times\mathcal{S}))$ where 
$\mathscr{B}(\mathbb{R}^{d_1}\times\mathcal{S})$ denotes the Borel sigma algebra on metric space $\mathbb{R}^{d_1}\times\mathcal{S}$ with metric 
$\max\left\{\left\|x-x'\right\|,d_{\mathcal{S}}(s,s')\right\}$ for every $(x,s),\ (x',s')\in\mathbb{R}^{d_1}\times\mathcal{S}$ (in fact 
$\mathscr{B}(\mathbb{R}^{d_1}\times\mathcal{S})$ is the same as the product sigma algebra $\mathscr{B}(\mathbb{R}^{d_1})\otimes\mathcal{S}$). The 
\it{support }\rm of the measure $\mu$ denoted by $\mathrm{supp}(\mu)$ is defined as a closed subset of $\mathbb{R}^{d_1}\times\mathcal{S}$ such that,
\begin{itemize}
 \item [(1)] $\mu(\mathrm{supp}(\mu))=1$,
 \item [(2)] for any other closed set $A\subseteq\mathbb{R}^{d_1}\times\mathcal{S}$ such that $\mu(A)=1$, we have $\mathrm{supp}(\mu)\subseteq A$.
\end{itemize}
For any probability measure $\mu$ on $\mathbb{R}^{d_1}\times\mathcal{S}$ such a set always exists and is unique (see \cite[ch.~2, Thm.~2.1]{parth}). 

\begin{lemma}\label{msble2}Let $F:\mathbb{R}^d\times\mathcal{S}\rightarrow\left\{\text{subsets of }\mathbb{R}^k\right\}$ be a SAM satisfying 
condition $(c)'$ instead of condition $(c)$ in the definition of SAM. Let $\left\{F^{(l)}\right\}_{l\geq1}$ and $\left\{f^{(l)}\right\}_{l\geq1}$
be as in Lemma \ref{ctem} and Lemma \ref{param} respectively. For every $(x,y)\in\mathbb{R}^d$, let $F_{y},\ F_{y}^{(l)}$ and $f^{(l)}_{y}$ 
denote the slices as in Definition \ref{slices}. Then, for every $y\in\mathbb{R}^{d_2}$,
\begin{itemize}
 \item [(i)]$F_{y}$ is a measurable set-valued map and for every $(x,s)\in\mathbb{R}^{d_1}\times\mathcal{S}$, $F_{y}(x,s)$ is a convex 
 and compact subset of $\mathbb{R}^k$. Further for every $(x,s)\in\mathbb{R}^{d_1}\times\mathcal{S}$, $\sup_{z\in F_{y}(x,s)}
 \left\|z\right\|\leq K_{y}(1+\left\|x\right\|)$ where $K_{y}:=\max\left\{K,K\left\|y\right\|\right\}$ and $K$ is as in condition $(c)'$ 
 in the definition of SAM.
 \item [(ii)] for every $l\geq1$, $F^{(l)}_{y}$ is a measurable set-valued map and for every $(x,s)\in\mathbb{R}^{d_1}\times\mathcal{S}$, 
 $F^{(l)}_{y}(x,s)$ is a convex and compact subset of $\mathbb{R}^k$. Further for every $(x,s)\in\mathbb{R}^{d_1}\times\mathcal{S}$, 
 $\sup_{z\in F_{y}(x,s)}\left\|z\right\|\leq K^{(l)}_{y}(1+\left\|x\right\|)$ where $K^{(l)}_{y}:=\max\left\{K^{(l)},K^{(l)}
 \left\|y\right\|\right\}$ and $K^{(l)}$ is as in Lemma \ref{ctem}$(iii)$.
 \item [(iii)] for every probability measure $\mu$ on $(\mathbb{R}^{d_1}\times\mathcal{S},\mathscr{B}(\mathbb{R}^{d_1}\times\mathcal{S}))$ such 
 that $\mathrm{supp}(\mu)$ is a compact subset of $\mathbb{R}^{d_1}\times\mathcal{S}$, every measurable selection of $F_{y}$ is $\mu$-integrable and 
 hence $F_{y}$ is $\mu$-integrable.
 \item [(iv)] for every $l\geq1$, for every probability measure $\mu$ on $(\mathbb{R}^{d_1}\times\mathcal{S},\mathscr{B}(\mathbb{R}^{d_1}\times
 \mathcal{S}))$ such that $\mathrm{supp}(\mu)$ is a compact subset of $\mathbb{R}^{d_1}\times\mathcal{S}$, every measurable selection of $F^{(l)}_{y}$ 
 is $\mu$-integrable and hence $F^{(l)}_{y}$ is $\mu$-integrable.
 \item [(v)] for every $l\geq1$, $f^{(l)}_{y}$ is continuous and for every $(x,s)\in\mathbb{R}^{d_1}\times\mathcal{S}$, 
 $f^{(l)}_{y}(x,s,U)=F_{y}^{(l)}(x,s)$ and $\sup_{u\in U}\left\|f^{(l)}_{y}(x,s,u)\right\|\leq K^{(l)}_{y}(1+\left\|x\right\|)$ 
 where $K^{(l)}_{y}$ is as in part $(ii)$ of this lemma. 
\end{itemize}
\end{lemma}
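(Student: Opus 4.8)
The plan is to mirror, essentially step for step, the outline already given for Lemma \ref{msble1} (equivalently \cite[Lemma~4.1]{vin1t}), the one structural novelty being that each slice $F_y$ is now a set-valued map on the \emph{non-compact} space $\mathbb{R}^{d_1}\times\mathcal{S}$ instead of on the compact space $\mathcal{S}$. This is exactly the point at which the hypothesis that $F$ satisfies condition $(c)'$ (rather than $(c)$) enters: it is what forces the pointwise bounds to carry the factor $(1+\|x\|)$, and it is why $\mu$-integrability is asserted only for measures whose support is compact. Throughout I would fix $y\in\mathbb{R}^{d_2}$.

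For part $(i)$ I would argue as follows. Convexity and compactness of $F_y(x,s)=F(x,y,s)$ for each $(x,s)$ is immediate from condition $(a)$ of Definition \ref{sam}, and the pointwise norm bound comes straight from condition $(c)'$ once the $\|y\|$-term is absorbed into a $y$-dependent constant multiplying $(1+\|x\|)$. For measurability I would first check that the graph $\mathscr{G}(F_y)=\{(x,s,z):z\in F(x,y,s)\}$ is closed, which follows from the closed graph property (condition $(b)$) of $F$ together with continuity of $(x,s)\mapsto(x,y,s)$; then, for each closed $C\subseteq\mathbb{R}^k$, I would show $F_y^{-1}(C)$ is closed by taking a convergent sequence $(x_n,s_n)\to(x_0,s_0)$ with selected points $z_n\in F_y(x_n,s_n)\cap C$, using condition $(c)'$ (and boundedness of $\{(x_n,s_n)\}$, which holds since $x_n$ converges and $\mathcal{S}$ is compact) to extract a subsequential limit $z_0$, and concluding $z_0\in C\cap F_y(x_0,s_0)$ from closedness of $C$ and of the graph. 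This gives $F_y^{-1}(C)\in\mathscr{B}(\mathbb{R}^{d_1}\times\mathcal{S})$. Part $(ii)$ is the very same argument applied to $F^{(l)}$, invoking Lemma \ref{ctem}: $F^{(l)}$ is continuous (hence compact-valued with closed graph), has convex values, and satisfies the linear bound of Lemma \ref{ctem}$(iii)$ in its $(c)'$ form, which yields the stated bound with $K^{(l)}_y$.

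Part $(v)$ I would read off directly from Lemma \ref{param}: $f^{(l)}$ is continuous, so $f^{(l)}_y$ is continuous as the composition of $f^{(l)}$ with the continuous inclusion $(x,s,u)\mapsto(x,y,s,u)$; the identity $f^{(l)}_y(x,s,U)=F^{(l)}_y(x,s)$ is Lemma \ref{param}$(i)$ and the bound is Lemma \ref{param}$(ii)$ in its $(c)'$ form. For parts $(iii)$ and $(iv)$, given a probability measure $\mu$ with $\mathrm{supp}(\mu)$ compact I would set $R:=\sup\{\|x\|:(x,s)\in\mathrm{supp}(\mu)\}<\infty$; since $F_y$ is measurable, $\mathscr{S}(F_y)\neq\emptyset$ by Lemma \ref{msel}$(i)$, and for any $g\in\mathscr{S}(F_y)$ one has $\|g(x,s)\|\leq K_y(1+\|x\|)\leq K_y(1+R)$ for every $(x,s)\in\mathrm{supp}(\mu)$, so $\int\|g\|\,d\mu\leq K_y(1+R)<\infty$ because $\mu$ is a probability measure concentrated on $\mathrm{supp}(\mu)$. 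Hence every measurable selection of $F_y$ is $\mu$-integrable, so $F_y$ is $\mu$-integrable; the identical computation with $K^{(l)}_y$ handles $(iv)$.

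The only step that needs a little care is the measurability argument: closedness of $F_y^{-1}(C)$ cannot be obtained from the closed graph property by itself and genuinely requires the pointwise boundedness condition $(c)'$ to stop the selected points $z_n$ from running off to infinity — on the compact domain of Lemma \ref{msble1} this issue does not arise, which is why $(c)$ sufficed there. Everything else amounts to bookkeeping with the $y$-dependent constants $K_y$ and $K^{(l)}_y$.
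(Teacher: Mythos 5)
Your proposal is correct and follows essentially the same route as the paper: parts $(i)$, $(ii)$ and $(v)$ are the same slice-wise adaptation of Lemma \ref{msble1} (closed-graph measurability, norm bounds inherited from condition $(c)'$, and Lemma \ref{param} for the parametrization), and your argument for $(iii)$–$(iv)$ — bounding every measurable selection by $K_{y}(1+R)$ on the compact support of $\mu$ — is exactly the computation the paper writes out for part $(iii)$. Your closing observation that the local boundedness supplied by $(c)'$ is what allows extraction of a convergent subsequence of the $z_n$ in the measurability step is a correct refinement of a point the paper leaves implicit, and does not alter the approach.
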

The proof of parts $(i),\ (ii)$ and $(v)$ of the above lemma are similar to the corresponding in Lemma \ref{msble1}. We shall provide a proof of 
part $(iii)$ and the proof of part $(iv)$ is exactly the same.
\begin{proof}Fix $y\in\mathbb{R}^{d_2}$.
 \begin{itemize}
  \item [(iii)] Consider $f \in \mathscr{S}(F_{y})$. By part $(i)$ of this lemma we have that $\left\|f(x,s)\right\|\leq K_{y}
  (1+\left\|x\right\|)$. Since $\mathrm{supp}(\mu)$ is a compact subset of $\mathbb{R}^{d_1}\times\mathcal{S}$, there exists $M>0$ such that 
  for every $x\in\mathbb{R}^{d_1}$ for which there exists $s\in\mathcal{S}$ satisfying $(x,s)\in \mathrm{supp}(\mu)$, we have $\left\|x\right\|\leq M$.
  Hence $\left\|\int_{\mathbb{R}^{d_1}\times\mathcal{S}}f(x,s)\mu(dx,ds)\right\|=\left\|\int_{\mathrm{supp}(\mu)}f(x,s)
  \mu(dx,ds)\right\|\leq\int_{\mathrm{supp}(\mu)}\left\|f(x,s)\right\|\mu(dx,ds)\leq\int_{\mathrm{supp}(\mu)}K_{y}(1+\left\|x\right\|)\mu(dx,ds)\leq 
  K_{y}(1+M)$. Therefore every measurable selection $f\in\mathscr{S}(F_{y})$ is $\mu$-integrable and hence $F_{y}$ is $\mu$-integrable.\qed
 \end{itemize}
\end{proof}

By Lemma \ref{msble1}$(iv)$ and $(v)$ we know that $F^{(l)}_{(x,y)}$ is a $\mu$-integrable set-valued map for any probability measure $\mu$ on 
$(\mathcal{S},\mathscr{B}(\mathcal{S}))$ and $f^{(l)}_{(x,y)}$ is a continuous parametrization of $F^{(l)}_{(x,y)}$ for every $l\geq1$ and for every 
$(x,y)\in\mathbb{R}^{d}$. Similarly, by Lemma \ref{msble2}$(iv)$ and $(v)$ we know that $F^{(l)}_{y}$ is $\mu$-integrable for any probability 
measure $\mu$ on $(\mathbb{R}^{d_1}\times\mathcal{S},\mathscr{B}(\mathbb{R}^{d_1}\times\mathcal{S}))$ with compact support and $f^{(l)}_{y}$ is 
a continuous parametrization of $F^{(l)}_{y}$ for every $l\geq1$ and for every $y\in\mathbb{R}^{d_2}$. A natural question to ask is about the 
relation between integral of map $F^{(l)}_{(x,y)}$ (or $F^{(l)}_{y}$) and the integral of its parametrization $f^{(l)}_{(x,y)}$ (or $f^{(l)}_{y}$). The 
next lemma answers this question. Before stating the lemma we introduce the following notation which will be used throughout this paper.

Let $\mathcal{P}(\cdots)$ denote the space of probability measures on a Polish space $\lq\cdots\rq$ with the Prohorov topology (also known 
as the topology of convergence in distribution,see \cite[ch.~2]{borkarap} for details). For any probability measure $\nu\in 
\mathcal{P}(\mathcal{S}\times U)$, let $\nu_{\mathcal{S}}\in\mathcal{P}(\mathcal{S})$ denote the image of measure $\nu$ under the projection 
$\mathcal{S}\times U\rightarrow\mathcal{S}$ (that is for any $A\in\mathscr{B}(\mathcal{S})$, $\nu_{\mathcal{S}}(A)=\int_{A\times U}\mu(ds,du)$).
Similarly, for any probability measure $\nu\in\mathcal{P}(\mathbb{R}^{d_1}\times\mathcal{S}\times U)$, let 
$\nu_{\mathbb{R}^{d_1}\times\mathcal{S}},\ \nu_{\mathcal{S}}$ and $\nu_{\mathbb{R}^{d_1}}$ belonging to $\mathcal{P}(\mathbb{R}^{d_1}
\times\mathcal{S}),\ \mathcal{P}(\mathcal{S})$ and $\mathcal{P}(\mathbb{R}^{d_1})$ respectively denote the image of measure $\nu$ under the 
projections $\mathbb{R}^{d_1}\times\mathcal{S}\times U\rightarrow\mathbb{R}^{d_1}\times\mathcal{S},\ \mathbb{R}^{d_1}\times\mathcal{S}\times 
U\rightarrow\mathcal{S}$ and $\mathbb{R}^{d_1}\times\mathcal{S}\times U\rightarrow\mathbb{R}^{d_1}$ respectively.

\begin{lemma}\label{chint}
Let $F:\mathbb{R}^d\times\mathcal{S}\rightarrow\left\{\text{subsets of }\mathbb{R}^k\right\}$ be a SAM. Let $\left\{F^{(l)}\right\}_{l\geq1}$ 
and $\left\{f^{(l)}\right\}_{l\geq1}$ be as in Lemma \ref{ctem} and Lemma \ref{param} respectively. For every $l\geq1$, for every 
$(x,y)\in\mathbb{R}^d$, let $F_{(x,y)}^{(l)},\ f^{(l)}_{(x,y)}$ and for every $y\in\mathbb{R}^{d_2}$ let $F^{(l)}_{y},\ f^{(l)}_{y}$ denote the 
slices as in Definition \ref{slices}.
\begin{itemize}
 \item [(i)] For every $l\geq1$, for every $(x,y)\in\mathbb{R}^d$, for any probability measure $\mu\in\mathcal{P}(\mathcal{S})$,
 \begin{equation*}
  \int_{\mathcal{S}}F^{(l)}_{(x,y)}(s)\mu(ds)=\left\{\int_{\mathcal{S}\times U}f^{(l)}_{(x,y)}(s,u)\nu(ds,du):\nu\in\mathcal{P}(\mathcal{S}\times U),\  
  \nu_{\mathcal{S}}=\mu\right\}.
 \end{equation*}
 \item [(ii)] Suppose $F$ satisfies condition $(c)'$ instead of condition $(c)$ in the definition of SAM. Then for every $l\geq1$, for every $y\in\mathbb{R}^{d_2}$, 
 for any probability measure $\mu\in\mathcal{P}(\mathbb{R}^{d_1}\times\mathcal{S})$ with compact support,
 \begin{small}
 \begin{equation*}
  \int_{\mathbb{R}^{d_1}\times\mathcal{S}}F^{(l)}_{y}(dx,ds)\mu(dx,ds)=\left\{\int_{\mathbb{R}^{d_1}\times\mathcal{S}\times U}f^{(l)}_{y}
  (x,s,u)\nu(dx,ds,du):\nu\in\mathcal{P}(\mathbb{R}^{d_1}\times\mathcal{S}\times U),\ \nu_{\mathbb{R}^{d_1}\times\mathcal{S}}=\mu\right\}.
 \end{equation*}
 \end{small}
\end{itemize}
\end{lemma}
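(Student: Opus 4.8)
The plan is to prove each of the two displayed identities by showing both inclusions, carrying out part (i) in detail and then observing that part (ii) follows by the identical argument with $\mathbb{R}^{d_1}\times\mathcal{S}$ in the role of $\mathcal{S}$. Fix $l\geq1$ and $(x,y)\in\mathbb{R}^d$ and abbreviate $F_0:=F^{(l)}_{(x,y)}$, $f_0:=f^{(l)}_{(x,y)}$; by Lemma \ref{msble1}$(ii)$ and $(v)$, $F_0$ is a measurable, convex- and compact-valued, uniformly bounded set-valued map on $\mathcal{S}$ with $F_0(s)=f_0(s,U)$, and $f_0$ is continuous and bounded on $\mathcal{S}\times U$. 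Consequently every measurable selection of $F_0$ and every map of the form $f_0(\cdot,u(\cdot))$ with $u\colon\mathcal{S}\to U$ measurable is $\mu$-integrable, so the two sides of the identity in (i) are well-defined subsets of $\mathbb{R}^k$.

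To prove the inclusion $\subseteq$, I would take an arbitrary $g\in\mathscr{S}(F_0)$ and apply a measurable-selection (Filippov implicit-function) argument: since $f_0$ is continuous and $g(s)\in f_0(s,U)$ for every $s\in\mathcal{S}$, the set-valued map $s\mapsto\{u\in U:f_0(s,u)=g(s)\}$ has nonempty closed values and is measurable, hence admits a measurable selection $u_g\colon\mathcal{S}\to U$ with $g(s)=f_0(s,u_g(s))$. Taking $\nu\in\mathcal{P}(\mathcal{S}\times U)$ to be the image of $\mu$ under $s\mapsto(s,u_g(s))$, one has $\nu_{\mathcal{S}}=\mu$ and, by the change-of-variables formula, $\int_{\mathcal{S}\times U}f_0(s,u)\,\nu(ds,du)=\int_{\mathcal{S}}g(s)\,\mu(ds)$, so $\int_{\mathcal{S}}g\,d\mu$ lies in the right-hand set.

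To prove the inclusion $\supseteq$, I would start from $\nu\in\mathcal{P}(\mathcal{S}\times U)$ with $\nu_{\mathcal{S}}=\mu$ and disintegrate it (legitimate since $\mathcal{S}\times U$ is Polish) as $\nu(ds,du)=\mu(ds)\,\nu_s(du)$ for a measurable family $\{\nu_s\}_{s\in\mathcal{S}}$ of probability measures on $U$. Setting $g(s):=\int_U f_0(s,u)\,\nu_s(du)$, the joint continuity and boundedness of $f_0$ together with measurability of $s\mapsto\nu_s$ make $g$ measurable and bounded; moreover $g(s)$ is the barycentre of a probability measure carried by the convex compact set $f_0(s,U)=F_0(s)$ and therefore belongs to $F_0(s)$. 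Hence $g\in\mathscr{S}(F_0)$, and Fubini's theorem gives $\int_{\mathcal{S}}g\,d\mu=\int_{\mathcal{S}\times U}f_0\,d\nu$, establishing the reverse inclusion. Part (ii) is then obtained verbatim over $(\mathbb{R}^{d_1}\times\mathcal{S},\mathscr{B}(\mathbb{R}^{d_1}\times\mathcal{S}))$ with $f^{(l)}_y$ replacing $f^{(l)}_{(x,y)}$, now invoking Lemma \ref{msble2}$(ii)$, $(iv)$ and $(v)$ and using the compactness of $\mathrm{supp}(\mu)$ (equivalently, a uniform bound on $\|x\|$ over $\mathrm{supp}(\mu)$) precisely where $\mu$-integrability of the selections is required.

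I expect the main obstacle to be the measurable-selection step in the $\subseteq$ inclusion, namely producing from an abstract $\mathbb{R}^k$-valued selection $g$ of $F_0$ a measurable $U$-valued parameter $u_g$ with $g=f_0(\cdot,u_g(\cdot))$; this is exactly Filippov's implicit-function lemma and needs the verification that the fibre map $s\mapsto\{u\in U:f_0(s,u)=g(s)\}$ is closed-valued (clear from continuity of $f_0$ and compactness of $U$) and measurable (from joint measurability of $(s,u)\mapsto f_0(s,u)-g(s)$). In the $\supseteq$ direction the only point requiring care is the measurability of the averaged selection $g(s)=\int_U f_0(s,u)\,\nu_s(du)$, which is routine once a regular conditional probability $s\mapsto\nu_s$ has been fixed.
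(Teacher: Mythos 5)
Your proposal is correct and follows essentially the same route as the paper's own proof: for the forward inclusion a Filippov-type measurable selection of the fibre map $s\mapsto\{u\in U: f^{(l)}_{(x,y)}(s,u)=g(s)\}$ followed by a push-forward of $\mu$, and for the reverse inclusion disintegration of $\nu$ over $\mu$ together with the barycentre-in-a-convex-compact-set argument, with compactness of $\mathrm{supp}(\mu)$ invoked in part (ii) exactly where integrability is needed. The only cosmetic difference is that the paper writes out part (ii) and cites its single-timescale predecessor for part (i), whereas you detail part (i) and transfer it verbatim to part (ii).
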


\begin{remark}
For any $\nu\in\mathcal{P}(\mathbb{R}^{d_1}\times\mathcal{S}\times U)$ with $\nu_{\mathbb{R}^{d_1}\times\mathcal{S}}=\mu$, the support 
of the measure $\nu$, is contained in $\mathrm{supp}(\mu)\times U$, since by \cite[ch.~3, Cor.3.1.2]{borkarap} there exists a $\mu$ a.s. unique measurable 
map $q:\mathbb{R}^{d_1}\times\mathcal{S}\rightarrow\mathcal{P}(U)$ such that, $\nu(dx,ds,du)=q(x,s,du)\mu(dx,ds)$ and 
$1=\nu(\mathbb{R}^{d_1}\times\mathcal{S}\times U)=\int_{\mathbb{R}^{d_1}\times\mathcal{S}\times U}\nu(dx,ds,du)=\int_{\mathbb{R}^{d_1}\times\
\mathcal{S}}\left[\int_{U}q(x,s,du)\right]\mu(dx,ds)=\int_{\mathrm{supp}(\mu)}\left[\int_{U}q(x,s,du)\right]\mu(dx,ds)=\nu(\mathrm{supp}(\mu)\times U)$.
Therefore when $\mathrm{supp}(\mu)$ is a compact set the support of measure $\nu$ is also compact and by Lemma \ref{msble2}$(v)$ it is easy to deduce that 
for all measures $\nu\in\mathcal{P}(\mathbb{R}^{d_1}\times\mathcal{S}\times U)$ with compact support, for all $y\in\mathbb{R}^{d_2}$, 
$f^{(l)}_{y}$ is $\nu$-integrable for all $l\geq1$. 
\end{remark}
The proof of part $(i)$ of the above lemma is exactly same as \cite[Lemma~4.2]{vin1t}. The proof of part $(ii)$ is similar but with minor 
technical modifications and is presented below. 
\begin{proof}
\begin{itemize}
 \item [(ii)] Fix $y\in\mathbb{R}^{d_2}$, $l\geq1$ and $\mu\in\mathcal{P}(\mathbb{R}^{d_1}\times\mathcal{S})$ with compact support.
 
 Consider $z\in\int_{\mathbb{R}^{d_1}\times\mathcal{S}}F^{(l)}_{y}(x,s)\mu(dx,ds)$. Then there exists $f\in\mathscr{S}(F^{(l)}_{y})$ such 
 that $z=\int_{\mathbb{R}^{d_1}\times\mathcal{S}}f(x,s)\mu(dx,ds)$. Let $G:\mathbb{R}^{d_1}\times\mathcal{S}\rightarrow\left\{\text{subsets 
 of }U\right\}$ be such that for every $(x,s)\in\mathbb{R}^{d_1}\times\mathcal{S}$, \begin{small}$G(x,s)=\left\{u\in U: f(x,s)=f^{(l)}_{y}(x,s,u)
 \right\}$\end{small}. By the fact that $f^{(l)}_{y}(x,s,U)=F^{(l)}_{y}(x,s)$ and since $f(x,s)\in F^{(l)}_{y}(x,s)$ for every $(x,s)
 \in\mathbb{R}^{d_1}\times\mathcal{S}$ we have that $G(x,s)$ is nonempty. By the continuity of $f^{(l)}_{y}(x,s,\cdot)$ we have that 
 $G(x,s)$ is closed for every $(x,s)\in\mathbb{R}^{d_1}\times\mathcal{S}$. For any $C\subseteq U$ closed, $G^{-1}(C)\in\mathscr{B}
 (\mathbb{R}^{d_1}\times\mathcal{S})$ (for a proof see \cite[Appendix~B]{vin1t}) and hence $G$ is measurable. Since $G$ is measurable, by Lemma \ref{msel}$(i)$ 
 we have that $\mathscr{S}(G)\neq\emptyset$. Let $g\in\mathscr{S}(G)$ and let $\hat{g}:\mathbb{R}^{d_1}\times\mathcal{S}\rightarrow
 \mathbb{R}^{d_1}\times\mathcal{S}\times U$ be such that for every $(x,s)\in\mathbb{R}^{d_1}\times\mathcal{S}$, 
 $\hat{g}(x,s):=(x,s,g(x,s))$. Let $\nu=\mu\hat{g}^{-1}$ (push-forward measure). Clearly $\nu_{\mathbb{R}^{d_1}\times U}=\mu$ and 
 $\int_{\mathbb{R}^{d_1}\times\mathcal{S}\times U}f^{(l)}_{y}(x,s,u)\nu(dx,ds,du)=\int_{\mathbb{R}^{d_1}\times\mathcal{S}\times U}f^{(l)}
 _{y}(x,s,u)\mu\hat{g}^{-1}(dx,ds,du)=\int_{\mathbb{R}^{d_1}\times\mathcal{S}}f^{(l)}_{y}(x,s,g(x,s))\mu(dx,ds)=
 \int_{\mathbb{R}^{d_1}\times\mathcal{S}}f(x,s)\mu(dx,ds)=z$. Therefore L.H.S. is contained in R.H.S.

 Let $\nu\in\mathcal{P}(\mathbb{R}^{d_1}\times\mathcal{S}\times U)$ with $\nu_{\mathbb{R}^{d_1}\times\mathcal{S}}=\mu$. By 
 \cite[Cor.~3.1.2]{borkarap}, there exists a $\mu$ a.s. unique measurable map $q:\mathbb{R}^{d_1}\times\mathcal{S}\rightarrow\mathcal{P}(U)$ such 
 that $\nu(dx,ds,du)=q(x,s,du)\mu(dx,ds)$. Since $\mu$ has compact support, $\nu$ has compact support and hence $f^{(l)}_{y}$ is $\nu$-
 integrable (see remark following Lemma \ref{chint}). Therefore $\int_{\mathbb{R}^{d_1}\times\mathcal{S}\times U}f^{(l)}_{y}(x,s,u)\nu(dx,
 ds,du)=\int_{\mathbb{R}^{d_1}\times\mathcal{S}}\left[\int_{U}f^{(l)}_{y}(x,s,u)q(x,s,du)\right]\mu(dx,ds)$. By Lemma \ref{msble2}$(v)$ 
 we know that for every $(x,s)\in\mathbb{R}^{d_1}\times\mathcal{S}$, $f^{(l)}_{y}(x,s,U)=F^{(l)}_{y}(x,s)$ and hence $f^{(l)}(x,s,U)$ 
 is convex and compact subset of $\mathbb{R}^k$. Therefore for every $(x,s)\in\mathbb{R}^{d_1}\times\mathcal{S}$, $\int_{U}f^{(l)}_{y}(x,
 s,u)q(x,s,du)\in F^{(l)}_{y}(x,s)$. Let $f:\mathbb{R}^{d_1}\times\mathcal{S}\rightarrow\mathbb{R}^k$ be such that for every 
 $(x,s)\in\mathbb{R}^{d_1}\times\mathcal{S}$, $f(x,s):=\int_{U}f^{(l)}(x,s,u)q(x,s,du)$. Then clearly, $f$ is measurable and 
 $f\in\mathscr{S}(F^{(l)}_{y})$. Therefore, $\int_{\mathbb{R}^{d_1}\times\mathcal{S}\times U}f^{(l)}_{y}(x,s,u)\nu(dx,ds,du)=\\
 \int_{\mathbb{R}^{d_1}\times\mathcal{S}}\left[\int_{U}f^{(l)}_{y}(x,s,u)q(x,s,du)\right]\mu(dx,ds)=\int_{\mathbb{R}^{d_1}\times
 \mathcal{S}}f(x,s)\mu(dx,ds)\in\int_{\mathbb{R}^{d_1}\times\mathcal{S}}F^{(l)}_{y}(x,s)\mu(dx,ds)$. The above gives us that R.H.S. is contained in L.H.S\qed
\end{itemize}
\end{proof}

\subsection{Differential inclusions and their limit sets}
\label{diatls}
In this section we shall review results from the theory of differential inclusions and state definitions of limit sets associated with such 
dynamical systems which are used later in the paper. Most of the results in this section are taken from \cite{benaim1}.

First we shall define a set-valued map whose associated differential inclusion (DI) is known to admit at-least one solution through every initial 
condition. Such set-valued maps are called Marchaud maps and the definition of such a map is stated below.

\begin{definition}\emph{[Marchaud map]}\label{marmap}
$F:\mathbb{R}^k\rightarrow\left\{\text{subsets of }\mathbb{R}^k\right\}$ is a Marchaud map if, 
\begin{itemize}
 \item [(i)] for every $z\in\mathbb{R}^k$, $F(z)$ is a convex and compact subset of $\mathbb{R}^k$,
 \item [(ii)] there exists $K>0$ such that for every $z\in\mathbb{R}^k$, $\sup_{z'\in F(z)}\left\|z'\right\|\leq K(1+\left\|z\right\|)$,
 \item [(iii)] for every $z\in\mathbb{R}^k$, for every $\mathbb{R}^k$-valued sequence, $\left\{z_n\right\}_{n\geq0}$ converging to $z\in\mathbb{R}^k$, 
 for every sequence $\left\{z'_n\in F(z_n)\right\}_{n\geq0}$ converging to $z'\in\mathbb{R}^k$, we have that $z'\in F(z)$.
\end{itemize}
\end{definition}

Let $F$ be a Marchaud map. Then the DI associated with the map $F$ is given by,
\begin{equation}\label{diex}
 \frac{dz}{dt}\in F(z).
\end{equation}
Since $F$ is a Marchaud map, it is known that the DI \eqref{diex}, admits at-least one solution through every initial condition (see 
\cite[sec.~1.2]{benaim1}). By a solution of DI \eqref{diex} with initial condition $z\in\mathbb{R}^k$, we mean a function 
$\bm{\mathrm{z}}:\mathbb{R}\rightarrow\mathbb{R}^k$ such that $\bm{\mathrm{z}}(\cdot)$ is absolutely continuous, $\bm{\mathrm{z}}(0)=z$ and for $a.e.$ 
$t\in\mathbb{R}$, $\frac{d\bm{\mathrm{z}}(t)}{dt}\in F(\bm{\mathrm{z}}(t))$.

Now we shall recall the notions of flow, invariant sets, attracting sets, attractors, basin of attraction and internally chain transitive sets.
All of these notions are taken from \cite{benaim1}.

The \bf{flow }\rm of DI \eqref{diex} is given by the set-valued map $\Phi:\mathbb{R}^k\times\mathbb{R}\rightarrow\left\{\text{subsets of }
\mathbb{R}^k\right\}$ such that for every $(z,t)\in\mathbb{R}^k\times\mathbb{R}$,
\begin{equation*}
 \Phi(z,t):=\left\{\bm{\mathrm{z}}(t): \bm{\mathrm{z}}\ is\ a\ solution\ of\ DI\ \eqref{diex}\ with\ \bm{\mathrm{z}}(0)=z\right\}.
\end{equation*}
For any set $A\subseteq\mathbb{R}^k$, let $\Phi(A,t):=\cup_{z\in A}\Phi(z,t)$.

A closed set $A\subseteq\mathbb{R}^k$ is \bf{invariant }\rm for the flow $\Phi$ of DI \eqref{diex} if for every $z\in A$, there exists a solution 
$\bf{z}\rm(\cdot)$ of DI \eqref{diex} such that, $\bm{\mathrm{z}}(0)=z$ and for every $t\in\mathbb{R}$, $\bm{\mathrm{z}}(t)\in A$.

A compact set $A\subseteq\mathbb{R}^k$ is an \bf{attracting }\rm set for the flow $\Phi$ of DI \eqref{diex}, if there exists an open 
neighborhood of $A$, say $\mathcal{O}$, with the property that for every $\epsilon>0$, there exists $T>0$ (depending only on $\epsilon>0$) such 
that for every $t\geq T$, $\Phi(\mathcal{O},t)\subseteq N^{\epsilon}(A)$, where $N^{\epsilon}(A)$ stands for the $\epsilon$-neighborhood of $A$.

A compact set $A\subseteq\mathbb{R}^k$ is an \bf{attractor }\rm for the flow $\Phi$ of DI \eqref{diex}, if $A$ is an attracting set and is 
invariant for the flow $\Phi$ of DI \eqref{diex}.

For any $z\in\mathbb{R}^k$, $\omega_{\Phi}(z):=\cap_{t\geq0}\overline{\Phi(z,[t,\infty))}$ where $\Phi(z,[t,\infty)):=\cup_{q\geq t}\Phi(z,q)$.
For any set $A\subseteq\mathbb{R}^k$, the \bf{basin of attraction }\rm of set $A$ is denoted by $B(A)$ and is defined as,
\begin{equation*}
 B(A):=\left\{z\in\mathbb{R}^k:\omega_{\Phi}(z)\subseteq A\right\}.
\end{equation*}

If $A\subseteq\mathbb{R}^k$ is an attractor whose basin of attraction is the whole of $\mathbb{R}^k$ (i.e. $B(A)=\mathbb{R}^k$) then $A$ is 
called a \bf{global attractor}\rm.

Given a set $A\subseteq\mathbb{R}^k$ and $z,z'\in A$, for any $\epsilon>0$ and $T>0$ there exists an $(\epsilon,T)$ chain from $z$ to $z'$ for 
DI\eqref{diex} if there exists an integer $n\in\mathbb{N}$, solutions $\bf{z}\rm_1,\dots,\bf{z}\rm_n$ to DI \eqref{diex} and real numbers 
$t_1,\dots,t_n$ greater than $T$ such that
\begin{itemize}
 \item for all $i\in\left\{1,\dots,n\right\}$ and for all $q\in[0,t_i]$, $\bm{\mathrm{z}}_i(q)\in A$,
 \item for all $i\in\left\{1,\dots,n\right\}$, $\parallel\bm{\mathrm{z}}_i(t_i)-\bm{\mathrm{z}}_{i+1}(0)\parallel\leq\epsilon$,
 \item $\parallel \bm{\mathrm{z}}_1(0)-z\parallel\leq\epsilon$ and $\parallel\bm{\mathrm{z}}_n(t_n)-z'\parallel\leq\epsilon$.
\end{itemize}
A compact set $A\subseteq\mathbb{R}^d$ is said to be \bf{internally chain transitive }\rm if for every $z,z'\in A$, for every $\epsilon>0$ and 
for every $T>0$, there exists $(\epsilon,T)$ chain from $z$ to $z'$ for DI \eqref{diex}.

Suppose $L\subseteq\mathbb{R}^k$ is an invariant set. Then the flow of DI \eqref{diex} restricted to the invariant set $L$ is a set-valued map, 
$\Phi^{L}:L\times\mathbb{R}\rightarrow\left\{\text{subsets of }\mathbb{R}^k\right\}$ such that for every $(z,t)\in L\times\mathbb{R}$,
\begin{equation}
 \Phi^{L}(z,t):=\left\{\bm{\mathrm{z}}(t):\bm{\mathrm{z}}(\cdot)\ is\ a\ solution\ of\ DI\ \eqref{diex}\ with\ \bm{\mathrm{z}}(0)=z\ and\ for\ every\ t\in\mathbb{R},\ \bm{\mathrm{z}}(t)
 \in L\right\}.
\end{equation}

\subsection{Single timescale stochastic recursive inclusions with non-additive iterate-dependent Markov noise}
\label{stsri}
In this section we review results from the analysis of single timescale stochastic recursive inclusions with non-additive iterate dependent 
Markov noise. All of the results presented here can be found in \cite{vin1t}. 

Let $(\Omega,\mathscr{F},\mathbb{P})$ be a probability space and $\left\{Z_n\right\}_{n\geq0}$ be a sequence of $\mathbb{R}^{d}$-valued 
random variables satisfying
\begin{equation}\label{str}
 Z_{n+1}-Z_{n}-a(n)M_{n+1}\in a(n)F(Z_n,S_n), 
\end{equation}
where the following assumptions hold:
\begin{itemize}
 \item [S(A1)] the map $F:\mathbb{R}^{d}\times\mathcal{S}\rightarrow\left\{\text{subsets of }\mathbb{R}^d\right\}$ where 
 $(\mathcal{S},d_{\mathcal{S}})$ is a compact metric space is such that,
 \begin{itemize}
 \item [(i)] for every $(z,s)\in\mathbb{R}^{d}\times\mathcal{S}$, $F(z,s)$ is a convex and compact subset of $\mathbb{R}^{d}$,
 \item [(ii)] there exists $K>0$ such that for every $(z,s)\in\mathbb{R}^{d}\times\mathcal{S}$, $\sup_{z'\in F(z,s)}\left\|z'\right\|\leq K(1+
 \left\|z\right\|)$,
 \item [(iii)] for every $z\in\mathbb{R}^d$, for every $\mathbb{R}^{d}\times\mathcal{S}$ valued sequence, say $\left\{(z_n,s_n)\right\}_{n\geq1}$ 
 converging to $(z,s)$ and for any sequence $\left\{z'_n\in F(z_n,s_n)\right\}_{n\geq1}$ converging to $z'$ we have $z'\in F(z,s)$. 
 \end{itemize}
 \item [S(A2)] $\{S_n\}_{n\geq0}$ is a sequence of $\mathcal{S}$-valued measurable functions on $\Omega$ such that for every $n\geq0$, for every 
 $A\in\mathscr{B}(\mathcal{S})$, $\mathbb{P}(S_{n+1}\in A|S_m,Z_m,m\leq n)=\mathbb{P}(S_{n+1}\in A|S_n,Z_n)=\Pi(Z_n,S_n)(A)$ $a.s.$, where 
 $\Pi:\mathbb{R}^{d}\times\mathcal{S}\rightarrow\mathcal{P}(\mathcal{S})$ is continuous.
 \item [S(A3)] $\left\{a(n)\right\}_{n\geq0}$ is a sequence of positive real numbers satisfying,
 \begin{itemize}
 \item [(i)] $a(0)\leq1$ and for every $n\geq0$, $a(n)\geq a(n+1)$,
 \item [(ii)] $\sum_{n=0}^{\infty}a(n)=\infty$ and $\sum_{n=0}^{\infty}(a(n))^{2}<\infty$,
 \end{itemize}
 \item [S(A4)] $\{M_n\}_{n\geq1}$ is a sequence of $\mathbb{R}^d$-valued random variables on $\Omega$ such that for $a.s.(\omega)$, for any $T>0$, 
 $\\\lim_{n\to\infty}\sup_{n\leq k\leq \tau(n,T)}\left\|\sum_{m=n}^{k}a(m)M_{m+1}(\omega)\right\|=0$ where $\tau(n,T):=\min\left\{m>n:\sum_{k=n}^{m-1}
 a(k)\geq T\right\}$.
 \item [S(A5)] $\mathbb{P}(\sup_{n\geq0}\left\|X_n\right\|<\infty)=1$.
\end{itemize}
 
A detailed motivation for each of these assumptions can be found in \cite{vin1t}. We shall briefly explain them and their consequences.

Assumption $S(A1)$ ensures that the set-valued map $F$ is a SAM and assumption $S(A2)$ is the iterate-dependent Markov noise assumption. As 
a consequence of assumption $S(A2)$, for every $z\in\mathbb{R}^{d}$ we know that the Markov chain defined by the transition kernel 
$\Pi(z,\cdot)(\cdot)$, possesses the weak Feller property (see \cite{meyntweed}). In addition to the above since the state space is compact, 
the set of stationary distributions for the Markov chain whose transition probability is given by $\Pi(z,\cdot)(\cdot)$ is non-empty for every 
$z\in\mathbb{R}^d$. Let $D(z)\subseteq\mathcal{P}(\mathcal{S})$ denote the set of stationary distributions of the Markov chain whose transition 
kernel is $\Pi(z,\cdot)(\cdot)$ (for any $z\in\mathbb{R}^d$, $\mu\in D(z)$ if and only if for every $A\in\mathscr{B}(\mathcal{S})$, $\mu(A)=
\int_{\mathcal{S}}\Pi(z,s)(A)\mu(ds)$). We also know that for every $z\in\mathbb{R}^d$, $D(z)$ is a convex and compact subset of 
$\mathcal{P}(\mathcal{S})$ and the map $z\rightarrow D(z)$ has closed graph (see \cite{vin1t} and references therein). Assumption $S(A3)$ is 
the standard step-size assumption and assumption $S(A4)$ is the general additive noise assumption which ensures that the contribution of the 
additive noise is eventually negligible (for various noise models satisfying $S(A4)$ see \cite{benaim1}). Assumption $S(A5)$ is the stability 
assumption on the iterate sequence. 

The set-valued map, $\hat{F}:\mathbb{R}^{d}\rightarrow\left\{\text{subsets of }\mathbb{R}^d\right\}$ that serves as the vector filed for the differential inclusion 
(DI) that the iterates are expected to track  is defined as,
\begin{equation*}
 \hat{F}(z):=\cup_{\mu\in D(z)}\int_{\mathcal{S}}F_{z}(s)\mu(ds),
\end{equation*}
for every $z\in\mathbb{R}^d$ where  for every $z\in\mathbb{R}^d$, $F_z$ denotes the slice as in Definition \ref{slices}$(i)$ of the set-valued map 
$F$ appearing in recursion $\eqref{str}$. The set-valued map $\hat{F}$, is a Marchaud map (see \cite[Lemma~4.7]{vin1t}) and the associated DI 
given by,
\begin{equation}\label{strdi}
 \frac{dz}{dt}\in\hat{F}(z)
\end{equation}
admits at-least one solution through every initial condition (see \cite[sec.~1.2]{benaim1}). Let $\Sigma(z_0)$ denote the set of solutions
of DI $\eqref{strdi}$ with initial condition $z_0\in\mathbb{R}^d$ and $\Sigma:=\cup_{z_0\in\mathbb{R}^d}\Sigma(z_0)$ (the set of all possible 
solutions). For every $z_0\in\mathbb{R}^d$, $\Sigma(z_0)$ is a subset of $\mathcal{C}(\mathbb{R},\mathbb{R}^{d})$, the set of all 
$\mathbb{R}^d$-valued continuous functions on $\mathbb{R}$. The set $\mathcal{C}(\mathbb{R},\mathbb{R}^{d})$ is a complete metric space 
for the metric $\bf{D}\rm$ defined by,
\begin{equation*}
 \bm{\mathrm{D}}(\bm{\mathrm{z}},\bm{\mathrm{z'}}):=\sum_{k=1}^{\infty}\frac{1}{2^k}\min\left(\left\|\bm{\mathrm{z}}-\bm{\mathrm{z'}}\right\|_{[-k,k]},1\right),
\end{equation*}
where $\left\|\bf{z}\rm-\bf{z'}\rm\right\|_{[-k,k]}:=\sup_{t\in[-k,k]}\left\|\bf{z}\rm(t)-\bf{z'}\rm(t)\right\|$. As a consequence of 
\cite[Lemma~3.1]{benaim1}, we have that $\Sigma$ and for every $z_0\in\mathbb{R}^d$, $\Sigma(z_0)$ are closed and compact subsets of 
$\mathcal{C}(\mathbb{R},\mathbb{R}^{d})$ respectively. 

Let $t_0:=0$ and for every $n\geq1$, $t(n):=\sum_{k=0}^{n-1}a(k)$. Define the stochastic process with continuous sample paths, $\bar{Z}:\Omega\times
\mathbb{R}\rightarrow\mathbb{R}^d$ as,
\begin{equation*}
 \bar{Z}(\omega,t):=\left(\frac{t-t(n)}{t(n+1)-t(n)}\right)Z_{n+1}(\omega)+\left(\frac{t(n+1)-t}{t(n+1)-t(n)}\right)Z_{n}(\omega),
\end{equation*}
for every $(\omega,t)\in\Omega\times[0,\infty)$ where $n$ is such that $t\in[t(n),t(n+1))$ and for every $(\omega,t)\in\Omega\times(-\infty,0]$, 
let $\bar{Z}(\omega,t):=Z_{0}(\omega)$. Then the main result from the analysis of recursion $\eqref{str}$ in \cite{vin1t} is as follows.

\begin{theorem}\label{strapt}
 Under assumptions $S(A1)-S(A5)$, for almost every $\omega\in\Omega$,
 \begin{itemize}
  \item [(i)] the family of functions $\left\{\bar{Z}(\omega,\cdot+t)\right\}_{t\geq0}$ is relatively compact in 
  $\mathcal{C}(\mathbb{R},\mathbb{R}^{d})$,
  \item [(ii)] every limit point of $\left\{\bar{Z}(\omega,\cdot+t)\right\}_{t\geq0}$ in $\mathcal{C}(\mathbb{R},\mathbb{R}^d)$ is a solution of 
  DI $\eqref{strdi}$, more formally,
  \begin{equation*}
   \lim_{t\to\infty}\bm{\mathrm{D}}(\bar{Z}(\omega,\cdot+t),\Sigma)=0,
  \end{equation*}
  \item [(iii)] the limit set denoted by $L(\bar{Z}(\omega,\cdot))$ defined as 
  \begin{equation*}
  L(\bar{Z}(\omega,\cdot)):=\cap_{t\geq0}\overline{\left\{\bar{Z}(\omega,q+t):q\geq0\right\}},
  \end{equation*}
  is non-empty, compact and internally chain transitive for the flow of DI $\eqref{strdi}$.
 \end{itemize}
\end{theorem}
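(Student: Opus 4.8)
The plan is to establish Theorem~\ref{strapt} by the standard asymptotic pseudotrajectory (APT) route of \cite{benaim1}, adapted to the Markov-noise setting via the approximation machinery of Section~\ref{ussmata} and the measure-theoretic tools of Section~\ref{msmai}. The first step is to reduce the stochastic recursive inclusion \eqref{str} to a more tractable object by exploiting the continuous embedding. For each $l\geq1$ let $F^{(l)}$ and $f^{(l)}$ be the continuous SAM and its parametrization from Lemmas~\ref{ctem} and \ref{param}. Since $z_n\in F(Z_n,S_n)\subseteq F^{(l)}(Z_n,S_n)=f^{(l)}(Z_n,S_n,U)$ for the drift selection $z_n$ realizing \eqref{str}, there is a $U$-valued random sequence $\{U_n\}$ with the drift equal to $f^{(l)}(Z_n,S_n,U_n)$; thus for each fixed $l$, \eqref{str} is a single-timescale recursion with a continuous single-valued drift depending on the enlarged Markov-type noise pair $(S_n,U_n)$. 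I would then handle the additive noise: using $S(A4)$ together with the stability assumption $S(A5)$ and the linear-growth bound $S(A1)(ii)$, a Gronwall-type estimate shows that the interpolated trajectory $\bar Z(\omega,\cdot+t)$ stays bounded on compact time windows, giving part $(i)$ (relative compactness) via equicontinuity and Arzelà–Ascoli — the interpolated paths are Lipschitz on each window with a uniform constant because the per-step displacement is $O(a(n)(1+\|Z_n\|))$ plus negligible martingale-type contributions.

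The heart of the argument is part $(ii)$. Fix $\omega$ in the full-measure set where $S(A4)$ and $S(A5)$ hold, and let $\mathbf{w}$ be a limit point of $\{\bar Z(\omega,\cdot+t)\}_{t\geq0}$, say along $t_m\to\infty$. I would encode the empirical behavior of the Markov noise using probability-measure-valued functions: on each time window $[0,T]$ form the random measures on $\mathcal{S}$ built from the Dirac masses at $S_n$ weighted by the step sizes, pushed forward through the time-interpolation. By the weak Feller property implied by $S(A2)$ and the compactness of $\mathcal{S}$, these measure-valued paths are relatively compact, and any subsequential limit $\mu(\cdot)$ is, for a.e.\ $t$, a stationary distribution of $\Pi(\mathbf{w}(t),\cdot)(\cdot)$, i.e.\ $\mu(t)\in D(\mathbf{w}(t))$ — this is the ergodic-averaging step and follows the line of \cite{borkarmark} but needs only continuity of $\Pi$, not Lipschitz continuity. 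Simultaneously one tracks the empirical measures of $(S_n,U_n)$ on $\mathcal{S}\times U$; their limits $\nu(t)$ satisfy $\nu(t)_{\mathcal{S}}=\mu(t)$. Then, for each $l$, passing to the limit in the interpolated recursion and using continuity of $f^{(l)}$ gives $\dot{\mathbf{w}}(t)\in\{\int_{\mathcal{S}\times U}f^{(l)}(\mathbf{w}(t),s,u)\,\nu(t)(ds,du)\}$ for a.e.\ $t$, which by Lemma~\ref{chint}$(i)$ lies in $\int_{\mathcal{S}}F^{(l)}_{\mathbf{w}(t)}(s)\,\mu(t)(ds)\subseteq \widehat{F^{(l)}}(\mathbf{w}(t))$ where the last map is the $\mu$-averaged $F^{(l)}$ over $D(\mathbf{w}(t))$. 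Finally I would intersect over $l$: since $\cap_l F^{(l)}(x,y,s)=F(x,y,s)$ and the sets shrink (observation (2) following Lemma~\ref{ctem}), a dominated-convergence / closedness argument on the Aumann integrals yields $\cap_l \widehat{F^{(l)}}(\mathbf{w}(t))=\hat F(\mathbf{w}(t))$, so $\dot{\mathbf{w}}(t)\in\hat F(\mathbf{w}(t))$ a.e.; hence $\mathbf{w}\in\Sigma$. The statement $\lim_{t\to\infty}\mathbf{D}(\bar Z(\omega,\cdot+t),\Sigma)=0$ then follows because $\Sigma$ is compact in $\mathcal{C}(\mathbb{R},\mathbb{R}^d)$.

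For part $(iii)$, non-emptiness and compactness of $L(\bar Z(\omega,\cdot))$ are immediate from part $(i)$ and the definition as a decreasing intersection of nonempty compact sets. Internal chain transitivity follows from the general principle \cite[Thm.~4.3]{benaim1}: the limit set of a bounded APT of a Marchaud DI is internally chain transitive. The input one needs is precisely that $\hat F$ is a Marchaud map (quoted from \cite[Lemma~4.7]{vin1t}), that $\bar Z(\omega,\cdot)$ is an APT of the flow $\Phi$ of \eqref{strdi} — which is exactly part $(ii)$ rephrased — and boundedness of the trajectory, which is $S(A5)$.

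The main obstacle, and the step I would spend the most care on, is the joint limit argument in part $(ii)$: showing that the limit of the empirical measures on $\mathcal{S}\times U$ disintegrates as $\nu(t)(ds,du)=q(t,s,du)\,\mu(t)(ds)$ with $\mu(t)\in D(\mathbf{w}(t))$, uniformly enough in $t$ to pass to the limit inside the time integral defining the interpolated trajectory. This requires a careful double interpolation (in iterate-time and in the measure-valued space), a tightness argument on $\mathcal{C}([0,T],\mathcal{P}(\mathcal{S}\times U))$ or on the corresponding space of Young-measure-like objects, and the verification that the fixed-point relation defining $D(\cdot)$ survives the limit — this is where the continuity of $\Pi$ in $S(A2)$ is used, replacing the Lipschitz hypotheses of \cite{prasan,borkarmark}. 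The subsequent intersection over $l$ is technically delicate but routine once one invokes closedness of the Aumann integral (Lemma~\ref{cldint}) and the monotone shrinking of $\{F^{(l)}\}$.
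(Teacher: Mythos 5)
Your outline is correct and follows essentially the same route as the paper: the paper does not reprove Theorem \ref{strapt} but cites \cite[Thm.~6.6~\&~6.7]{vin1t}, whose argument is exactly your combination of the parametrization $f^{(l)}$ of the enlarged maps $F^{(l)}$, piecewise-constant measure-valued functions living in the compact metrizable spaces of Lemma \ref{mtriz}, a martingale/convergence-determining-class step showing the limiting measures lie in $D(\cdot)$ (using only continuity of $\Pi$), intersection over $l$ via the closedness of the Aumann integrals, and \cite[Thm.~4.3]{benaim1} for part $(iii)$ --- the same machinery the present paper replays for the slower timescale in Section \ref{stsranal} (Lemma \ref{paramuse}, Proposition \ref{main1}, Theorem \ref{aptstsr}). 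The only detail you leave implicit is the routine negative-time shift argument needed to pass from a solution on $[0,\infty)$ to membership in $\Sigma\subseteq\mathcal{C}(\mathbb{R},\mathbb{R}^d)$, which is handled exactly as in the paragraph preceding Theorem \ref{aptstsr}.
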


For a proof of the above theorem see \cite[Thm.~6.6~\&~6.7]{vin1t}.

\subsection{Space of probability measure valued functions}
\label{sopmvf}
In this section we shall define the space of probability measure valued measurable functions on $[0,\infty)$. We shall introduce an appropriate 
topology on this space and show that such a space is compact metrizable. These spaces are used in the theory of optimal control of diffusions (see \cite{borkaropt})
and also in analyzing stochastic approximation schemes (see \cite{vin1t,borkarmark}). Quantities defined in this section will serve as tools in analyzing the stochastic recursions 
later.

Throughout this section $U$ will denote the closed unit ball in $\mathbb{R}^{d_2}$ and for any $r>0$, $B_r$ denotes the closed ball of 
radius $r$ in $\mathbb{R}^{d_1}$ centered at the origin. For every $r>0$, let $\mathcal{M}(U\times B_r\times\mathcal{S})$ denote the set of all functions 
$\gamma(\cdot)$ on $[0,\infty)$ taking values in $\mathcal{P}(U\times B_r\times\mathcal{S})$ (space of probability measures on $U\times B_r
\times\mathcal{S}$ equipped with the Prohorov topology), such that $\gamma(\cdot)$ is measurable. Formally,
\begin{equation*}
 \mathcal{M}(U\times B_r\times\mathcal{S}):=\left\{\gamma:[0,\infty)\rightarrow\mathcal{P}(U\times B_r\times\mathcal{S}): \gamma(\cdot)\ is\ 
 measurable\right\}.
\end{equation*}

Similarly for every $r>0$, $\mathcal{M}(B_r\times\mathcal{S})$ (or $\mathcal{M}(B_r)$) denotes the set of all functions $\gamma(\cdot)$ on 
$[0,\infty)$ taking values in $\mathcal{P}(B_r\times\mathcal{S})$ (or $\mathcal{P}(B_r)$) such that $\gamma(\cdot)$ is measurable. Formally,
\begin{align*}
 \mathcal{M}(B_r\times\mathcal{S})&:=\left\{\gamma:[0,\infty)\rightarrow\mathcal{P}(B_r\times\mathcal{S}): \gamma(\cdot)\ is\ 
 measurable\right\},\\
 \mathcal{M}(B_r)&:=\left\{\gamma:[0,\infty)\rightarrow\mathcal{P}(B_r): \gamma(\cdot)\ is\ measurable\right\}.
\end{align*}

For every $r>0$, let $\tau_{U\times B_r\times\mathcal{S}}$ denote the coarsest topology on $\mathcal{M}(U\times B_r\times\mathcal{S})$ which 
renders continuous the functions, $\gamma(\cdot)\rightarrow\int_{0}^{T}g(t)\left[\int_{U\times B_r\times\mathcal{S}}f(u,x,s)\gamma(t)
(du,dx,ds)\right]dt$ for every $f\in\mathcal{C}(U\times B_r\times\mathcal{S},\mathbb{R})$, for every $g\in L_{2}([0,T],\mathbb{R})$ and for 
every $T>0$. 

Similarly, for every $r>0$, let $\tau_{B_r\times\mathcal{S}}$ denote the coarsest topology on $\mathcal{M}(B_r\times\mathcal{S})$ which renders 
continuous the functions, $\gamma(\cdot)\rightarrow\int_{0}^{T}g(t)\left[\int_{B_r\times\mathcal{S}}f(x,s)\gamma(t)(dx,ds)\right]dt$
for every $f\in\mathcal{C}(B_r\times\mathcal{S},\mathbb{R})$, for every $g\in L_{2}([0,T],\mathbb{R})$ and for every $T>0$. 

Finally, for every $r>0$, let $\tau_{B_r}$ denote the coarsest topology on $\mathcal{M}(B_r)$ which renders continuous the functions, 
$\gamma(\cdot)\rightarrow\int_{0}^{T}g(t)\left[\int_{B_r}f(x)\gamma(t)(dx)\right]dt$ for every $f\in\mathcal{C}(B_r,\mathbb{R})$, for every 
$g\in L_{2}([0,T],\mathbb{R})$ and for every $T>0$. 

The following is a well known metrization lemma for the topological spaces defined above. 

\begin{lemma}\emph{[metrization]}\label{mtriz}
\begin{itemize}
 \item [(i)] For every $r>0$, the topological space $(\mathcal{M}(U\times B_r\times\mathcal{S}),\tau_{U\times B_r\times\mathcal{S}})$ is compact 
 metrizable.
 \item [(ii)] For every $r>0$, the topological space $(\mathcal{M}(B_r\times\mathcal{S}),\tau_{B_r\times\mathcal{S}})$ is compact metrizable.
 \item [(iii)] For every $r>0$, the topological space $(\mathcal{M}(B_r),\tau_{B_r})$ is compact metrizable.
 \end{itemize}
\end{lemma}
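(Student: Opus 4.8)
The plan is to prove all three parts of Lemma~\ref{mtriz} by a single argument, since the three spaces are structurally identical (a space of measurable maps from $[0,\infty)$ into the space of probability measures on a fixed compact metric space, with a topology induced by pairing against $L_2$ time-weights and continuous functions on the compact space). Write $\mathcal{K}$ for any of the compact metric spaces $U\times B_r\times\mathcal{S}$, $B_r\times\mathcal{S}$, or $B_r$; note each is a compact metric space, hence Polish, and $\mathcal{P}(\mathcal{K})$ with the Prohorov topology is itself compact metrizable. The key observation is that an element $\gamma(\cdot)\in\mathcal{M}(\mathcal{K})$ can be identified with a single (sub-probability-like) measure, or more precisely with an element of the space of ``Young measures'' / relaxed controls on $[0,\infty)$ with values in $\mathcal{P}(\mathcal{K})$; such spaces are classically known to be compact metrizable under exactly the topology described (this is the content of, e.g., \cite{borkaropt,borkarap}, and the same device is used in \cite{vin1t}). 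So the proof is really a matter of invoking the right representation and citing the standard result, while checking the hypotheses.

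Concretely, I would proceed in the following steps. First, fix a countable dense set $\{f_i\}_{i\geq1}\subseteq\mathcal{C}(\mathcal{K},\mathbb{R})$ (available since $\mathcal{K}$ is compact metric, so $\mathcal{C}(\mathcal{K},\mathbb{R})$ is separable), a countable dense set $\{g_j\}_{j\geq1}$ in $L_2([0,\infty),\mathbb{R})$ with each $g_j$ supported on some bounded interval, and enumerate the test functionals $\Lambda_{ij}(\gamma):=\int_0^\infty g_j(t)\big[\int_{\mathcal{K}}f_i(z)\,\gamma(t)(dz)\big]\,dt$. Second, show these countably many functionals already generate the topology $\tau_{\mathcal{K}}$: by density of $\{f_i\}$ in the sup-norm and density of $\{g_j\}$ in $L_2$, together with the uniform bound $\big|\int_{\mathcal{K}}f\,\gamma(t)(dz)\big|\leq\|f\|_\infty$ (measures are probabilities) and Cauchy--Schwarz, every defining functional $\gamma\mapsto\int_0^T g\,[\int f\,\gamma(t)]\,dt$ is a uniform limit (uniformly over all of $\mathcal{M}(\mathcal{K})$) of the $\Lambda_{ij}$'s, so $\tau_{\mathcal{K}}$ has a countable subbase and the space is second countable; combined with the Hausdorff property (which follows because the $\Lambda_{ij}$ separate points — if $\gamma_1(t)\neq\gamma_2(t)$ on a positive-measure set of $t$, some $f_i$ separates the measures there and some $g_j$ detects it) this gives metrizability via Urysohn's metrization theorem. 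Third, for compactness, embed $\mathcal{M}(\mathcal{K})$ into the product space $\prod_{i,j}[-c_{ij},c_{ij}]$ via $\gamma\mapsto(\Lambda_{ij}(\gamma))_{i,j}$, where $c_{ij}=\|f_i\|_\infty\|g_j\|_{L_2}\sqrt{T_j}$ bounds the $(i,j)$-th coordinate; this map is a homeomorphism onto its image by construction of the topology, so it suffices to show the image is closed, and then apply Tychonoff. Closedness is where one shows that a limit of the coordinate vectors is realized by an actual measurable $\mathcal{P}(\mathcal{K})$-valued map — this is the disintegration/Young-measure step: a consistent family of limiting values of $\int_0^\infty g_j(t)[\int f_i\,d\mu(t)]\,dt$ determines, via a Riesz-representation and disintegration argument (using that $\mathcal{K}$ is Polish, cf. \cite[Cor.~3.1.2]{borkarap}), a measurable map $t\mapsto\gamma(t)\in\mathcal{P}(\mathcal{K})$ with the prescribed pairings.

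The main obstacle is precisely this last step — verifying that the image is closed, i.e. that every limit point of relaxed controls is again a relaxed control. The subtlety is that a priori a limit of the functionals $\Lambda_{ij}$ only gives a finitely-additive or an a.e.-defined object; one must produce a genuine measurable selection $t\mapsto\gamma(t)$. The clean way is to first view $\gamma(\cdot)$ as inducing a measure $\Theta_\gamma$ on $[0,\infty)\times\mathcal{K}$ (formally $\Theta_\gamma(dt,dz)=\gamma(t)(dz)\,dt$, or better, work on each $[0,T]$ with normalized Lebesgue measure so $\Theta_\gamma$ is a probability measure with first marginal the (normalized) Lebesgue measure); the set of such $\Theta$ is weak-$*$ closed and tight in $\mathcal{P}([0,T]\times\mathcal{K})$ (tightness is automatic since $[0,T]\times\mathcal{K}$ is compact), hence compact; then disintegrate back to recover $\gamma(\cdot)$. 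I would cite \cite{borkaropt,borkarap} for the fact that this identification is a homeomorphism onto a compact set, so that in the write-up this lemma becomes essentially a pointer to the literature with the hypotheses (compactness of $U$, $B_r$, $\mathcal{S}$, hence of their products) explicitly checked. I would also remark that the bounded-interval support of the $g_j$ is what lets us reduce everything to the compact intervals $[0,T]$ and hence to genuinely compact ambient spaces, which is the crux of why compactness holds despite the time domain being $[0,\infty)$.
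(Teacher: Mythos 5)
Your proposal is correct in substance, but note that the paper does not prove Lemma \ref{mtriz} at all: it simply points to \cite[Lemma~2.1]{borkarmark}, and the argument you outline — reduce to countably many test functionals via separability of $\mathcal{C}(\mathcal{K},\mathbb{R})$ and $L_2$, get metrizability from second countability of the induced weak topology, and get compactness by identifying $\gamma(\cdot)$ with the measure $\gamma(t)(dz)\,dt$ on $[0,T]\times\mathcal{K}$ (a closed, automatically tight subset of $\mathcal{P}([0,T]\times\mathcal{K})$ characterized by its time-marginal) and disintegrating back — is precisely the standard relaxed-control/Young-measure proof that the cited reference relies on, so you have reconstructed the omitted argument rather than found a new one. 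The only point you should make explicit in a write-up is that elements of $\mathcal{M}(\cdot)$ must be identified when they agree for a.e.\ $t$: the topology $\tau$ cannot separate two maps differing on a Lebesgue-null set, so Hausdorffness (and hence metrizability) holds only on these equivalence classes, which is the convention implicitly adopted in \cite{borkarmark,borkaropt}; your Hausdorff argument already assumes this by phrasing the separation in terms of a positive-measure set of times. With that caveat, and with the routine check that the disintegration step yields a genuinely measurable $\mathcal{P}(\mathcal{K})$-valued map (via \cite[Cor.~3.1.2]{borkarap}), your proof is a sound, self-contained substitute for the citation.
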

We refer the reader to \cite[Lemma~2.1]{borkarmark} for the proof of the above metrization lemma. The next lemma provides continuous functions 
between the above defined metric spaces which are used later. The proof of the lemma below is an extention of \cite[Lemma~5.2]{vin1t} to the 
above defined metric spaces. Recall that for any probability measure $\nu\in\mathcal{P}(U\times B_r\times\mathcal{S})$, 
$\nu_{B_r\times\mathcal{S}}\in\mathcal{P}(B_r\times\mathcal{S})$ denotes the image of the measure $\nu$ under the projection 
$U\times B_r\times\mathcal{S}\rightarrow B_r\times\mathcal{S}$ (that is, for every $A\in \mathscr{B}(B_r\times\mathcal{S})$, 
$\nu_{B_r\times\mathcal{S}}(A)=\int_{U\times A}\nu(du,dx,ds)$). Similarly, $\nu_{B_r}\in\mathcal{P}(B_r)$ denotes the image of 
measure $\nu$ under the projection $U\times B_r\times\mathcal{S}\rightarrow B_r$ (that is, for every $A\in\mathscr{B}(B_r)$, $\nu_{B_r}(A)=
\int_{U\times A\times \mathcal{S}}\nu(du,dx,ds)$). It is easy to see that $\nu_{B_r}$ is also the image of $\nu_{B_r\times\mathcal{S}}$ under 
the projection $B_r\times\mathcal{S}\rightarrow B_r$.

\begin{lemma}\label{contfun}
 For every $r>0$,
 \begin{itemize}
  \item [(i)] the map $\theta_1:\mathcal{P}(U\times B_r\times\mathcal{S})\rightarrow\mathcal{P}(B_r\times\mathcal{S})$ such that for every 
  $\nu\in\mathcal{P}(U\times B_r\times\mathcal{S})$, $\theta_1(\nu):=\nu_{B_r\times\mathcal{S}}$ is continuous.
  \item [(ii)] the map $\theta_2:\mathcal{P}(B_r\times\mathcal{S})\rightarrow\mathcal{P}(B_r)$ such that for every $\nu\in\mathcal{P}(B_r\times
  \mathcal{S})$, $\theta_2(\nu):=\nu_{B_r}$ is continuous.
  \item [(iii)] for any $\gamma\in\mathcal{M}(U\times B_r\times\mathcal{S})$, we have that $\theta_1\circ\gamma\in\mathcal{M}(B_r\times
  \mathcal{S})$ where for every $t\geq0$, $(\theta_1\circ\gamma)(t)=\theta_1(\gamma(t))$.
  \item [(iv)] for any $\gamma\in\mathcal{M}(B_r\times\mathcal{S})$, we have that $\theta_2\circ\gamma\in\mathcal{M}(B_r)$ where for every 
  $t\geq0$, $(\theta_2\circ\gamma)(t)=\theta_2(\gamma(t))$.
  \item [(v)]  the map $\Theta_1:\mathcal{M}(U\times B_r\times\mathcal{S})\rightarrow\mathcal{M}(B_r\times\mathcal{S})$ such that for every 
  $\gamma\in\mathcal{M}(U\times B_r\times\mathcal{S})$, $\Theta_1(\gamma):=\theta_1\circ\gamma$ is continuous.
  \item [(vi)] the map $\Theta_2:\mathcal{M}(B_r\times\mathcal{S})\rightarrow\mathcal{M}(B_r)$ such that for every $\gamma\in\mathcal{M}(B_r
  \times\mathcal{S})$, $\Theta_2(\gamma):=\theta_2\circ\gamma$ is continuous.
 \end{itemize}
\end{lemma}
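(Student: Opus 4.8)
The plan is to establish the six claims in the order (i)--(ii), then (iii)--(iv), then (v)--(vi), since each group feeds into the next. For (i), I would prove continuity of $\theta_1$ by working with the Prohorov topology via its characterization through convergence of integrals of bounded continuous functions: if $\nu_n\to\nu$ in $\mathcal{P}(U\times B_r\times\mathcal{S})$, then for any $f\in\mathcal{C}(B_r\times\mathcal{S},\mathbb{R})$ the composition $(u,x,s)\mapsto f(x,s)$ is bounded and continuous on $U\times B_r\times\mathcal{S}$, so $\int f\, d\nu_{n,B_r\times\mathcal{S}}=\int f(x,s)\,\nu_n(du,dx,ds)\to\int f(x,s)\,\nu(du,dx,ds)=\int f\, d\nu_{B_r\times\mathcal{S}}$, giving $\theta_1(\nu_n)\to\theta_1(\nu)$. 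Part (ii) is identical, projecting $B_r\times\mathcal{S}\to B_r$. Since $U$, $B_r$, $\mathcal{S}$ are all compact metric, these projection maps between compact metric spaces are automatically continuous, so this part is routine.

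For (iii), given $\gamma\in\mathcal{M}(U\times B_r\times\mathcal{S})$, I need $\theta_1\circ\gamma:[0,\infty)\to\mathcal{P}(B_r\times\mathcal{S})$ to be measurable. This follows because $\theta_1$ is continuous (hence Borel measurable) by part (i), and the composition of a measurable function with a continuous function is measurable; the target spaces are compact metric, hence Polish, so the Borel structure behaves well. Part (iv) is the same argument using part (ii). These are short once (i) and (ii) are in hand.

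For (v) and (vi) — which I expect to be the main obstacle — continuity must be checked with respect to the topologies $\tau_{U\times B_r\times\mathcal{S}}$ and $\tau_{B_r\times\mathcal{S}}$, which are defined as the coarsest topologies making certain integral functionals continuous. Since these spaces are compact metrizable by Lemma \ref{mtriz}, it suffices to check sequential continuity: if $\gamma_n\to\gamma$ in $(\mathcal{M}(U\times B_r\times\mathcal{S}),\tau_{U\times B_r\times\mathcal{S}})$, I must show $\Theta_1(\gamma_n)=\theta_1\circ\gamma_n\to\theta_1\circ\gamma$ in $(\mathcal{M}(B_r\times\mathcal{S}),\tau_{B_r\times\mathcal{S}})$. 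By definition of the latter topology, this amounts to showing that for every $f\in\mathcal{C}(B_r\times\mathcal{S},\mathbb{R})$, every $g\in L_2([0,T],\mathbb{R})$, and every $T>0$,
\[
\int_0^T g(t)\left[\int_{B_r\times\mathcal{S}}f(x,s)\,(\theta_1\circ\gamma_n)(t)(dx,ds)\right]dt \to \int_0^T g(t)\left[\int_{B_r\times\mathcal{S}}f(x,s)\,(\theta_1\circ\gamma)(t)(dx,ds)\right]dt.
\]
The key observation is that, by the definition of the pushforward, $\int_{B_r\times\mathcal{S}}f(x,s)\,(\theta_1\circ\gamma_n)(t)(dx,ds)=\int_{U\times B_r\times\mathcal{S}}\tilde f(u,x,s)\,\gamma_n(t)(du,dx,ds)$, where $\tilde f(u,x,s):=f(x,s)$ lies in $\mathcal{C}(U\times B_r\times\mathcal{S},\mathbb{R})$. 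Hence the displayed integral is exactly one of the defining functionals for $\tau_{U\times B_r\times\mathcal{S}}$ applied to $\gamma_n$ with test function $\tilde f$ and the same $g$, $T$; convergence $\gamma_n\to\gamma$ in $\tau_{U\times B_r\times\mathcal{S}}$ gives precisely the required limit. This reduces (v) to nothing more than the bookkeeping of lifting $f$ to $\tilde f$. Part (vi) is handled identically, lifting $f\in\mathcal{C}(B_r,\mathbb{R})$ to $\tilde f\in\mathcal{C}(B_r\times\mathcal{S},\mathbb{R})$ by $\tilde f(x,s):=f(x)$ and invoking the definition of $\tau_{B_r\times\mathcal{S}}$. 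The only subtlety to watch is that the topologies are not a priori first-countable away from the metrizability lemma, so I should invoke Lemma \ref{mtriz} explicitly to justify that sequential continuity suffices for continuity.
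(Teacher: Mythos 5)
Your proposal is correct and follows essentially the same route as the paper's proof: parts (i)--(ii) via convergence of integrals of continuous test functions lifted through the projection, (iii)--(iv) by measurability of compositions, and (v)--(vi) by lifting $f$ to $\tilde f = f\circ\pi$ so that the defining functionals of $\tau_{B_r\times\mathcal{S}}$ applied to $\Theta_1(\gamma_n)$ coincide with defining functionals of $\tau_{U\times B_r\times\mathcal{S}}$ applied to $\gamma_n$. Your explicit remark that Lemma \ref{mtriz} justifies working with sequences is a fair point of care, and is consistent with the paper's sequential arguments.
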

\begin{proof}
 Fix $r>0$.
 \begin{itemize}
  \item  [(i)] Let $\left\{\nu^n\right\}_{n\geq1}$ be a sequence in $\mathcal{P}(U\times B_r\times\mathcal{S})$ converging to $\nu\in\mathcal{P}
  (U\times B_r\times\mathcal{S})$ as $n\to\infty$ and let $\pi:U\times B_r\times\mathcal{S}\rightarrow B_r\times\mathcal{S}$ denote the projection map such that 
  for every $(u,x,s)\in U\times B_r\times\mathcal{S}$, $\pi(u,x,s)=(x,s)$. Clearly $\pi$ is continuous and for any continuous function $f
  \in\mathcal{C}(B_r\times\mathcal{S},\mathbb{R})$, $f\circ\pi$ is continuous. Since $U\times B_r\times\mathcal{S}$ is a compact metric space, 
  from \cite[Thm.~2.1.1(ii)]{borkarap}, we get that for every $f\in\mathcal{C}(B_r\times\mathcal{S},\mathbb{R})$, $\int_{U\times B_r\times\mathcal{S}}\left
  (f\circ\pi\right)(u,x,s)\nu^n(du,dx,ds)\to\int_{U\times B_r\times\mathcal{S}}\left(f\circ\pi\right)\nu(du,dx,ds)$ as $n\to\infty$. By 
  definition, we have that for every $n\geq0$, $\nu^n_{B_r\times\mathcal{S}}=\nu^n\pi^{-1}$ (the push-forward measure) and $\nu_{B_r\times
  \mathcal{S}}=\nu\pi^{-1}$. Therefore for every 
  $f\in\mathcal{C}(B_r\times\mathcal{S},\mathbb{R})$, $\int_{B_r\times\mathcal{S}}f(x,s)\nu^n_{B_r\times\mathcal{S}}(dx,ds)\to\int_{B_r
  \times\mathcal{S}}f(x,s)\nu_{B_r\times\mathcal{S}}(dx,ds)$. Hence by \cite[Thm.~2.1.1]{borkarap} we get that $\nu^n_{B_r\times\mathcal{S}}
  \to\nu_{B_r\times\mathcal{S}}$ as $n\to\infty$ in $\mathcal{P}(B_r\times\mathcal{S})$ which gives us continuity of $\theta_1(\cdot)$.
  \item [(ii)] Similar to part $(i)$ of this lemma.
  \item [(iii) \& (iv)] Composition of measurable functions is measurable.
  \item [(v)] Let $\left\{\gamma_n\right\}_{n\geq1}$ be a sequence in $\mathcal{M}(U\times\ B_r\times\mathcal{S})$ converging to $\gamma\in
  \mathcal{M}(U\times B_r\times\mathcal{S})$ as $n\to\infty$. Then we know that for every $f\in\mathcal{C}(U\times B_r\times\mathcal{S},\mathbb{R})$, for 
  every $T>0$ and for every $g\in L_2([0,T],\mathbb{R})$, $\int_{0}^{T}g(t)\left[\int_{U\times B_r\times\mathcal{S}}f(u,x,s)\gamma_n(t)(du,dx
  ,ds)\right]dt\to\int_{0}^{T}g(t)\left[\int_{U\times B_r\times\mathcal{S}}f(u,x,s)\gamma(t)(du,dx,ds)\right]dt$ as $n\to\infty$. Let $\pi$ 
  denote the projection map as in part $(i)$ of this lemma and we know that for any $f\in\mathcal{C}(B_r\times\mathcal{S},\mathbb{R})$, $f\circ\pi\in
  \mathcal{C}(U\times B_r\times\mathcal{S},\mathbb{R})$. Then we have that for every $f\in\mathcal{C}(B_r\times\mathcal{S},\mathbb{R})$, for every $T>0$ and for every 
  $g\in L_2([0,T],\mathbb{R})$,  $\int_{0}^{T}g(t)\left[\int_{U\times B_r\times\mathcal{S}}\left(f\circ\pi\right)(u,x,s)\gamma_n(t)(du,dx,ds)
  \right]dt\to\int_{0}^{T}g(t)\left[\int_{U\times B_r\times\mathcal{S}}\left(f\circ\pi\right)(u,x,s)\gamma(t)(du,dx,ds)\right]dt$ as 
  $n\to\infty$. By arguments similar to part $(i)$ of this lemma, we have that for every $f\in\mathcal{C}(B_r\times\mathcal{S},\mathbb{R})$, for every $T>0$ 
  and for every $g\in L_2([0,T],\mathbb{R})$,  $\int_{0}^{T}g(t)\left[\int_{U\times B_r\times\mathcal{S}}f(x,s)\left(\theta_1\circ\gamma_n
  \right)(t)(dx,ds)\right]dt\to\int_{0}^{T}g(t)\left[\int_{U\times B_r\times\mathcal{S}}f(x,s)\left(\theta_1\circ\gamma\right)(t)(dx,ds)
  \right]dt$ as $n\to\infty$. Therefore $\Theta_1(\gamma_n)\to\Theta_1(\gamma)$  in $\mathcal{M}(B_r\times\mathcal{S})$ as $n\to\infty$ which gives 
  us continuity of $\Theta_1(\cdot)$.
  \item [(vi)] Similar to part $(v)$ of this lemma.\qed
 \end{itemize}
\end{proof}

\section{Recursion and assumptions}
\label{recass}
In this section we shall formally define the two timescale recursion as well as state and motivate the assumptions imposed ((A1)-(A10)). 

Let $(\Omega,\mathscr{F},\mathbb{P})$ denote a probability space, $\left\{X_n\right\}_{n\geq0}$ be a sequence of $\mathbb{R}^{d_1}$-valued 
random variables on $\Omega$ and $\left\{Y_n\right\}_{n\geq0}$ be a sequence of $\mathbb{R}^{d_2}$-valued random variables on $\Omega$ which satisfy 
for every $n\geq0$,
\begin{subequations}\label{2tr}
\begin{align}
\label{stsr}
 Y_{n+1}-Y_{n}-b(n)M^{(2)}_{n+1}&\in b(n)H_2(X_n,Y_n,S^{(2)}_n),\\
 \label{fstr}
 X_{n+1}-X_{n}-a(n)M^{(1)}_{n+1}&\in a(n)H_1(X_n,Y_n,S^{(1)}_n),
\end{align}
\end{subequations}
where,
\begin{itemize}
 \item [(A1)] the map $H_1:\mathbb{R}^d\times\mathcal{S}^{(1)}\rightarrow\left\{\text{subsets of }\mathbb{R}^{d_1}\right\}$ with $\mathcal{S}^{(1)}$ a compact metric 
 space with metric $d_{\mathcal{S}^{(1)}}$, is such that,
 \begin{itemize}
 \item [(i)] for every $(x,y,s^{(1)})\in\mathbb{R}^{d}\times\mathcal{S}^{(1)}$, $H_1(x,y,s^{(1)})$ is a convex and compact subset of $\mathbb{R}^{d_1}$,
 \item [(ii)] there exists $K>0$, such that, for every $(x,y,s^{(1)})\in\mathbb{R}^{d}\times\mathcal{S}^{(1)}$, $\sup_{x'\in H_1(x,y,s^{(1)})}\left\|x'\right
 \|\leq K(1+\left\|x\right\|+\left\|y\right\|)$,
 \item [(iii)] for every $(x,y,s^{(1)})\in\mathbb{R}^d\times\mathcal{S}^{(1)}$, for every $\left(\mathbb{R}^d\times\mathcal{S}^{(1)}\right)$-valued sequence, $\left\{(x_n,y_n,s^{(1)}_n)\right\}_{n\geq1}$ converging to $(x,y,s^{(1)})\in
 \mathbb{R}^d\times\mathcal{S}^{(1)}$, for every sequence $\left\{x'_n\in H_1(x_n,y_n,s^{(1)}_n)\right\}_{n\geq1}$ converging to $x'\in\mathbb{R}^{d_1}$, 
 we have that $x'\in H_1(x,y,s^{(1)})$.
 \end{itemize}
 \item [(A2)] the map $H_2:\mathbb{R}^d\times\mathcal{S}^{(2)}\rightarrow\left\{\text{subsets of }\mathbb{R}^{d_2}\right\}$ with $\mathcal{S}^{(2)}$ a compact metric 
 space with metric $d_{\mathcal{S}^{(2)}}$, is such that,
 \begin{itemize}
 \item [(i)] for every $(x,y,s^{(2)})\in\mathbb{R}^{d}\times\mathcal{S}^{(2)}$, $H_2(x,y,s^{(2)})$ is a convex and compact subset of $\mathbb{R}^{d_2}$,
 \item [(ii)] there exists $K>0$, such that, for every $(x,y,s^{(2)})\in\mathbb{R}^{d}\times\mathcal{S}^{(2)}$, $\sup_{y'\in H_2(x,y,s^{(2)})}\left\|y'\right
 \|\leq K(1+\left\|x\right\|+\left\|y\right\|)$,
 \item [(iii)] for every $(x,y,s^{(2)})\in\mathbb{R}^d\times\mathcal{S}^{(2)}$, for every $\left(\mathbb{R}^d\times\mathcal{S}^{(2)}\right)$-valued sequence, $\left\{(x_n,y_n,s^{(2)}_n)\right\}_{n\geq1}$ converging to $(x,y,s^{(2)})\in
 \mathbb{R}^d\times\mathcal{S}^{(2)}$, for every sequence $\left\{y'_n\in H_2(x_n,y_n,s^{(2)}_n)\right\}_{n\geq1}$ converging to $y'\in\mathbb{R}^{d_2}$, 
 we have that $y'\in H_2(x,y,s^{(2)})$.
 \end{itemize}
 \item [(A3)] $\left\{S^{(1)}_n\right\}_{n\geq0}$ is a sequence of $\mathcal{S}^{(1)}$-valued random variables on $\Omega$, such that for every $n\geq0$,
 for every $A\in\mathscr{B}(\mathcal{S}^{(1)})$, $\mathbb{P}(S^{(1)}_{n+1}\in A|S^1_m,X_m,Y_m,\ 0\leq m\leq n)=\mathbb{P}(S^{(1)}_{n+1}\in A|S^{(1)}_n,X_n,Y_n)=
 \Pi^{(1)}(X_n,Y_n,S^{(1)}_n)(A)$ $a.s.$, where $\Pi^{(1)}:\mathbb{R}^d\times\mathcal{S}^{(1)}\rightarrow\mathcal{P}(\mathcal{S}^{(1)})$ is continuous.
 \item [(A4)] $\left\{S^{(2)}_n\right\}_{n\geq0}$ is a sequence of $\mathcal{S}^{(2)}$-valued random variables on $\Omega$, such that for every $n\geq0$,
 for every $A\in\mathscr{B}(\mathcal{S}^{(2)})$, $\mathbb{P}(S^{(2)}_{n+1}\in A|S^2_m,X_m,Y_m,\ 0\leq m\leq n)=\mathbb{P}(S^{(2)}_{n+1}\in A|S^{(2)}_n,X_n,Y_n)=
 \Pi^{(2)}(X_n,Y_n,S^{(2)}_n)(A)$ $a.s.$, where $\Pi^{(2)}:\mathbb{R}^d\times\mathcal{S}^{(2)}\rightarrow\mathcal{P}(\mathcal{S}^{(2)})$ is continuous.
 \item [(A5)] $\left\{a(n)\right\}_{n\geq0}$ and $\left\{b(n)\right\}_{n\geq0}$ are two sequences of positive real numbers satisfying,
 \begin{itemize}
 \item [(i)] $a(0)\leq1$ and for every $n\geq0$, $a(n)\geq a(n+1)$,
 \item [(ii)] $b(0)\leq1$ and for every $n\geq0$, $b(n)\geq b(n+1)$,
 \item [(iii)] $\lim_{n\to\infty}\frac{b(n)}{a(n)}=0$,
 \item [(iv)] $\sum_{n=0}^{\infty}a(n)=\sum_{n=0}^{\infty}b(n)=\infty$ and $\sum_{n=0}^{\infty}\left((a(n))^2+(b(n))^2\right)<\infty$.
 \end{itemize}
 \item [(A6)] $\{M^{(1)}_{n}\}_{n\geq1}$ is a sequence of $\mathbb{R}^{d_1}$-valued random variables on $\Omega$ such that for $a.e.(\omega)$, for any $T>0$, 
 $\\\lim_{n\to\infty}\sup_{n\leq k\leq \tau^1(n,T)}\left\|\sum_{m=n}^{k}a(m)M^{(1)}_{m+1}(\omega)\right\|=0$ where $\tau^1(n,T):=\min\left\{m>n:\sum_{k=n}^{m-1}
 a(k)\geq T\right\}$.
 \item [(A7)] $\{M^{(2)}_{n}\}_{n\geq1}$ is a sequence of $\mathbb{R}^{d_2}$-valued random variables on $\Omega$ such that for $a.e.(\omega)$, for any $T>0$, 
 $\\\lim_{n\to\infty}\sup_{n\leq k\leq \tau^2(n,T)}\left\|\sum_{m=n}^{k}b(m)M^{(2)}_{m+1}(\omega)\right\|=0$ where $\tau^2(n,T):=\min\left\{m>n:\sum_{k=n}^{m-1}
 b(k)\geq T\right\}$.
 \item [(A8)] $\mathbb{P}\left(\sup_{n\geq0}\left(\left\|X_n\right\|+\left\|Y_n\right\|\right)<\infty\right)=1$.
\end{itemize}
 
Assumptions $(A1)$ and $(A2)$ ensure that $H_1$ and $H_2$ are SAMs. Assumptions $(A3)$ and $(A4)$ are the iterate-dependent Markov noise 
assumptions. Under $(A3)$, for every $(x,y)\in\mathbb{R}^d$, the Markov 
chain associated with the transition kernel given by $\Pi^{(1)}(x,y,\cdot)(\cdot)$ possesses the weak Feller property (see \cite{meyntweed}). In 
addition to the above since $\mathcal{S}^{(1)}$ is a compact metric space, the Markov chain associated with the transition kernel $\Pi^{(1)}(x,y,\cdot)
(\cdot)$ has at least one stationary distribution for every $(x,y)\in\mathbb{R}^d$ ($\mu\in\mathcal{P}(\mathcal{S}^{(1)})$ is stationary for the 
Markov chain associated with the transition kernel $\Pi^{(1)}(x,y,\cdot)(\cdot)$ if for every $A\in\mathscr{B}(\mathcal{S}^{(1)})$, $\mu(A)=\int_
{\mathcal{S}^{(1)}}\Pi^{(1)}(x,y,s^{(1)})(A)\mu(ds^{(1)})$). For every $(x,y)\in\mathbb{R}^d$, let $D^{(1)}(x,y)\subseteq\mathcal{P}(\mathcal{S}^{(1)})$ denote the set 
of stationary distributions of the Markov chain associated with the transition kernel $\Pi^{(1)}(x,y,\cdot)(\cdot)$. It can easily be shown that,
\begin{itemize}
 \item [(i)] for every $(x,y)\in\mathbb{R}^d$, $D^{(1)}(x,y)$ is a convex and compact subset of $\mathcal{P}(\mathcal{S}^{(1)})$,
 \item [(ii)] graph of the map $(x,y)\rightarrow D^{(1)}(x,y)$ is closed, that is, the set
 \begin{equation*}
  \mathcal{G}(D^{(1)}):=\left\{(x,y,\mu)\in\mathbb{R}^d\times\mathcal{P}(\mathcal{S}^{(1)}): (x,y)\in\mathbb{R}^d,\ \mu\in D^{(1)}(x,y)\right\},
 \end{equation*}
is a closed subset of $\mathbb{R}^d\times\mathcal{P}(\mathcal{S}^{(1)})$.
\end{itemize}
The proofs of the above two statements are similar to that in \cite[pg.~69]{borkartxt}.
Similarly under assumption $(A4)$, for every $(x,y)\in\mathbb{R}^d$ the set of stationary distributions (denoted by $D^{(2)}(x,y)$) associated with the
Markov chain defined by the transition kernel $\Pi^{(2)}(x,y,\cdot)(\cdot)$ is a non-empty, convex and compact subset of $\mathcal{P}(\mathcal{S}^{(2)})$ 
and further the map $(x,y)\rightarrow D^{(2)}(x,y)$ has a closed graph (that is the set $\mathcal{G}(D^{(2)})$ defined in an analogous manner as $\mathcal{G}(D^{(1)})$ is 
a closed subset of $\mathbb{R}^d\times\mathcal{P}(\mathcal{S}^{(2)})$).

Assumption $(A5)$ is the standard two timescale step size assumption. Assumption $(A5)(iii)$ tells that eventually the time step taken by 
recursion $\eqref{stsr}$ is smaller than the time step taken by recursion $\eqref{fstr}$. Hence recursion $\eqref{stsr}$ is called the slower 
timescale recursion and the recursion $\eqref{fstr}$ is called the faster timescale recursion. Assumptions $(A6)$ and $(A7)$ are the conditions 
that the additive noise terms satisfy. These guarantee that the contribution of additive noise terms is eventually negligible. For various noise 
models where these additive noise assumptions are satisfied we refer the reader to \cite{benaim1}.

Assumption $(A8)$ is the stability assumption which ensures that the iterates remain within a bounded set. While this is a standard requirement in the study of such recursions, it is highly 
nontrivial. An important future direction would be to provide sufficient conditions for verification of $(A8)$.

The Markov noise terms in the faster timescale, in limit will average the drift function $H_1$ w.r.t. the stationary distributions given by 
the map $(x,y)\rightarrow D^{(1)}(x,y)$. The appropriate set-valued map whose associated DI the faster timescale recursion is expected to 
track is given by, 
\begin{equation}\label{fstrmap}
 \hat{H}_1(x,y):=\cup_{\mu\in D^{(1)}(x,y)}\int_{\mathcal{S}^{(1)}}H_{1,(x,y)}(s^{(1)})\mu(ds^{(1)}),
\end{equation}
for every $(x,y)\in\mathbb{R}^d$ where for every $(x,y)\in\mathbb{R}^d$, $H_{1,(x,y)}$ denotes the slice as in Definition\ref{slices}$(i)$ of the 
set-valued map $H_1$ in the recursion \eqref{fstr}. As a consequence of the step size assumption $(A5)$, with respect to the faster timescale \eqref{fstr},
the slower timescale recursion \eqref{stsr} appears to be static and one would expect that the family of DIs, 
\begin{equation}\label{fstrdi}
 \frac{dx}{dt}\in \hat{H}_1(x,y_0),
\end{equation}
obtained by fixing some $y_0\in\mathbb{R}^{d_2}$ to describe the behavior of the faster timescale recursion \eqref{fstr}. Before we proceed, we 
need to ensure that for every $y_0\in\mathbb{R}^{d_2}$, the DI \eqref{fstrdi} has solutions through every initial condition. The next lemma 
states the map $\hat{H}_1(\cdot,y_0)$ is a Marchaud map for every $y_0\in\mathbb{R}^{d_2}$, which ensures that the DI \eqref{fstrdi} has solutions.
\begin{lemma}\label{march1}
 For every $y_0\in\mathbb{R}^{d_2}$, the set-valued map $\hat{H}_1(\cdot,y_0):\mathbb{R}^{d_1}\rightarrow\left\{\text{subsets of }\mathbb{R}^{d_1}
 \right\}$ is a Marchaud map.
\end{lemma}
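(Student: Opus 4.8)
The cleanest route is to \emph{freeze the slow variable} and reduce to the single-timescale result. Fix $y_0\in\mathbb{R}^{d_2}$ and define $\tilde H:\mathbb{R}^{d_1}\times\mathcal{S}^{(1)}\to\{\text{subsets of }\mathbb{R}^{d_1}\}$ by $\tilde H(x,s^{(1)}):=H_1(x,y_0,s^{(1)})$ and $\tilde\Pi:\mathbb{R}^{d_1}\times\mathcal{S}^{(1)}\to\mathcal{P}(\mathcal{S}^{(1)})$ by $\tilde\Pi(x,s^{(1)}):=\Pi^{(1)}(x,y_0,s^{(1)})$. By $(A1)$, $\tilde H$ is a SAM in the sense of Definition \ref{sam}: convexity and compactness of its values and its closed-graph property are restrictions of the corresponding properties of $H_1$, while $(A1)(ii)$ gives $\sup_{x'\in\tilde H(x,s^{(1)})}\|x'\|\le K(1+\|x\|+\|y_0\|)\le K(1+\|y_0\|)(1+\|x\|)$, which is condition $(c)$. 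Moreover $\tilde\Pi$ is continuous by $(A3)$, the set of stationary distributions of $\tilde\Pi(x,\cdot)(\cdot)$ is exactly $D^{(1)}(x,y_0)$, and the facts recorded after $(A8)$ say that $D^{(1)}(\cdot,y_0)$ takes convex compact values and has closed graph. Hence $(\tilde H,\tilde\Pi)$ satisfies precisely the hypotheses of the single-timescale framework of Section \ref{stsri}, and the averaged map built from them, $\cup_{\mu\in D^{(1)}(x,y_0)}\int_{\mathcal{S}^{(1)}}\tilde H(x,s^{(1)})\mu(ds^{(1)})$, coincides with $\hat H_1(x,y_0)$. Applying \cite[Lemma~4.7]{vin1t} to $\tilde H$ then gives that $\hat H_1(\cdot,y_0)$ is a Marchaud map.

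If a self-contained argument is preferred, the three defining properties are checked directly. Nonemptiness of $\hat H_1(x,y_0)$ holds since $D^{(1)}(x,y_0)\neq\emptyset$ and, for any $\mu$ in it, $H_{1,(x,y_0)}$ has a measurable selection (Lemma \ref{msel}(i)) which is bounded on $\mathcal{S}^{(1)}$ hence $\mu$-integrable. The linear-growth condition $\sup_{x'\in\hat H_1(x,y_0)}\|x'\|\le K(1+\|y_0\|)(1+\|x\|)$ is immediate from $(A1)(ii)$, since every element of $\hat H_1(x,y_0)$ is a $\mu$-average of selections of $H_{1,(x,y_0)}$. For convexity, take $z_i=\int f_i\,d\mu_i$ with $\mu_i\in D^{(1)}(x,y_0)$, $f_i\in\mathscr{S}(H_{1,(x,y_0)})$, $i=1,2$, and $\alpha\in[0,1]$; put $\mu_\alpha:=\alpha\mu_1+(1-\alpha)\mu_2\in D^{(1)}(x,y_0)$ and $\rho_i:=d\mu_i/d\mu_\alpha$, so $\alpha\rho_1+(1-\alpha)\rho_2=1$ $\mu_\alpha$-a.e.; then $f:=\alpha\rho_1 f_1+(1-\alpha)\rho_2 f_2$ is $\mu_\alpha$-a.e.\ a convex combination of $f_1(\cdot),f_2(\cdot)$, hence a selection of $H_{1,(x,y_0)}$, and $\int f\,d\mu_\alpha=\alpha z_1+(1-\alpha)z_2$, so $\alpha z_1+(1-\alpha)z_2\in\hat H_1(x,y_0)$.

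The substantive step, which also yields closedness (hence compactness) of the values, is the closed-graph property, and I expect it to be the main obstacle. Given $x_n\to x$, $z_n\in\hat H_1(x_n,y_0)$, $z_n\to z$, write $z_n=\int f_n\,d\mu_n$ with $\mu_n\in D^{(1)}(x_n,y_0)$, $f_n\in\mathscr{S}(H_{1,(x_n,y_0)})$, and form $\nu_n\in\mathcal{P}(\mathcal{S}^{(1)}\times\mathbb{R}^{d_1})$ as the push-forward of $\mu_n$ under $s^{(1)}\mapsto(s^{(1)},f_n(s^{(1)}))$. Since $\{x_n\}$ is bounded, $(A1)(ii)$ confines all $\nu_n$ to $\mathcal{S}^{(1)}\times rU$ for a fixed $r>0$, so $\{\nu_n\}$ is tight; pass to a subsequence along which $\mu_n\to\mu$ (with $\mu\in D^{(1)}(x,y_0)$ by the closed graph of $D^{(1)}$) and $\nu_n\to\nu$, whose $\mathcal{S}^{(1)}$-marginal is $\mu$ and which satisfies $\int w\,\nu(ds^{(1)},dw)=z$. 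To show $\nu$ is carried by the graph of $H_{1,(x,y_0)}$, fix $l\ge1$, let $\{H_1^{(l)}\}_{l\ge1}$ be the continuous approximants of $H_1$ from Lemma \ref{ctem}, and note that $(s^{(1)},w)\mapsto\mathrm{dist}(w,H_1^{(l)}(x,y_0,s^{(1)}))$ is continuous; by uniform upper semicontinuity of $H_1^{(l)}$ on a compact set $B_\rho\times\{y_0\}\times\mathcal{S}^{(1)}$ with $B_\rho\supseteq\{x_n\}\cup\{x\}$, for every $\epsilon>0$ one has $f_n(s^{(1)})\in H_1^{(l)}(x_n,y_0,s^{(1)})\subseteq H_1^{(l)}(x,y_0,s^{(1)})+\epsilon U$ uniformly in $s^{(1)}$ for $n$ large, whence $\int\mathrm{dist}(w,H_1^{(l)}(x,y_0,s^{(1)}))\,\nu(ds^{(1)},dw)=\lim_n\int\mathrm{dist}(w,H_1^{(l)}(x,y_0,s^{(1)}))\,\nu_n(ds^{(1)},dw)\le\epsilon$ for every $\epsilon>0$, i.e.\ it is $0$, so $\nu$-a.e.\ $(s^{(1)},w)$ lies in the graph of $H_1^{(l)}(x,y_0,\cdot)$. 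Intersecting over $l$ and using $\cap_l H_1^{(l)}=H_1$ (Lemma \ref{ctem}, Theorem \ref{approx}) shows $w\in H_1(x,y_0,s^{(1)})$ for $\nu$-a.e.\ $(s^{(1)},w)$. Disintegrating $\nu(ds^{(1)},dw)=q(s^{(1)},dw)\mu(ds^{(1)})$ via \cite[Cor.~3.1.2]{borkarap} and setting $f(s^{(1)}):=\int w\,q(s^{(1)},dw)$ (a barycenter over the convex compact set $H_1(x,y_0,s^{(1)})$) gives $f\in\mathscr{S}(H_{1,(x,y_0)})$ with $\int f\,d\mu=z$, whence $z\in\hat H_1(x,y_0)$. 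The obstacle is exactly the passage from the pointwise closed-graph/u.s.c.\ hypothesis on $H_1$ to a bound uniform in $s^{(1)}$ as $x_n\to x$; routing the estimate through the continuous approximants $H_1^{(l)}$, for which joint continuity of the distance function is available, is what makes this step go through.
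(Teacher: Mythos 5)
Your proposal is correct, but it is organized differently from the paper. The paper never proves Lemma \ref{march1} in isolation: it proves the stronger, jointly-in-$(x,y)$ statement (Lemma \ref{march3}) by first showing that the averaged approximants $\hat{H}_1^{(l)}$ built from the continuous maps $H_1^{(l)}$ are Marchaud (Lemma \ref{marchapprox1}, whose closed-graph and closedness arguments run through the parametrization $h_1^{(l)}$ and the representation of the integral by measures $\nu\in\mathcal{P}(\mathcal{S}^{(1)}\times U)$ with $\nu_{\mathcal{S}^{(1)}}\in D^{(1)}(x,y)$, i.e.\ Lemma \ref{chint}$(i)$), and then identifying $\hat{H}_1=\cap_{l\geq1}\hat{H}_1^{(l)}$ (Lemma \ref{rel}$(v)$); Lemma \ref{march1} is then read off by freezing $y_0$. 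Your first route — freezing $y_0$, checking that $H_1(\cdot,y_0,\cdot)$ is a SAM and $\Pi^{(1)}(\cdot,y_0,\cdot)$ a continuous kernel whose stationary sets are $D^{(1)}(\cdot,y_0)$, and citing \cite[Lemma~4.7]{vin1t} — is the lightest-weight argument and is legitimate, since $\hat H_1(\cdot,y_0)$ is exactly the single-timescale mean field for that data (and $K(1+\|x\|+\|y_0\|)\le K(1+\|y_0\|)(1+\|x\|)$ does give condition $(c)$); what it does not buy you is the joint closed-graph property in $(x,y)$ of Lemma \ref{march3}, which the paper needs later to make the coupled DI \eqref{fstrdi1} Marchaud — this is why the paper takes the longer road. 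Your second, self-contained route is also sound: the convexity argument via $\mu_\alpha=\alpha\mu_1+(1-\alpha)\mu_2$ and Radon--Nikodym densities, and the closed-graph argument via push-forwards of $\mu_n$ onto the graph, tightness on $\mathcal{S}^{(1)}\times rU$, the jointly continuous distance to $H_1^{(l)}(x,y_0,\cdot)$ together with Hausdorff-uniform continuity of $H_1^{(l)}$ on compacts, intersection over $l$ using $\cap_l H_1^{(l)}=H_1$, and the disintegration/barycenter step, are all valid (the barycenter and measurability device is the same one the paper uses inside the proof of Lemma \ref{chint}$(ii)$; one should, as usual, modify the resulting selection on a $\mu$-null set). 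In effect you replace the paper's packaged tools (Lemma \ref{chint} and Lemma \ref{rel}) by a direct graph-measure argument; the paper's organization pays off because the same $\hat H^{(l)}$/intersection machinery is reused verbatim for $\hat H_2$ and for Lemma \ref{march3}.
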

Proof of the above lemma is given in section \ref{limdiprop}. The next assumption will ensure that for every $y_0\in\mathbb{R}^{d_2}$, the DI \eqref{fstrdi} 
has a global attractor to which one expects the faster time scale iterates $\{X_n\}_{n\geq0}$ to converge. 

\begin{itemize}
 \item [(A9)] For every $y_0\in\mathbb{R}^{d_2}$, the DI \eqref{fstrdi} admits a globally attracting set, $A_{y_0}$. 
 The map $\lambda:\mathbb{R}_{d_2}\rightarrow\left\{\text{subsets of }\mathbb{R}^{d_1}\right\}$ where for every $y\in\mathbb{R}^{d_2}$, 
 $\lambda(y):=A_{y}$ is such that
 \begin{itemize}
 \item [(i)] for every $y\in\mathbb{R}^{d_2}$, $\sup_{x\in \lambda(y)}\left\|x\right\|\leq K(1+\left\|y\right\|)$,
 \item [(ii)] for every $y\in\mathbb{R}^{d_2}$, for every $\mathbb{R}^{d_2}$-valued sequence, $\left\{y_n\right\}_{n\geq1}$ converging to $y\in\mathbb{R}^{d_2}$, for every $\left
 \{x_n\in\lambda(y_n)\right\}_{n\geq0}$ converging $x\in\mathbb{R}^{d_1}$, we have $x\in\lambda(y)$.
 \end{itemize}
\end{itemize}

With respect to the slower timescale recursion \eqref{stsr}, the faster time scale recursion will appear to have equilibrated. Further the Markov 
noise terms average the set-valued drift function $H_2$ with respect to the stationary distributions. In what follows we construct the set-valued map 
that the slower timescale recursion is expected to track which captures both the equilibration of the faster timescale and the averaging by the 
Markov noise terms.

Before we proceed recall that $\mathcal{P}(\mathbb{R}^{d_1}\times\mathcal{S}^{(2)})$ denotes the set of probability measures on $\mathbb{R}^{d_1}\times
\mathcal{S}^{(2)}$ with the Prohorov topology. For any $\mu\in\mathcal{P}(\mathbb{R}^{d_1}\times\mathcal{S}^{(2)})$, $\mu_{\mathbb{R}^{d_1}}\in\mathcal{P}(
\mathbb{R}^{d_1})$ and $\mu_{\mathcal{S}^{(2)}}\in\mathcal{P}(\mathcal{S}^{(2)})$ denote the images of the probability measure $\mu$ under projections 
$\mathbb{R}^{d_1}\times\mathcal{S}^{(2)}\rightarrow\mathbb{R}^{d_1}$ and $\mathbb{R}^{d_1}\times\mathcal{S}^{(2)}\rightarrow\mathcal{S}^{(2)}$ respectively 
(for any $A\in\mathscr{B}(\mathbb{R}^{d_1})$, $\mu_{\mathbb{R}^{d_1}}(A):=\int_{A\times\mathcal{S}^{(2)}}\mu(dx,ds^{(2)})$ and similarly for every $A
\in\mathscr{B}(\mathcal{S}^{(2)})$, $\mu_{\mathcal{S}^{(2)}}(A):=\int_{\mathbb{R}^{d_1}\times A}\mu(dx,ds^{(2)})$).

Define the map $D:\mathbb{R}^{d_2}\rightarrow\left\{\text{subsets of }\mathcal{P}(\mathbb{R}^{d_1}\times\mathcal{S})\right\}$ such that
for every $y\in\mathbb{R}^{d_2}$ , 
\begin{small}
\begin{equation}\label{measmapdefn}
 D(y):=\!\!\left\{\mu\!\in\!\mathcal{P}(\mathbb{R}^{d_1}\times\mathcal{S}^{(2)}): \mathrm{supp}(\mu_{\mathbb{R}^{d_1}})\!\subseteq\!\lambda(y)\ and\ for\ every\ A\!\in\!
 \mathscr{B}(\mathcal{S}^{(2)}),\ \mu_{\mathcal{S}^{(2)}}(A)\!=\!\!\int_{\mathcal{S}^{(2)}}\!\!\!\!\!\Pi^{(2)}(x,y,s^{(2)})(A)\mu(dx,ds^{(2)})\right\},
\end{equation}
\end{small}
where $\mathrm{supp}(\mu_{\mathbb{R}^{d_1}})$ denotes the support of measure $\mu_{\mathbb{R}^{d_1}}$ (that is $\mathrm{supp}(\mu_{\mathbb{R}^{d_1}})\subseteq
\mathbb{R}^{d_1}$ is a closed set such that $\mu_{\mathbb{R}^{d_1}}(\mathrm{supp}(\mu_{\mathbb{R}^{d_1}}))=1$ and for every closed set 
$A\subseteq\mathbb{R}^{d_1}$, with $\mu_{\mathbb{R}^{d_1}}(A)=1$ we have $\mathrm{supp}(\mu_{\mathbb{R}^{d_1}})\subseteq A$). A natural question to 
ask is whether $D(y)$ is non-empty for every $y\in\mathbb{R}^{d_2}$ and if it is non-empty, what properties the map $D(\cdot)$ possesses and its
relation to the stationary distributions of the Markov noise terms $\left\{S^{(2)}_n\right\}_{n\geq0}$. The lemma below answers these questions.

\begin{lemma}\label{measmap}
 The map $D(\cdot)$ defined in \eqref{measmapdefn} satisfies,
 \begin{itemize}
  \item [(i)] for every $y\in\mathbb{R}^{d_2}$, $D(y)$ is non-empty, convex and compact subset of $\mathcal{P}(\mathbb{R}^{d_1}\times\mathcal{S}^{(2)})$,
  \item [(ii)] for every $y\in\mathbb{R}^{d_2}$, for every $\mathbb{R}^{d_2}$-valued sequence, $\left\{y_n\right\}_{n\geq1}$ converging to $y\in\mathbb{R}^{d_2}$, for every 
  $\mathcal{P}(\mathbb{R}^{d_1}\times\mathcal{S}^{(2)})$-valued sequence $\left\{\mu^n\in D(y_n)\right\}_{n\geq1}$ converging to $\mu\in\mathcal{P}
  (\mathbb{R}^{d_1}\times\mathcal{S}^{(2)})$, we have $\mu\in D(y)$.
  \item [(iii)] for every $y\in\mathbb{R}^{d_2}$, $\bar{co}\left(\left\{\delta_{x^*}\otimes\nu\in\mathcal{P}(\mathbb{R}^{d_1}\times\mathcal{S}^{(2)})
  :x^*\in\lambda(y),\ \nu\in D^{(2)}(x^*,y)\right\}\right)\subseteq D(y)$, where for any $x\in\mathbb{R}^{d_1}$, $\delta_x$ denotes the Dirac measure.
 \end{itemize}
\end{lemma}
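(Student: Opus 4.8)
The plan is to handle the three parts in the order: the building block behind (iii), then (i), then (ii), since the same limit‑passage estimate recurs. Before anything, I would record two standing facts about $\lambda$. By (A9)(i) each set $\lambda(y)$ lies in the ball $\{x:\|x\|\le K(1+\|y\|)\}$, and by (A9)(ii) the map $\lambda$ has closed graph; being compact‑valued (it is a globally attracting set) and locally bounded, $\lambda$ is therefore u.s.c., by the same argument as Lemma \ref{usc}. Consequently, for a fixed $y$ every $\mu\in D(y)$ is carried by the compact set $\lambda(y)\times\mathcal{S}^{(2)}$, so $D(y)\subseteq\mathcal{P}(\lambda(y)\times\mathcal{S}^{(2)})$, and the latter is a compact subset of $\mathcal{P}(\mathbb{R}^{d_1}\times\mathcal{S}^{(2)})$ (measures giving full mass to a compact subspace).

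For part (i), convexity is immediate: if $\mu_1,\mu_2\in D(y)$ and $\alpha\in[0,1]$, then $\mathrm{supp}((\alpha\mu_1+(1-\alpha)\mu_2)_{\mathbb{R}^{d_1}})\subseteq\mathrm{supp}((\mu_1)_{\mathbb{R}^{d_1}})\cup\mathrm{supp}((\mu_2)_{\mathbb{R}^{d_1}})\subseteq\lambda(y)$, while the invariance identity in the definition of $D(y)$ is linear in $\mu$. For compactness it remains to show $D(y)$ is closed inside $\mathcal{P}(\lambda(y)\times\mathcal{S}^{(2)})$. The support constraint survives weak limits (a weak limit of measures on a closed set is carried by that set). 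For the invariance constraint I would rewrite it in dual form: $\mu\in D(y)$ iff $\mathrm{supp}(\mu_{\mathbb{R}^{d_1}})\subseteq\lambda(y)$ and $\int_{\mathcal{S}^{(2)}}g\,d\mu_{\mathcal{S}^{(2)}}=\int_{\mathbb{R}^{d_1}\times\mathcal{S}^{(2)}}\big(\int_{\mathcal{S}^{(2)}}g(s')\,\Pi^{(2)}(x,y,s^{(2)})(ds')\big)\,\mu(dx,ds^{(2)})$ for every $g\in\mathcal{C}(\mathcal{S}^{(2)},\mathbb{R})$, using that two finite measures on $\mathcal{S}^{(2)}$ agreeing against all such $g$ coincide. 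The inner integral, as a function of $(x,s^{(2)})$, is continuous (because $\Pi^{(2)}$ is continuous into $\mathcal{P}(\mathcal{S}^{(2)})$ and $\rho\mapsto\int g\,d\rho$ is continuous there) and bounded on $\lambda(y)\times\mathcal{S}^{(2)}$; hence both sides of the dual identity are weakly continuous in $\mu$, so the constraint passes to limits. Non‑emptiness of $D(y)$ then follows from part (iii), since $\lambda(y)\ne\emptyset$ and $D^{(2)}(x^*,y)\ne\emptyset$ for every $x^*$ ($\mathcal{S}^{(2)}$ being compact).

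For part (iii), I would verify directly that $\delta_{x^*}\otimes\nu\in D(y)$ whenever $x^*\in\lambda(y)$ and $\nu\in D^{(2)}(x^*,y)$: its $\mathbb{R}^{d_1}$‑marginal is $\delta_{x^*}$ with support $\{x^*\}\subseteq\lambda(y)$, and its $\mathcal{S}^{(2)}$‑marginal is $\nu$, for which the invariance identity collapses to $\nu(A)=\int_{\mathcal{S}^{(2)}}\Pi^{(2)}(x^*,y,s^{(2)})(A)\,\nu(ds^{(2)})$, i.e.\ exactly stationarity of $\nu$ for $\Pi^{(2)}(x^*,y,\cdot)(\cdot)$. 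Since $D(y)$ is convex and closed by part (i), it contains the closed convex hull of all such $\delta_{x^*}\otimes\nu$, which is the assertion.

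Part (ii) is the crux. Given $y_n\to y$ and $\mu^n\in D(y_n)$ with $\mu^n\to\mu$, choose $R$ with $\|y_n\|\le R$ for all $n$; by (A9)(i) all $\lambda(y_n)$ and $\lambda(y)$ lie in a fixed compact ball $B$, so every $\mu^n$ and $\mu$ belong to $\mathcal{P}(B\times\mathcal{S}^{(2)})$. To obtain $\mathrm{supp}(\mu_{\mathbb{R}^{d_1}})\subseteq\lambda(y)$: if some $x_0$ lay in the support but outside the closed set $\lambda(y)$, u.s.c.\ of $\lambda$ would furnish $\epsilon>0$ with $B(x_0,\epsilon)\cap\lambda(y_n)=\emptyset$ for all large $n$, hence $\mu^n_{\mathbb{R}^{d_1}}(B(x_0,\epsilon))=0$; then the Portmanteau inequality $\mu_{\mathbb{R}^{d_1}}(B(x_0,\epsilon))\le\liminf_n\mu^n_{\mathbb{R}^{d_1}}(B(x_0,\epsilon))=0$ contradicts $x_0\in\mathrm{supp}(\mu_{\mathbb{R}^{d_1}})$. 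For the invariance identity I again use the dual form, but now the transformed test function $\psi_n(x,s^{(2)}):=\int g\,d\Pi^{(2)}(x,y_n,s^{(2)})$ also moves with $n$ through $y_n$; the point is that $(x,s^{(2)},y')\mapsto\int g\,d\Pi^{(2)}(x,y',s^{(2)})$ is continuous, hence uniformly continuous on the compact set $B\times\mathcal{S}^{(2)}\times\{\|y'\|\le R\}$, so $\psi_n\to\psi:=\int g\,d\Pi^{(2)}(\cdot,y,\cdot)$ uniformly on $B\times\mathcal{S}^{(2)}$. Splitting $\int\psi_n\,d\mu^n-\int\psi\,d\mu=\int(\psi_n-\psi)\,d\mu^n+\big(\int\psi\,d\mu^n-\int\psi\,d\mu\big)$, the first term is at most $\|\psi_n-\psi\|_{\infty,B\times\mathcal{S}^{(2)}}\to0$ and the second tends to $0$ by weak convergence and continuity of $\psi$; combined with $\int g\,d\mu^n_{\mathcal{S}^{(2)}}\to\int g\,d\mu_{\mathcal{S}^{(2)}}$ this gives the invariance identity for $\mu$ at $y$, so $\mu\in D(y)$. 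I expect this last limit passage to be the main obstacle: the naive formulation with indicators of Borel sets $A$ fails because $s^{(2)}\mapsto\Pi^{(2)}(x,y,s^{(2)})(A)$ need not be continuous, and one must simultaneously absorb the variation of the kernel in $y$ — both difficulties disappear once one works with continuous test functions and exploits that all the measures are supported on one common compact set.
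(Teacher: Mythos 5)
Your proposal is correct and follows essentially the same route as the paper: the Dirac-product construction $\delta_{x^*}\otimes\nu$ for non-emptiness and part (iii), linearity for convexity, closedness and compactness via continuous test functions together with the compact common support $\lambda(y)\times\mathcal{S}^{(2)}$, and for part (ii) the u.s.c. of $\lambda$ with a Portmanteau argument for the support constraint plus uniform continuity of $(x,y',s^{(2)})\mapsto\int g\,d\Pi^{(2)}(x,y',s^{(2)})$ on a compact set to absorb the $y_n$-dependence of the kernel. The only differences are organizational (deferring non-emptiness to (iii), phrasing compactness through $\mathcal{P}$ of a compact set rather than tightness plus Prohorov, and a contradiction/open-ball version of the support argument), none of which changes the substance.
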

\begin{proof}$\\$
\begin{itemize}
 \item [(i)] Fix $y\in\mathbb{R}^{d_2}$. Consider the product measure $\mu:=\delta_{x^*}\otimes\nu\in\mathcal{P}(\mathbb{R}^{d_1}\times
 \mathcal{S}^{(2)})$ where, $\delta_{x^*}\in\mathcal{P}(\mathbb{R}^{d_1})$ denotes the Dirac measure on some $x^*\in\lambda(y)$ (that is for every 
 $A\in\mathscr{B}(\mathbb{R}^{d_1})$, $\delta_{x^*}(A)=1$ if $x^*\in A$, $\delta_{x^*}(A)=0$ otherwise) and $\nu\in\mathcal{P}(\mathcal{S}^{(2)})$ is 
 such that $\nu\in D^{(2)}(x^*,y)$ (that is $\nu$ is a stationary measure of the Markov chain whose transition kernel is given by $\Pi^{(2)}(x^*,y,\cdot)
 (\cdot)$). Then $\mu_{\mathbb{R}^{d_1}}=\delta_{x^*}$ and since $x^*\in\lambda(y)$, $\mathrm{supp}(\mu_{\mathbb{R}^{d_1}})=\{x^*\}\subseteq\lambda(y)$. Further 
 $\mu_{\mathcal{S}^{(2)}}=\nu$ and for every $A\in\mathscr{B}(\mathcal{S}^{(2)})$, $\int_{\mathbb{R}^{d_1}\times\mathcal{S}^{(2)}}\Pi^{(2)}(x,y,s)(A)\mu(dx,ds)=
 \int_{\mathcal{S}^{(2)}}\left[\int_{\mathbb{R}^{d_1}}\Pi^{(2)}(x,y,s^{(2)})(A)\delta_{x^*}(dx)\right]\nu(ds^{(2)})=\int_{\mathcal{S}^{(2)}}\Pi^{(2)}(x^*,y,s^{(2)})(A)\nu
 (ds^{(2)})=\nu(A)$ where the last equality follows from the fact that $\nu\in D^{(2)}(x^*,y)$. Therefore $\delta_{x^*}\otimes\nu\in D(y)$ and hence 
 $D(y)\neq\emptyset$.
 
 Let $\mu^1,\mu^2\in D(y)$ and $\alpha\in (0,1)$. Consider the measure $\mu:=\alpha\mu^1+(1-\alpha)\mu^2$ (that is for any $A\in\mathscr{B}(\mathbb{R}
 ^{d_1}\times\mathcal{S}^{(2)})$, $\mu(A)=\alpha\mu^1(A)+(1-\alpha)\mu^2(A)$). Clearly $\mu\in\mathcal{P}(\mathbb{R}^{d_1}\times\mathcal{S}^{(2)})$, 
 $\mu_{\mathbb{R}^{d_1}}=\alpha\mu^1_{\mathbb{R}^{d_1}}+(1-\alpha)\mu^2_{\mathbb{R}^{d_1}}$ and $\mu_{\mathcal{S}^{(2)}}=\alpha\mu^1_{\mathcal{S}^{(2)}}
 +(1-\alpha)\mu^2_{\mathcal{S}^{(2)}}$. For $i\in{1,2}$, $\mathrm{supp}(\mu^i_{\mathbb{R}^{d_1}})\subseteq \lambda(y)$, from which we have $\mu^i_{\mathbb{R}^{d_1}}(
 \lambda(y))=1$ and hence $\mu_{\mathbb{R}^{d_1}}(\lambda(y))=\alpha\mu^1_{\mathbb{R}^{d_1}}(\lambda(y))+(1-\alpha)\mu^2_{\mathbb{R}^{d_1}}(
 \lambda(y))=1$. Therefore $\mathrm{supp}(\mu_{\mathbb{R}^{d_1}})\subseteq \lambda(y)$. For every $A\in\mathscr{B}(\mathcal{S}^{(2)})$, 
 $\int_{\mathbb{R}^{d_1}\times\mathcal{S}^{(2)}}\Pi^{(2)}(x,y,s^{(2)})(A)\mu(dx,ds^{(2)})=\alpha\int_{\mathbb{R}^{d_1}\times\mathcal{S}^{(2)}}\Pi^{(2)}(x,y,s^{(2)})(A)\mu^1
 (dx,ds^{(2)})+(1-\alpha)\int_{\mathbb{R}^{d_1}\times\mathcal{S}^{(2)}}\Pi^{(2)}(x,y,s^{(2)})(A)\mu^2(dx,ds^{(2)})=\alpha\mu^1_{\mathcal{S}^{(2)}}(A)+(1-\alpha)\mu^2_
 {\mathcal{S}^{(2)}}(A)=\mu_{\mathcal{S}^{(2)}}(A)$. Therefore $\mu:=\alpha\mu^1+(1-\alpha)\mu^2\in D(y)$ which gives us the convexity of $D(y)$.
 
 In order to show that $D(y)$ is compact, we will first show that the set $D(y)$ is a closed set. Consider $\{\mu^n\}_{n\geq1}$ such that for 
 every $n\geq1$, $\mu^n\in D(y)$ converging to $\mu\in\mathcal{P}(\mathbb{R}^d\times\mathcal{S}^{(2)})$. Clearly $\{\mu^n_{\mathbb{R}^{d_1}}\}_{n\geq1}$
 converges to $\mu_{\mathbb{R}^{d_1}}$ in $\mathcal{P}(\mathbb{R}^{d_1})$. Since for every $n\geq1$, since $\mathrm{supp}(\mu^n_{\mathbb{R}^{d_1}})\subseteq
 \lambda(y)$, we have $\mu^n(\lambda(y))=1$ for every $n\geq1$. By assumption $(A9)$, $\lambda(y)$ is a compact subset of $\mathbb{R}^{d_1}$ and 
 by \cite[Thm.~2.1.1(iv)]{borkarap}, we have $\limsup_{n\to\infty}\mu^n_{\mathbb{R}^{d_1}}(\lambda(y))\leq\mu_{\mathbb{R}^{d_1}}(\lambda(y))$. 
 Therefore $\mu_{\mathbb{R}^{d_1}}(\lambda(y))=1$ which gives us that $\mathrm{supp}(\mu_{\mathbb{R}^{d_1}})\subseteq\lambda(y)$. Clearly $\left\{\mu^n_{
 \mathcal{S}^{(2)}}\right\}_{n\geq1}$ converges to $\mu_{\mathcal{S}^{(2)}}$ in $\mathcal{P}(\mathcal{S}^{(2)})$. Since $\mathcal{S}^{(2)}$ is a compact metric 
 space, by \cite[Thm.~2.1.1(ii)]{borkarap} we know that for every $f\in\mathcal{C}(\mathcal{S}^{(2)},\mathbb{R})$, $\int_{\mathcal{S}^{(2)}}f(\tilde{s}^{(2)}
 )\mu^n_{\mathcal{S}^{(2)}}(d\tilde{s}^{(2)})\to\int_{\mathcal{S}^{(2)}}f(\tilde{s}^{(2)})\mu_{\mathcal{S}^{(2)}}(d\tilde{s}^{(2)})$ as $n\to\infty$. Let $\nu^n(d\tilde
 {s}^2):=\int_{\mathbb{R}^{d_1}\times\mathcal{S}^{(2)}}\Pi^{(2)}(x,y,s^{(2)})(d\tilde{s}^{(2)})\mu(dx,ds^{(2)})\in\mathcal{P}(\mathcal{S}^{(2)})$ for every $n\geq1$ and 
 $\nu(d\tilde{s}^{(2)}):=\int_{\mathbb{R}^{d_1}\times\mathcal{S}^{(2)}}\Pi^{(2)}(x,y,s^{(2)})(d\tilde{s}^{(2)})\mu(dx,ds^{(2)})$. It 
 is easy to see that for any $f\in\mathcal{C}(\mathcal{S}^{(2)},\mathbb{R})$, $\int_{\mathcal{S}^{(2)}}f(\tilde{s}^{(2)})\nu^n(d\tilde{s}^{(2)})=\int_{
 \mathbb{R}^{d_1}\times\mathcal{S}^{(2)}}\left[\int_{\mathcal{S}^{(2)}}f(\tilde{s}^{(2)})\Pi^{(2)}(x,y,s^{(2)})(d\tilde{s}^{(2)})\right]\mu^n(dx,ds^{(2)})$. By assumption 
 $(A4)$, $(x,s^{(2)})\rightarrow\int_{\mathcal{S}^{(2)}}f(\tilde{s}^{(2)})\Pi^{(2)}(x,y,s^{(2)})(d\tilde{s}^{(2)})$ is continuous for any $f\in\mathcal{C}
 (\mathcal{S}^{(2)},\mathbb{R})$. Therefore as $\mu^n\to\mu$ in $\mathcal{P}(\mathbb{R}^{d_1}\times\mathcal{S}^{(2)})$, we have 
 $\int_{\mathbb{R}^{d_1}\times\mathcal{S}^{(2)}}\left[\int_{\mathcal{S}^{(2)}}f(\tilde{s}^{(2)})\Pi^{(2)}(x,y,s^{(2)})(d\tilde{s}^{(2)})\right]\mu^n(dx,ds^{(2)})\to\int_{
 \mathbb{R}^{d_1}\times\mathcal{S}^{(2)}}\left[\int_{\mathcal{S}^{(2)}}f(\tilde{s}^{(2)})\Pi^{(2)}(x,y,s^{(2)})(d\tilde{s}^{(2)})\right]\mu(dx,ds^{(2)})$ or $\int_
 {\mathcal{S}^{(2)}}f(\tilde{s}^{(2)})\nu^n(d\tilde{s}^{(2)})\to\int_{\mathcal{S}^{(2)}}f(\tilde{s}^{(2)})\nu(d\tilde{s}^{(2)})$. Since for every $n\geq1$, 
 $\mu^n\in D(y)$, we have $\int_{\mathcal{S}^{(2)}}f(\tilde{s}^{(2)})\mu^n_{\mathcal{S}^{(2)}}(d\tilde{s}^{(2)})=\int_{\mathcal{S}^{(2)}}f(\tilde{s}^{(2)})\nu^n(d
 \tilde{s}^{(2)})$ for every $f\in\mathcal{C}(\mathcal{S}^{(2)},\mathbb{R})$. Thus for every $f\in\mathcal{C}(\mathcal{S}^{(2)},\mathbb{R})$, we have 
 $\int_{\mathcal{S}^{(2)}}f(\tilde{s}^{(2)})\mu_{\mathcal{S}^{(2)}}(d\tilde{s}^{(2)})=\int_{\mathcal{S}^{(2)}}f(\tilde{s}^{(2)})\nu(d\tilde{s}^{(2)})$. Therefore 
 $\mu_{\mathcal{S}^{(2)}}=\nu$ which establishes that $\mu\in D(y)$ and hence $D(y)$ is closed. To establish compactness of $D(y)$ it is now enough 
 to show that the set $D(y)$ is relatively compact in $\mathcal{P}(\mathbb{R}^{d_1}\times\mathcal{S}^{(2)})$. For any measure $\mu\in D(y)$, the 
 support of the measure $\mu$, denoted by $\mathrm{supp}(\mu)$ is contained in $\lambda(y)\times\mathcal{S}^{(2)}$ which is a compact set independent of $\mu
 \in D(y)$. Thus the family of measures $\left\{\mu:\mu\in D(y)\right\}$ is tight and by Prohorov's theorem (see \cite[Thm.~2.3.1]{borkarap}) 
 we have that the set of measures $D(y)$ is relatively compact in $\mathcal{P}(\mathbb{R}^{d_1}\times\mathcal{S}^{(2)})$. Therefore $D(y)$ is closed 
 and relatively compact and hence is compact.
 \item [(ii)] Let $y_n\to y$ in $\mathbb{R}^{d_2}$ and $\mu^n\in D(y_n)\to\mu$ in $\mathcal{P}(\mathbb{R}^{d_1}\times\mathcal{S}^{(2)})$ as 
 $n\to\infty$. Let $B_1$ denote the closed unit ball in $\mathbb{R}^{d_1}$. By assumption $(A9)$, we have that the set-valued map $y\rightarrow
 \lambda(y)$ is u.s.c. Therefore for every $\epsilon>0$, there exists $\delta>0$(depending on $\epsilon$ and $y$) such that for every $y'\in
 \mathbb{R}^{d_1}$, satisfying $\left\|y'-y\right\|<\delta$ we have $\lambda(y')\subseteq\lambda(y)+\epsilon B_1$. Since $\lambda(y)$ is compact, 
 $\lambda(y)+\epsilon B_1$ is compact. Since $y_n\to y$, there exists $N$ such that for every $n\geq N$, $\left\|y_n-y\right\|<\delta$. Then for 
 all $n\geq N$, $\lambda(y_n)\subseteq \lambda(y)+\epsilon B_1$. By the above we have that $\limsup_{n\to\infty}\mu^n_{\mathbb{R}^{d_1}}(\lambda(y)+\epsilon(B_1))
 =1$ for every $\epsilon>0$. Since $\mu^n\to\mu$, we have that $\mu^n_{\mathcal{R}^{d_1}}\to\mu_{\mathcal{R}^{d_1}}$ in $\mathcal{P}(\mathcal{R}^{d_1})$ and 
 by \cite[Thm.~2.1.1(iv)]{borkarap}, we have that for every $\epsilon>0$, $\mu_{\mathbb{R}^{d_1}}(\lambda(y)+\epsilon B_1)=1$. Since $\lambda(y)$ is compact, 
 $\lambda(y)=\cap_{n\geq1}(\lambda(y)+\frac{1}{n}B_1)$ and $\mu_{\mathbb{R}^{d_1}}(\lambda(y))=\lim_{n\to\infty}\mu_{\mathbb{R}^{d_1}}(\lambda(y)+\frac{1}{n}B_1)=1$. Therefore 
 $\mathrm{supp}(\mu_{\mathbb{R}^{d_1}})\subseteq\lambda(y)$. Let $\nu^n(d\tilde{s}^{(2)}):=\int_{\mathbb{R}^{d_1}\times\mathcal{S}^{(2)}}\Pi^{(2)}(x,y_n,s^{(2)})(d\tilde{
 s}^2)\mu^n(dx,ds^{(2)})\in\mathcal{P}(S^2)$ and $\nu(d\tilde{s}^{(2)}):=\int_{\mathbb{R}^{d_1}\times\mathcal{S}^{(2)}}\Pi^{(2)}(x,y,s^{(2)})(d\tilde{s}^{(2)})\mu(dx
 ,ds^{(2)})$. Then for any $f\in\mathcal{C}(\mathcal{S}^{(2)},\mathbb{R})$, for any $n\geq1$, $\int_{\mathcal{S}^{(2)}}f(\tilde{s}^{(2)})\nu^n(\tilde{s}^{(2)})=
 \int_{\mathbb{R}^{d_1}\times\mathcal{S}^{(2)}}\left[\int_{\mathcal{S}^{(2)}}f(\tilde{s}^{(2)})\Pi^{(2)}(x,y_n,s^{(2)})(d\tilde{s}^{(2)})\right]\mu^n(dx,ds^{(2)})$ and 
 $\int_{\mathcal{S}^{(2)}}f(\tilde{s}^{(2)})\nu(\tilde{s}^{(2)})=\int_{\mathbb{R}^{d_1}\times\mathcal{S}^{(2)}}\left[\int_{\mathcal{S}^{(2)}}f(\tilde{s}^{(2)})
 \Pi^{(2)}(x,y,s^{(2)})(d\tilde{s}^{(2)})\right]\mu(dx,ds^{(2)})$. Since for every $n\geq1$, $\mu^n\in D(y_n)$, we have that $\mathrm{supp}(\mu^n)\subseteq 
 \lambda(y_n)\times S^2$. By using the u.s.c. property of the map $\lambda(\cdot)$ and the fact that $y_n\to y$, we get that for any $\epsilon>0$, there 
 exists $N$ such that for every $n\geq N$, $\lambda(y_n)\times\mathcal{S}^{(2)}\subseteq \left(\lambda(y)+\epsilon B_1\right)\times\mathcal{S}^{(2)}$. 
 Therefore for every $f\in\mathcal{C}(\mathcal{S}^{(2)},\mathbb{R})$, for every $n\geq N$, $\int_{\mathcal{S}^{(2)}}f(\tilde{s}^{(2)})\nu^n(d\tilde{s}^{(2)})=
 \int_{\left(\lambda(y)+\epsilon B_1\right)\times\mathcal{S}^{(2)}}\left[\int_{\mathcal{S}^{(2)}}f(s^{(2)})\Pi^{(2)}(x,y_n,s^{(2)})(d\tilde{s}^{(2)})\right]\mu^n(dx,ds^{(2)})$.
 By assumption $(A4)$, the map $(x,y,s^{(2)})\rightarrow\int_{\mathcal{S}^{(2)}}f(tilde{s}^{(2)})\Pi^{(2)}(x,y,s^{(2)})(d\tilde{s}^{(2)})$ is continuous and hence its restriction 
 to the compact set $\left(\lambda(y)+\epsilon B_1\right)\times C\times\mathcal{S}^{(2)}$ is uniformly continuous where $C\subseteq \mathbb{R}^{d_2}$ is a 
 compact set such that for every $n\geq1$, $y_n\in C$. By the above we can conclude that for any $\tilde{\epsilon}>0$, there exists $N_1$ such that 
 for every $n\geq N_1$, for every $(x,s^{(2)})\in(\lambda(y)+\epsilon B_1)\times\mathcal{S}^{(2)}$, $\left|\int_{\mathcal{S}^{(2)}}f(\tilde{s}^{(2)})\Pi^{(2)}(x,y_n,s^{(2)})
 (d\tilde{s}^{(2)})-\int_{\mathcal{S}^{(2)}}f(\tilde{s}^{(2)})\Pi^{(2)}(x,y,s^{(2)})(d\tilde{s}^{(2)})\right|<\tilde{\epsilon}$. Therefore for every $f\in\mathcal{C}(
 \mathcal{S}^{(2)},\mathbb{R})$, there exists $\tilde{N}:=\max\left\{N,N_1\right\}$ such that for every $n\geq \tilde{N}$, 
 \begin{small}
 \begin{align*}
  \bigg|\int_{\mathcal{S}^{(2)}}f(\tilde{s}^{(2)})\nu^n(d\tilde{s}^{(2)})&-\int_{\mathcal{S}^{(2)}}f(\tilde{s}^{(2)})\nu(d\tilde{s}^{(2)})\bigg|\\
  &\leq\\
  \tilde{\epsilon}+
  \bigg|\int_{\mathbb{R}^{d_1}\times\mathcal{S}^{(2)}}\bigg[\int_{\mathcal{S}^{(2)}}f(\tilde{s}^{(2)})\Pi^{(2)}(x,y,s^{(2)})&(d\tilde{s}^{(2)})\bigg]\mu^n(ds,ds^{(2)})-
  \int_{\mathcal{S}^{(2)}}f(\tilde{s}^{(2)})\nu(d\tilde{s}^{(2)})\bigg|.
 \end{align*}
 \end{small}
 The second term in the R.H.S. of the above inequality goes to zero as $n\to\infty$ (use the definition of $\nu(d\tilde{s}^{(2)})$, assumption $(A4)$
 and \cite[Thm.~2.1.1(ii)]{borkarap}). Therefore taking limit in the above equation we get that for any $f\in\mathcal{C}(\mathcal{S}^{(2)},\mathbb{R}
 )$, for every $\tilde{\epsilon}>0$, $\lim_{n\to\infty}\left|\int_{\mathcal{S}^{(2)}}f(\tilde{s}^{(2)})\nu^n(d\tilde{s}^{(2)})-\int_{\mathcal{S}^{(2)}}f(\tilde{s}^{(2)})\nu
 (d\tilde{s}^{(2)})\right|\leq\tilde{\epsilon}$. Hence $\nu^n\to\nu$ in $\mathcal{P}(\mathcal{S}^{(2)})$ as $n\to\infty$. Clearly $\mu^n_{\mathcal{S}^{(2)}}
 \to\mu_{\mathcal{S}^{(2)}}$ as $n\to\infty$. Therefore $\nu=\mu_{\mathcal{S}^{(2)}}$ which gives us that $\mu\in D(y)$.
 \item [(iii)] Follows from part $(i)$ of this lemma.\qed
\end{itemize}
\end{proof}

Define the set-valued map $\hat{H}_2:\mathbb{R}^{d_2}\rightarrow\left\{\text{subsets of }\mathbb{R}^{d_1}\right\}$ such that for every $y\in
\mathbb{R}^{d_2}$, 
\begin{equation}\label{stsrmap}
 \hat{H}_2(y):=\cup_{\mu\in D(y)}\int_{\mathbb{R}^{d_1}\times\mathcal{S}^{(2)}}H_{2,y}(x,s^{(2)})\mu(dx,ds^{(2)}),
\end{equation}
where for every $y\in\mathbb{R}^{d_2}$, $H_{2,y}$ denotes the slice as in Definition \ref{slices}$(iv)$ of the set valued map $H_2$. Since for every 
$y\in\mathbb{R}^{d_2}$, for every $\mu\in D(y)$, $\mathrm{supp}(\mu)$ is compact, by Lemma \ref{msble2}$(iii)$ we know that the 
slices $H_{2,y}$ are $\mu$-integrable for every $\mu\in D(y)$. So the above set-valued map is well defined and we show later that the slower 
timescale iterates track the DI given by,
\begin{equation}\label{stsrdi}
 \frac{dy}{dt}\in\hat{H}_2(y).
\end{equation}

The above DI is guaranteed to have solutions as a consequence of the lemma below.
\begin{lemma}\label{march2}
 The set-valued map $\hat{H}_2:\mathbb{R}^{d_2}\rightarrow\left\{\text{subsets of }\mathbb{R}^{d_2}\right\}$ is a Marchaud map.
\end{lemma}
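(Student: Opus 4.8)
The plan is to verify, for $\hat{H}_2$, the three defining properties of a Marchaud map (Definition~\ref{marmap}), following the strategy used for the single-timescale mean field $\hat{F}$ in \cite[Lemma~4.7]{vin1t}; the structural fact that makes the argument go through is that, by \eqref{measmapdefn} together with $(A9)(i)$, every $\mu\in D(y)$ is supported on the \emph{compact} set $\lambda(y)\times\mathcal{S}^{(2)}$. The growth bound is the easy part: for $\mu\in D(y)$, any measurable selection $f$ of $H_{2,y}$ (which is $\mu$-integrable by Lemma~\ref{msble2}$(iii)$) satisfies $\|f(x,s^{(2)})\|\leq K(1+\|x\|+\|y\|)$ by $(A2)(ii)$, while $\mu$-a.e.\ $x\in\lambda(y)$ so $\|x\|\leq K(1+\|y\|)$ by $(A9)(i)$; hence $\|\int f\,d\mu\|\leq\int_{\mathrm{supp}(\mu)}\|f\|\,d\mu\leq K(1+K(1+\|y\|)+\|y\|)=:K'(1+\|y\|)$, and taking suprema over selections and over $\mu\in D(y)$ gives $\sup_{y'\in\hat{H}_2(y)}\|y'\|\leq K'(1+\|y\|)$, which is condition $(ii)$ and also shows $\hat{H}_2(y)$ is bounded. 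It is non-empty since $D(y)\neq\emptyset$ (Lemma~\ref{measmap}$(i)$) and $H_{2,y}$ has measurable selections (Lemma~\ref{msble2}$(i)$ and Lemma~\ref{msel}$(i)$).

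For convexity of $\hat{H}_2(y)$ I would take $z_i\in\int H_{2,y}\,d\mu^i$ with $\mu^i\in D(y)$, write $z_i=\int f^i\,d\mu^i$ with $f^i\in\mathscr{S}(H_{2,y})$, and for $\alpha\in(0,1)$ set $\mu:=\alpha\mu^1+(1-\alpha)\mu^2\in D(y)$ (Lemma~\ref{measmap}$(i)$). Since $\mu^i\ll\mu$, writing $p_i:=d\mu^i/d\mu$ one has $0\leq p_i\leq(\min\{\alpha,1-\alpha\})^{-1}$ and $\alpha p_1+(1-\alpha)p_2=1$ $\mu$-a.s.; then $f:=\alpha p_1 f^1+(1-\alpha)p_2 f^2$ is $\mu$-a.s.\ a convex combination of $f^1,f^2$, hence lies in $H_{2,y}(x,s^{(2)})$ by $(A2)(i)$, and after redefining $f$ on a $\mu$-null set to a fixed selection we get $f\in\mathscr{S}(H_{2,y})$ with $\int f\,d\mu=\alpha z_1+(1-\alpha)z_2$, so $\alpha z_1+(1-\alpha)z_2\in\hat{H}_2(y)$.

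The main work is the closed-graph condition $(iii)$ (which, applied with the constant sequence $y_n\equiv y$, also gives closedness and hence, with the growth bound, compactness of $\hat{H}_2(y)$, completing $(i)$). Let $y_n\to y$, $z_n\in\hat{H}_2(y_n)$, $z_n\to z^*$, and pick $\mu^n\in D(y_n)$ with $z_n\in\int H_{2,y_n}\,d\mu^n$. Let $\{H_2^{(l)}\}_{l\geq1}$, $\{f_2^{(l)}\}_{l\geq1}$ be the approximating set-valued maps and parametrizations of the SAM $H_2$ from Theorem~\ref{approx}, with slices $H_{2,y}^{(l)},f_{2,y}^{(l)}$ as in Definition~\ref{slices}. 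Since $H_{2,y_n}\subseteq H_{2,y_n}^{(l)}$, we have $z_n\in\int H_{2,y_n}^{(l)}\,d\mu^n$, so by Lemma~\ref{chint}$(ii)$ there is $\nu^n\in\mathcal{P}(\mathbb{R}^{d_1}\times\mathcal{S}^{(2)}\times U)$ with $\nu^n_{\mathbb{R}^{d_1}\times\mathcal{S}^{(2)}}=\mu^n$ and $z_n=\int f_{2,y_n}^{(l)}\,d\nu^n$. By $(A9)(i)$ and $y_n\to y$ there is a fixed compact ball $C\subseteq\mathbb{R}^{d_1}$ with $\lambda(y_n)\subseteq C$ for all $n$; then $\mathrm{supp}(\mu^n)\subseteq C\times\mathcal{S}^{(2)}$, and by the remark following Lemma~\ref{chint} $\mathrm{supp}(\nu^n)\subseteq C\times\mathcal{S}^{(2)}\times U$, so $\{\nu^n\}$ is tight and, along a subsequence, $\nu^n\to\nu$ with $\mathrm{supp}(\nu)\subseteq C\times\mathcal{S}^{(2)}\times U$; projecting, $\mu^n\to\mu:=\nu_{\mathbb{R}^{d_1}\times\mathcal{S}^{(2)}}$, and $\mu\in D(y)$ by the closed-graph property of $D(\cdot)$ (Lemma~\ref{measmap}$(ii)$). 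Since $f_2^{(l)}$ is jointly continuous it is uniformly continuous on the compact set $C\times C_y\times\mathcal{S}^{(2)}\times U$, where $C_y\subseteq\mathbb{R}^{d_2}$ is compact and contains all $y_n$ and $y$; hence $f_{2,y_n}^{(l)}\to f_{2,y}^{(l)}$ uniformly on $C\times\mathcal{S}^{(2)}\times U$, and combined with $\nu^n\to\nu$ this yields $z^*=\lim_n\int f_{2,y_n}^{(l)}\,d\nu^n=\int f_{2,y}^{(l)}\,d\nu\in\int H_{2,y}^{(l)}\,d\mu$ (Lemma~\ref{chint}$(ii)$ again), for every $l\geq1$. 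It then remains to pass from $\cap_{l\geq1}\int H_{2,y}^{(l)}\,d\mu$ to $\int H_{2,y}\,d\mu$: choosing $f^{(l)}\in\mathscr{S}(H_{2,y}^{(l)})$ with $\int f^{(l)}\,d\mu=z^*$, the uniform bound $\sup_l K^{(l)}=\tilde{K}<\infty$ makes $\{f^{(l)}\}$ uniformly bounded on $\mathrm{supp}(\mu)$, hence uniformly $\mu$-integrable; by the Dunford--Pettis theorem a subsequence converges weakly in $L^1(\mu)$ to some $f$ with $\int f\,d\mu=z^*$, and by Mazur's lemma convex combinations of the tails $\{f^{(m)}\}_{m\geq l}$ --- each a selection of the convex-valued $H_{2,y}^{(l)}$, using that the approximants are decreasing --- converge $\mu$-a.e.\ along a subsequence to $f$, so $f\in H_{2,y}^{(l)}$ $\mu$-a.s.\ for every $l$ and therefore $f\in\cap_l H_{2,y}^{(l)}=H_{2,y}$ $\mu$-a.s.\ by Theorem~\ref{approx}; after modification on a null set, $f\in\mathscr{S}(H_{2,y})$ and $z^*\in\int H_{2,y}\,d\mu\subseteq\hat{H}_2(y)$. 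This last descent is exactly the argument of \cite[Lemma~4.7]{vin1t} and may alternatively be quoted from there.

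I expect the delicate points to be: (a) the tightness of the lifted measures $\{\nu^n\}$ on the non-compact space $\mathbb{R}^{d_1}\times\mathcal{S}^{(2)}\times U$, which relies on the support containment $\mathrm{supp}(\nu^n)\subseteq\mathrm{supp}(\mu^n)\times U$ and on trapping all the (a priori moving) sets $\lambda(y_n)$ in one fixed compact set via $(A9)(i)$; (b) the $n$-dependence of the integrand $f_{2,y_n}^{(l)}$ in the identity $z_n=\int f_{2,y_n}^{(l)}\,d\nu^n$, handled through joint continuity of $f_2^{(l)}$ and uniform continuity on compacta together with the weak convergence $\nu^n\to\nu$; and (c) the descent from the continuous approximants $H_2^{(l)}$ back to $H_2$, which needs the weak-$L^1$ compactness plus Mazur argument above (or a direct appeal to \cite{vin1t}). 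Everything else is bookkeeping analogous to \cite{vin1t} and to the proof of Lemma~\ref{measmap}.
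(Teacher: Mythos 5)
Your proposal is sound in substance and uses much of the same machinery as the paper, but it is organized differently. The paper does not verify the Marchaud properties of $\hat{H}_2$ from scratch: it defines the averaged approximants $\hat{H}_2^{(l)}$ (Definition \ref{approxavg}), proves each is Marchaud (Lemma \ref{marchapprox1}$(iii)$ --- which is essentially your lifting/tightness/closed-graph-of-$D$ argument carried out at level $l$), proves $\hat{H}_2(y)=\cap_{l\geq1}\hat{H}_2^{(l)}(y)$ (Lemma \ref{rel}$(iv)$,$(vi)$), and then reads off all three Marchaud properties of $\hat{H}_2$ from this decreasing intersection together with the linear growth of $\lambda(\cdot)$ in $(A9)(i)$. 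You instead verify convexity directly via Radon--Nikodym densities with respect to the mixture $\alpha\mu^1+(1-\alpha)\mu^2$ (a correct, self-contained alternative to the paper's route, which obtains convexity through the parametrized representation of Lemma \ref{chint} and the intersection), and for the descent from $\cap_{l\geq1}\int H^{(l)}_{2,y}\,d\mu$ to $\int H_{2,y}\,d\mu$ you use uniform integrability, Dunford--Pettis and Mazur, whereas the paper (Lemma \ref{rel}$(vi)$, following \cite[Lemma~4.7]{vin1t}) uses the more elementary estimate $d\bigl(y',\int H_{2,y}\,d\mu\bigr)\leq\int d\bigl(f^{(l)}(x,s^{(2)}),H_{2,y}(x,s^{(2)})\bigr)\mu(dx,ds^{(2)})$, pointwise convergence of the approximants, bounded convergence, and closedness of the Aumann integral (Lemma \ref{cldint}). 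Both descents work; yours trades elementary arguments for weak-$L^1$ compactness.

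One point in your closed-graph step needs care: you fix $l$ first, lift $z_n=\int f^{(l)}_{2,y_n}\,d\nu^n$ with $\nu^n$ depending on $l$, and only then extract subsequences, so the limiting measure $\mu=\nu_{\mathbb{R}^{d_1}\times\mathcal{S}^{(2)}}$ could a priori depend on $l$; your final Dunford--Pettis/Mazur descent, however, requires a single $\mu\in D(y)$ with $z^*\in\int H^{(l)}_{2,y}\,d\mu$ for all $l$ simultaneously. The fix is standard and is exactly what the paper does in Lemma \ref{rel}$(iv)$: first extract one subsequence along which $\mu^n\to\mu\in D(y)$ (tightness coming from the common compact support $C\times\mathcal{S}^{(2)}$ and Lemma \ref{measmap}$(ii)$), and then, for each $l$, perform the lifting and a further extraction along that same subsequence; since every lifted limit projects onto $\mu$, the same $\mu$ serves for every $l$. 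With that reordering (or by simply invoking Lemma \ref{rel}$(iv)$,$(vi)$ at that point, as you suggest), your argument is complete.
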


Proof of the above lemma is given in section \ref{limdiprop}. 

\begin{remark}
In order to understand the DI \eqref{stsrdi} better, we consider the cases where the map $\lambda(\cdot)$ is single-valued and the case where 
Markov noise terms are absent. These special cases also highlight the fact our results are a significant generalization of the results in \cite{arun2t} and \cite{prasan}.  
\begin{itemize}
\item [(1)] When the map $\lambda(\cdot)$ is single-valued, for any $\mu\in D(y)$, since $\mathrm{supp}(\mu_{\mathbb{R}^{d_1}})\subseteq
\lambda(y)$, we have that $\mu_{\mathbb{R}^{d_1}}=\delta_{\lambda(y)}$ where $\delta_{\lambda(y)}\in\mathcal{P}(\mathbb{R}^{d_1})$ denotes 
the Dirac measure at $\lambda(y)$. Therefore the measure $\mu=\delta_{\lambda(y)}\otimes\mu_{\mathcal{S}^{(2)}}$. Since $\mu\in D(y)$, we know that 
for every $A\in\mathscr{B}(\mathcal{S}^{(2)})$, $\mu_{\mathcal{S}^{(2)}}(A)=\int_{\mathbb{R}^{d_1}\times\mathcal{S}^{(2)}}\Pi^{(2)}(x,y,s^{(2)})(A)\mu(dx,ds^{(2)})=
\int_{\mathcal{S}^{(2)}}\left[\int_{\mathbb{R}^{d_1}}\Pi^{(2)}(x,y,s^{(2)})(A)\delta_{\lambda(y)}(dx)\right]\mu_{\mathcal{S}^{(2)}}(ds^{(2)})=\\\int_{\mathcal{S}^{(2)}}
\Pi^{(2)}(\lambda(y),y,s^{(2)})(A)\mu_{\mathcal{S}^{(2)}}(ds^{(2)})$. Thus $\mu_{\mathcal{S}^{(2)}}\in D^{(2)}(\lambda(y),y)$, where $D^{(2)}(\lambda(y),y)$ denotes the set 
of stationary measures of the Markov chain with transition kernel $\Pi^{(2)}(\lambda(y),y,\cdot)(\cdot)$. Therefore for every $y\in\mathbb{R}^{d_2}$, 
\begin{equation*}
 \hat{H}_2(y)=\cup_{\mu\in D(y)}\int_{\mathbb{R}^{d_1}\times\mathcal{S}^{(2)}}H_{2,y}(x,s^{(2)})\mu(dx,ds^{(2)})=\cup_{\nu\in D^{(2)}(\lambda(y),y)}\int_{\mathcal{S}^{(2)}}
 H_{2,(\lambda(y),y)}(s^{(2)})\nu(ds^{(2)}),
\end{equation*}
where $H_{2,(\lambda(y),y)}$ denotes the slice as in Definition \ref{slices}$(i)$ of the set-valued map $H_2$. Therefore DI \eqref{stsrdi} is nothing 
but the set-valued analogue of the slower timescale DI in \cite{prasan}.
\item [(2)] Suppose Markov noise terms are absent (for the analysis and definition of such a recursion see \cite{arun2t}). Then such a recursion 
can be rewritten in the form of recursion \eqref{2tr}, with Markov noise terms taking values in a dummy state space $\mathcal{S}^{(1)}=
\mathcal{S}^{(2)}=\{s^*\}$ with transition laws $\Pi^{(1)}(x,y,s^*)=\Pi^{(2)}(x,y,s^*)=\delta_{s^*}$ for every $(x,y)\in\mathbb{R}^d$. Then it is easy to deduce 
that the stationary distribution maps $D^{(1)}(x,y)=D^{(2)}(x,y)=\delta_{s^*}$ for every $(x,y)\in\mathbb{R}^d$. Then for every $y\in\mathbb{R}^{d_2}$, 
any $\mu\in D(y)$ is of the form $\mu=\nu\otimes\delta_{s^*}$ where $\nu\in\mathcal{P}(\mathbb{R}^{d_1})$ with $\mathrm{supp}(\nu)\subseteq
\lambda(y)$. Then for any $y\in\mathbb{R}^{d_2}$, 
\begin{align*}
 \hat{H}_2(y)=\cup_{\mu\in D(y)}\int_{\mathbb{R}^{d_1}\times\mathcal{S}^{(2)}}H_{2,y}(x,s^{(2)})\mu(dx,ds^{(2)})&=\cup_{\substack{\nu\in\mathcal{P}(\mathbb{R}^{d_1}) 
 \mathrm{supp}(\nu)\subseteq\lambda(y)}}\int_{\mathbb{R}^{d_1}}H_{2,y}(x,s^*)\nu(dx)\\
 &=\bar{co}\left(\cup_{x\in\lambda(y)}H_2(x,y,s^*)\right),
\end{align*}
which is exactly the same slower timescale DI as in \cite{arun2t}. 
\end{itemize}
\end{remark}

Suppose now that the following holds in addition:
\begin{itemize}
 \item [(A10)] DI \eqref{stsrdi} has a globally attracting set $\mathcal{Y}\subseteq\mathbb{R}^{d_2}$,
\end{itemize}
then the main result of this paper states that for almost every $\omega$, as $n\to\infty$,
\begin{equation*}
 \left(
 \begin{array}{c} 
 X_n(\omega)\\
 Y_n(\omega)
 \end{array}
 \right)\to\cup_{y\in\mathcal{Y}}
 \left(\lambda(y)\times\left\{ y\right\}\right).
\end{equation*}

\section{Mean fields and their properties}
\label{limdiprop}

In this section we prove that for every $y\in\mathbb{R}^{d_2}$, the set-valued map $\hat{H}_1(\cdot,y)$ and the set-valued map $\hat{H}_2(\cdot)$ 
defined in equations \eqref{fstrmap} and \eqref{stsrmap} respectively, are Marchaud maps. 

Recall that by assumptions $(A1)$ and $(A2)$, the set-valued maps $H_1$ and $H_2$ are SAMs. For such set-valued maps, by Lemma \ref{ctem}, we 
know that there exist sequences of continuous set-valued maps, denoted by $\left\{H^{(l)}_1\right\}_{l\geq1}$ and $\left\{H_2^{(l)}\right\}_{l\geq1}$ 
which approximate $H_1$ and $H_2$ respectively. Further by Lemma \ref{param}, these appromating maps admit a continuous parametrization denoted by,
$h_1^{(l)}$ and $h_2^{(l)}$. Throughout this section $\left\{H_1^{(l)}\right\}_{l\geq1}$, $\left\{H_2^{(l)}\right\}$, $\left\{h_1^{(l)}\right\}_{l\geq1}$
and $\left\{h_2^{(l)}\right\}_{l\geq1}$ denote the maps as described above. 

Similar to the definition of the maps $\hat{H}_1$ and $\hat{H}_2$, we define the maps obtained by averaging the set-valued maps $H_1^{(l)}$ and $H_2^{(l)}$ for every $l\geq1$ with respect to measures 
given by the maps $(x,y)\rightarrow D^{(1)}(x,y)$ and $y\rightarrow D(y)$.

\begin{definition}\label{approxavg}
Let the maps $D^{(1)}:\mathbb{R}^d\rightarrow\left\{\text{subsets of }\mathcal{P}(\mathcal{S}^{(1)})\right\}$ and $D:\mathbb{R}^{d_2}\rightarrow\left\{
\text{subsets of }\mathcal{P}(\mathbb{R}^{d_1}\times\mathcal{S}^{(2)})\right\}$ be as in section \ref{recass}. For every $l\geq1$,
\begin{itemize}
  \item [(i)] for every $(x,y)\in\mathbb{R}^{d}$, define $\hat{H}_1^{(l)}:\mathbb{R}^{d}\rightarrow\left\{\text{subsets of }\mathbb{R}^{d_1}
  \right\}$ such that, 
  \begin{equation*}
   \hat{H}_1^{(l)}(x,y):=\cup_{\mu\in D(x,y)}\int_{\mathcal{S}^{(1)}}H_{1,(x,y)}^{(l)}(s^{(1)})\mu(ds^{(1)}),
  \end{equation*}
  where $H_{1,(x,y)}^{(l)}$ denotes the slice (as in Defn.\ref{slices}$(ii)$) of the set-valued map $H_1^{(l)}$,
  \item [(ii)] for every $y\in\mathbb{R}^{d_2}$, define $\hat{H}_2^{(l)}:\mathbb{R}^{d_2}\rightarrow\left\{\text{subsets of }\mathbb{R}^{d_2}
  \right\}$ such that,
  \begin{equation*}
   \hat{H}_2^{(l)}(y):=\cup_{\mu\in D(y)}\int_{\mathcal{R}^{d_1}\times\mathcal{S}^{(2)}}H_{2,y}^{(l)}(x,s^{(2)})\mu(dx,ds^{(2)}),
  \end{equation*}
  where $H_{2,y}^{(l)}$ denotes the slice (as in Defn.\ref{slices}$(v)$) of the set-valued map $H_2^{(l)}$.
\end{itemize}
\end{definition}

In the lemma below we prove that for every $y\in\mathbb{R}^{d_2}$, the maps $\hat{H}_1^{(l)}(\cdot,y)$ and the map $\hat{H}_2^{(l)}(\cdot)$ 
are Marchaud maps for every $l\geq1$.

\begin{lemma}\label{marchapprox1} For every $l\geq1$,
\begin{itemize}
 \item [(i)] the set-valued map $\hat{H}_1^{(l)}:\mathbb{R}^d\rightarrow\left\{\text{subsets of }\mathbb{R}^{d_1}\right\}$ is such that,
 \begin{itemize}
  \item [(a)] for every $(x,y)\in\mathbb{R}^d$, $\hat{H}_1^{(l)}(x,y)$ is a non-empty, convex and compact subset of $\mathbb{R}^{d_1}$,
  \item [(b)] for $K^{(l)}>0$ where $K^{(l)}$ is as in Lemma \ref{ctem}, for every $(x,y)\in\mathbb{R}^d$, $\sup_{x'\in\hat{H}_1^{(l)}(x,y)}
  \left\|x'\right\|\leq K^{(l)}(1+\left\|x\right\|+\left\|y\right\|)$,
  \item [(c)] for every $(x,y)\in\mathbb{R}^d$, for $\mathbb{R}^d$-valued sequence, $\left\{(x_n,y_n)\right\}_{n\geq1}$ converging to $(x,y)\in\mathbb{R}^d$, for every sequence 
  $\left\{x'_n\in\hat{H}_1^{(l)}(x_n,y_n)\right\}$ converging to $x'\in\mathbb{R}^{d_1}$, we have that $x'\in\hat{H}_1^{(l)}(x,y)$.
 \end{itemize}
 \item [(ii)] for every $y\in\mathbb{R}^{d_2}$, the map $\hat{H}_1^{(l)}(\cdot,y)$ is a Marchaud map,
 \item [(iii)] the map $\hat{H}_2^{(l)}(\cdot)$ is a Marchaud map.
 \end{itemize}
\end{lemma}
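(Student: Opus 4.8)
The plan is to reduce every Aumann integral in Definition \ref{approxavg} to the parametrized form supplied by Lemma \ref{chint}, so that each value becomes the image of a set of probability measures under a continuous affine map; the three Marchaud properties then drop out of compactness together with the closed-graph properties of $D^{(1)}(\cdot)$ and $D(\cdot)$. Fix $l\geq1$. For part $(i)$ I would first use Lemma \ref{chint}$(i)$ to write, for every $(x,y)\in\mathbb{R}^d$, $\hat{H}_1^{(l)}(x,y)=\{\int_{\mathcal{S}^{(1)}\times U}h_1^{(l)}(x,y,s^{(1)},u)\,\nu(ds^{(1)},du):\nu\in\mathcal{N}(x,y)\}$, where $\mathcal{N}(x,y):=\{\nu\in\mathcal{P}(\mathcal{S}^{(1)}\times U):\nu_{\mathcal{S}^{(1)}}\in D^{(1)}(x,y)\}$. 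Since $\mathcal{S}^{(1)}\times U$ is compact, $\mathcal{P}(\mathcal{S}^{(1)}\times U)$ is compact; the marginal map $\nu\mapsto\nu_{\mathcal{S}^{(1)}}$ is continuous and affine and $D^{(1)}(x,y)$ is non-empty, convex and compact, so $\mathcal{N}(x,y)$ is non-empty, convex and compact. As $h_1^{(l)}(x,y,\cdot,\cdot)$ is continuous on the compact set $\mathcal{S}^{(1)}\times U$, the map $\nu\mapsto\int h_1^{(l)}(x,y,\cdot,\cdot)\,d\nu$ is continuous and affine, so its image $\hat{H}_1^{(l)}(x,y)$ is non-empty, convex and compact, which is $(i)(a)$; and $(i)(b)$ is immediate because every element has norm at most $\sup_{(s^{(1)},u)}\|h_1^{(l)}(x,y,s^{(1)},u)\|\leq K^{(l)}(1+\|x\|+\|y\|)$ by Lemma \ref{param}$(ii)$ under condition $(c)'$.

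For $(i)(c)$ I would take $(x_n,y_n)\to(x,y)$ and $x'_n\in\hat{H}_1^{(l)}(x_n,y_n)$ with $x'_n\to x'$, write $x'_n=\int h_1^{(l)}(x_n,y_n,\cdot,\cdot)\,d\nu_n$ with $\nu_n\in\mathcal{N}(x_n,y_n)$, and pass (by compactness of $\mathcal{P}(\mathcal{S}^{(1)}\times U)$) to a subsequence with $\nu_n\to\nu$. Then $(\nu_n)_{\mathcal{S}^{(1)}}\to\nu_{\mathcal{S}^{(1)}}$, and since the graph of $D^{(1)}(\cdot)$ is closed, $\nu_{\mathcal{S}^{(1)}}\in D^{(1)}(x,y)$, i.e. $\nu\in\mathcal{N}(x,y)$. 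Because $\{(x_n,y_n)\}\cup\{(x,y)\}$ lies in a compact set, $h_1^{(l)}$ is uniformly continuous there, so $\sup_{(s^{(1)},u)}\|h_1^{(l)}(x_n,y_n,s^{(1)},u)-h_1^{(l)}(x,y,s^{(1)},u)\|\to0$; combined with the weak convergence $\nu_n\to\nu$ applied to the fixed continuous function $h_1^{(l)}(x,y,\cdot,\cdot)$, this gives $x'_n\to\int h_1^{(l)}(x,y,\cdot,\cdot)\,d\nu$, whence $x'\in\hat{H}_1^{(l)}(x,y)$. Part $(ii)$ then follows at once: for fixed $y=y_0$, parts $(i)(a)$, $(i)(b)$ (which yields $\sup_{x'\in\hat{H}_1^{(l)}(x,y_0)}\|x'\|\leq K^{(l)}(1+\|y_0\|)(1+\|x\|)$) and $(i)(c)$ are precisely the three defining conditions of a Marchaud map for $\hat{H}_1^{(l)}(\cdot,y_0)$.

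Part $(iii)$ I would prove along the same lines, with Lemma \ref{chint}$(ii)$, Lemma \ref{msble2} and Lemma \ref{measmap} taking the roles of Lemma \ref{chint}$(i)$, Lemma \ref{msble1} and the closed-graph property of $D^{(1)}(\cdot)$. The only extra ingredient is the non-compactness of $\mathbb{R}^{d_1}$: for fixed $y$ every $\mu\in D(y)$ has $\mathrm{supp}(\mu)\subseteq\lambda(y)\times\mathcal{S}^{(2)}$, a fixed compact set by $(A9)$, so all the work happens inside the compact space $\mathcal{P}(\lambda(y)\times\mathcal{S}^{(2)}\times U)$ and the convex/compact/growth arguments carry over verbatim, the growth bound now reading $\sup_{y'\in\hat{H}_2^{(l)}(y)}\|y'\|\leq K^{(l)}(1+\sup_{x\in\lambda(y)}\|x\|+\|y\|)\leq K^{(l)}(1+K)(1+\|y\|)$ by $(A9)(i)$. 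For the closed graph, given $y_n\to y$, the upper semicontinuity of $\lambda(\cdot)$ from $(A9)$ gives $\lambda(y_n)\subseteq\lambda(y)+\epsilon B_1$ for $n$ large, so all measures involved have support in the common compact set $(\lambda(y)+\epsilon B_1)\times\mathcal{S}^{(2)}$; one extracts a weakly convergent subsequence as in $(i)(c)$, uses Lemma \ref{measmap}$(ii)$ to place the limiting $\mathbb{R}^{d_1}\times\mathcal{S}^{(2)}$-marginal back in $D(y)$, and closes with the same uniform-continuity estimate.

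The hardest part will be the closed-graph step, where one must pass to the limit simultaneously in the moving integrand $h^{(l)}(x_n,y_n,\cdot)$ and the moving measure $\nu_n$; this is exactly what the uniform-continuity-on-a-compact-set argument resolves. In part $(iii)$ it is compounded by the fact that the supports of the measures $\mu_n\in D(y_n)$ themselves vary with $n$, which is precisely why the upper semicontinuity of $\lambda(\cdot)$ is needed to corral them into a single compact set before taking limits.
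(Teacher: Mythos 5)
Your proposal is correct and follows essentially the same route as the paper: rewrite each value via the parametrization in Lemma \ref{chint}, exploit compactness of $\mathcal{P}(\mathcal{S}^{(1)}\times U)$ (resp.\ tightness from $\mathrm{supp}(\mu)\subseteq\lambda(y)\times\mathcal{S}^{(2)}$), the closed-graph/compactness properties of $D^{(1)}(\cdot)$ and $D(\cdot)$ from Lemma \ref{measmap}, uniform continuity of $h^{(l)}$ on compacts for the closed-graph step, and the u.s.c.\ of $\lambda(\cdot)$ from $(A9)$ to confine the supports in part $(iii)$. Your packaging of $(i)(a)$ as the image of the compact convex set $\{\nu:\nu_{\mathcal{S}^{(1)}}\in D^{(1)}(x,y)\}$ under a continuous affine map is only a cosmetic streamlining of the paper's separate convexity/closedness arguments, and your growth bounds (including $K^{(l)}(1+K)(1+\|y\|)$ for $\hat{H}_2^{(l)}$ via $(A9)(i)$) match what the paper needs.
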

\begin{proof}
 Fix $l\geq1$.
 \begin{itemize}
  \item [(i)] For every $(x,y)\in\mathbb{R}^d$, by Lemma \ref{msble1}$(iv)$, $H_{1,(x,y)}^{(l)}$ is $\mu$-integrable for every $\mu\in D^{(1)}(x,y)$. 
  Hence for every $(x,y)\in\mathbb{R}^d$, $\hat{H}_1^{(l)}(x,y)$ is non-empty. Let $x^1,x^2\in \hat{H}_1^{(l)}(x,y)$ and $\alpha\in(0,1)$. Then by 
  Lemma \ref{chint}$(i)$, there exist $\nu^1,\nu^2\in\mathcal{P}(\mathcal{S}^{(1)}\times U)$, such that for $i\in\{1,2\}$, $\nu^i_{\mathcal{S}^{(1)}}\in D^{(1)}(x,y)$ and 
  $x^i=\int_{\mathcal{S}^{(1)}\times U}h_{1,(x,y)}^{(l)}(s^{(1)},u)\nu^i(ds^{(1)},du)$ where $U$ denotes the closed unit ball in $\mathbb{R}^{d_1}$. Then 
  $\alpha x^1+(1-\alpha)x^2=\int_{\mathcal{S}^{(1)}\times U}h_{1,(x,y)}^{(l)}(s^{(1)},u)(\alpha\nu^1+(1-\alpha)\nu^2)(ds^{(1)},du)$. Clearly $(\alpha\nu^1+
  (1-\alpha)\nu^2)_{\mathcal{S}^{(1)}}=\alpha\nu^1_{\mathcal{S}^{(1)}}+(1-\alpha)\nu^2_{\mathcal{S}^{(1)}}\in D^{(1)}(x,y)$ where the last inclusion follows from 
  the fact that $D^{(1)}(x,y)$ is a convex subset of $\mathcal{P}(\mathcal{S}^{(1)})$. By Lemma \ref{chint}$(i)$, we get that $\alpha x^1+(1-\alpha)x^2\in
  \hat{H}_1^{(l)}(x,y)$. Therefore $\hat{H}_1^{(l)}(x,y)$ is convex.
  
  By Lemma \ref{msble1}$(ii)$, for every $(x,y)\in\mathbb{R}^{d}$, the set-valued map $H_{1,(x,y)}^{(l)}$ is bounded by $C_{(x,y)}^{(l)}:=
  K^{(l)}(1+\left\|x\right\|+\left\|y\right\|)$. Therefore for every $f\in\mathscr{S}(H_{1,(x,y)})$, for every $s^{(1)}\in\mathcal{S}^{(1)}$, 
  $\left\|f(s^{(1)})\right\|\leq C_{(x,y)}^{(l)}$. Thus for every $x'\in\hat{H}_1^{(l)}(x,y)$, by definition, $x'=\int_{\mathcal{S}^{(1)}}f(s^{(1)})\mu(ds^{(1)})$ 
  for some $f\in\mathscr{S}(H_1^{(l)})$ and some $\mu\in D^{(1)}(x,y)$. Therefore for every $x'\in\hat{H}_1^{(l)}(x,y)$, $\left\| x'\right\|\leq
  \int_{\mathcal{S}^{(1)}}\left\|f(s^{(1)})\right\|\mu(ds^{(1)})\leq C_{(x,y)}^{(l)}=K^{(l)}(1+\left\|x\right\|+\left\|y\right\|)$. 
  
  As a consequence of the arguments in the preceding paragraph for some $(x,y)\in\mathbb{R}^{d}$ in order to show that $\hat{H}_1^{(l)}(x,y)$ is 
  compact, it is enough to show that it is closed. Consider a sequence $\left\{x^n\in\hat{H}_1^{l}(x,y)\right\}_{n\geq1}$ converging to $x^*\in\mathbb{R}
  ^{d_1}$. Then by definition of $\hat{H}_1^{(l)}(x,y)$ and by Lemma \ref{chint}$(i)$, for every $n\geq1$, there exists $\nu^n\in\mathcal{P}(\mathcal{S}
  ^1\times U)$, such that $\nu^n_{\mathcal{S}^{(1)}}\in D^{(1)}(x,y)$ and $x^n=\int_{\mathcal{S}^{(1)}\times U}h^{(l)}_{1,(x,y)}(s^{(1)},u)\nu^n(ds^{(1)},du)$. Since 
  $\mathcal{S}^{(1)}\times U$ is a compact metric space, $\mathcal{P}(\mathcal{S}^{(1)}\times U)$ is compact and hence there exists a subsequence $\left
  \{n_k\right\}_{k\geq1}$ such that $\left\{\nu^{n_k}\right\}_{k\geq1}$ converges to $\nu\in\mathcal{P}(\mathcal{S}^{(1)}\times U)$. Clearly $\left\{
  \nu^{n_k}_{\mathcal{S}^{(1)}}\right\}_{k\geq1}$ converges to $\nu_{\mathcal{S}^{(1)}}$ and by \cite[Thm.~2.1.1(ii)]{borkarap}, $x^{n_k}=\int_{\mathcal{S}^{(1)}\times
  U}h^{(l)}_{1,(x,y)}(s^{(1)},u)\nu^{n_k}(ds^{(1)},du)\to\int_{\mathcal{S}^{(1)}\times U}h^{(l)}_{1,(x,y)}(s^{(1)},u)\nu(ds^{(1)},du)=x^*$. Since for every $k$, $\nu^
  {n_k}_{\mathcal{S}^{(1)}}\in D^{(1)}(x,y)$ and by the fact that $D^{(1)}(x,y)$ is closed we get that, $\nu_{\mathcal{S}^{(1)}}\in D^{(1)}(x,y)$. Therefore 
  $x^*=\int_{\mathcal{S}^{(1)}\times U}h^{(l)}_{1,(x,y)}(s^{(1)},du)\nu(ds^{(1)},du)$ and $\nu_{\mathcal{S}^{(1)}}\in D^{(1)}(x,y)$. Thus $x^*\in\hat{H}_1^{(l)}(x,y)$
  which gives us that $\hat{H}_1^{(l)}(x,y)$ is closed. 
   
  Let $\left\{(x_n,y_n)\right\}_{n\geq1}$ be a sequence converging to $(x,y)$ and let $\left\{x'_n\in \hat{H}_1^{(l)}\right\}_{n\geq1}$ be a 
  sequence converging to $x'$. Then by Lemma \ref{chint}$(i)$, for every $n\geq1$, there exists $\nu^n\in\mathcal{P}(\mathcal{S}^{(1)}\times U)$ such 
  that $\nu^n_{\mathcal{S}^{(1)}}\in D^{(1)}(x_n,y_n)$ and $x'_n=\int_{\mathcal{S}^{(1)}\times U}h_{1,(x,y)}^{(l)}(s^{(1)},u)\nu^n(ds^{(1)},du)$. Since $
  \mathcal{S}^{(1)}\times U$ is a compact metric space, $\mathcal{P}(\mathcal{S}^{(1)}\times U)$ is a compact metric space and hence there exists a 
  subsequence say $\left\{n_k\right\}_{k\geq1}$ such that $\left\{\nu^{n_k}\right\}_{k\geq1}$ converges to $\nu\in \mathcal{P}(\mathcal{S}^{(1)}\times 
  U)$. Clearly $\nu^{n_k}_{\mathcal{S}^{(1)}}\to\nu_{\mathcal{S}^{(1)}}$ in $\mathcal{P}(\mathcal{S}^{(1)})$ and by closed graph property of the map $(x,y)
  \rightarrow D^{(1)}(x,y)$, we have that $\nu_{\mathcal{S}^{(1)}}\in D^{(1)}(x,y)$. Using the continuity of the map $h^{(l)}_1(\cdot)$ it is easy to show 
  that $\lim_{k\to\infty}\sup_{(s^{(1)},u)\in\mathcal{S}^{(1)}\times U}\left\|h^{(l)}_{1,(x_{n_k},y_{n_k})}(s^{(1)},u)-h^{(l)}_{1,(x,y)}(s^{(1)},u)\right\|=0$. 
  Then, $\|x'-\int_{\mathcal{S}^{(1)}\times U}h^{(l)}_{1,(x,y)}(s^{(1)},u)\nu(ds^{(1)},du)\|\leq\|x'-\int_{\mathcal{S}^{(1)}\times U}
  h^{(l)}_{1,(x_{n_k},y_{n_k})}(s^{(1)},u)\nu^{n_k}(ds^{(1)},du)\|+\\\int_{\mathcal{S}^{(1)}\times U}\|h^{(l)}_{1,(x_{n_k},y_{n_k})}(s^{(1)},u)-
  h^{(l)}_{1,(x,y)}(s^{(1)},u)\|\nu^{n_k}(ds^{(1)},du)+\|\int_{\mathcal{S}^{(1)}\times U}h^{(l)}_{1,(x,y)}(s^{(1)},u)\nu
  ^{n_k}(ds^{(1)},du)-\\\int_{\mathcal{\mathcal{S}^{(1)}}\times U}h^{(l)}_{1,(x,y)}(s^{(1)},u)\nu(ds^{(1)},du)\|$. Now use \cite[Thm.~2.1.1(ii)]{borkarap} in the 
  above inequality to obtain $\lim_{k\to\infty}\|x'-\int_{\mathcal{S}^{(1)}\times U}h^{(l)}_{1,(x,y)}(s^{(1)},u)\nu(ds^{(1)},du)\|=0$. Then Lemma \ref{chint}$(i)$ 
  gives us that $x'\in\hat{H}^{(l)}_1(x,y)$.
  \item [(ii)] Follows from part $(i)$ of this lemma.
  \item [(iii)] Proof is similar to part $(i)$ of this lemma with minor modifications. First modification is the use of Lemma \ref{chint}$(ii)$ 
  instead of Lemma \ref{chint}$(i)$. For example in order to show that $\hat{H}_2^{(l)}(y)$ is closed for some $y\in\mathbb{R}^{d_2}$, 
  fix sequence $\left\{y'_n\right\}_{n\geq1}\subseteq \hat{H}_2^{(l)}(y)$ converging to $y'$. Use Lemma \ref{chint}$(ii)$ and the definition of 
  $\hat{H}_2^{(l)}(y)$, to obtain $\left\{\nu^n\right\}_{n\geq1}\subseteq\mathcal{P}(\mathbb{R}^{d_1}\times\mathcal{S}^{(2)}\times U)$ where $U$ 
  denotes the closed unit ball in $\mathbb{R}^{d_2}$ and the sequence $\left\{\nu^n\right\}_{n\geq1}$ is such that for every $n\geq1$, 
  $\nu^n_{\mathbb{R}^{d_1}\times\mathcal{S}^{(2)}}\in D(y)$ and $y'_n=\int_{\mathbb{R}^{d_1}\times\mathcal{S}^{(2)}\times U}h^{(l)}_{2,y}(x,s^{(2)},u)\nu^n(dx,
  ds^{(2)},du)$. By definition of $D(y)$, for every $n\geq1$, $\mathrm{supp}(\nu^n_{\mathbb{R}^{d_1}\times\mathcal{S}^{(2)}})\subseteq\lambda(y)\times
  \mathcal{S}^{(2)}$ and hence $\mathrm{supp}(\nu^n)\subseteq\lambda(y)\times\mathcal{S}^{(2)}\times U$ which is a compact subset of $\mathbb{R}^{d_1}\times
  \mathcal{S}^{(2)}\times U$. Now by Prohorov's theorem the sequence $\left\{\nu^n\right\}_{n\geq1}$ is a relatively compact subset of 
  $\mathcal{P}(\mathbb{R}^{d_1}\times\mathcal{S}^{(2)}\times U)$ and hence has a convergent subsequence. By Lemma \ref{measmap}$(i)$, $D(y)$ is 
  compact and hence every limit point of $\left\{\nu^n_{\mathbb{R}^{d_1}\times\mathcal{S}^{(2)}}\right\}$ is in $D(y)$. The rest of the argument is 
  same as the corresponding in part $(i)$ of this lemma. 
  
  In order to show that $\hat{H}_2^{(l)}(\cdot)$ has a closed graph, fix sequences $\left\{y_n\right\}_{n\geq1}$ converging to $y$ and 
  $\left\{y'_n\in \hat{H}_2^{(l)}(y_n)\right\}_{n\geq1}$ converging to $y'$. Use Lemma \ref{chint}$(ii)$, to obtain $\left\{\nu^n\right\}_{n\geq1}
  \subseteq\mathcal{P}(\mathbb{R}^{d_1}\times\mathcal{S}^{(2)}\times U)$ such that for every $n\geq1$, $\nu^n_{\mathbb{R}^{d_1}\times\mathcal{S}^{(2)}}\in 
  D(y_n)$ and $y'_n:=\int_{\mathbb{R}^{d_1}\times\mathcal{S}^{(2)}\times U}h^{(l)}_{2,y}(x,s^{(2)},u)\nu^n(dx,ds^{(2)},du)$. Then for every $n\geq1$, 
  $\mathrm{supp}(\nu^n)\subseteq\lambda(y_n)\times\mathcal{S}^{(2)}\times U$. By assumption $(A9)$, for any $\delta>0$, the set 
  $L:=\left\{x\in\lambda(\tilde{y}): \|\tilde{y}-y\|\leq\delta \right\}$ is a compact subset of $\mathbb{R}^{d_1}$. Therefore there exists $N$ 
  large such that for every $n\geq N$, $\mathrm{supp}(\nu^n)\subseteq L\times\mathcal{S}^{(2)}\times U$. By Prohorov's theorem the sequence of measures $\left\{
  \nu^n\right\}_{n\geq N}$ is tight and has a convergent subsequence. Clearly by Lemma \ref{measmap}$(ii)$, every limit point of $\left\{
  \nu^n_{\mathbb{R}^{d_1}\times\mathcal{S}^{(2)}}\right\}_{n\geq N}$ is in $D(y)$. Now the rest of the argument is same as the corresponding in part 
  $(i)$ of this lemma.\qed
 \end{itemize}
\end{proof}

By Lemma \ref{ctem} we know that for every $l\geq1$, for every $(x,y)\in\mathbb{R}^d$, $H_1(x,y)\subseteq H_1^{(l+1)}(x,y)\subseteq 
H_1^{(l)}(x,y)$ (similarly $H_2(x,y)\subseteq H^{(l+1)}_2(x,y)\subseteq H^{(l)}_2(x,y)$). The next lemma states that the above is true for 
$\hat{H}_i$ and $\hat{H}_i^{(l)}$ as well, for every $i\in\left\{1,2\right\}$.

\begin{lemma}\label{rel}: 
 \begin{itemize}
  \item [(i)] For every $l\geq1$, for every $(x,y)\in\mathbb{R}^d$, $\hat{H}_1(x,y)\subseteq \hat{H}_1^{(l+1)}(x,y)\subseteq \hat{H}_1^{(l)}(x,y)$.
  \item [(ii)] For every $l\geq1$, for $y\in\mathbb{R}^{d_2}$, $\hat{H}_2(y)\subseteq \hat{H}_2^{(l+1)}(y)\subseteq \hat{H}_2^{(l)}(y)$.
  \item [(iii)] For every $(x,y)\in\mathbb{R}^d$, $\cap_{l\geq1}\hat{H}_1^{(l)}(x,y)=\cup_{\mu\in D^{(1)}(x,y)}\cap_{l\geq1}\int_{\mathcal{S}^{(1)}}
  H_{1,(x,y)}^{(l)}(s^{(1)})\mu(ds^{(1)})$.
  \item [(iv)] For every $y\in\mathbb{R}^{d_2}$, $\cap_{l\geq1}\hat{H}_2^{(l)}(y)=\cup_{\mu\in D(y)}\cap_{l\geq1}\int_{\mathbb{R}^{d_2}\times
  \mathcal{S}^{(2)}}H_{2,y}^{(l)}(x,s^{(2)})\mu(dx,ds^{(2)})$.
  \item [(v)] For every $(x,y)\in\mathbb{R}^d$, $\hat{H}_1(x,y)=\cap_{l\geq1}\hat{H}_1^{(l)}(x,y)$.
  \item [(vi)] For every $y\in\mathbb{R}^{d_2}$, $\hat{H}_2(y)=\cap_{l\geq1}\hat{H}_2^{(l)}(y)$.
 \end{itemize}
\end{lemma}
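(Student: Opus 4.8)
The plan is to handle the six parts in the order (i),(ii), then (iii),(iv), then (v),(vi), with each pair treated by the same argument for $\hat H_1$ and $\hat H_2$. Parts (i) and (ii) are pure monotonicity. For (i), I would start from the chain $H_{1,(x,y)}(s^{(1)})\subseteq H^{(l+1)}_{1,(x,y)}(s^{(1)})\subseteq H^{(l)}_{1,(x,y)}(s^{(1)})$ furnished by Lemma \ref{ctem}(ii); this gives $\mathscr{S}(H_{1,(x,y)})\subseteq\mathscr{S}(H^{(l+1)}_{1,(x,y)})\subseteq\mathscr{S}(H^{(l)}_{1,(x,y)})$, every such selection is $\mu$-integrable by Lemma \ref{msble1}(iii)--(iv), and integrating against a fixed $\mu$ and then taking the union over $\mu\in D^{(1)}(x,y)$ preserves the inclusions. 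Part (ii) is identical, with the slices $H_{2,y}$, Lemma \ref{msble2}(iii)--(iv) ($\mu$-integrability for $\mu$ of compact support), and the union over $\mu\in D(y)$, whose members have compact support by \eqref{measmapdefn} and (A9).

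For (iii) the inclusion ``$\supseteq$'' is immediate: for fixed $\mu$, $\cap_{l\geq1}\int_{\mathcal{S}^{(1)}}H^{(l)}_{1,(x,y)}(s^{(1)})\mu(ds^{(1)})\subseteq\int_{\mathcal{S}^{(1)}}H^{(l_0)}_{1,(x,y)}(s^{(1)})\mu(ds^{(1)})\subseteq\hat H_1^{(l_0)}(x,y)$ for every $l_0$, so the left side lies in $\cap_{l_0}\hat H_1^{(l_0)}(x,y)$, and then one takes the union over $\mu$. The substance is ``$\subseteq$''. Given $z\in\cap_l\hat H_1^{(l)}(x,y)$, pick for each $l$ a measure $\mu^{(l)}\in D^{(1)}(x,y)$ with $z\in\int_{\mathcal{S}^{(1)}}H^{(l)}_{1,(x,y)}(s^{(1)})\mu^{(l)}(ds^{(1)})$; since $D^{(1)}(x,y)$ is compact, pass to a subsequence with $\mu^{(l_k)}\to\mu\in D^{(1)}(x,y)$. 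I claim this single $\mu$ works, i.e. $z\in\int_{\mathcal{S}^{(1)}}H^{(m)}_{1,(x,y)}(s^{(1)})\mu(ds^{(1)})$ for every $m$. Indeed, for $l_k\geq m$ we have $z\in\int H^{(m)}_{1,(x,y)}\mu^{(l_k)}$ (more selections), so by Lemma \ref{chint}(i) there is $\nu^{(l_k)}\in\mathcal{P}(\mathcal{S}^{(1)}\times U)$ with $\nu^{(l_k)}_{\mathcal{S}^{(1)}}=\mu^{(l_k)}$ and $z=\int h^{(m)}_{1,(x,y)}\,d\nu^{(l_k)}$; by compactness of $\mathcal{P}(\mathcal{S}^{(1)}\times U)$ a further subsequence converges to some $\nu$ with $\nu_{\mathcal{S}^{(1)}}=\mu$, and continuity of $h^{(m)}_{1,(x,y)}$ together with \cite[Thm.~2.1.1]{borkarap} gives $z=\int h^{(m)}_{1,(x,y)}\,d\nu$, whence $z\in\int H^{(m)}_{1,(x,y)}\mu$ by Lemma \ref{chint}(i). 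The decisive device is representing the tail selections via the parametrization $h^{(m)}$ of the \emph{fixed} map $H^{(m)}$, so that the integrand does not vary along the subsequence. Part (iv) runs the same way: use compactness of $D(y)$ (Lemma \ref{measmap}(i)); all members of $D(y)$ have support inside the fixed compact set $\lambda(y)\times\mathcal{S}^{(2)}$ by (A9); invoke Lemma \ref{chint}(ii) and the Remark after it to get $\nu^{(l_k)}$ supported in $\lambda(y)\times\mathcal{S}^{(2)}\times U$; and use Prohorov's theorem and continuity of $h^{(m)}_{2,y}$ on this compact set exactly as in Lemma \ref{marchapprox1}(iii).

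Finally, (v) follows from (i) (which gives ``$\subseteq$'') together with (iii): since $\hat H_1(x,y)=\cup_\mu\int H_{1,(x,y)}\mu\supseteq\cup_\mu\cap_l\int H^{(l)}_{1,(x,y)}\mu=\cap_l\hat H_1^{(l)}(x,y)$, it remains only to show that for each fixed $\mu\in D^{(1)}(x,y)$, $\cap_l\int_{\mathcal{S}^{(1)}}H^{(l)}_{1,(x,y)}(s^{(1)})\mu(ds^{(1)})\subseteq\int_{\mathcal{S}^{(1)}}H_{1,(x,y)}(s^{(1)})\mu(ds^{(1)})$. Here I would take $z$ in the left side, write $z=\int f^{(l)}\,d\mu$ with $f^{(l)}\in\mathscr{S}(H^{(l)}_{1,(x,y)})$; these selections are uniformly bounded by $\tilde K(1+\|(x,y)\|)$ (with $\tilde K=\sup_l K^{(l)}<\infty$, from the observations after Lemma \ref{ctem}), hence bounded in $L^2(\mu)$, so a subsequence converges weakly to some $f$ with $\int f\,d\mu=z$. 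By Mazur's lemma $f$ is a $\mu$-a.e.\ limit of convex combinations of tails $\{f^{(l_k)}:l_k\geq m\}$; since such $f^{(l_k)}$ are selections of the \emph{convex} set $H^{(m)}_{1,(x,y)}(s^{(1)})$, so are their convex combinations, and passing to the limit (using closedness of $H^{(m)}_{1,(x,y)}(s^{(1)})$) gives $f(s^{(1)})\in H^{(m)}_{1,(x,y)}(s^{(1)})$ for $\mu$-a.e.\ $s^{(1)}$; intersecting over $m$ yields $f(s^{(1)})\in\cap_m H^{(m)}_{1,(x,y)}(s^{(1)})=H_{1,(x,y)}(s^{(1)})$ $\mu$-a.e., and after redefining $f$ on the exceptional null set to agree with any fixed selection (Lemma \ref{msel}(i)) we obtain $z=\int f\,d\mu\in\int H_{1,(x,y)}\mu$. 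Part (vi) is the same argument carried out on the compact-support measure space $(\mathrm{supp}(\mu),\mu)$, so that the bound $\tilde K(1+\|x\|)$ for selections of $H^{(l)}_{2,y}$ is a genuine uniform bound there and the $L^2$/Mazur step applies verbatim.

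The main obstacle is the ``$\subseteq$'' directions of (iii)/(iv) and the reduction step inside (v)/(vi): in both cases one must pass to the limit through a sequence of selections of the \emph{varying} maps $H^{(l)}$ (and, in (iii)/(iv), against \emph{varying} measures $\mu^{(l)}$) and land in a selection of the exact map. The two ingredients that make this work---representing the tail selections through the fixed parametrization $h^{(m)}$ to stabilize the integrand, and combining weak $L^2$-compactness with Mazur's lemma and the convexity/closedness of the values $H^{(m)}(\cdot)$---are what I expect to require the most care to state cleanly.
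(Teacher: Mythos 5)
Your proof is correct, and for parts (i)--(iv) it is essentially the paper's argument: monotonicity of the slices plus integration for (i)/(ii), and for (iii)/(iv) exactly the paper's device of choosing $\mu^{(l)}\in D(y)$ (resp.\ $D^{(1)}(x,y)$), extracting a limit $\mu^*$ by compactness/tightness, and then, for each fixed $m$, stabilizing the integrand through the parametrization $h^{(m)}$ via Lemma \ref{chint} and passing to the limit of the lifted measures $\nu$ on the product space by Prohorov and \cite[Thm.~2.1.1]{borkarap}. Where you genuinely diverge is the reduction step in (v)/(vi), i.e.\ showing $\cap_{l}\int H^{(l)}\,d\mu\subseteq\int H\,d\mu$ for a fixed $\mu$: the paper estimates the distance $d\bigl(y',\int H_{2,y}\,d\mu\bigr)$ by $\int d\bigl(f^{(l)}(x,s^{(2)}),H_{2,y}(x,s^{(2)})\bigr)\mu(dx,ds^{(2)})$ using the selection-infimum identity \cite[Lemma~1.3.12]{shoumei} and the measurability of the pointwise distance, drives it to zero by observation (2) after Lemma \ref{ctem} and bounded convergence (the compact support of $\mu$ supplying the uniform bound), and concludes by closedness of the Aumann integral (Lemma \ref{cldint}); you instead exploit the uniform bound on the selections $f^{(l)}$ to extract a weak $L^2(\mu)$ limit $f$ with $\int f\,d\mu=z$, upgrade it by Mazur's lemma and the convexity/closedness of the values $H^{(m)}(\cdot)$ to a pointwise a.e.\ selection of each $H^{(m)}$, intersect over $m$, and fix the null set with Lemma \ref{msel}(i). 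Both are valid; the paper's route stays entirely within the set-valued integration toolkit already quoted (and needs no weak compactness), while yours trades those measurable-selection identities for standard Hilbert-space arguments and has the small bonus of producing an explicit selection of $H$ representing the point $z$, rather than inferring membership from closedness of the integral.
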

\begin{proof}
 The proofs of parts $(i)$ and $(ii)$ follow directly from the definition of $\hat{H}_i,\ \hat{H}_i^{(l)}$ for every $l\geq1$ and the fact that 
 for every $i\in\left\{1,2\right\}$, $H_i(x,y)\subseteq H_i^{(l+1)}(x,y)\subseteq H_i^{(l)}(x,y)$ for every $(x,y)\in\mathbb{R}^d$.
 The proof of part $(iii)$ is similar to part $(iv)$ and we present the proof of part $(iv)$ below (the proof of part $(iii)$ is in fact the same as that of 
 \cite[Lemma~4.4(ii)]{vin1t}).
 \begin{itemize}
  \item [(iv)] Fix $y\in\mathbb{R}^{d_2}$. Then by definition of $\hat{H}_2^{(l)}(y)$, we have that for every $l\geq1$, for any $\mu\in D(y)$, 
  $\int_{\mathbb{R}^{d_1}\times\mathcal{S}^{(2)}}H_{2,y}^{(l)}(x,s^{(2)})\mu(dx,ds^{(2)})\subseteq\hat{H}_2^{(l)}(y)$. Therefore, $\cup_{\mu\in D(y)}
  \cap_{l\geq1}\int_{\mathbb{R}^{d_1}\times\mathcal{S}^{(2)}}H_{2,y}^{(l)}(x,s^{(2)})\mu(dx,ds^{(2)})\subseteq \cap_{l\geq1}\hat{H}_2^{(l)}(y)$.
  
  Let $y'\in\cap_{l\geq1}\hat{H}_2^{(l)}(y)$. Then for every $l\geq1$, there exists $\mu^l\in D(y)$ such that \begin{small}$y'\in\int_{\mathbb{R}^{d_1}\times
  \mathcal{S}^{(2)}}H_{2,y}^{(l)}(x,s^{(2)})\mu^{l}(dx,ds^{(2)})$\end{small}. Since $\left\{\mu^l\right\}_{l\geq1}$ is a subset of $D(y)$, for every $l\geq1$, 
  $\mathrm{supp}(\mu^l)\subseteq \lambda(y)\times\mathcal{S}^{(2)}$. Hence the sequence of probability measures $\left\{\mu^l\right\}_{l\geq1}$ is 
  tight and by Prohorov's theorem has a limit say $\mu^*\in\mathcal{P}(\mathbb{R}^{d_1}\times\mathcal{S}^{(2)})$. Let $\left\{l_k\right\}_{k\geq1}$ 
  be a subsequence such that $\mu^{l_k}\to\mu^*$ as $k\to\infty$ and by Lemma \ref{measmap}$(i)$ we know that $D(y)$ is compact which gives us 
  $\mu^*\in D(y)$. Since for every $l\geq1$, for every $k$ such that $l_k\geq l$, $\mathscr{S}(H^{(l_k)}_{2,y})\subseteq\mathscr{S}(H^{(l)}_{2,y})$ 
  we get that for every $l\geq1$, for every $k$ such that $l_k\geq l$, $\int_{\mathbb{R}^{d_1}\times\mathcal{S}^{(2)}}H^{(l)}_{2,y}(x,s^{(2)})\mu^{l_k}
  (dx,ds^{(2)})=y'$. For every $l\geq1$, by Lemma \ref{chint}$(ii)$, we know that for every $k$ such that $l_k\geq l$, there exists $\nu^{(l,l_k)}
  \in\mathcal{P}(\mathbb{R}^{d_1}\times\mathcal{S}^{(2)}\times U)$ ($U$ denotes the closed unit ball in $\mathbb{R}^{d_2}$) such that 
  $y'=\int_{\mathbb{R}^{d_1}\times\mathcal{S}^{(2)}\times U}h^{(l)}_{2,y}(x,s^{(2)},u)\nu^{(l,l_k)}(dx,ds^{(2)},du)$ and $\nu^{(l,l_k)}_{\mathbb{R}^{d_1}\times
  \mathcal{S}^{(2)}}=\mu^{l_k}$. Further, for every $l\geq1$, for every $k$ such that $l_k\geq l$, $\mathrm{supp}(\nu^{(l,l_k)})\subseteq \lambda(y)
  \times\mathcal{S}^{(2)}\times U$ and hence $\left\{\nu^{(l,l_k)}\right\}_{k:l_k\geq l}$ is tight and by Prohorov's theorem has a convergent 
  subsequence. For every $l\geq 1$, let $\nu^{(l)}$ denote a limit point of the sequence $\left\{\nu^{(l,l_k)}\right\}_{k:l_k\geq l}$. Since for 
  every $l\geq1$, $\left\{\nu^{(l,l_k)}_{\mathbb{R}^{d_1}\times\mathcal{S}^{(2)}}=\mu^{l_k}\right\}_{k:l_k\geq l}$ and $\mu^{l_k}\to\mu^*$ as 
  $k\to\infty$ we have that $\nu^{(l)}_{\mathbb{R}^{d_1}\times\mathcal{S}^{(2)}}=\mu^*\in D(y)$. By \cite[Thm.~2.1.1(ii)]{borkarap}, for every $l\geq1$
  $y'=\int_{\mathbb{R}^{d_1}\times\mathcal{S}^{(2)}\times U}h^{(l)}_{2,y}(x,s^{(2)},u)\nu^{(l)}(dx,ds^{(2)},du)$ and hence by Lemma \ref{chint}$(ii)$, 
  $y'\in\int_{\mathbb{R}^{d_1}\times\mathcal{S}^{(2)}}H^{(l)}_{2,y}(x,s^{(2)})\mu^*(dx,ds^{(2)})$ where $\mu^*\in D(y)$. Therefore there exists $\mu^*\in 
  D(y)$ such that for every $l\geq1$, $y'\in\int_{\mathbb{R}^{d_1}\times\mathcal{S}^{(2)}}H^{(l)}_{2,y}(x,s^{(2)})\mu^*(dx,ds^{(2)})$. Hence 
  $y'\in\cup_{\mu\in D(y)}\cap_{l\geq1}\int_{\mathbb{R}^{d_1}\times\mathcal{S}^{(2)}}H^{(l)}_{2,y}(x,s^{(2)})\mu(dx,ds^{(2)})$.
 \end{itemize}
The proof of part $(v)$ is similar to the proof of part $(vi)$ and we present a proof of part $(vi)$ below (the proof of part $(v)$ is exactly the same 
as that of \cite[Lemma~4.4(iii)]{vin1t})
\begin{itemize}
 \item [(vi)] From part $(ii)$ of this lemma we have that, for every $y\in\mathbb{R}^{d_2}$, $\hat{H}_2(y)\subseteq \cap_{l\geq1}\hat{H}_2^{(l)}
 (y)$. 
 
 Fix $y\in\mathbb{R}^{d_2}$ and $\mu\in D(y)$. Let $y'\in\cap_{l\geq1}\int_{\mathbb{R}^{d_1}\times\mathcal{S}^{(2)}}H^{(l)}_{2,y}(x,s^{(2)})\mu(dx,ds^{(2)})$. Then for every $l\geq1$, 
 there exists $f^{(l)}\in\mathscr{S}(H^{(l)}_{2,y})$ such that $y'=\int_{\mathbb{R}^{d_1}\times\mathcal{S}^{(2)}}f^{(l)}(x,s^{(2)})\mu(dx,ds^{(2)})$. 
 Let $d(\tilde{y},A):=\inf\left\{\left\|\tilde{y}-z\right\|:
 z\in A\right\}$ for every $\tilde{y}\in\mathbb{R}^{d_2}$ and for every $A\subseteq \mathbb{R}^{d_2}$ compact. By Lemma \ref{cldint}, we have 
 that $\int_{\mathbb{R}^{d_1}\times\mathcal{S}^{(2)}}H_{2,y}(x,s^{(2)})\mu(dx,ds^{(2)})$ is compact and convex. Then, 
 \begin{align*}
  d(y',\int_{\mathbb{R}^{d_1}\times\mathcal{S}^{(2)}}H_{2,y}(x,s^{(2)})\mu(dx,ds^{(2)}))&=\inf\left\{\left\|y'-z\right\|:z\in\int_{\mathbb{R}^{d_1}\times
  \mathcal{S}^{(2)}}H_{2,y}(x,s^{(2)})\mu(dx,ds^{(2)})\right\}\\
  &=\inf_{f\in\mathscr{S}(H_{2,y})}\left\|y'-\int_{\mathbb{R}^{d_1}\times\mathcal{S}^{(2)}}f(x,s^{(2)})\mu(dx,ds^{(2)})\right\|\\
  &=\inf_{f\in\mathscr{S}(H_{2,y})}\left\|\int_{\mathbb{R}^{d_1}\times\mathcal{S}^{(2)}}\left(f^{(l)}(x,s^{(2)})-f(x,s^{(2)})\right)\mu(dx,ds^{(2)})\right\|\\
  &\leq\inf_{f\in\mathscr{S}(H_{2,y})}\int_{\mathbb{R}^{d_1}\times\mathcal{S}^{(2)}}\left\|f^{(l)}(x,s^{(2)})-f(x,s^{(2)})\right\|\mu(dx,ds^{(2)})\\
  &=\int_{\mathbb{R}^{d_1}\times\mathcal{S}^{(2)}}\inf\left\{\left\|f^{(l)}(x,s^{(2)})-\tilde{y}\right\|:\tilde{y}\in H_{2,y}(x,s^{(2)})\right\}\mu(dx,ds^{(2)}),
 \end{align*}
where the last equality follows from \cite[Lemma~1.3.12]{shoumei}. By \cite[Lemma~3.7]{vin1t}, we know that for every $l\geq1$, the map 
$(x,s^{(2)})\rightarrow d(f^{(l)}(x,s^{(2)}),H_{2,y}(x,s^{(2)}))$ is measurable and from the last equality it follows that for every $l\geq1$,
\begin{equation*}
d(y',\int_{\mathbb{R}^{d_1}\times\mathcal{S}^{(2)}}H_{2,y}(x,s^{(2)})\mu(dx,ds^{(2)}))\leq \int_{\mathbb{R}^{d_1}\times\mathcal{S}^{(2)}}d(f^{(l)}(x,s^{(2)}),
H_{2,y}(x,s^{(2)}))\mu(dx,ds^{(2)}).
\end{equation*}
By observation $(2)$ stated after Lemma \ref{ctem} we have that for every $(x,s^{(2)})\in\mathbb{R}^{d_2}\times\mathcal{S}^{(2)}$, 
$\\\lim_{l\to\infty}d(f^{(l)}(x,s^{(2)}),H_{2,y}(x,s^{(2)}))=0$. Since $\mu\in D(y)$, $\mathrm{supp}(\mu)\subseteq \lambda(y)\times\mathcal{S}^{(2)}$, for 
every $l\geq1$,
\begin{equation*}
 \int_{\mathbb{R}^{d_1}\times\mathcal{S}^{(2)}}d(f^{(l)}(x,s^{(2)}),H_{2,y}(x,s^{(2)}))\mu(dx,ds^{(2)})=\int_{\lambda(y)\times\mathcal{S}^{(2)}}d(f^{(l)}(x,s^{(2)}),
 H_{2,y}(x,s^{(2)})\mu(dx,ds^{(2)})).
\end{equation*}
Since $\lambda(y)$ is compact, there exists $M>0$ such that for every $x\in\lambda(y)$, $\|x\|\leq M$.
By Lemma \ref{msble2}$(ii)$, $(A2)$ and observation $(1)$ stated below Lemma \ref{ctem}, we have that for every $l\geq1$, for every $(x,s^{(2)})\in
\lambda(y)\times\mathcal{S}^{(2)}$,  $d(f^{(l)}(x,s^{(2)}),H_{2,y}(x,s^{(2)}))\leq (K_{y}^{(l)}+K)(1+\|x\|)\leq (\max\{\tilde{K},\tilde{K}\|y\|\}+K)(1+M)$.
By bounded convergence theorem we have,
\begin{equation*}
 d(y',\int_{\mathbb{R}^{d_1}\times\mathcal{S}^{(2)}}H_{2,y}(x,s^{(2)})\mu(dx,ds^{(2)}))\leq\lim_{l\to\infty}\int_{\mathbb{R}^{d_1}\times\mathcal{S}^{(2)}}
 d(f^{(l)}(x,s^{(2)}),H_{2,y}(x,s^{(2)}))\mu(dx,ds^{(2)})=0.
\end{equation*}
Therefore $d(y',\int_{\mathbb{R}^{d_1}\times\mathcal{S}^{(2)}}H_{2,y}(x,s^{(2)})\mu(dx,ds^{(2)}))=0$ and by Lemma \ref{cldint}, we know that $\\\int_{\mathbb{R}^{d_1}
\times\mathcal{S}^{(2)}}H_{2,y}(x,s^{(2)})\mu(dx,ds^{(2)})$ is a closed subset of $\mathbb{R}^{d_2}$. Hence $y'\in\int_{\mathbb{R}^{d_1}\times\mathcal{S}^{(2)}}H_{2,y}
(x,s^{(2)})\mu(dx,ds^{(2)})$.

From the arguments in the preceding paragraph, we have that for every $y\in\mathbb{R}^{d_2}$, for every $\mu\in D(y)$, 
$\cap_{l\geq1}\int_{\mathbb{R}^{d_1}\times\mathcal{S}^{(2)}}H^{(l)}_{2,y}(x,s^{(2)})\mu(dx,ds^{(2)})\subseteq\int_{\mathbb{R}^{d_1}\times\mathcal{S}^{(2)}}H_{2,y}
(x,s^{(2)})\mu(dx,ds^{(2)})$. Thus for every $y\in\mathbb{R}^{d_2}$, 
\begin{equation*}
\cup_{\mu\in D(y)}\cap_{l\geq1}\int_{\mathbb{R}^{d_1}\times\mathcal{S}^{(2)}}H^{(l)}_{2,y}(x,s^{(2)})\mu(dx,ds^{(2)})\subseteq
\cup_{\mu\in D(y)}\int_{\mathbb{R}^{d_1}\times\mathcal{S}^{(2)}}H_{2,y}(x,s^{(2)})\mu(dx,ds^{(2)})=\hat{H}_2(y).
\end{equation*}
By part $(iv)$ of this lemma we get that for every $y\in\mathbb{R}^{d_2}$, $\cap_{l\geq1}\hat{H}_2^{(l)}(y)\subseteq \hat{H}_2(y)$.\qed
\end{itemize}
\end{proof}

\begin{lemma}\label{march3}
 The set-valued map $\hat{H}_1:\mathbb{R}^d\rightarrow\left\{\text{subsets of }\mathbb{R}^{d_1}\right\}$ as defined in eqn.\eqref{fstrmap} is such 
 that,
 \begin{itemize}
  \item [(i)] for every $(x,y)\in\mathbb{R}^d$, $\hat{H}_1(x,y)$ is a non-empty, convex and compact subset of $\mathbb{R}^{d_1}$,
  \item [(ii)] there exists $K>0$ (same as in $(A1)(ii)$), such that for every $(x,y)\in\mathbb{R}^d$, $\sup_{x'\in\hat{H}_1(x,y)}\|x'\|\leq K(1+\|x\|+
  \|y\|)$,
  \item [(iii)] for every $(x,y)\in\mathbb{R}^{d}$, for every $\mathbb{R}^d$-valued sequence, $\{(x_n,y_n)\}_{n\geq1}$ converging to $(x,y)$ and for every $\{x'_n\in \hat{H}_1
  (x_n,y_n)\}_{n\geq1}$ converging to $x'$, we have $x'\in\hat{H}_1(x,y)$. 
 \end{itemize}
\end{lemma}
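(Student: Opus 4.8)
The plan is to obtain all three assertions from the properties of the approximating maps $\hat{H}_1^{(l)}$ already established in Lemma \ref{marchapprox1}, combined with the two facts proved in Lemma \ref{rel}: the nesting $\hat{H}_1(x,y)\subseteq\hat{H}_1^{(l+1)}(x,y)\subseteq\hat{H}_1^{(l)}(x,y)$ for every $l\geq1$ (part (i)) and the identity $\hat{H}_1(x,y)=\cap_{l\geq1}\hat{H}_1^{(l)}(x,y)$ (part (v)). So the argument reduces to bookkeeping with intersections of a decreasing family of non-empty convex compact sets with closed graphs.

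For (i), fix $(x,y)\in\mathbb{R}^d$. By Lemma \ref{marchapprox1}(i)(a) each $\hat{H}_1^{(l)}(x,y)$ is non-empty, convex and compact, and by the nesting these sets form a decreasing sequence of non-empty compact sets; hence $\hat{H}_1(x,y)$, being their intersection, is non-empty (finite intersection property for compact sets), and it is convex and compact as an intersection of convex compact sets.

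For (ii), I would argue directly rather than via the $\hat{H}_1^{(l)}$, since the latter would only give the larger constants $K^{(l)}$. Any $x'\in\hat{H}_1(x,y)$ has the form $x'=\int_{\mathcal{S}^{(1)}}f(s^{(1)})\,\mu(ds^{(1)})$ for some $\mu\in D^{(1)}(x,y)$ and some $f\in\mathscr{S}(H_{1,(x,y)})$; since $H_1$ satisfies condition $(c)'$ by $(A1)(ii)$, Lemma \ref{msble1}(i) gives $\|f(s^{(1)})\|\leq K(1+\|x\|+\|y\|)$ for all $s^{(1)}\in\mathcal{S}^{(1)}$, and as $\mu$ is a probability measure $\|x'\|\leq\int_{\mathcal{S}^{(1)}}\|f(s^{(1)})\|\,\mu(ds^{(1)})\leq K(1+\|x\|+\|y\|)$.

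For (iii), take sequences $(x_n,y_n)\to(x,y)$ and $x'_n\in\hat{H}_1(x_n,y_n)$ with $x'_n\to x'$, and fix $l\geq1$. By the nesting $x'_n\in\hat{H}_1^{(l)}(x_n,y_n)$ for every $n$, so the closed graph property of $\hat{H}_1^{(l)}$ (Lemma \ref{marchapprox1}(i)(c)) yields $x'\in\hat{H}_1^{(l)}(x,y)$; letting $l$ range over all positive integers and invoking Lemma \ref{rel}(v) gives $x'\in\hat{H}_1(x,y)$. No step here is a genuine obstacle once Lemmas \ref{marchapprox1} and \ref{rel} are available; the only point needing a little care is (ii), where passing through the approximants would weaken the growth constant, so one must use the uniform bound on measurable selections of the slice $H_{1,(x,y)}$ directly.
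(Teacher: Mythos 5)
Your proposal is correct and follows essentially the same route as the paper: parts (i) and (iii) are deduced from Lemmas \ref{marchapprox1} and \ref{rel} exactly as in the paper's proof, and for (ii) you argue directly from the selection representation and the bound in Lemma \ref{msble1}$(i)$, just as the paper does (and for the same reason, to avoid the weaker constants $K^{(l)}$). The only cosmetic difference is that you get non-emptiness in (i) from the finite intersection property of the nested compact sets $\hat{H}_1^{(l)}(x,y)$, whereas the paper obtains it directly from the $\mu$-integrability of the slice $H_{1,(x,y)}$ (Lemma \ref{msble1}$(iii)$); both are valid.
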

\begin{proof}
 \begin{itemize}
 \item [(i)] Fix $(x,y)\in\mathbb{R}^d$. By Lemma \ref{msble1}(iii) $H_{1,(x,y)}$ is $\mu$-integrable for every $\mu\in D^{(1)}(x,y)$. 
 Hence $\hat{H}_{1}(x,y)$ is non-empty. For every $l\geq1$, by Lemma \ref{marchapprox1}$(i)(a)$ we know that $\hat{H}^{(l)}_1(x,y)$ is 
 convex and compact subset of $\mathbb{R}^{d_1}$. By Lemma \ref{rel}$(v)$ we have that $\hat{H}_1(x,y)=\cap_{l\geq1}\hat{H}^{(l)}_1
 (x,y)$ and hence $\hat{H}_1(x,y)$ is convex and compact.
 
 \item [(ii)] Fix $(x,y)\in\mathbb{R}^d$. For any $x'\in\hat{H}_1(x,y)$, there exists $\mu\in D^{(1)}(x,y)$ and $f\in\mathscr{S}(H_{1,(x,y)})$ 
 such that $x'=\int_{\mathcal{S}^{(1)}}f(s^{(1)})\mu(ds^{(1)})$. By Lemma \ref{msble1}$(i)$, we know that for every $s^{(1)}\in\mathcal{S}^{(1)}$, $\|f(s^{(1)})\|
 \leq C_{(x,y)}=K(1+\|x\|+\|y\|)$. Therefore $\|x'\|=\|\int_{\mathcal{S}^{(1)}}f(s^{(1)})\mu(ds^{(1)})\|\leq C_{(x,y)}=K(1+\|x\|+\|y\|)$.
 
 \item [(iii)] Let $\{(x_n,y_n)\}_{n\geq1}$ be a sequence converging to $(x,y)$ and $\{x'_n\in\hat{H}_1(x_n,y_n)\}_{n\geq1}$ be a sequence 
 converging to $x'$. Then by Lemma \ref{rel}$(v)$, we have that for every $l\geq1$, for every $n\geq1$, $x'_n\in\hat{H}^{(l)}_1(x_n,y_n)$. 
 By Lemma \ref{marchapprox1}$(i)(c)$, we have that for every $l\geq1$, $x'\in\hat{H}^{(l)}_1(x,y)$. Thus by Lemma \ref{rel}$(v)$, we have $x'
 \in\hat{H}_1(x,y)$.\qed
 \end{itemize}
\end{proof}

Lemma \ref{march1} is an immediate consequence of the above lemma. Similarly, the proof of Lemma \ref{march2} follows from the fact that $\{\hat{H}^{(l)}_2\}_{l\geq1}$ are Marchaud maps (see Lemma \ref{marchapprox1}$(iii)$) which approximate $\hat{H}_2$ (see Lemma \ref{rel}$(vi)$) and the linear growth property of the map $\lambda(\cdot)$ (that is $(A9(i))$).

\section{Recursion analysis}
\label{recanal}
In this section we present the analysis of recursion \eqref{2tr}. The analysis comprises of two parts.

The first part deals with the analysis of the faster timescale recursion where we show that the faster timescale iterates $\{X_n\}_{n\geq1}$ converge 
almost surely to $\lambda(y)$ (as in $(A9)$) for some $y\in\mathbb{R}^{d_2}$.

The second part deals with the slower timescale recursion analysis where we show that the slower timescale iterates $\{Y_n\}_{n\geq1}$ track the flow 
of DI \eqref{stsrdi}.

Throughout this section we assume that assumptions $(A1)-(A9)$ are satisfied.

\subsection{Faster timescale recursion analysis}
\label{fstranal}
For every $\omega\in\Omega$, for every $n\geq0$, the two timescale recursion \eqref{2tr} can be written as,
\begin{subequations}\label{2tr1}
\begin{align}
 Y_{n+1}(\omega)-Y_n(\omega)-b(n)M^{(2)}_{n+1}(\omega)&=b(n)V^2_n(\omega),\\
 X_{n+1}(\omega)-X_n(\omega)-a(n)M^{(1)}_{n+1}(\omega)&=a(n)V^1_n(\omega),
\end{align}
\end{subequations}
where for every $n\geq0$, $V^1_n$ and $V^2_n$ are such that, for every $\omega\in\Omega$,
\begin{align*}
 V^1_n(\omega)&\in H_1(X_n(\omega),Y_n(\omega),S^{(1)}_n(\omega)),\\
 V^2_n(\omega)&\in H_2(X_n(\omega),Y_n(\omega),S^{(2)}_n(\omega)).
\end{align*}
The recursion \eqref{2tr1} can be rewritten as,  
\begin{subequations}
 \begin{align*}
  Y_{n+1}(\omega)-Y_n(\omega)&=a(n)\left(\frac{b(n)}{a(n)}V^2_n(\omega)+\frac{b(n)}{a(n)}M^{(2)}_{n+1}(\omega)\right),\\
  X_{n+1}(\omega)-X_n(\omega)&=a(n)\left(V^1_n(\omega)+M^{(1)}_{n+1}(\omega)\right),
 \end{align*}
\end{subequations}
for every $\omega\in\Omega$ and for every $n\geq0$. The above can be now written in the form of the single timescale recursion (that is \eqref{str}):
\begin{equation}\label{fstrstsr}
 Z_{n+1}-Z_{n}-a(n)M_{n+1}\in a(n)F(Z_n,S^{(1)}_n),
\end{equation}
where,
\begin{itemize}
 \item [(1)] for every $n\geq0$, $Z_n=\left(X_n,Y_n\right)$,
 \item [(2)] for $n\geq0$, $M_{n+1}=\left(M^{(1)}_{n+1},\frac{b(n)}{a(n)}\left(V^2_n+M^{(2)}_{n+1}\right)\right)$,
 \item [(3)] $F:\mathbb{R}^d\times\mathcal{S}^{(1)}\rightarrow\left\{\text{subsets of }\mathbb{R}^d\right\}$ such that for every $(x,y,s^{(1)})\in
 \mathbb{R}^d\times\mathcal{S}^{(1)}$, $F(x,y,s^{(1)})=(H_1(x,y,s^{(1)}),0)$.
\end{itemize}

We now show that the quantities defined above satisfy the assumptions associated with the single timescale recursion as in section 
\ref{stsri}. Clearly by assumption $(A5)$, the step size sequence $\left\{a(n)\right\}_{n\geq0}$ satisfies assumption $S(A3)$ and by 
assumption $(A3)$ the Markov noise terms, $\left\{S^{(1)}_n\right\}_{n\geq0}$ satisfy assumption $S(A2)$. As a consequence of the stability 
assumption $(A8)$, we have that $\mathbb{P}(\sup_{n\geq0}\left\|Z_n:=(X_n,Y_n)\right\|<\infty)=1$ and hence assumption $S(A5)$ is satisfied.

Consider the set-valued map $F$ defined above. Clearly by assumption $(A1)(i)$, for every $(x,y,s^{(1)})\in\mathbb{R}^d\times\mathcal{S}^{(1)}$, 
$F(x,y,s^{(1)})$ is a non-empty, convex and compact subset of $\mathbb{R}^d$. Further by assumption $(A1)(ii)$, we have that for every $(x,y,s^{(1)})
\in\mathbb{R}^d\times\mathcal{S}^{(1)}$, $\sup_{(x',y')\in F(x,y,s^{(1)})}\left\|(x',y')\right\|=\sup_{x'\in H_1(x,y,s^{(1)})}\left\|x'\right\|\leq K(1+
\left\|x\right\|+\left\|y\right\|)\leq \max\left\{K,KC\right\}(1+\left\|(x,y)\right\|)$ where $C>0$ is such that $\|x\|+\|y\|\leq C\|(x,y)\|$ 
for every $(x,y)\in\mathbb{R}^d$ (see \cite[Thm.~4.3.26]{kumar}). By assumption $(A1)(iii)$, the map $H_1$ has a closed graph and hence the map 
$F$ also has a closed graph. Therefore the set-valued map satisfies assumption $S(A1)$.

Recall that for every $T>0$, for every $n\geq0$, $\tau^1(n,T):=\min\left\{m>n:\sum_{k=n}^{m-1}a(k)\geq T\right\}$. Let, 
\begin{equation}\label{omega1}
\Omega_1:=\left\{\omega\in\Omega: (A6), (A7)\ and\ (A8)\ hold\right\}.
\end{equation}
It is clear that $\mathbb{P}(\Omega_1)=1$. Let $\omega\in\Omega_1$ and fix $T>0$. For any $n\geq0$, 
\begin{small}
\begin{equation*}
 \sup_{n\leq k\leq \tau^1(n,T)}\left\|\sum_{m=n}^ka(m)M_{m+1}(\omega)\right\|\leq \sup_{n\leq k\leq\tau^1(n,T)}\left\|\sum_{m=n}^k
                    a(m)M^{(1)}_{m+1}(\omega)\right\|+\sup_{n\leq k\leq\tau^1(n,T)}\left\|\sum_{m=n}^ka(m)\frac{b(m)}{a(m)}\left(V^2_n(\omega)+M^{(2)}_{m+1}(\omega)
                    \right)\right\|.
\end{equation*}
\end{small}
 By assumption $(A8)$, for every $\omega\in\Omega_1$, 
there exists $r>0$, such that $\sup_{n\geq0}\left(\|X_n(\omega)\|+\|Y_n(\omega)\|\right)\leq r$. Since for every $n\geq0$, for every $\omega\in
\Omega_1$, $V^2_n(\omega)\in H_2(X_n(\omega),Y_n(\omega),S^{(2)}_n(\omega))$, by assumption $(A2)(ii)$ we have that, $\|V^2_n(\omega)\|\leq 
K(1+\|X_n(\omega)\|+\|Y_n(\omega)\|)\leq K(1+r)=:R<\infty$. Further by assumption $(A5)(iii)$ for every $0<\epsilon<T$, there exists $N$ such that 
for every $n\geq N$, $b(n)\leq \frac{\epsilon}{T+1}a(n)$. Therefore for every $n\geq N$, for every $m>n$, $\sum_{k=n}^{m-1}b(k)\leq\frac{\epsilon}
{T+1}\sum_{k=n}^{m-1}a(k)$. Thus for every $n\geq N$, $\sum_{k=n}^{\tau^1(n,T)-1}b(k)\leq\epsilon$ and $\tau^1(n,T)\leq\tau^2(n,T)$. 
Therefore for every $0<\epsilon<T$, for every $n\geq N$, 
\begin{small}
\begin{equation*}
 \sup_{n\leq k\leq \tau^1(n,T)}\left\|\sum_{m=n}^ka(m)M_{m+1}(\omega)\right\|\leq \sup_{n\leq k\leq\tau^1(n,T)}\left\|\sum_{m=n}^k
                    a(m)M^{(1)}_{m+1}(\omega)\right\|+\epsilon R+\sup_{n\leq k\leq\tau^2(n,T)}\left\|\sum_{m=n}^k b(m)M^{(2)}_{m+1}(\omega)
                    \right\|.
\end{equation*}
\end{small}
Taking limit in the above equation and using assumptions $(A6)$ and $(A7)$ gives us that, for every $0<\epsilon<T$,
\begin{equation*}
 \lim_{n\to\infty}\sup_{n\leq k\leq \tau^1(n,T)}\left\|\sum_{m=n}^ka(m)M_{m+1}(\omega)\right\|\leq R\epsilon.
\end{equation*}
Therefore for every $\omega\in\Omega_1$, for every $T>0$, $\lim_{n\to\infty}\sup_{n\leq k\leq \tau^1(n,T)}\left\|\sum_{m=n}^ka(m)M_{m+1}
(\omega)\right\|=0$. Thus the additive noise terms $\{M_{n}\}_{n\geq1}$ satisfy assumption $S(A4)$.

Therefore quantities in recursion \eqref{fstrstsr} satisfy assumptions $S(A1)-S(A5)$ and we apply the main result of the single timescale 
recursion (see Theorem \ref{strapt}(iii)) to conclude that,

\begin{lemma}\label{fstrtmp}
 Under assumptions $(A1)-(A8)$, for almost every $\omega$, there exists a non-empty compact set $L\subseteq \mathbb{R}^d$ (depending on $\omega$) 
 such that,
 \begin{itemize}
  \item [(i)] $(X_n(\omega),Y_n(\omega))\to L$ as $n\to\infty$, where $\left\{(X_n,Y_n)\right\}_{n\geq0}$ is as in recursion \eqref{2tr}.
  \item [(ii)] the set $L$ is internally chain transitive for the flow of the DI,
  \begin{equation}\label{fstrdi1}
   \left(\begin{array}{c}\frac{dx}{dt}\\\frac{dy}{dt}\end{array}\right)\in \left(\begin{array}{c} \hat{H}_1(x,y)\\0\end{array}\right).
  \end{equation}
 \end{itemize}
\end{lemma}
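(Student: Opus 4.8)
The plan is to recognize that the rewritten recursion \eqref{fstrstsr} is literally an instance of the single timescale stochastic recursive inclusion \eqref{str} and then to invoke Theorem \ref{strapt}$(iii)$. The bulk of the verification is already carried out in the discussion preceding the lemma: the set-valued map $F(x,y,s^{(1)})=(H_1(x,y,s^{(1)}),0)$ satisfies $S(A1)$; the Markov noise terms $\{S^{(1)}_n\}_{n\geq0}$ satisfy $S(A2)$ by $(A3)$; the step sizes $\{a(n)\}_{n\geq0}$ satisfy $S(A3)$ by $(A5)$; the composite noise $M_{n+1}=\left(M^{(1)}_{n+1},\frac{b(n)}{a(n)}\left(V^2_n+M^{(2)}_{n+1}\right)\right)$ satisfies $S(A4)$ using $(A5)(iii)$, $(A6)$, $(A7)$ together with the uniform bound on $\|V^2_n\|$ coming from $(A8)$ and $(A2)(ii)$; and $S(A5)$ holds by $(A8)$. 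So the first step is simply to record that $S(A1)$--$S(A5)$ hold on the full-measure event $\Omega_1$ of \eqref{omega1}.

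The second step is to identify the mean field associated with \eqref{fstrstsr}. The slice (in the sense of Definition \ref{slices}$(i)$) of $F$ is $F_{(x,y)}(s^{(1)})=H_{1,(x,y)}(s^{(1)})\times\{0\}$, the set of stationary distributions of the kernel $\Pi^{(1)}(x,y,\cdot)(\cdot)$ is $D^{(1)}(x,y)$, and by the definition of Aumann's integral $\int_{\mathcal{S}^{(1)}}\left(H_{1,(x,y)}(s^{(1)})\times\{0\}\right)\mu(ds^{(1)})=\left(\int_{\mathcal{S}^{(1)}}H_{1,(x,y)}(s^{(1)})\mu(ds^{(1)})\right)\times\{0\}$. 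Hence the map $\hat F$ of section \ref{stsri} for this recursion equals $\hat H_1(x,y)\times\{0\}$, so the DI \eqref{strdi} associated with \eqref{fstrstsr} is exactly \eqref{fstrdi1} (its Marchaud property, used implicitly within the framework of section \ref{stsri}, also follows from Lemma \ref{march3} applied to the first block and triviality of the second).

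The third step is to apply Theorem \ref{strapt}$(iii)$: for almost every $\omega$ the limit set $L(\bar Z(\omega,\cdot))=\cap_{t\geq0}\overline{\left\{\bar Z(\omega,q+t):q\geq0\right\}}$ is non-empty, compact and internally chain transitive for the flow of \eqref{fstrdi1}, where $\bar Z$ interpolates $Z_n=(X_n,Y_n)$ on the grid $t(n)=\sum_{k=0}^{n-1}a(k)$. It then remains to pass from $\bar Z$ to the iterates. Since $\bar Z(\omega,t(n))=(X_n(\omega),Y_n(\omega))$, $\bar Z(\omega,\cdot)$ is affine between consecutive grid points, and on $\Omega_1$ the increments satisfy $\|Z_{n+1}(\omega)-Z_n(\omega)\|\leq a(n)\left(\|V^1_n(\omega)\|+\|M^{(1)}_{n+1}(\omega)\|\right)+b(n)\left(\|V^2_n(\omega)\|+\|M^{(2)}_{n+1}(\omega)\|\right)\to0$ (the drift terms are bounded by $(A8)$ with $(A1)(ii)$, $(A2)(ii)$, $a(n),b(n)\to0$, and $a(n)M^{(1)}_{n+1}\to0$, $b(n)M^{(2)}_{n+1}\to0$ follow by reading off the single-summand term in $(A6)$ and $(A7)$ with any fixed small $T$), the set of accumulation points of $\{Z_n(\omega)\}_{n\geq0}$ coincides with $L(\bar Z(\omega,\cdot))=:L$; boundedness of $\{Z_n(\omega)\}$ from $(A8)$ then gives $d((X_n(\omega),Y_n(\omega)),L)\to0$, which is $(i)$, while $(ii)$ is exactly the internal chain transitivity just quoted. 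The only mildly delicate point, the $S(A4)$ check for the composite noise, is already discharged in the text, and the vanishing-increment argument is elementary, so no serious obstacle remains.
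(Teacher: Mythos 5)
Your proposal is correct and follows essentially the same route as the paper: rewrite \eqref{2tr} as the single timescale recursion \eqref{fstrstsr}, verify $S(A1)$--$S(A5)$ (the composite noise check via $(A5)(iii)$, $(A6)$, $(A7)$ and the bound on $\|V^2_n\|$ is exactly the paper's argument), and invoke Theorem \ref{strapt}$(iii)$, with the DI identified as \eqref{fstrdi1}. Your explicit identification of the mean field as $\hat H_1(x,y)\times\{0\}$ and the passage from the interpolated trajectory $\bar Z$ back to the iterates are details the paper leaves implicit, and they are handled correctly.
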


The set-valued map associated with the DI \eqref{fstrdi1} is clearly a Marchaud map (use Lemma \ref{march3}). Further any solution 
$(\bm{\mathrm{x}}(\cdot),\bm{\mathrm{y}}(\cdot))$ of DI \eqref{fstrdi1} is such that for every $t\in\mathbb{R}$, $\bm{\mathrm{y}}(t)=\bm{\mathrm{y}}(0)$ and 
$\bm{\mathrm{x}}(\cdot)$ is a solution to DI \eqref{fstrdi} with $y_0=\bm{\mathrm{y}}(0)$. 

Fix $\omega\in\Omega$ such that Lemma \ref{fstrtmp} holds. Let $L\subseteq\mathbb{R}^d$ be as in Lemma \ref{fstrtmp}. Let 
\begin{equation}\label{settmp}
 A:=\left\{(x,y)\in\mathbb{R}^d: x\in\lambda(y),\ y\in\mathbb{R}^{d_2}\right\},
\end{equation}
where for every $y\in\mathbb{R}^{d_2}$, $\lambda(y)$ is as in assumption $(A9)$. Since $L$ is internally chain transitive for the flow of DI 
\eqref{fstrdi1}, by \cite[Lemma 3.5]{benaim1} we know that it is invariant. Let $(x^*,y^*)\in L$ and $(\bm{\mathrm{x}}(\cdot),\bm{\mathrm{y}}(\cdot))$ be a 
solution to DI \eqref{fstrdi1} with initial condition $(x^*,y^*)$ and for all $t\in\mathbb{R}$, $(\bm{\mathrm{x}}(t),\bm{\mathrm{y}}(t))\in
L$. Then for every $t\in\mathbb{R}$, $\bm{\mathrm{y}}(t)=y^*$ and $\bm{\mathrm{x}}(\cdot)$ is a solution of DI \eqref{fstrdi} with $y_0=y^*$. By assumption $(A9)$, there 
exists a compact subset $\lambda(y^*)\subseteq\mathbb{R}^{d_1}$, which is a globally attracting set for the flow of DI \eqref{fstrdi} with 
$y_0=y^*$. By definition of a globally attracting set we have that $\cap_{t\geq0}\overline{\left\{\bm{\mathrm{x}}(q+t):q\geq0\right\}}\subseteq 
\lambda(y^*)$. Therefore $(\bm{\mathrm{x}}(t),\bm{\mathrm{y}}(t))\to \lambda(y^*)\times\{y^*\}\subseteq A$ and since for every $t\in\mathbb{R}$, 
$(\bm{\mathrm{x}}(t),\bm{\mathrm{y}}(t))\in L$ we get that $L\cap A\neq\emptyset$. In fact for any closed set $C\subseteq\mathbb{R}^d$ 
invariant for the flow of DI \eqref{fstrdi1} the above argument gives us that $C\cap A\neq\emptyset$. If we are able to show that $L\cap A=L$, 
then by Lemma \ref{fstrtmp}$(i)$ we obtain that $(X_n(\omega),Y_n(\omega))\to L\subseteq A$ as $n\to\infty$. In this regard we need to impose the 
following assumption.

\begin{itemize}
 \item [(A11)] For any compact set $C\subseteq\mathbb{R}^d$, invariant for the flow of DI \eqref{fstrdi1}, for any open neighborhood 
 $\mathcal{O}$ of $C\cap A$, there exists an open neighborhood $\mathcal{O}'$ of $C\cap A$, such that, 
 \begin{equation*}
  \Phi^C(\mathcal{O}'\cap C,[0,\infty))\subseteq \mathcal{O}\cap C,
 \end{equation*}
 where $\Phi^C:C\times\mathbb{R}\rightarrow\left\{\text{subsets of }\mathbb{R}^d\right\}$ denotes the flow of DI \eqref{fstrdi1} restricted to 
 the invariant set $C$ (see section \ref{diatls} for definition).  
\end{itemize}

\begin{remark}
 The above assumption is a weaker form of assumption $(A1)$ imposed in \cite[Ch.~6]{borkartxt} (that implies assumption $(A11)$ above). The 
 above assumption is basically the Lyapunov stability condition (see \cite[Defn.~IX(ii)]{benaim1}) for the flow restricted to the invariant set 
 $C$. We shall see that in the application studied later the above assumption is satisfied. 
\end{remark}

\begin{lemma}\label{fstrtmp1}
 Under assumptions $(A1)-(A9)$ and $(A11)$, for almost every $\omega$, $L\subseteq \mathbb{R}^d$ as in Lemma \ref{fstrtmp}, is such that 
 \begin{equation*}
  L\subseteq \left\{(x,y)\in\mathbb{R}^d:x\in\lambda(y),\ y\in\mathbb{R}^{d_2}\right\}.
 \end{equation*}
  Therefore $(X_n(\omega),Y_n(\omega))\to L\subseteq\left\{(x,y)\in\mathbb{R}^d:x\in\lambda(y),\ y\in\mathbb{R}^{d_2}\right\}$ as $n\to\infty$, 
  where $\left\{(X_n,Y_n)\right\}_{n\geq0}$ is as in recursion \eqref{2tr}.
\end{lemma}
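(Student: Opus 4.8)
The plan is to reduce the statement to the identity $L=L\cap A$, where $A$ is the closed set in \eqref{settmp}, and then to exploit internal chain transitivity of $L$ together with assumption $(A11)$. Recall from Lemma \ref{fstrtmp} that $L$ is non-empty, compact and internally chain transitive for the flow of DI \eqref{fstrdi1}, hence invariant by \cite[Lemma~3.5]{benaim1}; recall also that $A$ is the graph of $\lambda(\cdot)$, which is closed by $(A9)(ii)$, so $L\cap A$ is compact, and $L\cap A\neq\emptyset$ by the argument given just before the lemma. Since $(X_n(\omega),Y_n(\omega))\to L$ by Lemma \ref{fstrtmp}$(i)$, it suffices to show $L\subseteq A$, i.e. $L\subseteq L\cap A$. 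I will do this by first showing that $L\cap A$ is \emph{uniformly} attracting for the restricted flow $\Phi^L$ on all of $L$, and then deriving a contradiction from a chain in $L$ joining a point of $L\cap A$ to a hypothetical point of $L\setminus A$.

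First I would establish pointwise attraction. Every solution $(\bm{\mathrm{x}}(\cdot),\bm{\mathrm{y}}(\cdot))$ of DI \eqref{fstrdi1} has $\bm{\mathrm{y}}(t)\equiv y_0$ and $\bm{\mathrm{x}}(\cdot)$ a solution of DI \eqref{fstrdi} with that $y_0$; by $(A9)$ the set $\lambda(y_0)$ is globally attracting for DI \eqref{fstrdi}, so $d(\bm{\mathrm{x}}(t),\lambda(y_0))\to 0$ and hence $(\bm{\mathrm{x}}(t),\bm{\mathrm{y}}(t))\to\lambda(y_0)\times\{y_0\}\subseteq A$ as $t\to\infty$. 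If in addition the solution stays in $L$, then its forward limit set lies in $L\cap A$, so $d\big((\bm{\mathrm{x}}(t),\bm{\mathrm{y}}(t)),L\cap A\big)\to 0$; equivalently, $\sup_{w\in\Phi^L(z,t)}d(w,L\cap A)\to 0$ for every $z\in L$. Next I would upgrade this to uniform attraction using $(A11)$: by the remark following it, $(A11)$ with $C=L$ says precisely that $L\cap A$ is Lyapunov stable for $\Phi^L$, i.e. for every relatively open neighbourhood $\mathcal{O}$ of $L\cap A$ in $L$ there is a relatively open neighbourhood $\mathcal{O}'$ with $\Phi^L(\mathcal{O}',[0,\infty))\subseteq\mathcal{O}$. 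Combining Lyapunov stability, the pointwise attraction just obtained, compactness of $L$, and upper semicontinuity of the time-$t$ reachable-set map of the (restricted) inclusion, a standard covering argument yields: for every $\eta>0$ there is $T_\eta>0$ with $\Phi^L(L,[T_\eta,\infty))\subseteq N^\eta(L\cap A)\cap L$. Here one also uses that, since $L$ is invariant, any solution that stays in $L$ on a forward interval extends to a solution staying in $L$ for all time, so that reachable sets along such forward segments are again values of $\Phi^L$. This second paragraph is the main obstacle: passing from the Lyapunov stability in $(A11)$ and the slicewise pointwise attraction coming from $(A9)$ to a genuinely uniform attraction of the whole compact set $L$.

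Finally I would argue by contradiction. Suppose there is $z_0\in L\setminus(L\cap A)$ and set $\rho:=d(z_0,L\cap A)>0$ (positive, as $L\cap A$ is compact). Apply the uniform attraction with $\eta:=\rho/3$ to get $T_0:=T_{\rho/3}$. Pick any $p\in L\cap A$; since $L$ is internally chain transitive for DI \eqref{fstrdi1}, there is a $(\rho/3,T_0)$-chain from $p$ to $z_0$ inside $L$: solutions $\bm{\mathrm{z}}_1,\dots,\bm{\mathrm{z}}_n$ with $\bm{\mathrm{z}}_i([0,t_i])\subseteq L$, $t_i\geq T_0$, and $\|\bm{\mathrm{z}}_n(t_n)-z_0\|\leq\rho/3$. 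For each $i$ one has $\bm{\mathrm{z}}_i(t_i)\in\Phi^L(\bm{\mathrm{z}}_i(0),t_i)\subseteq\Phi^L(L,[T_0,\infty))\subseteq N^{\rho/3}(L\cap A)$, hence $d(\bm{\mathrm{z}}_n(t_n),L\cap A)<\rho/3$, and therefore $d(z_0,L\cap A)\leq\|z_0-\bm{\mathrm{z}}_n(t_n)\|+d(\bm{\mathrm{z}}_n(t_n),L\cap A)<2\rho/3<\rho$, a contradiction. Thus $L\subseteq L\cap A\subseteq A$, and with Lemma \ref{fstrtmp}$(i)$ this gives $(X_n(\omega),Y_n(\omega))\to L\subseteq\{(x,y)\in\mathbb{R}^d:x\in\lambda(y),\ y\in\mathbb{R}^{d_2}\}$. (Alternatively, the last step can be phrased by observing that the uniform attraction exhibits, inside $L\cap A$, an attractor of $\Phi^L$ whose basin of attraction is all of $L$, and then invoking the standard fact that an internally chain transitive set meeting the basin of an attractor is contained in it.)
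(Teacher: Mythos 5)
Your proposal is correct and follows essentially the same route as the paper: use $(A9)$ to show every solution of DI \eqref{fstrdi1} staying in $L$ has its forward limit set in $L\cap A$, uniformize this over the compact invariant set $L$ and combine with the Lyapunov stability in $(A11)$ to conclude that $L\cap A$ is attracting for $\Phi^L$, and then use internal chain transitivity to force $L\subseteq L\cap A$. The only differences are presentational: where the paper cites \cite[Lemma~3.13]{benaim1} for the uniform entry time and \cite[Prop.~3.20]{benaim1} for the final step, you sketch the compactness argument and carry out the $(\epsilon,T)$-chain contradiction by hand (including the correct observation that, by invariance of $L$, chain segments lying in $L$ extend to full solutions in $L$ and hence are realized by $\Phi^L$), which is exactly the content of those cited results.
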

\begin{proof}
 We present a brief outline here to highlight where assumption $(A11)$ is used.
 
 Let $A\subseteq\mathbb{R}^d$ be as in equation \eqref{settmp}. Fix $\omega\in\Omega$ and obtain $L$ as in Lemma \ref{fstrtmp}. We know that $L$ 
 is internally chain transitive for the flow of DI \eqref{fstrdi1} and since it is also invariant $L\cap A\neq\emptyset$. By assumption $(A9)$, 
 for every $(x^*,y^*)\in L$, $\omega_{\Phi^L}((x^*,y^*))\subseteq L\cap A$. Thus for every $(x^*,y^*)\in L$, for every solution of DI 
 \eqref{fstrdi1}, $(\bf{x}\rm(\cdot),\bf{y}\rm(\cdot))$ such that $(\bm{\mathrm{x}}(0),\ \bm{\mathrm{y}}(0))=(x^*,y^*)$ and for every $t\in\mathbb{R}$, 
 $(\bm{\mathrm{x}}(t),\bm{\mathrm{y}}(t))\in L$, for every open neighborhood $\mathcal{O}$ of $L\cap A$, there exists $t>0$ such that 
 $(\bm{\mathrm{x}}(t),\bm{\mathrm{y}}(t))\in \mathcal{O}\cap L$. By \cite[Lemma~3.13]{benaim1}, we get that for every open neighborhood 
 $\mathcal{O}$ of $L\cap A$, there exists $T>0$, for every $(x^*,y^*)\in L$, for every solution of DI \eqref{fstrdi1}, 
 $(\bm{\mathrm{x}}(\cdot),\bm{\mathrm{y}}(\cdot))$ such that $(\bm{\mathrm{x}}(0),\ \bm{\mathrm{y}}(0))=(x^*,y^*)$ and for every $t\in\mathbb{R}$, 
 $(\bm{\mathrm{x}}(t),\bm{\mathrm{y}}(t))\in L$, for some $t\in[0,T]$, $(\bm{\mathrm{x}}(t),\bm{\mathrm{y}}(t))\in \mathcal{O}\cap L$. 
 
 Fix $\epsilon>0$. Then by assumption $(A11)$ there exists on open neighborhood of $L\cap A$, $\mathcal{O}$ such that, 
 $\Phi^L(\mathcal{O}\cap L,[0,\infty))\subseteq N^{\epsilon}(L\cap A)\cap L$. From arguments in the previous paragraph we can find $T>0$ such that 
 for every $(x^*,y^*)\in L$, for every solution of DI \eqref{fstrdi1} with $(\bm{\mathrm{x}}(0),\ \bm{\mathrm{y}}(0))=(x^*,y^*)$ and $(\bm{\mathrm{x}}(t),
 \bm{\mathrm{y}}(t))\in L$ for every $t\in\mathbb{R}$, there exists $t\in[0,T]$, such that $(\bm{\mathrm{x}}(t),\bm{\mathrm{y}}(t))\in
 \mathcal{O}\cap L$. Therefore $\Phi^L(L,[T,\infty))\subseteq N^{\epsilon}(L\cap A)\cap L$. Thus $L\cap A$ is an attracting set for $\Phi^L$. 
 Now the claim follows from \cite[Prop.~3.20]{benaim1}.\qed
\end{proof}

\subsection{Slower timescale recursion analysis}
\label{stsranal}

Before we present the analysis of slower timescale recursion we present some preliminaries where we shall define various quantities needed later.
Throughout this section let $\left\{H_2^{(l)}\right\}_{l\geq1}$ and $\left\{h^{(l)}_2\right\}_{l\geq1}$ denote maps as in section \ref{limdiprop}.
Further we shall allow assumptions $(A1)-(A9)$ and $(A11)$ to be satisfied. The slower timescale recursion analysis is similar to the 
analysis of single timescale inclusion in \cite{vin1t} with minor modifications arising due to the presence of faster timescale 
iterates. Throughout this section $U$ denotes the closed unit ball in $\mathbb{R}^{d_2}$ and $B_{r}$ denotes the closed ball of radius $r>0$ 
in $\mathbb{R}^{d_1}$ centered at the origin.

\subsubsection{Preliminaries}
\label{stsranalprelim}
Let $t^{s}(0):=0$ and for every $n\geq1$, $t^{s}(n):=\sum_{m=0}^{n-1}b(m)$. Define $\bar{Y}:\Omega\times[0,\infty)\rightarrow\mathbb{R}^{d_2}$, 
such that for every $(\omega,t)\in\Omega\times[0,\infty)$, 
\begin{equation*}
 \bar{Y}(\omega,t):=\left(\frac{t-t^s(n)}{t^s(n+1)-t^s(n)}\right)Y_{n+1}(\omega)+\left(\frac{t^s(n+1)-t}{t^s(n+1)-t^s(n)}\right)Y_n(\omega),
\end{equation*}
where $n$ is such that $t\in[t^s(n),t^s(n+1))$. 

Consider the slower timescale recursion \eqref{stsr} given by,
\begin{equation*}
 Y_{n+1}-Y_n-b(n)M^{(2)}_{n+1}\in b(n)H_2(X_n,Y_n,S^{(2)}_n),
\end{equation*}
for every $n\geq0$. By Lemma \ref{ctem}, we have that for every $l\geq1$, for every $n\geq0$, $H_2(X_n,Y_n,S^{(2)}_n)\subseteq 
H_2^{(l)}(X_n,Y_n,S^{(2)}_n)$. Therefore, for every $l\geq1$, the following recursion follows from the above (that is, \eqref{stsr} ): 
\begin{equation*}
 Y_{n+1}-Y_n-b(n)M^{(2)}_{n+1}\in b(n)H_2^{(l)}(X_n,Y_n,S^{(2)}_n).
\end{equation*}
 By lemma \ref{param}, we know that for every $l\geq1$ the set-valued map $H_2^{(l)}$ admits a 
continuous single valued parametrization, $h^{(l)}_2$. The next lemma allows us to write the slower timescale inclusion in terms of the 
parametrization of $H^{(l)}_2$ and result follows from \cite[Lemma~6.1]{vin1t} . 

\begin{lemma}\label{paramuse}
 For every $l\geq1$, for every $n\geq0$, there exists a $U$-valued random variable on 
 $\Omega$, $U^{(l)}_n$, such that for every $\omega\in\Omega$, 
 \begin{equation*}
  Y_{n+1}(\omega)-Y_n(\omega)-b(n)M^{(2)}_{n+1}(\omega)= b(n)h^{(l)}_2(X_n(\omega),Y_n(\omega),S^{(2)}_n(\omega),U^{(l)}_n(\omega)).
 \end{equation*}
\end{lemma}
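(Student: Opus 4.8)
The plan is to realize $U^{(l)}_n$ as a measurable selection of a set-valued map from $\Omega$ into the closed unit ball $U\subseteq\mathbb{R}^{d_2}$ whose selections, composed with the parametrization $h^{(l)}_2$, reproduce the increment of the slower timescale recursion. First I would fix $l\geq1$ and $n\geq0$, note that $b(n)>0$ by $(A5)$, and introduce the $\mathbb{R}^{d_2}$-valued random variable $\psi_n(\omega):=\tfrac{1}{b(n)}\bigl(Y_{n+1}(\omega)-Y_n(\omega)-b(n)M^{(2)}_{n+1}(\omega)\bigr)$, which is measurable since $Y_{n+1},Y_n,M^{(2)}_{n+1}$ are random variables. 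Recursion \eqref{stsr} together with Lemma \ref{ctem}$(ii)$ applied to $H_2$ gives, for every $\omega$, $\psi_n(\omega)\in H_2(X_n(\omega),Y_n(\omega),S^{(2)}_n(\omega))\subseteq H^{(l)}_2(X_n(\omega),Y_n(\omega),S^{(2)}_n(\omega))$, and by Lemma \ref{param}$(i)$ the last set equals $h^{(l)}_2(X_n(\omega),Y_n(\omega),S^{(2)}_n(\omega),U)$. Consequently the set
\[
G^{(l)}_n(\omega):=\bigl\{u\in U:\ h^{(l)}_2(X_n(\omega),Y_n(\omega),S^{(2)}_n(\omega),u)=\psi_n(\omega)\bigr\}
\]
is non-empty for every $\omega\in\Omega$, and it is closed because $h^{(l)}_2$ is continuous in its last argument (Lemma \ref{param}).

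Next I would check that $\omega\mapsto G^{(l)}_n(\omega)$ is a measurable set-valued map. For a closed (hence compact, since $U$ is bounded) set $C\subseteq U$, pick a countable dense subset $\{u_j\}_{j\geq1}$ of $C$; by continuity of $h^{(l)}_2$ in $u$ and compactness of $C$,
\[
(G^{(l)}_n)^{-1}(C)=\Bigl\{\omega\in\Omega:\ \inf_{j\geq1}\bigl\|h^{(l)}_2(X_n(\omega),Y_n(\omega),S^{(2)}_n(\omega),u_j)-\psi_n(\omega)\bigr\|=0\Bigr\}.
\]
For each $j$ the map $\omega\mapsto\|h^{(l)}_2(X_n(\omega),Y_n(\omega),S^{(2)}_n(\omega),u_j)-\psi_n(\omega)\|$ is measurable, being the composition of the random vector $(X_n,Y_n,S^{(2)}_n,\psi_n)$ with the continuous (hence Borel) map $(x,y,s,v)\mapsto\|h^{(l)}_2(x,y,s,u_j)-v\|$; since a countable infimum of measurable functions is measurable, $(G^{(l)}_n)^{-1}(C)\in\mathscr{F}$. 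Hence $G^{(l)}_n$ is measurable. This is the same measurability step used for the map $G$ in the proof of Lemma \ref{chint}$(ii)$; cf.\ \cite[Appendix~B]{vin1t}.

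Finally, $G^{(l)}_n$ is a measurable set-valued map with non-empty closed values contained in $U$, so Lemma \ref{msel}$(i)$ furnishes a measurable selection, and I would set $U^{(l)}_n\in\mathscr{S}(G^{(l)}_n)$. By the definition of $G^{(l)}_n$ we have $h^{(l)}_2(X_n(\omega),Y_n(\omega),S^{(2)}_n(\omega),U^{(l)}_n(\omega))=\psi_n(\omega)$ for every $\omega\in\Omega$, which upon multiplying by $b(n)$ is precisely the claimed identity. The routine parts (non-emptiness via the parametrization, and the selection theorem) are immediate; the only point requiring a little care is the measurability of $G^{(l)}_n$, and even there compactness of $U$ together with separability reduces it to a countable operation on measurable functions. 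The argument is the same as \cite[Lemma~6.1]{vin1t}, merely carrying the extra measurable argument $X_n$ and writing $S^{(2)}_n$ for the single-timescale Markov noise term.
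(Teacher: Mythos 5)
Your argument is correct and is essentially the paper's proof: the paper simply defers to \cite[Lemma~6.1]{vin1t}, whose content is exactly the measurable-selection construction you carry out (put the normalized increment inside $H_2\subseteq H^{(l)}_2=h^{(l)}_2(\cdot,U)$, form the non-empty closed set of parameters realizing it, verify measurability of that set-valued map, and select via Lemma \ref{msel}$(i)$). Your measurability check via a countable dense subset of a compact $C\subseteq U$ is the same device the paper uses for the map $G$ in the proof of Lemma \ref{chint}$(ii)$, so nothing further is needed.
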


For every $l\geq1$, define $\Gamma^{(l)}:\Omega\times[0,\infty)\rightarrow\mathcal{P}(U\times\mathbb{R}^{d_1}\times\mathcal{S}^{(2)})$, such that 
for every $(\omega,t)\in\Omega\times[0,\infty)$,
\begin{equation}\label{measproc}
 \Gamma^{(l)}(\omega,t):=\delta_{U^{(l)}_n(\omega)}\otimes\delta_{X_n(\omega)}\otimes\delta_{S^{(2)}_n(\omega)},
\end{equation}
where $\delta_{U^{(l)}_n(\omega)}\in\mathcal{P}(U)$ denotes the Dirac measure at $U^{(l)}_n(\omega)\in U$ (for every $A\in\mathscr{B}
(U)$, $\delta_{U^{(l)}_n(\omega)}(A)=1$ if $U^{(l)}_n(\omega)\in A$, $0$ otherwise), $\delta_{X_n(\omega)}\in\mathcal{P}(\mathbb{R}
^{d_1})$ denotes the Dirac measure at $X_n(\omega)\in\mathbb{R}^{d_1}$ and similarly $\delta_{S^{(2)}_n(\omega)}\in\mathcal{P}(\mathcal{S}^{(2)})$ 
denotes the Dirac measure at $S^{(2)}_n(\omega)\in\mathcal{S}^{(2)}$.

The lemma below provides an equicontinuity result used later. 
\begin{lemma}\label{ecr}
 For every $l\geq1$, for every $r>0$, the family of maps 
 \begin{equation*}
\left\{y\in rU\rightarrow\int_{U\times B_r\times\mathcal{S}^{(2)}}h^{(l)}_2(x,y,s^{(2)},u)\nu(du,dx,ds^{(2)}):\nu\in\mathcal{P}(U\times B_r\times
\mathcal{S}^{(2)})\right\}  
 \end{equation*}
is equicontinuous.
\end{lemma}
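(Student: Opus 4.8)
The plan is to reduce the equicontinuity of the family to the uniform continuity of the single parametrization map $h^{(l)}_2$ restricted to an appropriate compact set. First I would fix $l\geq 1$ and $r>0$ and observe that for any $y, y' \in rU$ and any $\nu\in\mathcal{P}(U\times B_r\times\mathcal{S}^{(2)})$ we have the pointwise bound
\begin{equation*}
\left\| \int_{U\times B_r\times\mathcal{S}^{(2)}} h^{(l)}_2(x,y,s^{(2)},u)\,\nu(du,dx,ds^{(2)}) - \int_{U\times B_r\times\mathcal{S}^{(2)}} h^{(l)}_2(x,y',s^{(2)},u)\,\nu(du,dx,ds^{(2)}) \right\| \leq \sup_{(x,s^{(2)},u)\in B_r\times\mathcal{S}^{(2)}\times U} \left\| h^{(l)}_2(x,y,s^{(2)},u) - h^{(l)}_2(x,y',s^{(2)},u) \right\|,
\end{equation*}
since $\nu$ is a probability measure and the integrand difference is bounded by its supremum over the support. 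Note that the right-hand side no longer depends on $\nu$; this is exactly the point where the problem decouples from the (infinite-dimensional) measure argument and becomes a statement purely about $h^{(l)}_2$.

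Next I would invoke the continuity of $h^{(l)}_2:\mathbb{R}^d\times\mathcal{S}^{(2)}\times U\to\mathbb{R}^{d_2}$ from Lemma~\ref{param} (equivalently Lemma~\ref{msble2}(v)). The set $B_r\times (rU)\times\mathcal{S}^{(2)}\times U$ is a product of compact sets, hence compact, so $h^{(l)}_2$ restricted to this set is uniformly continuous. Therefore, given $\epsilon>0$, there is $\delta>0$ such that whenever $(x,s^{(2)},u)\in B_r\times\mathcal{S}^{(2)}\times U$ and $y,y'\in rU$ with $\|y-y'\|<\delta$, we have $\|h^{(l)}_2(x,y,s^{(2)},u) - h^{(l)}_2(x,y',s^{(2)},u)\| < \epsilon$. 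Combining this with the displayed bound shows that, for every $\nu$, the map $y\mapsto \int h^{(l)}_2(x,y,s^{(2)},u)\,\nu(du,dx,ds^{(2)})$ varies by less than $\epsilon$ on $\delta$-balls, uniformly in $\nu$. That is precisely equicontinuity of the family, so the lemma follows.

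The only genuinely substantive point is the first display — that interchanging the $y$-variation with the integral is legitimate and that the supremum bound is valid; here one must check integrability of $(x,s^{(2)},u)\mapsto h^{(l)}_2(x,y,s^{(2)},u)$ against $\nu$, which is immediate because this map is continuous on the compact set $B_r\times\mathcal{S}^{(2)}\times U$ and hence bounded (by Lemma~\ref{msble2}(v), the bound is $K^{(l)}_y(1+r) = \max\{K^{(l)},K^{(l)}r\}(1+r)$, a constant independent of $y\in rU$ and of $\nu$). I do not anticipate a real obstacle: the lemma is essentially ``uniform continuity on a compact set, passed through an integral,'' and the restriction of $x$ to $B_r$ in the statement is exactly what makes the relevant domain compact. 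This is the same mechanism used in \cite[Lemma~5.2]{vin1t}, adapted by enlarging the domain to include the $\mathbb{R}^{d_1}$-component and its ball $B_r$.
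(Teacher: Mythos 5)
Your proposal is correct and is essentially the same argument as the paper's proof: both reduce the problem to uniform continuity of $h^{(l)}_2$ on the compact set $B_r\times rU\times\mathcal{S}^{(2)}\times U$ and then pass the resulting $\nu$-independent bound through the integral. No gaps.
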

\begin{proof}
 Fix $l\geq1$. By Lemma \ref{param}, we know that the map $h^{(l)}_2(\cdot)$ is continuous. Hence the map $h^{(l)}_2(\cdot)$ restricted to the 
 compact set $B_r\times rU\times\mathcal{S}^{(2)}\times U$ is uniformly continuous. Therefore for every $\epsilon>0$, there exists $\delta>0$, 
 such that for every $(x,y,s^{(2)},u),(x,y',s^{(2)},u)\in B_r\times rU\times\mathcal{S}^{(2)}\times U$, satisfying $\left\|y-y'\right\|<\delta$, 
 $\left\|h^{(l)}_2(x,y,s^{(2)},u)-h^{(l)}_2(x,y',s^{(2)},u)\right\|<\epsilon$. Therefore for $\delta>0$ as above, with $\left\|y-y'\right\|<\delta$, 
 for any $\nu\in\mathcal{P}(U\times B_r\times\mathcal{S}^{(2)})$, $\left\|\int_{U\times B_r\times\mathcal{S}^{(2)}}h^{(l)}_2(x,y,s^{(2)},u)\nu(du,dx,ds^{(2)})-
 \int_{U\times B_r\times\mathcal{S}^{(2)}}h^{(l)}_2(x,y',s^{(2)},u)\nu(du,dx,ds^{(2)})\right\|\leq\int_{U\times B_r\times\mathcal{S}^{(2)}}\left\|
 h^{(l)}_2(x,y,s^{(2)},u)-h^{(l)}_2(x,y',s^{(2)},u)\right\|\nu(du,dx,ds^{(2)})\leq\epsilon$.\qed
\end{proof}

For every $l\geq1$, define $G^{(l)}:\Omega\times[0,\infty)\rightarrow\mathbb{R}^{d_2}$ such that, for every $(\omega,t)\in\Omega\times[0,\infty)$, 
\begin{equation}\label{psdode}
 G^{(l)}(\omega,t):=h^{(l)}_2(X_n(\omega),Y_n(\omega),S^{(2)}_n(\omega),U^{(l)}_n(\omega)),
\end{equation}
where $n$ is such that $t\in[t^s(n),t^s(n+1))$. 

In what follows most of the arguments are sample path wise and we use smaller case symbols to denote the above defined quantities along a particular 
sample path. For example $x_n,\ y_n,\ u^{(l)}_n,\ m^{(2)}_{n+1},\ s^{(2)}_n,\ \bar{y}(t),\ \gamma^{(l)}_n(t)$ and $g^{(l)}(t)$ denote $X_n(\omega),\ 
Y_n(\omega),\ U^{(l)}_n(\omega),\ M^{(2)}_{n+1}(\omega),\ S^{(2)}_n(\omega),\ \bar{Y}(\omega,t),\ \Gamma^{(l)}_n(\omega,t)$ and $G^{(l)}(\omega,t)$ 
respectively for some $\omega$ fixed.

\subsubsection{Main result-Asymptotic pseudotrajectory}
\label{stsranalapt}
For every $\omega\in\Omega$, for every $l\geq1$, for every $\tilde{t}\geq0$, let $\tilde{y}^{(l)}(\cdot;\tilde{t})$ denote the solution of the 
o.d.e.
\begin{equation}\label{psdode1}
 \dot{\tilde{y}}^{(l)}(t;\tilde{t})=g^{(l)}(t+\tilde{t}),
\end{equation}
for every $t\geq0$ with initial condition $\tilde{y}^{(l)}(0;\tilde{t})=\bar{y}(\tilde{t})$.

Let $\Omega_1$ be as in \eqref{omega1}. Then by assumptions (A6)-(A8), we have that $\mathbb{P}(\Omega_1)=1$. First we shall get rid of the 
additive noise terms. In this regard we prove the lemma below which states that for every $\omega\in\Omega_1$, the family of functions 
$\left\{\bar{y}(\cdot+t)\right\}_{t\geq0}$ and $\left\{\tilde{y}^{(l)}(\cdot;t)\right\}_{t\geq0}$ have the same limit points in $\mathcal{C}(
[0,\infty),\mathbb{R}^{d_2})$ for every $l\geq1$. Proof of the lemma below is similar to \cite[Lemma~6.3]{vin1t} and is given in appendix 
\ref{prflim1}. 

\begin{lemma}\label{lim1}
 For almost every $\omega$, for every $l\geq1$, for every $T>0$,
 \begin{equation*}
  \lim_{t\to\infty}\sup_{0\leq q\leq T}\left\|\bar{y}(q+t)-\tilde{y}^{(l)}(q;t)\right\|=0.
 \end{equation*}
\end{lemma}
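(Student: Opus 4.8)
The plan is to compare the linearly interpolated slower-timescale trajectory $\bar{y}(\cdot)$ directly with the solution $\tilde{y}^{(l)}(\cdot;t)$ of o.d.e.~\eqref{psdode1}, both started from the same point $\bar{y}(t)$ at time $0$, and to estimate the difference uniformly on $[0,T]$. Fix $\omega\in\Omega_1$ (so that $(A6)$--$(A8)$ hold) and $l\geq1$. By $(A8)$ there is $r>0$ with $\sup_{n\geq0}(\|x_n\|+\|y_n\|)\leq r$; combining this with the pointwise bound in Lemma~\ref{param}$(ii)$ (in the $(c)'$ form) gives a uniform bound $\|g^{(l)}(t)\|\leq K^{(l)}(1+r)=:R^{(l)}$ for all $t\geq0$. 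Hence $\tilde{y}^{(l)}(\cdot;t)$ is Lipschitz in its first argument with constant $R^{(l)}$, which in particular keeps it in a fixed compact ball on $[0,T]$ for all $t$.

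The key step is to write, for $q\in[0,T]$ and $n$ the index with $t+q\in[t^s(n_t+m),t^s(n_t+m+1))$ (where $n_t$ is the index for $t$), the telescoped identity
\begin{equation*}
 \bar{y}(t+q)-\tilde{y}^{(l)}(q;t)=\sum_{j=n_t}^{n_t+m-1}b(j)\,m^{(2)}_{j+1}+\Big(\text{interpolation/partial-step remainder}\Big)+R_1(t,q),
\end{equation*}
where $R_1(t,q)=\sum_{j} b(j)\,g^{(l)}(t^s(j))-\int_0^q g^{(l)}(s+t)\,ds$ is exactly zero by the definition \eqref{psdode} of $g^{(l)}$ as a piecewise-constant function synchronized with the step sizes, once one accounts for the partial steps at the two endpoints; the endpoint remainders are $O(b(n_t))+O(b(n_t+m))=O(\sup_{j\geq n_t}b(j))$ in norm because the integrand is bounded by $R^{(l)}$. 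The martingale-type term is controlled by $(A7)$: the number of steps $m$ to cover a window of length $T$ satisfies $n_t+m\leq\tau^2(n_t,T)$ (up to one extra step), so $\big\|\sum_{j=n_t}^{n_t+m-1}b(j)m^{(2)}_{j+1}\big\|\leq\sup_{n_t\leq k\leq\tau^2(n_t,T)}\big\|\sum_{j=n_t}^{k}b(j)m^{(2)}_{j+1}\big\|$, which $\to0$ as $t\to\infty$ by $(A7)$. Taking the supremum over $q\in[0,T]$ and then $t\to\infty$, both the additive-noise term and the endpoint remainders vanish, giving the claim. (Since $g^{(l)}$ here is literally the piecewise-constant vector field driving \eqref{psdode1}, no Gronwall argument is even needed — the difference is a pure noise-plus-discretization remainder. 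This is slightly simpler than \cite[Lemma~6.3]{vin1t}, where one instead compares with an o.d.e. whose vector field is the underlying $h_2$ rather than the frozen piecewise-constant values, and a Gronwall step absorbs $\int\|h_2(x_\cdot,\bar y(\cdot+t))-h_2(x_\cdot,y_\cdot)\|$ using equicontinuity, Lemma~\ref{ecr}.)

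If the intended reading of \eqref{psdode1} is the former (comparison against the frozen values $g^{(l)}$), the argument above is essentially complete and the main obstacle is purely bookkeeping: carefully matching the interpolation formula for $\bar{y}$ against the integral in \eqref{psdode1} across partial steps at the window's endpoints, and verifying that the step-count estimate $n_t+m\leq\tau^2(n_t,T)+1$ is uniform in $t$. If instead one must compare $\bar y$ with the solution of the o.d.e. driven by a time-varying but genuinely continuous field (as in \cite{vin1t}), then the real work is the Gronwall estimate together with the equicontinuity in Lemma~\ref{ecr} and the uniform boundedness of $\{x_n,y_n\}$ from $(A8)$ to show the extra term $\int_0^q \|h^{(l)}_2(x_\cdot,\bar y(s+t),s^{(2)}_\cdot,u^{(l)}_\cdot)-g^{(l)}(s+t)\|\,ds$ tends to $0$ uniformly on $[0,T]$; this is the step I expect to be the crux, and it is handled exactly as in \cite[Lemma~6.3]{vin1t} with $\bar x$ in place of a constant argument, which requires no new idea. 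I would present the short version pointing to \cite[Lemma~6.3]{vin1t} and to appendix~\ref{prflim1} for the details, emphasizing the use of $(A6)$--$(A8)$ and Lemma~\ref{ecr}.
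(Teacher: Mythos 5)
Your proposal is correct and follows essentially the same route as the paper: since $g^{(l)}$ is the piecewise-constant field built from the frozen iterate values (via Lemma \ref{paramuse}), the drift contributions in $\bar{y}(t+\cdot)$ and $\tilde{y}^{(l)}(\cdot;t)$ cancel and the difference reduces to (a convex combination of) partial sums of the additive noise, which vanish by $(A7)$ — no Gronwall or Lemma \ref{ecr} is needed, exactly as in your first reading. The only cosmetic difference is that the paper anchors the comparison at the grid times $t^s(n)$, where both trajectories are piecewise linear with identical interpolation weights so even your endpoint remainders disappear, and then notes the general-$t$ case follows easily.
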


The lemma below guarantees the existence of limit points for $\left\{\tilde{y}^{(l)}(\cdot;t)\right\}_{t\geq0}$ in $\mathcal{C}([0,\infty),
\mathbb{R}^{d_2})$. The proof is similar to \cite[Lemma~6.4]{vin1t} and is given in appendix \ref{prflim2}.

\begin{lemma}\label{lim2}
 For almost every $\omega$, for every $l\geq1$, the family of functions $\left\{\tilde{y}^{(l)}(\cdot;t)\right\}_{t\geq0}$ is relatively 
 compact in $\mathcal{C}([0,\infty),\mathbb{R}^{d_2})$.
\end{lemma}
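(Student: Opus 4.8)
The plan is to show relative compactness of $\left\{\tilde{y}^{(l)}(\cdot;t)\right\}_{t\geq0}$ in $\mathcal{C}([0,\infty),\mathbb{R}^{d_2})$ via the Arzelà–Ascoli theorem, for a fixed $l\geq1$ and for $\omega$ in the full-measure event $\Omega_1$ from \eqref{omega1} (where $(A6)$–$(A8)$ hold). So I must establish two things: (i) pointwise boundedness of the family, and more precisely that all trajectories in the family, on any compact time interval $[0,T]$, stay in a fixed compact subset of $\mathbb{R}^{d_2}$; and (ii) equicontinuity of the family on compacts.

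First I would fix $\omega\in\Omega_1$ and extract $r>0$ with $\sup_{n\geq0}\left(\|x_n\|+\|y_n\|\right)\leq r$ from assumption $(A8)$; in particular $\|x_n\|\leq r$ and $\|y_n\|\leq r$ for all $n$. By Lemma \ref{paramuse}, $\dot{\tilde{y}}^{(l)}(q;t)=g^{(l)}(q+t)=h^{(l)}_2(x_n,y_n,s^{(2)}_n,u^{(l)}_n)$ for the appropriate index $n$, and by Theorem \ref{approx}(ii) (using that $H_2$ satisfies $(c)'$, since $(A2)$ is of that type), $\left\|h^{(l)}_2(x_n,y_n,s^{(2)}_n,u^{(l)}_n)\right\|\leq K^{(l)}(1+\|x_n\|+\|y_n\|)\leq K^{(l)}(1+r)=:R^{(l)}$. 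Hence $\|g^{(l)}(\cdot)\|\leq R^{(l)}$ uniformly, which immediately gives equicontinuity: for any $t\geq0$ and $0\leq q_1\leq q_2\leq T$,
\begin{equation*}
 \left\|\tilde{y}^{(l)}(q_2;t)-\tilde{y}^{(l)}(q_1;t)\right\|=\left\|\int_{q_1}^{q_2}g^{(l)}(q+t)\,dq\right\|\leq R^{(l)}(q_2-q_1).
\end{equation*}
For the boundedness part, note $\tilde{y}^{(l)}(0;t)=\bar{y}(t)$ and $\|\bar{y}(t)\|\leq r$ for all $t\geq0$ (it is a convex combination of $y_n$'s, each of norm at most $r$), so $\left\|\tilde{y}^{(l)}(q;t)\right\|\leq r+R^{(l)}q\leq r+R^{(l)}T$ for all $q\in[0,T]$, $t\geq0$. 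Thus on each $[0,T]$ the family lies in the fixed closed ball of radius $r+R^{(l)}T$.

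With these two facts in hand, the Arzelà–Ascoli theorem gives relative compactness of $\left\{\tilde{y}^{(l)}(\cdot;t)\right\}_{t\geq0}$ in $\mathcal{C}([0,T],\mathbb{R}^{d_2})$ for every $T>0$; a standard diagonal argument over $T=1,2,\dots$ then upgrades this to relative compactness in $\mathcal{C}([0,\infty),\mathbb{R}^{d_2})$ with its topology of uniform convergence on compacts. Since this holds for every $l\geq1$ and the exceptional null set (the complement of $\Omega_1$) does not depend on $l$, the statement holds for almost every $\omega$ and every $l\geq1$. I do not expect any genuine obstacle here: the only input beyond Arzelà–Ascoli is the pointwise growth bound on the parametrization $h^{(l)}_2$ from Theorem \ref{approx}, which is uniform in all the arguments once the stability bound $(A8)$ pins $x_n,y_n$ to a bounded set; the mild bookkeeping point to be careful about is that the constant $R^{(l)}$ (hence the equicontinuity modulus and the size of the enclosing ball) depends on $l$ and on the sample-path-dependent bound $r$, which is fine since $l$ and $\omega$ are fixed throughout.
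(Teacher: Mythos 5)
Your proposal is correct and follows essentially the same route as the paper's proof: use $(A8)$ to bound $\|x_n\|+\|y_n\|$ by $r$, deduce the uniform bound $K^{(l)}(1+r)$ on the parametrized vector field $h^{(l)}_2$, conclude equicontinuity and pointwise boundedness (via $\tilde{y}^{(l)}(0;t)=\bar{y}(t)$), and invoke Arzel\`a--Ascoli. The extra detail you give on the enclosing ball over $[0,T]$ and the diagonal argument is just an explicit spelling-out of what the paper leaves implicit.
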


As a consequence of Lemmas \ref{lim1} and \ref{lim2} we get that, for almost every $\omega$,
\begin{itemize}
 \item [(i)] the family of functions $\left\{\bar{y}(\cdot+t)\right\}_{t\geq0}$ is relatively compact in $\mathcal{C}([0,\infty),\mathbb{R}^d)$,
 \item [(ii)] the linearly interpolated trajectory of the slower timescale iterates, $\bar{y}(\cdot)$, is uniformly continuous on $[0,\infty)$. 
\end{itemize}

The next proposition states that every limit point of $\left\{\bar{y}(\cdot+t)\right\}_{t\geq0}$ is a solution of DI \eqref{stsrdi} on 
$[0,\infty)$. The proof is along the lines of \cite[Prop.~6.5]{vin1t} but with modifications arising due to the presence of faster timescale 
iterates.

\begin{proposition}\label{main1}
 For almost every $\omega$, every limit point $y^*(\cdot)$ of $\left\{\bar{y}(\cdot+t)\right\}_{t\geq0}$ in $\mathcal{C}([0,\infty),
 \mathbb{R}^{d_2})$, satisfies the following.
 \begin{itemize}
  \item [(i)] For some $r>0$, for every $l\geq1$, there exists $\tilde{\gamma}^{(l)}\in\mathcal{M}(U\times B_r\times\mathcal{S}^{(2)})$ such that for 
  every $t\geq0$,
  \begin{equation*}
   y^*(t)=y^*(0)+\int_{0}^{t}\left[\int_{U\times B_r\times\mathcal{S}^{(2)}}h^{(l)}_2(x,y^*(q),s^{(2)},u)\tilde{\gamma}^{(l)}(q)(du,dx,ds^{(2)})\right]dq.
  \end{equation*}
  \item [(ii)] For every $l\geq1$, $\tilde{\gamma}^{(l)}$ as in part $(i)$ of this proposition is such that for almost every $t\geq0$,
  \begin{equation*}
   \Theta_1(\tilde{\gamma}^{(l)})(t)\in D(y^*(t)).
  \end{equation*}
  \item [(iii)] $y^*(\cdot)$ is absolutely continuous and for almost every $t\in[0,\infty)$,
  \begin{equation*}
   \frac{dy^*(t)}{dt}\in\hat{H}_2(y^*(t)).
  \end{equation*}
 \end{itemize}
\end{proposition}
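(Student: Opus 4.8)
The plan is to follow the asymptotic‑pseudotrajectory argument of \cite[Prop.~6.5]{vin1t}, inserting the faster‑timescale convergence of Lemma~\ref{fstrtmp1} and the change‑of‑variables Lemma~\ref{chint}$(ii)$ wherever the iterate $X_n$ enters. Fix $\omega$ in the probability‑one set on which Lemmas~\ref{lim1}, \ref{lim2} and \ref{fstrtmp1} hold, and use assumption $(A8)$ to choose $r>0$ with $\sup_{n\geq0}(\|X_n(\omega)\|+\|Y_n(\omega)\|)\leq r$; enlarging $r$ by assumption $(A9)(i)$ we may also assume $\lambda(y)\subseteq B_r$ for all $y$ in a ball containing the (bounded) trajectory $y^*(\cdot)$. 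Let $t_k\uparrow\infty$ be such that $\bar y(\cdot+t_k)\to y^*(\cdot)$ in $\mathcal{C}([0,\infty),\mathbb{R}^{d_2})$. Since the shifted measure processes $\gamma^{(l)}(\cdot+t_k)$ lie in the compact metric space $\mathcal{M}(U\times B_r\times\mathcal{S}^{(2)})$ (Lemma~\ref{mtriz}$(i)$), a diagonal extraction over $l$ yields a common subsequence, still denoted $t_k$, with $\gamma^{(l)}(\cdot+t_k)\to\tilde\gamma^{(l)}\in\mathcal{M}(U\times B_r\times\mathcal{S}^{(2)})$ for every $l\geq1$; and by Lemma~\ref{lim1} together with $\bar y(\cdot+t_k)\to y^*(\cdot)$ we also get $\tilde y^{(l)}(\cdot;t_k)\to y^*(\cdot)$ uniformly on compacta, for each $l$.

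For part $(i)$, recall $\tilde y^{(l)}(t;t_k)=\bar y(t_k)+\int_0^t g^{(l)}(q+t_k)\,dq$ and, by the definition \eqref{measproc} of $\Gamma^{(l)}$, $g^{(l)}(q+t_k)=\int_{U\times B_r\times\mathcal{S}^{(2)}}h^{(l)}_2(x,y_n,s^{(2)},u)\,\gamma^{(l)}(q+t_k)(du,dx,ds^{(2)})$ with $n=n(q+t_k)$. Since a single slow update is negligible (by assumptions $(A5)(iv)$ and $(A7)$, $b(n)\|M^{(2)}_{n+1}(\omega)\|\to0$, hence $\|y_n-\bar y(q+t_k)\|\to0$ uniformly in $q\in[0,T]$), and $\bar y(\cdot+t_k)\to y^*(\cdot)$ uniformly on $[0,T]$, uniform continuity of $h^{(l)}_2$ on compacta lets us replace $y_n$ by $y^*(q)$ inside the integrand with an error that vanishes as $k\to\infty$. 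Then, discretising $[0,t]$ and approximating $q\mapsto y^*(q)$ by step functions (uniform continuity of $y^*$ on $[0,t]$), each resulting term converges by the very definition of the topology $\tau_{U\times B_r\times\mathcal{S}^{(2)}}$ (test function $\mathbf{1}_{[q_{j-1},q_j)}\in L_2$, integrand $h^{(l)}_2(\cdot,y^*(q_{j-1}),\cdot,\cdot)\in\mathcal{C}(U\times B_r\times\mathcal{S}^{(2)})$), while Lemma~\ref{ecr} controls the discretisation error. Passing to the limit in $\tilde y^{(l)}(t;t_k)$ gives the stated integral representation with $y^*(0)=\lim_k\bar y(t_k)$.

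For part $(ii)$ we verify the two defining conditions of $D(y^*(t))$ for $\mu^{(l)}(t):=\Theta_1(\tilde\gamma^{(l)})(t)$, which is precisely the $(\mathbb{R}^{d_1},\mathcal{S}^{(2)})$‑marginal of $\tilde\gamma^{(l)}(t)$. The support condition $\mathrm{supp}(\mu^{(l)}(t)_{\mathbb{R}^{d_1}})\subseteq\lambda(y^*(t))$ holds because the $\mathbb{R}^{d_1}$‑marginal of $\gamma^{(l)}(q+t_k)$ is $\delta_{X_n}$, while by Lemma~\ref{fstrtmp1} $(X_n,Y_n)\to L\subseteq\{(x,y):x\in\lambda(y)\}$, so $X_n$ is eventually within any $\varepsilon$ of $\lambda(y^*(q))$ (using $Y_n\to y^*(q)$ and the upper semicontinuity of $\lambda$ in $(A9)(ii)$), and this passes to the limit using \cite[Thm.~2.1.1(iv)]{borkarap} exactly as in the proof of Lemma~\ref{measmap}. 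For the stationarity condition, fix $f\in\mathcal{C}(\mathcal{S}^{(2)},\mathbb{R})$ and $g\in L_2([0,T],\mathbb{R})$ and write $\int_0^T g(q)\big[\int(f(s^{(2)})-\int f(\tilde s)\Pi^{(2)}(x,y_n,s^{(2)})(d\tilde s))\gamma^{(l)}(q+t_k)(du,dx,ds^{(2)})\big]dq$ as a Riemann‑type sum $\sum_n b(n)(\cdots)\big[f(S^{(2)}_n)-\mathbb{E}[f(S^{(2)}_{n+1})\mid\mathcal{F}_n]\big]$; the martingale $\sum_n b(n)(\cdots)(f(S^{(2)}_{n+1})-\mathbb{E}[f(S^{(2)}_{n+1})\mid\mathcal{F}_n])$ has increments over $[t_k,t_k+T]$ vanishing by $(A5)(iv)$ and the Burkholder/maximal inequality, and the shift from $f(S^{(2)}_{n+1})$ to $f(S^{(2)}_n)$ costs only $O(\sup_n b(n))\to0$, so the whole expression tends to $0$. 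Passing to the limit as in part $(i)$, then letting $g$ range over $L_2$ and $f$ over a countable dense subset of $\mathcal{C}(\mathcal{S}^{(2)},\mathbb{R})$, we get $(\mu^{(l)}(t))_{\mathcal{S}^{(2)}}(A)=\int\Pi^{(2)}(x,y^*(t),s^{(2)})(A)\,\mu^{(l)}(t)(dx,ds^{(2)})$ for a.e.\ $t$; with the support condition this is $\mu^{(l)}(t)\in D(y^*(t))$ a.e.

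Part $(iii)$ then follows quickly: the representation in $(i)$ shows $y^*(\cdot)$ is absolutely continuous with $\dot y^*(t)=\int h^{(l)}_2(x,y^*(t),s^{(2)},u)\,\tilde\gamma^{(l)}(t)(du,dx,ds^{(2)})$ for a.e.\ $t$. Fixing such a $t$ and an $l$, since $\mu^{(l)}(t)\in D(y^*(t))$ has compact support contained in $\lambda(y^*(t))\times\mathcal{S}^{(2)}$, Lemma~\ref{chint}$(ii)$ applied with $\mu=\mu^{(l)}(t)$ and $\nu=\tilde\gamma^{(l)}(t)$ identifies this integral as an element of $\int_{\mathbb{R}^{d_1}\times\mathcal{S}^{(2)}}H^{(l)}_{2,y^*(t)}(x,s^{(2)})\,\mu^{(l)}(t)(dx,ds^{(2)})\subseteq\hat H^{(l)}_2(y^*(t))$. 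Hence $\dot y^*(t)\in\hat H^{(l)}_2(y^*(t))$ for a.e.\ $t$ and every $l$; intersecting the countably many exceptional null sets and invoking Lemma~\ref{rel}$(vi)$ gives $\dot y^*(t)\in\bigcap_{l\geq1}\hat H^{(l)}_2(y^*(t))=\hat H_2(y^*(t))$ for a.e.\ $t$. The main obstacle is part $(ii)$: correctly combining the discrete martingale argument for the Markov‑noise average with the limit transition in the measure‑valued process while simultaneously tracking the faster component $X_n$ so that the limiting measure has $\mathbb{R}^{d_1}$‑marginal on $\lambda(y^*(t))$; parts $(i)$ and $(iii)$ are then essentially bookkeeping built on Lemmas~\ref{chint}, \ref{ecr} and \ref{rel}.
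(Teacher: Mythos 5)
Your overall architecture is the paper's: extract $\tilde\gamma^{(l)}$ by compactness of $\mathcal{M}(U\times B_r\times\mathcal{S}^{(2)})$, replace $y_{[t_k+q]}$ by $y^*(q)$ via Lemma~\ref{ecr}, pass to the limit in the integral representation, verify the two defining properties of $D(y^*(t))$, and finish with Lemma~\ref{chint}$(ii)$ and Lemma~\ref{rel}$(vi)$. Parts $(i)$ and $(iii)$ are sound (your step-function/$L_2$-indicator variant of the limit passage in $(i)$ is an acceptable substitute for the paper's Stone--Weierstrass argument). The genuine gap is in the support condition of part $(ii)$. You argue that since the $\mathbb{R}^{d_1}$-marginal of $\gamma^{(l)}(q+t_k)$ is $\delta_{X_{[q+t_k]}}$ and $X_{[q+t_k]}$ is eventually $\varepsilon$-close to $\lambda(y^*(q))$, the conclusion ``passes to the limit using \cite[Thm.~2.1.1(iv)]{borkarap} exactly as in the proof of Lemma~\ref{measmap}''. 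But Lemma~\ref{measmap} concerns a sequence of fixed probability measures converging in $\mathcal{P}(\cdot)$, whereas here $\gamma^{(l)}(\cdot+t_k)\to\tilde\gamma^{(l)}$ only in the weak topology of $\mathcal{M}(U\times B_r\times\mathcal{S}^{(2)})$, i.e.\ after integration in time against $g\in L_2([0,T],\mathbb{R})$; this does \emph{not} give $\gamma^{(l)}(t+t_k)\to\tilde\gamma^{(l)}(t)$ in $\mathcal{P}(U\times B_r\times\mathcal{S}^{(2)})$ for (almost every) fixed $t$ --- indeed the former are Dirac measures while $\tilde\gamma^{(l)}(t)$ typically is not --- so the portmanteau theorem cannot be invoked at a fixed $t$ as you state. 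The paper bridges exactly this point with the device from the proof of \cite[Ch.~6, Lemma~3]{borkartxt}: for almost every $t$ there exist a subsequence $\{n_k\}$ and integers $\{c_p\}$ such that $\frac{1}{c_p}\sum_{k=1}^{c_p}\mu^{(l)}_{B_r}(t+t_{n_k})\to\tilde\mu^{(l)}_{B_r}(t)$ in $\mathcal{P}(B_r)$, and portmanteau is applied to these Ces\`aro averages, each of which eventually puts mass one on $\left(\lambda(y^*(t))+B_\varepsilon\right)\cap B_r$. Without this step --- or an equivalent time-integrated reformulation of the support condition (for instance testing $\int_0^T g(q)\int\min\{d(x,\lambda(y^*(q))),1\}\,\gamma^{(l)}(q+t_k)(du,dx,ds^{(2)})\,dq$, which in turn would need a uniform-in-$q$ version of $X_{[q+t_k]}\to\lambda(y^*(q))$) --- the claim $\mathrm{supp}\bigl(\Theta_1(\tilde\gamma^{(l)})(t)_{B_r}\bigr)\subseteq\lambda(y^*(t))$ for a.e.\ $t$ is not established, and this is the crux of part $(ii)$.

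Two smaller points in the stationarity half of $(ii)$, both fixable: the cost of replacing $f(S^{(2)}_{n+1})$ by $f(S^{(2)}_n)$ over the window beginning at $t_k$ is $O(b(n_k))$ by Abel summation using that $\{b(n)\}$ is non-increasing and $f$ bounded (your ``$O(\sup_n b(n))$'' does not tend to zero as written), and the almost-sure vanishing of the weighted martingale tail is most cleanly obtained, as in the paper, from the martingale convergence theorem via $\sum_n b(n)^2<\infty$; a maximal inequality alone gives only per-window control in probability and would need an extra Borel--Cantelli argument to yield the almost-sure statement you use.
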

\begin{proof}
Let $\Omega_2:=\left\{\omega\in \Omega:\ \text{Lemma \ref{fstrtmp1} holds}\right\}$. From the proof of \cite[Thm.~6.6 ]{vin1t} it is clear that 
$\Omega_1\subseteq\Omega_2$ and $\mathbb{P}(\Omega_2)=1$. Fix $\omega\in\Omega_2$ and let $t_n\to\infty$ be such that 
$\bar{y}(\cdot+t_n)\to y^*(\cdot)$ in $\mathcal{C}([0,\infty),\mathbb{R}^{d_2})$.
\begin{itemize}
 \item [(i)] Fix $l\geq1$. By assumption $(A8)$ there exists $r>0$, such that $\sup_{n\geq0}\left(\|x_n\|+\|y_n\|\right)\leq r$. Then for any 
 $t\geq0$, $\gamma^{(l)}(t):=\delta_{u^{(l)}_n}\otimes\delta_{x_n}\otimes\delta_{s^{(2)}_n}\in\mathcal{P}(U\times B_r\times\mathcal{S}^{(2)})$ where 
 $n$ is such that $t\in[t^s(n),t^s(n+1))$. Therefore $\gamma^{(l)}\in\mathcal{M}(U\times B_r\times\mathcal{S}^{(2)})$ and by Lemma \ref{mtriz}$(i)$ 
 we get that the sequence $\left\{\gamma^{(l)}(\cdot+t_n)\right\}_{n\geq1}$ has a convergent subsequence in $\mathcal{M}(U\times B_r\times
 \mathcal{S}^{(2)})$. Let $\tilde{\gamma}^{(l)}$ be a limit point of $\left\{\gamma^{(l)}(\cdot+t_n)\right\}_{n\geq1}$ and without loss of generality assume 
 $\gamma^{(l)}(\cdot+t_n)\to\tilde{\gamma}^{(l)}$ as $n\to\infty$.  By definition of $\tilde{y}^{(l)}(\cdot;t_n)$, we have that for every 
 $n\geq1$, for every $t\geq0$,
 \begin{align*}
  \tilde{y}^{(l)}(t;t_n)&=\bar{y}(t_n)+\int_{0}^{t}g^{(l)}(q+t_n)dq\\
                        &=\bar{y}(t_n)+\int_{0}^{t}h^{(l)}_2(x_{[t_n+q]},y_{[t_n+q]},s^{(2)}_{[t_n+q]},u^{(l)}_{[t_n+q]})dq,
 \end{align*}
 where for any $t\geq0$, $[t]:=\max\left\{n\geq0:t\geq t^s(n)\right\}$. Using the defintion of $\gamma^{(l)}(\cdot+t_n)$(see \eqref{measproc} 
 and recall that $\gamma^{(l)}(\cdot+t_n):=\Gamma^{(l)}(\omega,\cdot+t_n)$) in the above we get that for every $n\geq1$, for every $t\geq0$,
 \begin{align*}
  \tilde{y}^{(l)}(t;t_n)&=\bar{y}(t_n)+\int_{0}^{t}\left[\int_{U\times B_r\times\mathcal{S}^{(2)}}h^{(l)}_2(x,y_{[t_n+q]},s^{(2)},u)\gamma^{(l)}
  (q+t_n)(du,dx,ds^{(2)})\right]dq.
 \end{align*}
 Since $\bar{y}(\cdot+t_n)\to y^*(\cdot)$ in $\mathcal{C}([0,\infty),\mathbb{R}^{d_2})$, by Lemma \ref{lim1} we have that 
 $\tilde{y}^{(l)}(\cdot;t_n)\to y^*(\cdot)$ as $n\to\infty$. By taking limit in the above equation we get that for every $t\geq0$, 
 \begin{align}\label{tmp_main}
  \lim_{n\to\infty}\left[\tilde{y}^{(l)}(t;t_n)-\bar{y}(t_n)\right]&=\lim_{n\to\infty}\int_{0}^{t}\left[\int_{U\times B_r\times\mathcal{S}^{(2)}}
                                                               h^{(l)}_2(x,y_{[t_n+q]},s^{(2)},u)\gamma^{(l)}(q+t_n)(du,dx,ds^{(2)})\right]dq,\nonumber\\
      y^*(t)-y^*(0)&=\lim_{n\to\infty}\int_{0}^{t}\left[\int_{U\times B_r\times\mathcal{S}^{(2)}}h^{(l)}_2(x,y_{[t_n+q]},s^{(2)},u)\gamma^{(l)}(q+t_n)
                      (du,dx,ds^{(2)})\right]dq.                            
 \end{align}
 Since $\gamma^{(l)}(\cdot+t_n)\to\tilde{\gamma}^{(l)}(\cdot)$ and by our choice of the topology for $\mathcal{M}(U\times B_r\times\mathcal{S}^{(2)})$, 
  we have,
  \begin{small}
  \begin{equation*}
   \int_{0}^{t}\left[\int_{U\times B_r\times\mathcal{S}^{(2)}}\tilde{f}(q,u,x,s^{(2)})\gamma^{(l)}(q+t_n)(du,dx,ds^{(2)})\right]dq-
   \int_{0}^{t}\left[\int_{U\times B_r\times\mathcal{S}^{(2)}}\tilde{f}(q,u,x,s^{(2)})\tilde{\gamma}^{(l)}(q)(du,dx,ds^{(2)})\right]dq\to0,
  \end{equation*}
  \end{small}
 for all  bounded continuous $\tilde{f}:[0,t]\times U\times B_r\times\mathcal{S}^{(2)}\rightarrow\mathbb{R}$ of the form, 
 \begin{equation*}
  \tilde{f}(q,u,x,s^{(2)})=\sum_{m=1}^{N}a_mg_m(q)f_m(u,x,s^{(2)}),
 \end{equation*}
 for some $N\geq1$, scalars $a_m$ and bounded continuous functions $g_m,\ f_m$ on $[0,t],\ U\times B_r\times\mathcal{S}^{(2)}$ respectively, 
 for $1\leq m\leq N$. By the Stone-Weierstrass theorem, such functions can uniformly approximate any function in 
 $\mathcal{C}([0,t]\times U\times B_r\times\mathcal{S}^{(2)},\mathbb{R})$. Thus the above convergence holds true for all real valued continuous functions 
 on $[0,t]\times U\times B_r\times\mathcal{S}^{(2)}$, implying that $\\t^{-1}\gamma^{(l)}(q+t_n)(du,dx,ds^{(2)})dq\to t^{-1}\tilde{\gamma}^{(l)}(q)(du,
 dx,ds^{(2)})dq$ in $\mathcal{P}([0,t]\times U\times B_r\times\mathcal{S}^{(2)})$. Thus,
 \begin{small}
 \begin{align}\label{tmp0}
  \|\int_{0}^{t}\left[\int_{U\times B_r\times\mathcal{S}^{(2)}}\!\!\!\!\!h^{(l)}_2(x,y^*(q),s^{(2)},u)\gamma^{(l)}(q+t_n)(du,dx,ds^{(2)})\right]&dq-\nonumber\\
  \int_{0}^{t}&\left[\int_{U\times B_r\times\mathcal{S}^{(2)}}\!\!\!\!\!\!\!\!h^{(l)}_2(x,y^*(q),s^{(2)},u)\tilde{\gamma}^{(l)}(q)(du,dx,ds^{(2)})\right]dq\|\to0
 \end{align}
 \end{small}
 as $n\to\infty$. Since $\left\{\bar{y}(\cdot+t_n)|_{[0,t]}\right\}_{n\geq1}$ converges uniformly to $y^*(\cdot)|_{[0,t]}$ we have that, the 
 function $q\rightarrow y_{[t_n+q]}$ converges uniformly to $y^*(\cdot)|_{[0,t]}$ on $[0,t]$. Using the above and Lemma \ref{ecr} we have that 
 for every $\epsilon>0$, there exists $N$(depending on $\epsilon$) such that for every $n\geq N$, for every $q\in[0,t]$,
 \begin{small}
 \begin{equation}\label{tmp1}
  \|\int_{U\times B_r\times\mathcal{S}^{(2)}}\!\!\!\!\!h^{(l)}_2(x,y_{[t_n+q]},s^{(2)},u)\gamma^{(l)}(q+t_n)(du,dx,ds^{(2)})-
  \int_{U\times B_r\times\mathcal{S}^{(2)}}\!\!\!\!\!h^{(l)}_2(x,y^*(q),s^{(2)},u)\gamma^{(l)}(q+t_n)(du,dx,ds^{(2)})\|<\epsilon.
 \end{equation}
 \end{small}
 Now,
 \begin{small}
  \begin{align*}
   \|\int_{0}^{t}\!\!\left[\int_{U\times B_r\times\mathcal{S}^{(2)}}\!\!\!\!\!\!\!\!\!\!\!h^{(l)}_2(x,y_{[t_n+q]},s^{(2)},u)\gamma^{(l)}(q+t_n)(du,dx,ds^{(2)})\right]dq&-\!\!
  \int_{0}^{t}\!\!\left[\int_{U\times B_r\times\mathcal{S}^{(2)}}\!\!\!\!\!\!\!\!\!\!\!h^{(l)}_2(x,y^*(q),s^{(2)},u)\tilde{\gamma}^{(l)}(q)(du,dx,ds^{(2)})\right]\!dq\|\\
  &\leq\\
  \|\int_{0}^{t}\!\!\int_{U\times B_r\times\mathcal{S}^{(2)}}\!\!\!\!\!\!\!\!\!\!\!h^{(l)}_2(x,y_{[t_n+q]},s^{(2)},u)\gamma^{(l)}(q+t_n)(du,dx,ds^{(2)})\!dq&-\!\!
  \int_{0}^{t}\!\!\int_{U\times B_r\times\mathcal{S}^{(2)}}\!\!\!\!\!\!\!\!\!\!\!h^{(l)}_2(x,y^*(q),s^{(2)},u)\gamma^{(l)}(q+t_n)(du,dx,ds^{(2)})\!dq\|\\
  &+\\
  \|\int_{0}^{t}\!\!\left[\int_{U\times B_r\times\mathcal{S}^{(2)}}\!\!\!\!\!\!\!\!\!\!\!h^{(l)}_2(x,y^*(q),s^{(2)},u)\gamma^{(l)}(q+t_n)(du,dx,ds^{(2)})\right]\!dq&-\!\!
  \int_{0}^{t}\!\!\left[\int_{U\times B_r\times\mathcal{S}^{(2)}}\!\!\!\!\!\!\!\!\!\!\!h^{(l)}_2(x,y^*(q),s^{(2)},u)\tilde{\gamma}^{(l)}(q)(du,dx,ds^{(2)})\right]\!dq\|.
  \end{align*}
 \end{small}
 Taking limit as $n\to\infty$ in the above equation and using \eqref{tmp0} and \eqref{tmp1} we get,
 \begin{small}
 \begin{align*}
  \lim_{n\to\infty}\|\int_{0}^{t}\left[\int_{U\times B_r\times\mathcal{S}^{(2)}}\!\!\!\!\!\!\!\!\!\!\!\!\!\!h^{(l)}_2(x,y_{[t_n+q]},s^{(2)},u)\gamma^{(l)}(q+t_n)(du,dx,ds^{(2)})\right]&dq-\\
  \int_{0}^{t}&\left[\int_{U\times B_r\times\mathcal{S}^{(2)}}\!\!\!\!\!\!\!\!\!\!\!\!\!\!h^{(l)}_2(x,y^*(q),s^{(2)},u)\tilde{\gamma}^{(l)}(q)(du,dx,ds^{(2)})\right]dq\|\leq\epsilon t,
 \end{align*}
 \end{small}
 for every $\epsilon>0$. Therefore, for every $t\geq0$,
 \begin{small}
  \begin{equation*}
   \lim_{n\to\infty}\int_{0}^{t}\left[\int_{U\times B_r\times\mathcal{S}^{(2)}}\!\!\!\!\!\!\!\!\!\!\!\!\!\!\!h^{(l)}_2(x,y_{[t_n+q]},s^{(2)},u)\gamma^{(l)}(q+t_n)(du,dx,ds^{(2)})\right]dq=
   \int_{0}^{t}\left[\int_{U\times B_r\times\mathcal{S}^{(2)}}\!\!\!\!\!\!\!\!\!\!\!\!\!\!\!h^{(l)}_2(x,y^*(q),s^{(2)},u)\tilde{\gamma}^{(l)}(q)(du,dx,ds^{(2)})\right]dq.
  \end{equation*}
 \end{small}
Substituting the above limit in equation \eqref{tmp_main} we get that for every $t\geq0$,
\begin{equation*}
 y^*(t)-y^*(0)=\int_{0}^{t}\left[\int_{U\times B_r\times\mathcal{S}^{(2)}}h^{(l)}_2(x,y^*(q),s^{(2)},u)\tilde{\gamma}^{(l)}(q)(du,dx,ds^{(2)})\right]dq.
\end{equation*}
\item [(ii)] Fix $l\geq1$. Let $\mu^{(l)}:=\Theta_1(\gamma^{(l)})$. Then for every $n\geq1$, $\mu^{(l)}(\cdot+t_n)=\Theta_1(\gamma^{(l)}
(\cdot+t_n))$ and since $\gamma^{(l)}(\cdot+t_n)\to\tilde{\gamma}^{(l)}$ in $\mathcal{M}(U\times B_r\times\mathcal{S}^{(2)})$, by Lemma 
\ref{contfun}$(v)$ we get that $\mu^{(l)}(\cdot+t_n)\to\tilde{\mu}^{(l)}(\cdot)=:\Theta_1(\tilde{\gamma}^{(l)})$ in $\mathcal{M}(B_r\times
\mathcal{S}^{(2)})$ as $n\to\infty$. In order to prove that for almost every $t\geq0$, $\tilde{\mu}^{(l)}(t)\in D(y^*(t))$ we need to show that for 
almost every $t\geq0$,
\begin{itemize}
\item [(1)] $\mathrm{supp}(\tilde{\mu}^{(l)}_{B_r}(t))\subseteq \lambda(y^*(t))$,
\item [(2)] for every $A\in\mathscr{B}(\mathcal{S}^{(2)})$, $\tilde{\mu}^{(l)}_{\mathcal{S}^{(2)}}(t)(A)=
\int_{\mathbb{R}^{d_1}\times\mathcal{S}^{(2)}}\Pi^{(2)}(x,y^*(t),s^{(2)})(A)\tilde{\mu}^{(l)}(t)(dx,ds^{(2)})$.
\end{itemize}

First we present a proof of the claim in (1) above. $\mu^{(l)}_{B_r}:=\Theta_2(\mu^{(l)})\in\mathcal{M}(B_r)$ and since as $n\to\infty$, 
$\mu^{(l)}(\cdot+t_n)\to\tilde{\mu}^{(l)}$, by Lemma \ref{contfun}$(vi)$, we have that $\mu^{(l)}_{B_r}(\cdot+t_n)\to\tilde{\mu}^{(l)}_{B_r}$ in 
$\mathcal{M}(B_r)$. Since as $n\to\infty$, $\mu^{(l)}_{B_r}(\cdot+t_n)\to\tilde{\mu}^{(l)}_{B_r}$ in $\mathcal{M}(B_r)$, by proof of 
\cite[Ch.~6, Lemma~3]{borkartxt}, we have that for almost every $t\geq0$, there exists a subsequence $\left\{n_k\right\}_{k\geq1}$ and a 
subsequence of natural numbers $\left\{c_p\right\}_{p\geq1}$ such that 
\begin{equation}\label{tmp3}
\frac{1}{c_p}\sum_{k=1}^{c_p}\mu^{(l)}_{B_r}(t+t_{n_k})\to\tilde{\mu}^{(l)}_{B_r}(t)
\end{equation}
in $\mathcal{P}(B_r)$ as $p\to\infty$. Fix $t\geq0$ such that the above holds. By definition of $\mu^{(l)}_{B_r}$, we have that 
for every $k\geq1$, $\mu^{(l)}_{B_r}(t+t_{n_k})=\delta_{x_{[t+t_{n_k}]}}$ where $[t+t_{n_k}]:=\max\left\{m>n:t+t_{n_k}\geq t^s(m)\right\}$. 
Since $\bar{y}(\cdot+t_{n_k})\to y^*(t)$, using the uniform continuity of $\bar{y}(\cdot)$, we have that the function 
$\tilde{t}\in[0,\infty)\rightarrow y_{[\tilde{t}+t_{n_k}]}$ converges uniformly on compacts to the function $y^*(\cdot)$. Therefore 
$y_{[t+t_{n_k}]}\to y^*(t)$ as $k\to\infty$ and by Lemma \ref{fstrtmp1} we have that $x_{[t+t_{n_k}]}\to\lambda(y^*(t))$. Further by definition 
of $r>0$ as in part $(i)$  of this proposition, we get that $\sup_{n\geq1}\|x_n\|\leq r$. Hence $\lambda(y^*(t))\cap B_r\neq\emptyset$ and 
$x_{[t+t_{n_k}]}\to\lambda(y^*(t))\cap B_r$ as $k\to\infty$. For every $\epsilon>0$, clearly $\left(\lambda(y^*(t))+B_{\epsilon}\right)\cap B_r$ 
is compact and there exists $K$ large such that for every $k\geq K$, $x_{[t+t_{n_k}]}\in \left(\lambda(y^*(t))+B_{\epsilon}\right)\cap B_r$ and 
hence $\delta_{x_{[t+t_{n_k}]}}(\left(\lambda(y^*(t))+B_{\epsilon}\right)\cap B_r)=1$ for every $k\geq K$. 
Since $\frac{1}{c_p}\sum_{k=1}^{c_p}\mu^{(l)}_{B_r}(t+t_{n_k})\to\tilde{\mu}^{(l)}_{B_r}(t)$ in $\mathcal{P}(B_r)$ as $p\to\infty$, by 
\cite[Thm.~2.1.1(iv)]{borkarap} we have that, for every $\epsilon>0$,
\begin{align*}
\limsup_{p\to\infty}\frac{1}{c_p}\sum_{k=1}^{c_p}\mu^{(l)}_{B_r}(t+t_{n_k})(\left(\lambda(y^*(t))+B_{\epsilon}\right)\cap B_r)&=
\limsup_{p\to\infty}\frac{1}{c_p}\sum_{k=1}^{c_p}\delta_{x_{[t+t_{n_k}]}}(\left(\lambda(y^*(t))+B_{\epsilon}\right)\cap B_r)\\&=
1\\&\leq\tilde{\mu}^{(l)}_{B_r}(t)(\left(\lambda(y^*(t))+B_{\epsilon}\right)\cap B_r)\leq1.
\end{align*}
Therefore for every $\epsilon>0$, $\tilde{\mu}^{(l)}_{B_r}(t)(\left(\lambda(y^*(t))+B_{\epsilon}\right)\cap B_r)=1$ which gives us that 
$\tilde{\mu}^{(l)}_{B_r}(t)(\lambda(y^*(t))\cap B_r)=1$ and hence $\mathrm{supp}(\tilde{\mu}^{(l)}_{B_r}(t))\subseteq\lambda(y^*(t))$. 
Since \eqref{tmp3} holds for almost every $t\geq0$, we have that for almost every $t\geq0$, $\mathrm{supp}(\tilde{\mu}^{(l)}_{B_r}(t))\subseteq
\lambda(y^*(t))$.

The proof of the claimin (2) above is similar to the proof of \cite[Prop.~6.5(ii)]{vin1t} and we provide a brief outline for the sake of completeness. Let 
$\{f_i\}_{i\geq1}\subseteq\mathcal{C}(\mathcal{S}^{(2)},\mathbb{R})$ be a convergence determining class for $\mathcal{P}(\mathcal{S}^{(2)})$. By an 
appropriate affine transformation we can ensure that for every $i\geq1$, for every $s^{(2)}\in\mathcal{S}^{(2)}$, $0\leq f_i(s^{(2)})\leq1$. Define,
\begin{equation*}
 \zeta^i_n:=\sum_{k=0}^{n-1}b(k)\left(f_i(S^2_{k+1})-\int_{\mathcal{S}^{(2)}}f_i(\tilde{s}^{(2)})\Pi^{(2)}(X_k,Y_k,S^2_k)(d\tilde{s}^{(2)})\right),
\end{equation*}
for every $n\geq1$, for every $i\geq1$. For every $i\geq1$, $\left\{\zeta^i_n\right\}_{n\geq1}$ is a square integrable martingale w.r.t. the 
filtration $\left\{\mathscr{F}_{n}:=\sigma\left(X_k,Y_k,S^2_k:0\leq k\leq n\right)\right\}_{n\geq1}$ and further $\sum_{n=0}^{\infty}
\mathbb{E}\left[(\zeta^i_{n+1}-\zeta_n^i)^2|\mathscr{F}_n\right]\leq2\sum_{n=1}^{\infty}(b(n))^2<\infty$. By Martingale convergence theorem 
(see \cite[Appendix~C, Thm.~11]{borkartxt}) we get that for almost every $\omega$, for every $i\geq1$, $\left\{\zeta^i_n\right\}_{n\geq1}$ 
converges. Let $\Omega_m:=\left\{\omega\in\Omega:\ \forall i\geq1,\ \{\zeta^i_n\}\ \text{converges}\right\}$. Define, $\Omega^*:=
\Omega_m\cap\Omega_2$ and from the arguments above we get that $\mathbb{P}(\Omega^*)=1$. Therefore for every $\omega\in\Omega^*$, for every 
$i\geq1$ for every $T>0$,
\begin{equation*}
 \sum_{k=n}^{\tau^2(n,T)}b(k)\left(f_i(s^{(2)}_{k+1})-\int_{\mathcal{S}^{(2)}}f_i(\tilde{s}^{(2)})\Pi^{(2)}(x_k,y_k,s^{(2)}_k)(d\tilde{s}^{(2)})\right)\to0,
\end{equation*}
as $n\to\infty$. By our choice of $\left\{f_i\right\}_{i\geq1}$, the fact that the step size sequence $\left\{b(n)\right\}_{n\geq0}$ is non- 
increasing and the definition of $\mu^{(l)}$, we get that for every $\omega\in\Omega^*$, for every $i\geq1$, for every $T>0$,
\begin{equation*}
 \lim_{t\to\infty}\int_{0}^{T}\int_{B_r\times\mathcal{S}^{(2)}}\left[f_i(s^{(2)})-\int_{\mathcal{S}^{(2)}}f_i(\tilde{s}^{(2)})\Pi^{(2)}(x,y_{[q+t]},s^{(2)})
 (d\tilde{s}^{(2)})\right]\mu^{(l)}(q+t)(dx,ds^{(2)})dq=0,
\end{equation*}
where $[q+t]:=\max\left\{n\geq0:t\geq t^s(n)\right\}$. By assumption $(A4)$, we have that for every $i\geq1$, the function $(x,y,s^{(2)})\rightarrow 
f_i(s^{(2)})-\int_{\mathcal{S}^{(2)}}f_i(\tilde{s}^{(2)})\Pi^{(2)}(x,y,s^{(2)})(d\tilde{s}^{(2)})$ is continuous and hence the restriction of the above function to 
the compact set $B_r\times rU\times \mathcal{S}^{(2)}$ is uniformly continuous where $r>0$ is as in part $(i)$ of this proposition. Using the 
uniform continuity above and the fact that $\lim_{t\to\infty}\sup_{0\leq q\leq T}\left\|\bar{y}(q+t)-y_{[q+t]}\right\|=0$ (which follows 
from definition of $\bar{y}(\cdot)$ and uniform continuity of $\bar{y}(\cdot)$) we get that for every $\omega\in\Omega^*$, for every $i\geq1$, 
for every $T>0$,
\begin{equation}\label{tmp4}
 \lim_{t\to\infty}\int_{0}^{T}\int_{B_r\times\mathcal{S}^{(2)}}\left[f_i(s^{(2)})-\int_{\mathcal{S}^{(2)}}f_i(\tilde{s}^{(2)})\Pi^{(2)}(x,\bar{y}(q+t),s^{(2)})
 (d\tilde{s}^{(2)})\right]\mu^{(l)}(q+t)(dx,ds^{(2)})dq=0.
\end{equation}

We know that as $n\to\infty$, $\bar{y}(\cdot+t_n)\to y^*(\cdot)$ in $\mathcal{C}([0,\infty),\mathbb{R}^{d_2})$ and $\mu^{(l)}(\cdot+t_n)\to
\tilde{\mu}^{(l)}(\cdot)$ in $\mathcal{M}(B_r\times\mathcal{S}^{(2)})$. Further by arguments similar to Lemma \ref{ecr}, the family of functions 
\begin{equation*}
\left\{y\in rU\rightarrow \int_{B_r\times\mathcal{S}^{(2)}}\left[f_i(s^{(2)})-\int_{\mathcal{S}^{(2)}}f_i(\tilde{s}^{(2)})\Pi^{(2)}(x,y,s^{(2)})(d\tilde{s}^{(2)})\right]
\nu(dx,ds^{(2)}):\nu\in\mathcal{P}(B_r\times\mathcal{S}^{(2)})\right\} 
\end{equation*}
is equicontinuous. Therefore, for every $\omega\in\Omega^*$, for every $T>0$,
\begin{small}
\begin{equation*}
 \lim_{t\to\infty}\left\|\int_0^T\!\!\int_{B_r\times\mathcal{S}^{(2)}}\!\!\!\left[\int_{\mathcal{S}^{(2)}}\!\!\!f_i(\tilde{s}^{(2)})\Pi^{(2)}(x,y^*(q),s^{(2)})(d\tilde{s}^{(2)})\!-\!\!
 \int_{\mathcal{S}^{(2)}}\!\!\!f_i(\tilde{s}^{(2)})\Pi^{(2)}(x,\bar{y}(q+t),s^{(2)})(d\tilde{s}^{(2)})\right]\mu^{(l)}(q+t)(dx,ds^{(2)})dq\right\|=0,
\end{equation*}
\begin{align*}
\lim_{t\to\infty}\bigg\|\int_{0}^{T}\int_{B_r\times\mathcal{S}^{(2)}}\bigg[f_i(s^{(2)})-\int_{\mathcal{S}^{(2)}}f_i(\tilde{s}^{(2)})\Pi^{(2)}(x,y^*(q),s^{(2)})
 (d\tilde{s}^{(2)})\bigg]&\mu^{(l)}(q+t)(dx,ds^{(2)})dq\\-\int_{0}^{T}\int_{B_r\times\mathcal{S}^{(2)}}\bigg[f_i(s^{(2)})-\int_{\mathcal{S}^{(2)}}f_i(\tilde{s}^{(2)})
 \Pi^{(2)}(x,y^*(q),s^{(2)})&(d\tilde{s}^{(2)})\bigg]\tilde{\mu}^{(l)}(q)(dx,ds^{(2)})dq\bigg\|=0.
\end{align*}
\end{small}
Using the above and equation \eqref{tmp4}, we get that for every $\omega\in\Omega^*$, for every $T>0$,
\begin{equation*}
 \int_{0}^{T}\int_{B_r\times\mathcal{S}^{(2)}}\left[f_i(s^{(2)})-\int_{\mathcal{S}^{(2)}}f_i(\tilde{s}^{(2)})\Pi^{(2)}(x,y^*(q),s^{(2)})
 (d\tilde{s}^{(2)})\right]\tilde{\mu}^{(l)}(q)(dx,ds^{(2)})dq=0.
\end{equation*}
By applying Lebesgue's differentiation theorem (see \cite[Ch.~11.1.3]{borkartxt}), we get that for every $\omega\in\Omega^*$, for every $i\geq1$, 
for almost every $t\geq0$, 
\begin{equation*}
 \int_{B_r\times\mathcal{S}^{(2)}}\left[f_i(s^{(2)})-\int_{\mathcal{S}^{(2)}}f_i(\tilde{s}^{(2)})\Pi^{(2)}(x,y^*(t),s^{(2)})
 (d\tilde{s}^{(2)})\right]\tilde{\mu}^{(l)}(t)(dx,ds^{(2)})=0.
\end{equation*}
Since for every $t\geq0$, $\tilde{\mu}^{(l)}(t)\in\mathcal{P}(B_r\times\mathcal{S}^{(2)})$ it is also an element of $\mathcal{P}(\mathbb{R}^{d_1}\times\
\mathcal{S}^{(2)})$ with $\mathrm{supp}(\tilde{\mu}^{(l)}(t))\subseteq B_r\times\mathcal{S}^{(2)}$. Therefore, for every $\omega\in\Omega^*$, for every 
$i\geq1$, for almost every $t\geq0$,
\begin{equation*}
 \int_{\mathbb{R}^{d_1}\times\mathcal{S}^{(2)}}\left[f_i(s^{(2)})-\int_{\mathcal{S}^{(2)}}f_i(\tilde{s}^{(2)})\Pi^{(2)}(x,y^*(t),s^{(2)})
 (d\tilde{s}^{(2)})\right]\tilde{\mu}^{(l)}(t)(dx,ds^{(2)})=0.
\end{equation*}
Since $\left\{f_i\right\}_{i\geq1}$ is a convergence determining class for $\mathcal{P}(\mathcal{S}^{(2)})$, from the above it follows that 
for every $\omega\in\Omega^*$, for almost every $t\geq0$,
\begin{equation*}
 \tilde{\mu}^{(l)}_{\mathcal{S}^{(2)}}(t)(d\tilde{s}^{(2)})=\int_{\mathbb{R}^{d_1}\times\mathcal{S}^{(2)}}\Pi^{(2)}(x,y^*(t),s^{(2)})(d\tilde{s}^{(2)})\tilde{\mu}^{(l)}(t)
 (dx,ds^{(2)}).
\end{equation*}
\item [(iii)] Fix $l\geq1$. Then by part $(i)$ of this proposition, $y^*(\cdot)$ is clearly absolutely continuous and for almost every $t\geq0$,
\begin{equation*}
 \frac{dy^*(t)}{dt}=\int_{U\times\mathbb{R}^{d_1}\times\mathcal{S}^{(2)}}h^{(l)}_2(x,y^*(t),s^{(2)},u)\tilde{\gamma}^{(l)}(t)(du,dx,ds^{(2)}).
\end{equation*}
By part $(ii)$ of this lemma we know that for almost every $t\geq0$, $\Theta_1(\tilde{\gamma}^{(l)})(t)=
\tilde{\gamma}^{(l)}_{B_r\times\mathcal{S}^{(2)}}(t)=\tilde{\gamma}^{(l)}_{\mathbb{R}^{d_1}\times\mathcal{S}^{(2)}}(t)\in D(y^*(t))$. Hence, by Lemma 
\ref{chint}$(ii)$, for almost every $t\geq0$,
\begin{align*}
 \frac{dy^*(t)}{dt}&=\int_{U\times\mathbb{R}^{d_1}\times\mathcal{S}^{(2)}}h^{(l)}_2(x,y^*(t),s^{(2)},u)\tilde{\gamma}^{(l)}(t)(du,dx,ds^{(2)})\\
                   &\in\cup_{\mu\in D(y^*(t))}\int_{\mathbb{R}^{d_1}\times\mathcal{S}^{(2)}}H^{(l)}_{2,y^*(t)}(x,s^{(2)})\mu(dx,ds^{(2)})\\
                   &=\hat{H}^{(l)}_2(y^*(t)).
\end{align*}
Since the above holds for every $l\geq1$, we get that for almost every $t\geq0$,
\begin{equation*}
 \frac{dy^*(t)}{dt}\in\cap_{l\geq1}\hat{H}^{(l)}_2(y^*(t))=\hat{H}_2(y^*(t)),
\end{equation*}
where the last equality follows from Lemma \ref{rel}$(vi)$.\qed
\end{itemize} 
\end{proof}

A continuous function $\bf{y}\rm:\mathbb{R}\rightarrow\mathbb{R}^{d_2}$ is said to be an asymptotic pseudotrajectory for the flow of DI 
\eqref{stsrdi} if $\lim_{t\to\infty}\bf{D}\rm(\bf{y}\rm(\cdot+t),\Sigma^2)=0$ where $\Sigma^2\subseteq \mathcal{C}(\mathbb{R},\mathbb{R}^{d_2})$ 
denotes the set of solutions of DI \eqref{stsrdi}. Fix $\omega\in\Omega^*$. Extend $\bar{y}(\cdot)$ to the whole of $\mathbb{R}$ by defining 
$\bar{y}(t)=\bar{y}(0)$ for every $t<0$. Then by assumption $(A8)$ and uniform continuity of $\bar{y}(\cdot)$ we have that the family of functions 
$\left\{\bar{y}(\cdot+t)\right\}_{t\geq0}$ is relatively compact in $\mathcal{C}(\mathbb{R},\mathbb{R}^{d_2})$. Let 
$y^*(\cdot)$ be a limit point of the above family of functions. Then by Proposition \ref{main1}$(iii)$, $y^*(\cdot)|_{[0,\infty)}$ is a solution of DI \eqref{stsrdi} on 
$[0,\infty)$. Usually the negative time argument is omitted since it follows from the positive time argument as follows:

Fix $T>0$. Let $t_n\to\infty$ be such that $\bar{y}(\cdot+t_n)\to y^*(\cdot)$ in $\mathcal{C}(\mathbb{R},\mathbb{R}^{d_2})$. Then 
$\bar{y}(\cdot+t_n-T)\to y^*(\cdot-T)$. By Proposition \ref{main1}$(iii)$, $y^*(\cdot-T)|_{[0,\infty)}$ is a solution of DI \eqref{stsrdi} on 
$[0,\infty)$. Therefore $y^*(\cdot)|_{[-T,0]}$ is absolutely continuous and for $a.e.\ t\in[-T,0]$, $\frac{dy^*(t)}{dt}\in \hat{H}_2(y^*(t))$. 
Since $T$ was arbitrary we have that the $y^*(\cdot)|_{(-\infty,0]}$ is solution of DI \eqref{stsrdi} on $(-\infty,0]$. Therefore $y^*(\cdot)\in
\Sigma^2$ and by \cite[Thm.~4.1]{benaim1} we get the following result.

\begin{theorem}\emph{[APT]}\label{aptstsr}
 Under assumptions $(A1)-(A9)$ and $(A11)$, for almost every $\omega$, the linearly interpolated trajectory of the slower timescale recursion 
 \eqref{stsr}, $\bar{y}(\cdot)$, is an asymptotic pseudotrajectory of DI \eqref{stsrdi}.
\end{theorem}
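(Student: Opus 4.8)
The plan is to deduce Theorem~\ref{aptstsr} directly from the machinery already assembled in Section~\ref{stsranalapt}, essentially by assembling Proposition~\ref{main1}, the two limit lemmas, and the standard characterisation of asymptotic pseudotrajectories via limit points. Recall that by definition $\bar{y}(\cdot)$ is an asymptotic pseudotrajectory (APT) for the flow of DI~\eqref{stsrdi} if $\lim_{t\to\infty}\bm{\mathrm{D}}(\bar{y}(\cdot+t),\Sigma^2)=0$, where $\Sigma^2\subseteq\mathcal{C}(\mathbb{R},\mathbb{R}^{d_2})$ is the set of solutions of DI~\eqref{stsrdi} and $\bm{\mathrm{D}}$ is the metric of uniform convergence on compacts. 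The key observation, available from \cite[Thm.~4.1]{benaim1} (or equivalently \cite[Thm.~4.2]{benaim1}), is that a precompact trajectory is an APT if and only if every limit point of $\{\bar{y}(\cdot+t)\}_{t\geq0}$ lies in $\Sigma^2$; so it suffices to establish precompactness and the limit-point characterisation.

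\textbf{Step 1: Precompactness and uniform continuity.} Fix $\omega$ in a full-measure set on which Lemmas~\ref{lim1} and~\ref{lim2} and Proposition~\ref{main1} hold. From Lemmas~\ref{lim1} and~\ref{lim2}, the family $\{\bar{y}(\cdot+t)\}_{t\geq0}$ is relatively compact in $\mathcal{C}([0,\infty),\mathbb{R}^{d_2})$ and $\bar{y}(\cdot)$ is uniformly continuous on $[0,\infty)$; extending $\bar{y}$ to all of $\mathbb{R}$ by $\bar{y}(t):=\bar{y}(0)$ for $t<0$, assumption $(A8)$ (boundedness of the iterates) together with this uniform continuity gives relative compactness of $\{\bar{y}(\cdot+t)\}_{t\geq0}$ in $\mathcal{C}(\mathbb{R},\mathbb{R}^{d_2})$ as well. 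This is precisely the precompactness hypothesis needed to invoke the APT criterion.

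\textbf{Step 2: Every limit point solves DI~\eqref{stsrdi} on $\mathbb{R}$.} Let $y^*(\cdot)$ be any limit point of $\{\bar{y}(\cdot+t)\}_{t\geq0}$ in $\mathcal{C}(\mathbb{R},\mathbb{R}^{d_2})$, say $\bar{y}(\cdot+t_n)\to y^*(\cdot)$ with $t_n\to\infty$. Restricting attention to $[0,\infty)$, Proposition~\ref{main1}$(iii)$ shows $y^*(\cdot)|_{[0,\infty)}$ is absolutely continuous with $\frac{dy^*(t)}{dt}\in\hat{H}_2(y^*(t))$ for a.e.\ $t\geq0$, i.e.\ it is a solution of DI~\eqref{stsrdi} on $[0,\infty)$. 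For the negative-time part, fix $T>0$ and note $\bar{y}(\cdot+(t_n-T))\to y^*(\cdot-T)$ in $\mathcal{C}(\mathbb{R},\mathbb{R}^{d_2})$; since $t_n-T\to\infty$, applying Proposition~\ref{main1}$(iii)$ to this shifted sequence shows $y^*(\cdot-T)|_{[0,\infty)}$ is a solution of DI~\eqref{stsrdi}, which after a change of variable means $y^*(\cdot)|_{[-T,0]}$ is absolutely continuous with $\frac{dy^*(t)}{dt}\in\hat{H}_2(y^*(t))$ for a.e.\ $t\in[-T,0]$. As $T>0$ was arbitrary, $y^*(\cdot)|_{(-\infty,0]}$ is a solution on $(-\infty,0]$; combined with Step~2's positive-time conclusion and noting absolute continuity is inherited at $t=0$, we get $y^*(\cdot)\in\Sigma^2$.

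\textbf{Step 3: Conclude.} By Steps~1 and~2, $\{\bar{y}(\cdot+t)\}_{t\geq0}$ is a relatively compact family all of whose limit points lie in $\Sigma^2$; since $\Sigma^2$ is closed in $\mathcal{C}(\mathbb{R},\mathbb{R}^{d_2})$ (a consequence of \cite[Lemma~3.1]{benaim1} applied to the Marchaud map $\hat{H}_2$, which is Marchaud by Lemma~\ref{march2}), the standard compactness argument forces $\lim_{t\to\infty}\bm{\mathrm{D}}(\bar{y}(\cdot+t),\Sigma^2)=0$, which is the definition of an APT; invoking \cite[Thm.~4.1]{benaim1} makes this rigorous. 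Since all the invoked statements hold for almost every $\omega$, the conclusion holds almost surely. \textbf{The main obstacle} is essentially already handled inside Proposition~\ref{main1}: all the genuinely hard work — extracting measure-valued limits via the compactness of $\mathcal{M}(U\times B_r\times\mathcal{S}^{(2)})$, passing limits through the parametrised integrals using the Stone--Weierstrass/equicontinuity arguments, the martingale argument showing $\Theta_1(\tilde\gamma^{(l)})(t)\in D(y^*(t))$, and the use of Lemma~\ref{fstrtmp1} to pin the support of the marginal onto $\lambda(y^*(t))$ — lies there. Given that proposition, the present theorem is a short and routine packaging step; the only mild subtlety is the standard negative-time bookkeeping, which the excerpt itself already spells out before the theorem statement.
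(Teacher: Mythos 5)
Your proposal is correct and follows essentially the same route as the paper: extend $\bar{y}$ to negative times, use Lemmas \ref{lim1}--\ref{lim2} and $(A8)$ for relative compactness in $\mathcal{C}(\mathbb{R},\mathbb{R}^{d_2})$, apply Proposition \ref{main1}$(iii)$ to limit points on $[0,\infty)$ and to $T$-shifted sequences for the negative-time part, and conclude via \cite[Thm.~4.1]{benaim1}. No gaps to report.
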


\subsubsection{Characterization of limit sets}
\label{stsranalcls}
As a consequence of Theorem \ref{aptstsr} for almost every $\omega$, the limit sets of the slower timescale recursion, $L(\bar{y})$, defined as,
\begin{equation}\label{limsetstsr}
 L(\bar{y}):=\cap_{t\geq0}\overline{\left\{\bar{y}(q+t): q\geq0\right\}},
\end{equation}
can be characterized in terms of the dynamics of DI \eqref{stsrdi}. Further using Lemma \ref{fstrtmp1} we get the main result of this paper 
stated below.

\begin{theorem}\emph{[Limit set]}\label{lsthm}
 Under assumptions $(A1)-(A9)$ and $(A11)$, for almost every $\omega$,
 \begin{itemize}
  \item [(i)] $L(\bar{y})$ is a non-empty, compact subset of $\mathbb{R}^{d_2}$ and is internally chain transitive for the flow of DI 
  \eqref{stsrdi},
  \item [(ii)] if assumption $(A10)$ is satisfied then $L(\bar{y})\subseteq \mathcal{Y}$ and as $n\to\infty$,
  \begin{equation*}
   \left(\begin{array}{c}x_n\\y_n\end{array}\right)\to \cup_{y\in\mathcal{Y}}\left(\lambda(y)\times\left\{y\right\}\right).
  \end{equation*}
 \end{itemize}
\end{theorem}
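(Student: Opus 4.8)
The plan is to assemble the result from three ingredients already established: the asymptotic pseudotrajectory property of the slower timescale (Theorem~\ref{aptstsr}), the general theory of limit sets of APTs from \cite{benaim1}, and the convergence of the faster timescale iterates to $\lambda(\cdot)$ (Lemma~\ref{fstrtmp1}). First I would fix an $\omega$ in the almost-sure event on which both Theorem~\ref{aptstsr} and Lemma~\ref{fstrtmp1} hold (this event has probability $1$ since it is a finite intersection of such events). For part~(i): by Theorem~\ref{aptstsr}, $\bar{y}(\cdot)$ is an asymptotic pseudotrajectory of DI~\eqref{stsrdi}, whose vector field $\hat{H}_2$ is a Marchaud map by Lemma~\ref{march2}. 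By assumption~$(A8)$ the trajectory $\bar{y}(\cdot)$ is bounded, hence the limit set $L(\bar{y})$ defined in \eqref{limsetstsr} is non-empty and compact. Then I would invoke \cite[Thm.~4.3]{benaim1} (the analogue for differential inclusions of the Bena\"im--Hirsch theorem), which states precisely that the limit set of a bounded asymptotic pseudotrajectory is internally chain transitive for the flow of the DI; this gives the claim of part~(i).

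For part~(ii), suppose additionally that $(A10)$ holds, so DI~\eqref{stsrdi} has a globally attracting set $\mathcal{Y}$. Since $\mathcal{Y}$ is a global attractor and $L(\bar{y})$ is internally chain transitive, I would apply \cite[Cor.~3.24]{benaim1} (or the equivalent statement that an internally chain transitive set contained in the basin of attraction of an attractor must lie inside that attractor; here the basin is all of $\mathbb{R}^{d_2}$). This yields $L(\bar{y}) \subseteq \mathcal{Y}$. From $L(\bar{y})\subseteq\mathcal{Y}$ and the definition of $L(\bar{y})$ I get that $\bar{y}(t)\to\mathcal{Y}$ as $t\to\infty$, and since $\bar{y}(\cdot)$ is the linear interpolation of $\{y_n\}$, also $y_n\to\mathcal{Y}$; more precisely every limit point of $\{y_n\}$ lies in $\mathcal{Y}$, i.e. $d(y_n,\mathcal{Y})\to 0$.

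To conclude, I would combine this with Lemma~\ref{fstrtmp1}, which tells us that (on the same full-measure event) $(x_n,y_n)\to L$ where $L\subseteq\{(x,y):x\in\lambda(y)\}$. The key remaining point is to show that $\{(x_n,y_n)\}$ converges to $\cup_{y\in\mathcal{Y}}(\lambda(y)\times\{y\})$, i.e. that $d\big((x_n,y_n),\,\cup_{y\in\mathcal{Y}}(\lambda(y)\times\{y\})\big)\to 0$. Argue by contradiction: if not, there is a subsequence along which $(x_{n_k},y_{n_k})$ stays at distance at least $\varepsilon$ from that set. By $(A8)$ this subsequence is bounded, so passing to a further subsequence $(x_{n_k},y_{n_k})\to(x^*,y^*)$. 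By Lemma~\ref{fstrtmp1}, $x^*\in\lambda(y^*)$; by part~(ii) of the present theorem (the convergence $y_n\to\mathcal{Y}$ just proved), $y^*\in\mathcal{Y}$. Hence $(x^*,y^*)\in\lambda(y^*)\times\{y^*\}\subseteq\cup_{y\in\mathcal{Y}}(\lambda(y)\times\{y\})$, contradicting the distance bound. This gives the stated convergence.

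The main obstacle I anticipate is not conceptual but bookkeeping: one must be careful that the full-measure events from Theorem~\ref{aptstsr} and Lemma~\ref{fstrtmp1} are taken simultaneously (they are, being defined off the same $\Omega_1,\Omega_2,\Omega^*$), and that the upper semicontinuity of $\lambda(\cdot)$ in $(A9)(ii)$ is genuinely what makes the compactness/limit-point argument in the last step go through (it is exactly the closed-graph property invoked there). A secondary point to verify cleanly is that $\cup_{y\in\mathcal{Y}}(\lambda(y)\times\{y\})$ is itself compact (so that "distance to it" behaves well), which follows from $\mathcal{Y}$ compact, the linear growth bound $(A9)(i)$, and the closed graph of $\lambda(\cdot)$.
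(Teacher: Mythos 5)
Your proposal is correct and follows essentially the same route as the paper: part (i) via Theorem \ref{aptstsr} together with \cite[Thm.~4.3]{benaim1}, and part (ii) via \cite[Cor.~3.24]{benaim1} (using global attraction from $(A10)$) combined with Lemma \ref{fstrtmp1}. Your additional subsequence/closed-graph argument in the last step merely fills in explicitly what the paper leaves as a terse concluding sentence, and your bookkeeping of the almost-sure events matches the paper's use of $\Omega^*$.
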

\begin{proof}
 Fix $\omega\in\Omega^*$.
 \begin{itemize}
  \item [(i)] By Theorem \ref{aptstsr} we know that $\bar{y}(\cdot)$ is an asymptotic pseudotrajectory for the flow of DI \eqref{stsrdi}. Now the 
  claim follows from \cite[Thm.~4.3]{benaim1}.
  \item [(ii)] By part $(i)$ of this theorem we know that $L(\bar{y})$ is internally chain transitive for the flow of DI\eqref{stsrdi}. Since 
  $\mathcal{Y}$ is a globally attracting set for DI\eqref{stsrdi}, by \cite[Cor.~3.24]{benaim1} we get that $L(\bar{y})\subseteq\mathcal{Y}$. 
  Therefore $y_n\to\mathcal{Y}$ as $n\to\infty$ and by Lemma \ref{fstrtmp1}, we get that, as $n\to\infty$,
  \begin{equation*}
   \left(\begin{array}{c}x_n\\y_n\end{array}\right)\to \cup_{y\in\mathcal{Y}}\left(\lambda(y)\times\left\{y\right\}\right).\qed
  \end{equation*}
 \end{itemize}

\end{proof}

\section{Application: Constrained convex optimization}
\label{appl}

In this section we consideran application of the theory to a problem of constrained convex optimization. Throughout this section we 
assume that $\mathcal{S}^{(1)}=\mathcal{S}^{(2)}=\mathcal{S}$ and $|\mathcal{S}|<\infty$.

Let the objective function $J:\mathbb{R}^{d_1}\times\mathcal{S}\rightarrow\mathbb{R}$ be such that  $J(\cdot)$ is continuous and for every 
$s\in\mathcal{S}$, $J(\cdot,s)$ is convex and coercive (that is for any $M>0$, there exists $r>0$, such that for any $x\in\mathbb{R}^{d_1}$ with 
$\|x\|\geq r$, we have that $J(x,s)\geq M$). The functions describing the constraints are given by $C:\mathcal{S}\rightarrow\mathbb{R}^{d_2\times 
d_1}$ and $w:\mathcal{S}\rightarrow\mathbb{R}^{d_2}$. We assume that for any $s\in\mathcal{S}$, the set $\mathcal{X}(s):=\left\{x\in\mathbb{R}
^{d_1}:C(s)x=d(s)\right\}$ is non empty. The law of the Markov noise terms is given by $\Pi:\mathbb{R}^{d_1}\times\mathcal{S}\rightarrow
\mathcal{P}(\mathcal{S})$ such that $\Pi$ is continuous and let $\mu\in\mathcal{P}(\mathcal{S})$ denote the unique stationary distribution of 
the Markov chain given by the transition kernel $\Pi(x,\cdot)(\cdot)$, for every $x\in\mathbb{R}^{d_1}$. 

Let $\partial J(x,s)$ denote the set of subgradients of the convex function $J(\cdot,s)$ at the point $x\in\mathbb{R}^{d_1}$. Formally,
\begin{equation*}
 \partial J(x,s):=\left\{g\in\mathbb{R}^{d_1}:\ \forall x'\in\mathbb{R}^{d_1},\ J(x',s)\geq J(x,s)+\left\langle g,x'-x\right\rangle\right\}.
\end{equation*}
Then it is easy to show that for every $(x,s)\in\mathbb{R}^{d_1}\times\mathcal{S}$, $\partial J(x,s)$ is convex and compact. Further the map 
$(x,s)\rightarrow \partial J(x,s)$ possesses the closed graph property. We assume that the map $(x,s)\rightarrow\partial J(x,s)$ satisfies the linear 
growth property, that is, there exists $K>0$ such that $\sup_{x'\in\partial J(x,s)}\|x'\|\leq K(1+\|x\|)$.

Let $J_{\mu}:\mathbb{R}^{d_1}\rightarrow\mathbb{R}$ be defined such that for every $x\in\mathbb{R}^{d_1}$, $J_{\mu}(x):=\int_{\mathcal{S}}J(x,s)
\mu(ds)$. Similarly define $C_{\mu}:=\int_{\mathcal{S}}C(s)\mu(ds)\in\mathbb{R}^{d_2\times d_1}$ and $w_{\mu}:=\int_{\mathcal{S}}w(s)\mu(ds)\in
\mathbb{R}^{d_2}$. The optimization problem that we wish to solve is given by,
\begin{align*}
 OP(\mu):\ \  &\min_{x\in\mathbb{R}^{d_1}}J_{\mu}(x),\ \mathrm{subject\ to:}\\
              &C_{\mu}x=w_{\mu}.
\end{align*}
 
The standard approach in solving the optimization problem $OP(\mu)$ is the projected subgradient descent algorithm whose recursion is given by,
 \begin{equation*}
  X_{n+1}=P_{\mu}\left(X_n-a(n)(g_n+M_{n+1})\right),
 \end{equation*}
where $g_n\in\partial J_{\mu}(X_n)$, $M_{n+1}$ is the subgradient estimation error and $P_{\mu}$ denotes the projection operation onto the 
affine subspace $\mathcal{X}_{\mu}:=\left\{x\in\mathbb{R}^{d_1}:C_{\mu}x=w_{\mu}\right\}$. Such a scheme cannot be implemented when $\mu$ is 
not known. Such is the case in problems arising in optimal control. 

The feasible set of the optimization problem $OP(\mu)$, given by $\mathcal{X}_{\mu}$ is non empty since for every $s\in\mathcal{S}$, 
$\mathcal{X}(s)$ is non-empty. Further, since for every $s\in\mathcal{S}$, $J(\cdot,s)$ is 
coercive, the function $J_{\mu}(\cdot)$ is coercive and hence bounded below. Therefore the optimization problem $OP(\mu)$ has at least one 
solution.  Let the solution set of the optimization problem $OP(\mu)$, be denoted by $Z$.

For any $r>0$, let $B_r$ denote the closed ball of radius $r$ in $\mathbb{R}^{d_1}$ centered at the origin. For every $s\in\mathcal{S}$, pick $x_s\in\mathcal{X}(s)$, 
and compute $M_1:=\max\{J(x_s,s'):s,s'\in\mathcal{S}\}$. Then $x_{\mu}:=\sum_{s\in\mathcal{S}}\mu(s)x_s\in\mathcal{X}_{\mu}$ and 
$J_{\mu}(x_{\mu})\leq M_1$. Since $|\mathcal{S}|<\infty$ and the functions $J(\cdot,s)$ are coercive, for some $M>\max\{0,M_1\}$, there exists 
$r>\max\{\|x_s\|:s\in\mathcal{S}\}$, such that for every $s\in\mathcal{S}$, for every $x\in B_r^c$, $J(x,s)\geq M$ and for every $s\in\mathcal{S}$, 
$B_r\cap\mathcal{X}(s)\neq\emptyset$. Then $Z\subseteq B_r$. Instead of $OP(\mu)$ we shall solve the following penalized/regularized optimization 
problem given by, 
\begin{align*}
 \tilde{OP}(\mu):\ \  &\min_{x\in\mathbb{R}^{d_1}}J_{\mu}(x)+\frac{\epsilon}{2r^2}\left\|x\right\|^2+\frac{K+1}{2}
                      \max\left\{\left\|x\right\|^2-r^2,0\right\},\\
                      &\mathrm{subject\ to:}\ C_{\mu}x=w_{\mu},
\end{align*}
where, $r>0$ is as determined above, $K$ is the constant associated with the linear growth property of the subgradient map $\partial J(\cdot,s)$ and 
$\epsilon>0$ is an arbitrary constant small in value. Then it is easy to show that $\tilde{OP}(\mu)$ has at least one solution and 
the set of solutions of $\tilde{OP}(\mu)$, denoted by $\tilde{Z}$ is such that $\tilde{Z}\subseteq B_r$. Further for any $\tilde{x}\in\tilde{Z}$,
for any $x^*\in Z$, $J_{\mu}(\tilde{x})-J_{\mu}(x^*)\leq \epsilon$.

Consider the Lagrangian $L:\mathbb{R}^{d}\rightarrow\mathbb{R}$ associated with optimization problem $\tilde{OP}(\mu)$ defined such that for 
every $(x,y)\in\mathbb{R}^{d}$,
\begin{equation*}
 L(x,y):=J_{\mu}(x)+\frac{\epsilon}{2r^2}\left\|x\right\|^2+\frac{K+1}{2}\max\left\{\left\|x\right\|^2-r^2,0\right\}+\left\langle y, 
 C_{\mu}x-w_{\mu}\right\rangle.
\end{equation*}
Let $\hat{J}:\mathbb{R}^{d_1}\times\mathcal{S}\rightarrow\mathbb{R}$, be defined such that for every $(x,s)\in\mathbb{R}^{d_1}\times\mathcal{S}$, 
$\hat{J}(x,s):=J(x,s)+\frac{\epsilon}{2r^2}\left\|x\right\|^2+\frac{K+1}{2}\max\left\{\left\|x\right\|^2-r^2,0\right\}$. Then, for every $(x,s)
\in\mathbb{R}^{d_1}\times\mathcal{S}$,
\begin{equation*}
 \hat{J}_{\mu}(x):=J_{\mu}(x)+\frac{\epsilon}{2r^2}\left\|x\right\|^2+\frac{K+1}{2}\max\left\{\left\|x\right\|^2-r^2,0\right\}
 =\int_{\mathcal{S}}\hat{J}(x,s)\mu(ds).
\end{equation*}

When the transition law $\Pi$ and hence $\mu$ is not known we propose the following recursion which performs primal descent along the faster 
time scale (that is minimization of $L(\cdot,y)$ w.r.t. $x$) and dual ascent on the slower timescale (that is maximization of $L(x,\cdot)$ 
w.r.t. $y$). The recursion is given by,
\begin{subequations}\label{algo}
\begin{align}\label{algo1}
 Y_{n+1}-Y_{n}&=b(n)(C(S_n)X_n-w(S_n)),\\
 \label{algo2}
 X_{n+1}-X_{n}-a(n)M^1_{n+1}&\in -a(n)\left(\partial \hat{J}(X_n,S_n)+C(S_n)^TY_n\right),
\end{align}
 \end{subequations}
 
where the step size sequences $\left\{a(n)\right\}_{n\geq0}$ and $\left\{b(n)\right\}_{n\geq0}$ are chosen such that they satisfy assumption 
$(A5)$ and $\left\{M^1_{n}\right\}_{n\geq1}$ denotes the subgradient estimation error which is assumed to satisfy assumption $(A6)$ ( for 
example, when $\left\{M^1_{n}\right\}_{n\geq1}$ is i.i.d. zero mean with finite variance, assumption $(A6)$ is satisfied. More generally $(A6)$ is 
satisfied if $\left\{M^1_{n}\right\}_{n\geq1}$ is a martingale difference sequence satisfying assumption $(A3)$ in \cite[ch.~2.1]{borkartxt}). 

It is easy to see that the maps $(x,y,s)\rightarrow-\left(\partial \hat{J}(x,s)+C(s)^Ty\right)$ and $(x,y,s)\rightarrow C(s)x-w(s)$ satisfy 
assumptions $(A1)$ and $(A2)$ respectively. The linear growth property of the map $(x,y,s)\rightarrow-\left(\partial \hat{J}(x,s)+C(s)^Ty\right)$, 
follows from the linear growth property of $x\rightarrow\partial J(x,s)$. Further by \cite[Prop.~5.4.6]{bertsekas}, we get that for every 
$(x,y)\in\mathbb{R}^{d}$, $-\int_{\mathcal{S}}\left(\partial \hat{J}(x,s)+C(s)^Ty\right)\mu(ds)=-\left(\partial \hat{J}_{\mu}(s)+C^T_{\mu}y\right
)=-\partial L(x,y)$. 

For every $y\in\mathbb{R}^{d_2}$, let $\lambda(y):=\left\{ x\in\mathbb{R}^{d_1}:-C_{\mu}^Ty\in\partial \hat{J}_{\mu}(x)\right\}$. Then for 
every $y\in\mathbb{R}^{d_2}$, $\lambda(y)$ is non-empty since $L(\cdot,y)$ is convex and coercive. Further $|\lambda(y)|=1$, that is 
$\lambda(y)$ is a singleton since, $L(\cdot,y)$ is strictly convex. For any $y\in\mathbb{R}^{d_2}$, $x'\in\lambda(y)$ if and only 
if there exists $\tilde{g}\in\mathbb{R}^{d_1}$ in the set of subgradients of the function 
$J_{\mu}(\cdot)+\frac{K+1}{2}\max\{\|\cdot\|^2-r^2,0\}$ at $x'$, such that $\tilde{g}+\frac{\epsilon}{r^2}x'+C_{\mu}^Ty=0$. So either 
$\|x'\|\leq r$ or if $\|x'\|>r$, then $\max\{\|x'\|^2-r^2,0\}=\|x'\|^2-r^2$ and hence for some $g\in\partial J_{\mu}(x')$, $\tilde{g}=g+(K+1)x'$, 
from which we get that,
\begin{align*}
 (K+1+\frac{\epsilon}{2r^2})\|x'\|&=\|g+C_{\mu}^Ty\|\\
                                 &\leq K+K\|x'\|+\|C_{\mu}^T\|\|y\|.
\end{align*}
Thus for $K':=\max\{K,r,\|C_{\mu}^T\|\}$, we get that for every $y\in\mathbb{R}^{d_2}$, $\|\lambda(y)\|\leq K'(1+\|y\|)$. The set $\lambda(y)$ 
is clearly globally attracting for the flow of DI $\frac{dx}{dt}\in-\left(\partial \hat{J}_{\mu}(x)+C^T_{\mu}y\right)$ and by 
\cite[Thm.~6]{aubindi} the map $y\in\mathbb{R}^{d_2}\rightarrow\lambda(y)$ is u.s.c. (since $\lambda(\cdot)$ is also single valued, 
it is continuous). Hence the map $\lambda(\cdot)$ satisfies assumption $(A9)$.  

If the iterates are stable for a.e. $\omega$ (that is, $(A8)$ is satisfied), the result in section \ref{fstranal} gives us that for almost every 
$\omega$, there exists a non-empty compact set $A\subseteq \mathbb{R}^{d}$, such that $(X_n(\omega),Y_n(\omega))\to A$ as $n\to\infty$ and $A$ 
is internally chain transitive for the flow of DI,
\begin{equation}\label{appfstrdi}
 \left(\begin{array}{c}\frac{dx}{dt}\\\frac{dy}{dt}\end{array}\right)\in\left(\begin{array}{c}-\partial \hat{J}_{\mu}(x)-C^T_{\mu}y\\0\end{array}\right).
\end{equation}

By arguments in section \ref{fstranal}, we have that $A\cap\mathcal{G}(\lambda)\neq\emptyset$, where $\mathcal{G}(\lambda):=\left\{(\lambda(y),y): 
y\in\mathbb{R}^{d_2}\right\}$. Let $\mathcal{O}\subseteq\mathbb{R}^{d}$ be an open neighborhood of $\mathcal{G}(\lambda)$. Let 
$\mathcal{O}'(\delta):=\{(x,y)\in\mathbb{R}^{d}: L(x,y)-L(\lambda(y),y)<\delta\}$. By \cite[ch.~1.2, Thm.~6]{aubindi},
the map $y\in\mathbb{R}^{d_2}\rightarrow L(\lambda(y),y)$ is continuous and hence $\mathcal{O}'(\delta)$ is an open neighborhood 
of $\mathcal{G}(\lambda)$. Further it is easy to show that $\cap_{\delta>0}\mathcal{O}'(\delta)=\mathcal{G}(\lambda)$ and hence $\cap_{\delta>0}
\left(\mathcal{O}'(\delta)\cap A\right)=\mathcal{G}(\lambda)\cap A$. Since $A\subseteq\mathbb{R}^d$ is compact, there exists $\delta^*>0$, such 
that $\mathcal{O}'(\delta^*)\cap A\subseteq \mathcal{O}\cap A$. Consider any solution of DI \eqref{appfstrdi}, $(\bm{\mathrm{x}}(\cdot),\bm{\mathrm{y}}(\cdot))$ 
starting at $(x^*,y^*)\in \mathcal{O}'(\delta^*)\cap A$ and satisfying for every $t\in\mathbb{R}$, $(\bm{\mathrm{x}}(t),\bm{\mathrm{y}}(t))
\in A$. Recall from section\ref{fstranal} that $(\bm{\mathrm{x}}(\cdot),\bm{\mathrm{y}}(\cdot))$ as above is such that for every $t\geq0$, $\bm{\mathrm{y}}(t
)=y^*$ and $\bm{\mathrm{x}}(\cdot)$ is a solution of DI, $\frac{dx}{dt}\in -(\partial \hat{J}_{\mu}(x)+C_{\mu}^Ty^*)=-\partial L(x,y^*)$ and 
hence descends along the potential $L(x,y^*)$. Therefore the solution $(\bm{\mathrm{x}}(\cdot),\bm{\mathrm{y}}(\cdot))$ remains within 
$\mathcal{O}'(\delta^*)\cap A$ which gives us that $\Phi^A(\mathcal{O}'(\delta^*)\cap A,[0,\infty))\subseteq \mathcal{O}\cap A$, where $\Phi^A$ 
denotes the flow of DI \eqref{appfstrdi} restricted to the set $A$. Thus assumption $(A11)$ is satisfied and from Lemma \ref{fstrtmp1} we get the
following result.

\begin{lemma}\emph{[Faster timescale convergence]}\label{idontknow}
For almost every $\omega$, $(X_{n}(\omega),Y_n(\omega))\to \mathcal{G}(\lambda)$ as $n\to\infty$.  
\end{lemma}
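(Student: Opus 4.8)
The plan is to observe that the recursion \eqref{algo}, instantiated with the explicit data $(\hat{J},C,w,\Pi)$ of the application, is a particular case of the two-timescale stochastic recursive inclusion \eqref{2tr}, and then to verify that assumptions $(A1)$--$(A9)$ together with $(A11)$ hold, so that Lemma \ref{fstrtmp1} applies verbatim. Most of this verification has already been assembled in the paragraphs preceding the statement, so the proof will mainly consist of collecting those facts and reading off the conclusion.

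First I would record that the set-valued maps $(x,y,s)\mapsto-(\partial\hat{J}(x,s)+C(s)^Ty)$ and $(x,y,s)\mapsto C(s)x-w(s)$ are SAMs: convexity and compactness of the subgradient sets and the affine values are immediate, the closed-graph property follows from that of $(x,s)\mapsto\partial J(x,s)$, and the pointwise linear bound follows from the assumed linear growth of $\partial J(\cdot,s)$ (and finiteness of $\mathcal{S}$); hence $(A1)$ and $(A2)$ hold. Continuity of $\Pi$ gives $(A3)$ and $(A4)$; the step sizes are chosen to meet $(A5)$; the subgradient estimation error $\{M^1_n\}$ satisfies $(A6)$ by hypothesis, and $\{M^2_n\}\equiv 0$ trivially satisfies $(A7)$. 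Assumption $(A8)$ is the standing stability hypothesis of the statement. For $(A9)$ I would invoke the properties of $\lambda(y)=\{x:-C_\mu^Ty\in\partial\hat{J}_\mu(x)\}$ established above: it is a singleton for each $y$ by strict convexity of $L(\cdot,y)$, it obeys $\|\lambda(y)\|\le K'(1+\|y\|)$, it is the global attractor of $\frac{dx}{dt}\in-(\partial\hat{J}_\mu(x)+C_\mu^Ty)$ by convexity and coercivity, and it is upper semicontinuous by \cite[ch.~1.2, Thm.~6]{aubindi}. For $(A11)$ I would reproduce the Lyapunov argument: along any solution of \eqref{appfstrdi} lying in a compact invariant set $A$ one has $\bm{\mathrm{y}}(t)\equiv y^*$ and $\bm{\mathrm{x}}(\cdot)$ descends the potential $L(\cdot,y^*)$, so using the neighborhoods $\mathcal{O}'(\delta)=\{(x,y):L(x,y)-L(\lambda(y),y)<\delta\}$ one finds $\delta^*>0$ with $\Phi^A(\mathcal{O}'(\delta^*)\cap A,[0,\infty))\subseteq\mathcal{O}\cap A$ for any open $\mathcal{O}\supseteq\mathcal{G}(\lambda)\cap A$.

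Having checked $(A1)$--$(A9)$ and $(A11)$, Lemma \ref{fstrtmp1} yields that for almost every $\omega$ the sequence $(X_n(\omega),Y_n(\omega))$ converges to a non-empty compact set contained in $\{(x,y)\in\mathbb{R}^d:x\in\lambda(y),\ y\in\mathbb{R}^{d_2}\}$, and since $\lambda(\cdot)$ is single-valued here this set is exactly $\mathcal{G}(\lambda)$, giving $(X_n(\omega),Y_n(\omega))\to\mathcal{G}(\lambda)$. The only step with real content beyond bookkeeping is the verification of $(A11)$ through the Lagrangian Lyapunov function $L(\cdot,y^*)$ — in particular the facts that $y\mapsto L(\lambda(y),y)$ is continuous (so $\mathcal{O}'(\delta)$ is genuinely open) and that $\cap_{\delta>0}\mathcal{O}'(\delta)=\mathcal{G}(\lambda)$ — so that is the part I expect to require the most care; everything else is a routine matching of the SAM/Marchaud hypotheses against the explicit maps.
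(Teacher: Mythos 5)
Your proposal is correct and follows essentially the same route as the paper: it verifies $(A1)$--$(A9)$ for the maps $-(\partial\hat{J}(x,s)+C(s)^Ty)$ and $C(s)x-w(s)$, establishes $(A11)$ via the Lagrangian Lyapunov neighborhoods $\mathcal{O}'(\delta)=\{(x,y):L(x,y)-L(\lambda(y),y)<\delta\}$ together with the descent of $\bm{\mathrm{x}}(\cdot)$ along $L(\cdot,y^*)$ on the invariant set, and then reads off the conclusion from Lemma \ref{fstrtmp1}. This matches the paper's argument, including the identification of the $(A11)$ verification as the only step with genuine content.
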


Theorem \ref{aptstsr} gives us that the iterates $\left\{Y_n\right\}_{n\geq0}$ in recursion \eqref{algo1} track the flow of DI,
\begin{equation}\label{algotmp1}
 \frac{dy}{dt}\in \cup_{\nu\in D(y)}\int_{\mathbb{R}^{d_1}\times \mathcal{S}}\left(C(s)x-w(s)\right)\nu(dx,ds),
\end{equation}
where, for every $y\in\mathbb{R}^{d_2}$, $D(y)$ is as in equation \eqref{measmapdefn}. Since for every $y\in\mathbb{R}^{d_2}$, $\lambda(y)$ is a 
singleton and since $\mu$ is the unique stationary distribution of the Markov chain given by transition kernel $\Pi(\cdot)(\cdot)$, we get that 
for every $y\in\mathbb{R}^{d_2}$, $D(y)=\delta_{\lambda(y)}\otimes\mu$. Therefore DI \eqref{algotmp1} takes the form,
\begin{equation}\label{algotmp2}
 \frac{dy}{dt}=C_{\mu}\lambda(y)-w_{\mu}.
\end{equation}

In order to analyze the asymptotic behavior of o.d.e. \eqref{algotmp2}, we need the following version of the envelope theorem. The 
proof of the envelope theorem below is similar to that in \cite{milgrom}.

\begin{lemma}\emph{[envelope theorem]}\label{et}
 Let $\bm{\mathrm{y}}:[0,T]\rightarrow\mathbb{R}^{d_2}$ be an absolutely continuous function. Let $\tilde{L}:\mathbb{R}^{d_1}\times[0,T]\rightarrow\mathbb{R}$ 
 be defined such that for every $(x,t)\in\mathbb{R}^{d_1}\times[0,T]$, $\tilde{L}(x,t):=L(x,\bm{\mathrm{y}}(t))$. Then,
 \begin{itemize}
  \item [(i)] for every $x\in\mathbb{R}^{d_1}$, $\tilde{L}(x,\cdot)$ is absolutely continuous and there exists $\mathcal{D}\subseteq [0,T]$ with 
  Lebesgue measure $T$ such that for every $t\in\mathcal{D}$, for every $x\in\mathbb{R}^{d_1}$, $\frac{\partial\tilde{L}(x,t)}{\partial t}$ 
  exists and $\frac{\partial \tilde{L}(x,t)}{\partial t}=\left\langle \frac{d\bm{\mathrm{y}}(\it{t})}{dt},C_{\mu}x-w_{\mu}\rm\right\rangle$.
  \item [(ii)] the function $V:[0,T]\rightarrow\mathbb{R}$ where for every $0\leq t\leq T$, $V(t):=\inf_{x\in\mathbb{R}^{d_1}}\tilde{L}(x,t)$ is 
  absolutely continuous. Further for any $0<t\leq T$, 
  \begin{equation*}
     V(t)=V(0)+\int_0^t\frac{\partial\tilde{L}(x,q)}{\partial q}\bigg{|}_{x=\lambda(\bm{\mathrm{y}}(q))}dq.
  \end{equation*}
\end{itemize}
\end{lemma}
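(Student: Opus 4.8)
The plan is to establish (i) first, then derive (ii) from it by an application of the maximum theorem together with absolute continuity. For part (i), fix $x \in \mathbb{R}^{d_1}$. The function $\tilde{L}(x,\cdot)$ is the composition of the affine-in-$y$ map $y \mapsto L(x,y) = \hat{J}_\mu(x) + \langle y, C_\mu x - w_\mu\rangle$ with the absolutely continuous curve $\bm{\mathrm{y}}(\cdot)$; since a linear (hence Lipschitz) function composed with an absolutely continuous function is absolutely continuous, $\tilde{L}(x,\cdot)$ is absolutely continuous, and wherever $\bm{\mathrm{y}}$ is differentiable we may apply the chain rule to get $\frac{\partial \tilde{L}(x,t)}{\partial t} = \langle \tfrac{d\bm{\mathrm{y}}(t)}{dt}, C_\mu x - w_\mu\rangle$. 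The one subtlety is that I want a \emph{single} exceptional null set that works simultaneously for all $x$: this is immediate here because $\bm{\mathrm{y}}$ is absolutely continuous so it is differentiable off a fixed set $[0,T]\setminus\mathcal{D}$ of measure zero, and the partial derivative formula then holds on all of $\mathcal{D}$ for every $x$ at once (the formula is linear in the fixed data $C_\mu x - w_\mu$, with no further constraint on $x$). This is exactly what makes the envelope statement work — the dependence on the parameter $t$ enters only through the multiplier $\bm{\mathrm{y}}(t)$ in a bilinear way.

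For part (ii), I would proceed as follows. First, $V(t) = \inf_{x} \tilde{L}(x,t) = L(\lambda(\bm{\mathrm{y}}(t)), \bm{\mathrm{y}}(t))$, using that for each $y$ the strictly convex coercive function $L(\cdot,y)$ attains its infimum at the unique point $\lambda(y)$ (established earlier in this section). Next I claim $V(\cdot)$ is Lipschitz on $[0,T]$, hence absolutely continuous: for $s,t \in [0,T]$, choosing a minimizer $x_t = \lambda(\bm{\mathrm{y}}(t))$ we get $V(s) - V(t) \le \tilde{L}(x_t,s) - \tilde{L}(x_t,t) = \langle \bm{\mathrm{y}}(s) - \bm{\mathrm{y}}(t), C_\mu x_t - w_\mu\rangle$, and by the reverse inequality (swap $s,t$) together with the bound $\|x_t\| = \|\lambda(\bm{\mathrm{y}}(t))\| \le K'(1 + \sup_{[0,T]}\|\bm{\mathrm{y}}\|) =: R$ (the linear growth of $\lambda$ and continuity of $\bm{\mathrm{y}}$ on the compact $[0,T]$ give a uniform bound), we obtain $|V(s) - V(t)| \le (\|C_\mu\| R + \|w_\mu\|)\,\|\bm{\mathrm{y}}(s) - \bm{\mathrm{y}}(t)\|$, and $\bm{\mathrm{y}}$ being absolutely continuous on a compact interval is in particular of bounded variation — to be safe I would instead bound $\|\bm{\mathrm{y}}(s)-\bm{\mathrm{y}}(t)\| \le \int_{[s,t]}\|\dot{\bm{\mathrm{y}}}\|$ which directly yields absolute continuity of $V$.

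With absolute continuity of $V$ in hand, it remains to identify $V'(t)$ for a.e.\ $t$. For $t \in \mathcal{D}$ where $V'(t)$ exists, write $V(t+h) - V(t) \le \tilde{L}(\lambda(\bm{\mathrm{y}}(t)), t+h) - \tilde{L}(\lambda(\bm{\mathrm{y}}(t)), t)$ and $V(t+h) - V(t) \ge \tilde{L}(\lambda(\bm{\mathrm{y}}(t+h)), t+h) - \tilde{L}(\lambda(\bm{\mathrm{y}}(t+h)), t)$; dividing by $h$ and letting $h \to 0$, both the upper bound and the lower bound converge to $\frac{\partial \tilde{L}(x,t)}{\partial t}\big|_{x = \lambda(\bm{\mathrm{y}}(t))} = \langle \dot{\bm{\mathrm{y}}}(t), C_\mu \lambda(\bm{\mathrm{y}}(t)) - w_\mu\rangle$ — for the upper bound this is part (i) applied at the fixed point $\lambda(\bm{\mathrm{y}}(t))$, and for the lower bound one additionally uses that $\lambda$ is continuous (single-valued u.s.c., established earlier) so $\lambda(\bm{\mathrm{y}}(t+h)) \to \lambda(\bm{\mathrm{y}}(t))$, together with joint continuity of $\frac{\partial\tilde L}{\partial t}$ in $(x,t)$ over the relevant compact set. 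Hence $V'(t) = \frac{\partial\tilde{L}(x,t)}{\partial t}\big|_{x=\lambda(\bm{\mathrm{y}}(t))}$ a.e., and integrating the absolutely continuous function $V$ gives the stated formula. The main obstacle is the lower-bound squeeze in this last step: one must be careful that the minimizer $\lambda(\bm{\mathrm{y}}(t+h))$ varies continuously and stays in a fixed compact set, so that the difference quotient of $\tilde L$ evaluated at the \emph{moving} point still converges to the partial derivative at the limiting point — this is where continuity (not just u.s.c.) of $\lambda$ and the uniform bound $R$ are essential.
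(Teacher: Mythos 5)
Your proposal is correct, and for part (i) and for the absolute continuity of $V$ it is essentially the paper's own argument (the paper restricts the infimum to the fixed ball $\|x\|\le K'(1+\alpha)$ via $(A9)$ rather than passing through the minimizer, but the resulting Lipschitz-type estimate $|V(t')-V(t)|\le\bigl(\sup_{\|x\|\le K'(1+\alpha)}\|C_\mu x-w_\mu\|\bigr)\|\bm{\mathrm{y}}(t')-\bm{\mathrm{y}}(t)\|$ is the same). Where you genuinely diverge is the identification of $V'$: you squeeze the difference quotient between its evaluation at the fixed minimizer $\lambda(\bm{\mathrm{y}}(t))$ (upper bound) and at the moving minimizer $\lambda(\bm{\mathrm{y}}(t+h))$ (lower bound), which obliges you to invoke continuity of $\lambda(\cdot)$ — available here because $\lambda$ is single-valued and u.s.c. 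The paper never moves the minimizer: at a point $q$ where $V'(q)$ exists and $q\in\mathcal{D}$, it uses the single inequality $V(q')\le\tilde L(\lambda(\bm{\mathrm{y}}(q)),q')$ once with $q'>q$ and once with $q'<q$, which yields $V'(q)\le \frac{\partial\tilde L(x,q)}{\partial q}\big|_{x=\lambda(\bm{\mathrm{y}}(q))}$ and the reverse inequality with the same fixed test point, so no continuity of $\lambda$ is needed at all. Both routes are valid; the paper's is a bit more economical and would survive even if $\lambda$ were only a measurable selection of minimizers, while yours is the standard Danskin-type squeeze.

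One caveat on your wording of the lower-bound limit: ``joint continuity of $\partial\tilde L/\partial t$ in $(x,t)$'' is not actually available, since $\frac{\partial\tilde L(x,t)}{\partial t}=\langle\dot{\bm{\mathrm{y}}}(t),C_\mu x-w_\mu\rangle$ and $\dot{\bm{\mathrm{y}}}$ is only an a.e.-defined integrable function, not continuous in $t$. What your argument really uses is the exact bilinear identity $\tilde L(x,t+h)-\tilde L(x,t)=\langle\bm{\mathrm{y}}(t+h)-\bm{\mathrm{y}}(t),C_\mu x-w_\mu\rangle$: dividing by $h$, the quotient at the moving point converges because $h^{-1}\bigl(\bm{\mathrm{y}}(t+h)-\bm{\mathrm{y}}(t)\bigr)\to\dot{\bm{\mathrm{y}}}(t)$ at $t\in\mathcal{D}$ and $C_\mu\lambda(\bm{\mathrm{y}}(t+h))-w_\mu\to C_\mu\lambda(\bm{\mathrm{y}}(t))-w_\mu$ by continuity of $\lambda$ and $\bm{\mathrm{y}}$. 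Phrased that way, your squeeze is airtight; as written, the appeal to joint continuity of the partial derivative is the one step that does not literally hold.
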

\begin{proof}
 \begin{itemize}
  \item [(i)] Since $\bm{\mathrm{y}}(\cdot)$ is absolutely continuous, it is differentiable almost everywhere and let $\mathcal{D}\subseteq [0,T]$ be 
  the set of $t\in[0,T]$, such that $\frac{d\bm{\mathrm{y}}(t)}{dt}$ exists. Then clearly the Lebesgue measure of $\mathcal{D}$ is $T$.
  
  Fix $x\in\mathbb{R}^{d_1}$. Then $L(x,\cdot)$ is a Lipschitz continuous function since for any $y',y''\in\mathbb{R}^{d_2}$, 
  \begin{align*}
   \left|L(x,y')-L(x,y'')\right|&= \left|\left\langle y'-y'',C_{\mu}x-w_{\mu}\right\rangle\right|\\
                                &\leq\left\|y'-y''\right\|\ \left\|C_{\mu}x-w_{\mu}\right\|\\
                                &=\beta_{x}\left\|y'-y''\right\|,
  \end{align*}
where $\beta_{x}:=\left\|C_{\mu}x-w_{\mu}\right\|$. Further $L(x,\cdot)$ is differentiable (i.e. totally differentiable since it is linear in $y$) 
and the total derivative is given by $\nabla_{y}L(x,y')=\left(C_{\mu}x-w_{\mu}\right)$ for every $y'\in\mathbb{R}^{d_2}$. Since 
$\tilde{L}(x,\cdot)$, is the composition of absolutely continuous function $\bm{\mathrm{y}}(\cdot)$ and a Lipschitz continuous function $L(x,\cdot)$, 
we have that $\tilde{L}(x,\cdot)$ is absolutely continuous. By \cite[Thm.~9.15]{rudin}, we have that for every $t\in\mathcal{D}$, 
$\frac{\partial\tilde{L}(x,t)}{dt}$ exists and $\frac{\partial\tilde{L}(x,t)}{dt}=\left\langle\frac{d\bm{\mathrm{y}}(\it{t'}\rm)}{dt}, 
C_{\mu}x-w_{\mu}\right\rangle$.

  \item [(ii)] Since $\bm{\mathrm{y}}(\cdot)$ is absolutely continuous, there exists $\alpha>0$ such that $\sup_{t\in[0,T]}\left\|\bm{\mathrm{y}}(t)
  \right\|\leq \alpha$. Further by assumption $(A9)$, for every $t\in [0,T]$, 
  \begin{equation*}
  V(t)=\inf_{\substack{x\in\mathbb{R}^{d_1}:\\ \|x\|\leq K'(1+\|\bm{\mathrm{y}}(t)\|)}}\tilde{L}(x,t)=\inf_{\substack{x\in\mathbb{R}^{d_1}:\\ \|x\|\leq K'
  (1+\alpha)}}\tilde{L}(x,t).
  \end{equation*}
  Therefore for every $0\leq t<t'\leq T$, 
  \begin{align*}
   \left|V(t')-V(t)\right|&\leq \sup_{\substack{x\in\mathbb{R}^{d_1}:\\ \|x\|\leq K'(1+\alpha)}}\left|\tilde{L}(x,t')-\tilde{L}(x,t)\right|\\
   &\leq\sup_{\substack{x\in\mathbb{R}^{d_1}:\\ \|x\|\leq K'(1+\alpha)}}\left|\int_t^{t'}\frac{\partial\tilde{L}(x,q)}{\partial t}dq\right|\\
   &=\sup_{\substack{x\in\mathbb{R}^{d_1}:\\ \|x\|\leq K'(1+\alpha)}}\left|\left\langle \bm{\mathrm{y}}(t')-\bm{\mathrm{y}}(t), \it{C_{\mu}x-w_{\mu}}\right
   \rangle\right|\\
   &\leq\left(\sup_{\substack{x\in\mathbb{R}^{d_1}:\\ \|x\|\leq K'(1+\alpha)}}\left\|C_{\mu}x-w_{\mu}\right\|\right)\left\|\bm{\mathrm{y}}(t')-\bm{\mathrm{y}}(t)\right\|.
  \end{align*}
  Now the absolute continuity of $V(\cdot)$ follows from absolute continuity of $\bm{\mathrm{y}}(\cdot)$. Since $V(\cdot)$ is absolutely continuous, 
  $\frac{dV(q)}{dq}$ exists for a.e. $q\in[0,T]$ and for any $0<t\leq T$, $V(t)=V(0)+\int_{0}^t\frac{dV(q)}{dq}dq$. Let $q\in(0,T)$ be such that 
  $\frac{dV(q)}{dq}$ exists and $q\in \mathcal{D}$. Then for $q'>q$, $V(q')-V(q)\leq \tilde{L}(\lambda(\bm{\mathrm{y}}(\it{q})),q')-\tilde{L}(\lambda(
  \bm{\mathrm{y}}(\it{q})),q)$. Therefore 
  the right hand derivative of $V(\cdot)$ at $q$ which is the same as $\frac{dV(q)}{dq}$ satisfies, $\frac{dV(q)}{dq}\leq 
  \frac{\partial\tilde{L}(x,q)}{\partial q}|_{x=\lambda(\bm{\mathrm{y}}(q))}$. Considering $q<q'$ and repeating the above argument gives us, 
  $\frac{\partial\tilde{L}(x,q)}{\partial q}|_{x=\lambda(\bm{\mathrm{y}}(q))}\leq\frac{dV(q)}{dq}$. Thus for a.e. $q\in[0,T]$, 
  $\frac{\partial\tilde{L}(x,q)}{\partial q}|_{x=\lambda(\bm{\mathrm{y}}(q))}=\frac{dV(q)}{dq}$ and since $V(\cdot)$ is absolutely continuous, 
  for any $0<t\leq T$,
  \begin{equation*}
   V(t)=V(0)+\int_{0}^t\frac{\partial\tilde{L}(x,q)}{\partial q}\bigg{|}_{x=\lambda(\bm{\mathrm{y}}(q))}dq.\qed
  \end{equation*}
 \end{itemize}
\end{proof}

Let $Q_{\mu}:\mathbb{R}^{d_2}\rightarrow\mathbb{R}$ be defined such that for $y\in\mathbb{R}^{d_2}$, 
$Q_{\mu}(y):=\inf_{x\in\mathbb{R}^{d_1}}L(x,y)=L(\lambda(y),y)$. The function $Q_{\mu}(\cdot)$ is the objective function of the dual of 
the optimization problem $\tilde{OP}(\mu)$ and is a concave function. By the strong duality theorem (see \cite[Prop.~5.3.3]{bertsekas}), the 
dual optimization problem given by, $\max_{y\in\mathbb{R}^{d_2}}Q_{\mu}(y)$ has at least one solution and let the set of solutions of the dual 
optimization problem be denoted by $\mathcal{Y}$. Further the strong duality theorem also gives us that for any $y\in\mathcal{Y}$ and for any 
$x\in\tilde{Z}$, $Q_{\mu}(y)=\hat{J}_{\mu}(x)$.

Let $\bm{\mathrm{y}}:\mathbb{R}\rightarrow \mathbb{R}^{d_2}$ be a solution of the o.d.e. \eqref{algotmp2} with initial condition $y\in\mathcal{Y}$. 
Then $\bm{\mathrm{y}}(\cdot)$ is absolutely continuous and for a.e. $t\in[0,\infty)$, $\frac{d\bm{\mathrm{y}}(t)}{dt}=C_{\mu}\lambda(\bm{\mathrm{y}}(t))
-w_{\mu}$. By Lemma \ref{et}$(ii)$ we have that for any $t\geq0$,
\begin{align}
 V(t)&=V(0)+\int_{0}^t\frac{\partial\tilde{L}(x,q)}{\partial q}\bigg{|}_{x=\lambda(\bm{\mathrm{y}}(q))}dq,\nonumber\\
     &=V(0)+\int_{0}^t\left\langle\frac{d\bm{\mathrm{y}}(q)}{dq},C_{\mu}\lambda(\bm{\mathrm{y}}(q))-w_{\mu}\right\rangle dq,\nonumber\\
     &=V(0)+\int_0^t\left\|C_{\mu}\lambda(\bm{\mathrm{y}}(\it{q}))-w_{\mu}\right\|^2dq.
\end{align}
Since $V(t)=Q_{\mu}(\bm{\mathrm{y}}(t))$ and $V(0)=Q_{\mu}(y)$, where $y\in\mathcal{Y}$, we get that $V(t)-V(0)\leq 0$. Hence for every $t\geq0$, 
$\int_0^t\left\|C_{\mu}\lambda(\bm{\mathrm{y}}(q))-w_{\mu}\right\|^2dq\leq 0$ which gives us that $\|C_{\mu}\lambda(\bm{\mathrm{y}}(t))-w_{\mu}\|=0$ for 
a.e. $t\in[0,\infty)$. Thus for any 
solution of o.d.e. \eqref{algotmp2}, $\bm{\mathrm{y}}(\cdot)$, with initial condition $y\in\mathcal{Y}$, we have that 
$C_{\mu}\lambda(y)-w_{\mu}=0$ and for every $t\geq0$, $\bm{\mathrm{y}}(t)=y$. Therefore $\mathcal{Y}\subseteq \left\{y\in
\mathbb{R}^{d_2}:C_{\mu}\lambda(y)-w_{\mu}=0\right\}$. Further by \cite[Prop.~5.3.3(ii)]{bertsekas}, any $y\in\mathbb{R}^{d_2}$, such that 
$C_{\mu}\lambda(y)-w_{\mu}=0$, is a solution of the dual optimization problem and hence $\mathcal{Y}=\left\{y\in\mathbb{R}^{d_2}:C_{\mu}
\lambda(y)-w_{\mu}=0\right\}$ (from this it also follows that $\mathcal{Y}$ is closed). 

In the theorem below we summarize the main convergence result associated with the recursion \eqref{algo}.

\begin{theorem}\emph{[Convergence to Lagrangian saddle points]}
\begin{itemize}
 \item [(i)] For any solution $\bm{\mathrm{y}}(\cdot)$ of the o.d.e. \eqref{algotmp2} with any initial condition $y_0\in\mathbb{R}^{d_2}$ which is 
 bounded for $t\geq0$ (that is $\sup_{t\geq0}\|\bm{\mathrm{y}}(\it{t}\rm)\|<\infty$), we have that as $t\to\infty$, $\inf_{y\in\mathcal{Y}}\|\bm{\mathrm{y}}
 (t)-y\|\to 0$.
 \item [(ii)] For any $y\in\mathcal{Y}$, $\lambda(y)$ is a solution of the optimization problem $\tilde{OP}(\mu)$ (that is 
 $\lambda(y)\in \tilde{Z})$.  
 \item [(iii)] If the iterates remain stable for almost every $\omega$ (that is $(A8)$ is satisfied), then, for almost every $\omega$, 
 \begin{itemize}
 \item [(a)] $Y_n(\omega)\to\mathcal{Y}$ as $n\to\infty$,
 \item [(b)] $\left(\begin{array}{c}X_n(\omega)\\Y_n(\omega)\end{array}\right)\to\cup_{y\in\mathcal{Y}}
 \left\{\left(\begin{array}{c}\lambda(y)\\y\end{array}\right)\right\}\subseteq\mathbb{R}^{d}$.
 \end{itemize}
\end{itemize}
\end{theorem}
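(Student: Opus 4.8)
The plan is to dispatch parts (i) and (ii) as deterministic optimization/ODE statements and then obtain part (iii) by feeding the faster-timescale conclusion of Lemma \ref{idontknow} together with the asymptotic pseudotrajectory property of Theorem \ref{aptstsr} through a LaSalle-type Lyapunov argument built from the envelope identity of Lemma \ref{et}. Part (ii) is the quickest: I would fix $y\in\mathcal{Y}$, so $C_\mu\lambda(y)-w_\mu=0$ and $\lambda(y)$ is feasible for $\tilde{OP}(\mu)$; the defining relation $-C_\mu^Ty\in\partial\hat J_\mu(\lambda(y))$ says $0\in\partial_x L(x,y)\big|_{x=\lambda(y)}$, so $\lambda(y)$ minimizes $L(\cdot,y)$ and $Q_\mu(y)=L(\lambda(y),y)=\hat J_\mu(\lambda(y))+\langle y,C_\mu\lambda(y)-w_\mu\rangle=\hat J_\mu(\lambda(y))$; since $y$ is dual optimal, strong duality (\cite[Prop.~5.3.3]{bertsekas}, already invoked above) equates $Q_\mu(y)$ with the optimal value of $\tilde{OP}(\mu)$, so the feasible point $\lambda(y)$ attains the optimum, i.e. $\lambda(y)\in\tilde Z$.

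For part (i), I would view \eqref{algotmp2} as the differential inclusion \eqref{stsrdi} with $\hat H_2(y)=\{C_\mu\lambda(y)-w_\mu\}$, a Marchaud map because $\lambda(\cdot)$ is continuous with linear growth, and use $Q_\mu$ as an ascending Lyapunov function. Applying Lemma \ref{et}(ii) to $\tilde L(x,t)=L(x,\bm{\mathrm{y}}(t))$ along any solution $\bm{\mathrm{y}}(\cdot)$ gives
\begin{equation*}
 Q_\mu(\bm{\mathrm{y}}(t))=Q_\mu(\bm{\mathrm{y}}(0))+\int_0^t\left\|C_\mu\lambda(\bm{\mathrm{y}}(q))-w_\mu\right\|^2\,dq ,
\end{equation*}
so $t\mapsto Q_\mu(\bm{\mathrm{y}}(t))$ is nondecreasing and, for a forward-bounded trajectory, bounded above by $\max Q_\mu<\infty$ (strong duality), hence convergent. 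The $\omega$-limit set $\omega(\bm{\mathrm{y}})$ is then non-empty, compact and invariant (\cite[Lemma~3.5]{benaim1}), $Q_\mu$ is constant on it, and invariance forces the integrand above to vanish a.e. along any solution contained in $\omega(\bm{\mathrm{y}})$; by continuity of $q\mapsto C_\mu\lambda(\bm{\mathrm{y}}(q))-w_\mu$ this yields $C_\mu\lambda(z)-w_\mu=0$ for every $z\in\omega(\bm{\mathrm{y}})$, i.e. $\omega(\bm{\mathrm{y}})\subseteq\mathcal{Y}$. Since $\bm{\mathrm{y}}(t)$ approaches its $\omega$-limit set, $\inf_{y\in\mathcal{Y}}\|\bm{\mathrm{y}}(t)-y\|\to0$.

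For part (iii), assume (A8); the text above has already verified (A1)--(A9) and (A11), so Theorem \ref{aptstsr} gives that, for a.e. $\omega$, $\bar y(\cdot)$ is an asymptotic pseudotrajectory of \eqref{algotmp2}, and by Theorem \ref{lsthm}(i) the limit set $L(\bar y)$ of \eqref{limsetstsr} is non-empty, compact and internally chain transitive. Rather than invoking (A10) (which can fail here, as $\mathcal{Y}$ is unbounded whenever $C_\mu$ lacks full row rank), I would run the Lyapunov argument of part (i) directly on $L(\bar y)$: it is invariant, $Q_\mu$ is constant on the compact internally chain transitive set $L(\bar y)$ by the Lyapunov-function characterization of such sets (\cite{benaim1}), and hence $L(\bar y)\subseteq\mathcal{Y}$. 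Because $Y_n=\bar y(t^s(n))$ and $\bar y(\cdot)$ is uniformly continuous with $\{\bar y(\cdot+t)\}_{t\geq0}$ relatively compact, $Y_n\to L(\bar y)\subseteq\mathcal{Y}$, which is (iii)(a). For (iii)(b) I would combine $(X_n,Y_n)\to\mathcal{G}(\lambda)$ from Lemma \ref{idontknow} with $Y_n\to L(\bar y)$, the boundedness of $\{Y_n\}$, and the uniform continuity of the single-valued map $\lambda$ on a fixed compact set, and check that $(X_n,Y_n)$ converges to the compact set $\{(\lambda(y),y):y\in L(\bar y)\}\subseteq\cup_{y\in\mathcal{Y}}\{(\lambda(y),y)\}$, which gives the claim. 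The step I expect to be the main obstacle is this Lyapunov/LaSalle core shared by parts (i) and (iii) --- proving that every invariant (equivalently, internally chain transitive) set of \eqref{algotmp2} lies inside $\mathcal{Y}$ --- which rests entirely on the envelope identity of Lemma \ref{et} that promotes $Q_\mu$ to a strict Lyapunov function off $\mathcal{Y}$; a secondary point requiring care is the possible unboundedness of $\mathcal{Y}$, which is why the convergence statements are phrased through the compact set $L(\bar y)$ instead of assumption (A10).
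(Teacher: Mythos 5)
Your proposal is correct, and parts (ii) and (iii)(b) coincide with the paper's argument (the paper cites \cite[Prop.~5.3.3(ii)]{bertsekas} directly where you compute $Q_{\mu}(y)=\hat{J}_{\mu}(\lambda(y))$ and invoke strong duality, and it disposes of (iii)(b) exactly as you do, via Lemma \ref{idontknow}). Where you genuinely diverge is in the dynamical core. For (i) the paper does not pass through $\omega$-limit sets: it uses the same envelope identity $V(t)-V(0)=\int_0^t\|C_{\mu}\lambda(\bm{\mathrm{y}}(q))-w_{\mu}\|^2dq$ and the bound $V(t)\leq\max Q_{\mu}$, but then runs a Barbalat-type contradiction — uniform continuity of $t\mapsto C_{\mu}\lambda(\bm{\mathrm{y}}(t))-w_{\mu}$ on the bounded trajectory produces infinitely many disjoint $\delta$-intervals on which the integrand exceeds $\epsilon^2/4$, blowing past the dual optimal value — whereas you argue via LaSalle on the (weakly) invariant $\omega$-limit set. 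For (iii)(a) the paper first uses part (i) on solutions lying inside the limit set $A=L(\bar y)$ to get $\mathcal{Y}\cap A\neq\emptyset$, then shows via sublevel sets of $Q_{\mu}$ and Lemma \ref{et}$(ii)$ that $\mathcal{Y}\cap A$ is an attracting set for the restricted flow $\Phi^A$, and concludes $A=\mathcal{Y}\cap A$ from \cite[Prop.~3.20]{benaim1}; you instead observe that $-Q_{\mu}$ is a strict Lyapunov function for $\mathcal{Y}$ with $Q_{\mu}(\mathcal{Y})$ a singleton and appeal to the Lyapunov characterization of internally chain transitive sets, which is shorter, does not need part (i) as an intermediate step, and sidesteps the attractor construction; the paper's route buys a more elementary, self-contained argument at the cost of extra bookkeeping. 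Two small points of care in your write-up: the inference in (iii)(a) "$Q_{\mu}$ constant on $L(\bar y)$ hence $L(\bar y)\subseteq\mathcal{Y}$" should be stated as the full conclusion of the Lyapunov characterization (containment and constancy come together), or else completed by the invariance-plus-envelope-identity step you already use in (i); and invariance of the $\omega$-limit set of a bounded solution of \eqref{algotmp2} (whose right-hand side is continuous but not necessarily Lipschitz) should be justified by noting that a bounded solution is an asymptotic pseudotrajectory of its own flow, so its limit set is internally chain transitive and hence invariant — \cite[Lemma~3.5]{benaim1} alone is the wrong citation for that fact.
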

\begin{proof}$\ $
\begin{itemize}
 \item [(i)] Let $\bm{\mathrm{y}}(\cdot)$ be a solution of the o.d.e. \eqref{algotmp2} with initial condition $y_0\in\mathbb{R}^{d_2}$ (assume $y_0\notin\mathcal{Y}$ 
 since, otherwise we know that for every $t\geq0$, $\bm{\mathrm{y}}(t)=y_0$ and hence the claim follows) such that $\sup_{t\geq0}\|\bm{\mathrm{y}}(\it{t}\rm)\|\leq M$ 
 for some $M>0$. Then $\bm{\mathrm{y}}(\cdot)|_{[0,\infty)}$ is uniformly continuous since for any $0\leq t<t'<\infty$,
 \begin{align*}
  \left\|\bm{\mathrm{y}}(t')-\bm{\mathrm{y}}(t)\right\|&=\left\|\int_{t}^{t'}\left(C_{\mu}\bm{\mathrm{y}}(q)-w_{\mu}\right)dq\right\|\\
                                                           &\leq\int_t^{t'}\left(\left\|C_{\mu}\right\|\left\|\bm{\mathrm{y}}(q)\right\|+
                                                           \left\|w_{\mu}\right\|\right)dq\\
                                                           &\leq\left(\left\|C_{\mu}\right\|M+\left\|w_{\mu}\right\|\right)(t'-t).
 \end{align*}
 The function $y\in\mathbb{R}^{d_2}\rightarrow\|C_{\mu}y-w_{\mu}\|$ is uniformly continuous and hence the function $t\in[0,\infty)\rightarrow 
 \|C_{\mu}\bm{\mathrm{y}}(\it{t})-w_{\mu}\|$ is uniformly continuous. Further by Lemma \ref{et}$(ii)$, for any $t>0$, $0\leq V(t)-V(0)\leq Q_{\mu}(y)-V(0)
 <\infty$ where $y\in\mathcal{Y}$. The claim that as $t\to\infty$, $\bm{\mathrm{y}}(t)\to\mathcal{Y}$ is equivalent to the claim that as 
 $t\to\infty$, $\|C_{\mu}\bm{\mathrm{y}}(t)-w_{\mu}\|\to0$. 
 
 Suppose there exists $\epsilon>0$, such that for every $T>0$, there exists $t\geq T$, such that $\|C_{\mu}\bm{\mathrm{y}}(t)-w_{\mu}\|>\epsilon$.
 From the uniform continuity of $t\to\|C_{\mu}\bm{\mathrm{y}}(t)-w_{\mu}\|$, there exists $\delta>0$ such that for every $t,t'\in[0,\infty)$ 
 satisfying $|t-t'|<\delta$, $\left|\left\|C_{\mu}\bm{\mathrm{y}}(t)-w_{\mu}\right\|-\left\|C_{\mu}\bm{\mathrm{y}}(t')-w_{\mu}\right\|\right|<
 \frac{\epsilon}{2}$. Therefore we can obtain a sequence $\{t_n\}_{n\geq1}$ such that for every $n\geq1$, $\delta<t_n<t_{n+1}-2\delta$ and for every 
 $t\in(t_n-\delta,t_n+\delta)$, $\|C_{\mu}\bm{\mathrm{y}}(t)-w_{\mu}\|>\frac{\epsilon}{2}$. Let $N$ be such that $\frac{2(Q_{\mu}(y)-V(0))}
 {\epsilon^2\delta}<N$ where $y\in\mathcal{Y}$. Then by Lemma \ref{et}$(ii)$, we get that,
 \begin{align*}
  Q_{\mu}(\bm{\mathrm{y}}(\it{t_{N+1}}))-V(\rm0)&=V(t_{N+1})-V(0)\\
                                 &=\int_0^{t_{N+1}}\left\langle\frac{d\bm{\mathrm{y}}(q)}{dq},C_{\mu}\lambda(\bm{\mathrm{y}}(q))-w_{\mu}\right\rangle dq\\
                                 &=\int_0^{t_{N+1}}\|C_{\mu}\bm{\mathrm{y}}(q)-w_{\mu}\|^2dq\\
                                 &\geq\sum_{n=1}^N \int_{t_n-\delta}^{t_n+\delta}\|C_{\mu}\bm{\mathrm{y}}(\it{q})-w_{\mu}\|^2dq\\
                                 &>N\left(\frac{\epsilon^2\delta}{2}\right)\\
                                 &> Q_{\mu}(y)-V(0)
 \end{align*}
which contradicts the fact that $V(t)-V(0)\leq Q_{\mu}(y)-V(0)$. Therefore $\lim_{t\to\infty}\|C_{\mu}\bm{\mathrm{y}}(\it{t})-w_{\mu}\|=0$.

\item [(ii)] Let $y\in\mathcal{Y}$. Then we know that $C_{\mu}\lambda(y)-w_{\mu}=0$ and hence $\lambda(y)$ is feasible for $\tilde{OP}(\mu)$. By definition of 
$\lambda(y)$, we have that for every $x\in\mathbb{R}^{d_1}$, $L(\lambda(y),y)\leq L(x,y)$. Now the claim follows from 
\cite[Prop.~5.3.3(ii)]{bertsekas}.
\item [(iii)] Let $\omega$ be such that Theorem \ref{lsthm} holds. 
\begin{itemize}
\item [(a)] Then by Theorem \ref{lsthm}(i) we know that there exists a non empty, compact set $A\subseteq \mathbb{R}^{d_2}$ such that as $n\to\infty$, 
$Y_n(\omega)\to A$. Further $A$ is internally chain transitive for the flow of o.d.e. \eqref{algotmp2} and hence is invariant. Let 
$\bm{\mathrm{y}}(\cdot)$ be a solution to o.d.e. \eqref{algotmp2} with initial condition in $A$ and for every $t\in\mathbb{R}$, $\bm{\mathrm{y}}(\it{t})\in A$.
Since $A$ is compact, $\sup_{t\geq0}\|\bm{\mathrm{y}}(\it{t})\|<\infty$ and hence by part $(i)$ of this theorem we get that $\bm{\mathrm{y}}(t)\to\mathcal{Y}$ 
as $t\to\infty$. Since for every $t\geq0$, $\bm{\mathrm{y}}(\it{t})\in A$, we get that $\mathcal{Y}\cap A\neq\emptyset$. Further for some $y\in 
\mathcal{Y}$, $\left(\cap_{\delta>0}\left\{y'\in\mathbb{R}^{d_2}:Q_{\mu}(y)-Q_{\mu}(y')<\delta\right\}\right)\cap A=\mathcal{Y}\cap A$. 
For any $\epsilon>0$, there exists $\delta_{\epsilon}>0$ such that $\left\{y'\in\mathbb{R}^{d_2}:Q_{\mu}(y)-Q_{\mu}(y')<\delta\right\}
\cap A\subseteq N^{\epsilon}(\mathcal{Y}\cap A)$ where $N^{\epsilon}(\cdot)$ denotes the $\epsilon$-neighborhood of a set. By using 
Lemma\ref{et}$(ii)$, it is easy to show that $\Phi^A\left(\left\{y'\in\mathbb{R}^{d_2}:Q_{\mu}(y)-Q_{\mu}(y')<\delta\right\}\cap A,[0,\infty)\right)
\subseteq N^{\epsilon}(\mathcal{Y}\cap A)$ where $\Phi^A$ denotes the flow of o.d.e. \eqref{algotmp2} restricted to set $A$ 
(see section \ref{diatls}). Therefore $\mathcal{Y}\cap A$ is an attracting set for the flow $\Phi^A$. From 
\cite[Prop.~3.20]{benaim1} we get that $\mathcal{Y}\cap A=A$. Therefore as $n\to\infty$, $Y_n(\omega)\to\mathcal{Y}$.
\item [(b)] Follows from part $(iii)(a)$ of this theorem and Lemma \ref{idontknow}.\qed
\end{itemize}
\end{itemize}
\end{proof}

\section{Conclusions and directions for future work}
\label{cadffw}
We have presented a detailed analysis of a two timescale stochastic recursive inclusion with set-valued drift functions and in the presence of 
non-additive iterate dependent Markov noise with non-unique stationary distributions. Analysis in section \ref{recanal} shows us that the 
asymptotic behavior of the two timescale recursion \eqref{2tr} is such that the faster timescale iterates in recursion 
\eqref{fstr}, track the flow of DI \eqref{fstrdi} for some fixed value of the slower timescale variable and the slower timescale iterates track 
the flow of DI \eqref{stsrdi}. The assumptions under which the two timescale recursion is studied in this paper is weaker than those in current 
literature. Recursions with such behavior are often required to solve nested minimization problems which arise in machine learning and optimization. 
A special case of constrained convex optimization with linear constraints is considered as an application where the objective function is not assumed 
to be differentiable and further the objective function and constraints are averaged with respect to stationary distribution of an underlying 
Markov chain. When the transition law and hence the stationary distribution is not known in advance, a primal descent-dual ascent algorithm as 
in recursion \eqref{algo} can be implemented with the knowledge of the sample paths of the underlying Markov chain and the analysis presented in 
this paper guarantees convergence to an $\epsilon$-optimal solution for a user specified choice of $\epsilon$.

We outline a few important directions for future work.
\begin{itemize}
 \item [(1)] For two timescale stochastic approximation schemes with set-valued mean fields, to the best of our knowledge there are no sufficient conditions for stability in 
 current literature. We believe extensions of the stability result for single timescale stochastic approximation as in \cite{borkarmeyn,arunstab}, 
 can be made to the case of two timescale recursions. Another approach to stability could be along the lines of \cite{mouli}.
 \item [(2)] In many applications the iterates are projected at each time step and are ensured to remain within a compact, convex set. Such 
 projections often arise due to inherent need of the application or is used to ensure stability. Such projected schemes have a tendency to introduce 
 spurious equilibrium points at the boundary of the feasible set. Further complications arise due to the presence of Markov noise terms since 
 the projection map is most of the time not differentiable but only directional derivatives are known to exist. Such projected stochastic approximation schemes for single-valued case without 
 Markov noise component are analyzed in \cite{dupuis} and should serve as a basis for analyzing more general frameworks with projection.
 \item [(3)] In some applications arising in reinforcement learning,  the noise terms are not Markov by themselves, but their lack of Markov property comes through the dependence on a control sequence. Under such controlled Markov noise assumption, two timescale stochastic approximation scheme has been analyzed in \cite{prasan} but with single-valued, Lipschitz continuous 
 drift functions. Extending the analysis presented in this paper to the case with set-valued drfit function and controlled Markov noise assumption is straightforward and requires no major change in the overall flow of the analysis. This extension allows one to analyse the asymptotic behavior of a larger class of reinforcement learning algorithms (see \cite{leslep}).
 \item [(4)] Several other applications, such as two timescale controlled stochastic approximation, two timescale approximate drift problem also can be analyzed with the help of the results presented in this  paper (see \cite[ch.~5.3]{borkartxt} for definitions of the above).
\end{itemize}

\appendix

\section{Proof of Lemma \ref{lim1}} 
\label{prflim1}
Fix $\omega\in\Omega_1$, $l\geq1$ and $T>0$. We prove the claim along the sequence $\{t^s(n)\}_{n\geq1}$ from which the claim of Lemma \ref{lim1}
easily follows.
 
 Fix $n\geq0$. Let $\tau^2(n,T):=\min\left\{m>n:t^s(m)\geq t^s(n)+T\right\}$. Let $q\in[0,T]$. Then, there exists $k$ such that $t^s(n)+q\in
 [t^s(k),t^s(k+1))$ and $n\leq k\leq \tau^2(n,T)-1$. By definition of $\bar{y}(\cdot)$ and $\tilde{y}^{(l)}(\cdot;t^s(n))$, we have that, 
 $\bar{y}(t^s(n)+q)=\alpha y_k+(1-\alpha)y_{k+1}$ and $\tilde{y}^{(l)}(q;t^s(n))=\alpha\tilde{y}^{(l)}(t^s(k)-t^s(n);t^s(n))+(1-\alpha)
 \tilde{y}^{(l)}(t^s(k+1)-t^s(n);t^s(n))$ where $\alpha=\frac{t^s(k+1)-t^s(n)-q}{t^s(k+1)-t^s(k)}$. Since $\tilde{y}^{(l)}(\cdot;t^s(n))$ is a 
 solution of the o.d.e. \eqref{psdode1}, we have that, for every $k\geq n$, $\tilde{y}^{(l)}(t^s(k)-t^s(n);t^s(n))=y_n+\sum_{j=n}^{k-1}b(j)
 h^{(l)}_2(x_j,y_j,s^{(2)}_j,u^{(l)}_j)$ and by Lemma \ref{paramuse}, we have that, for every $k\geq n$, $y_k=\bar{y}(t^s(k))=y_n+\sum_{j=n}^{k-1}b(j)
 h^{(l)}_2(x_j,y_j,s^{(2)}_j,u^{(l)}_j)+\sum_{j=n}^{k-1}b(j)m^{(2)}_{j+1}$. Thus, 
 \begin{align*}
 \|\bar{y}(t^s(n)+q)-\tilde{y}^{(l)}(q;t^s(n))\|&\leq \|\alpha\sum_{j=n}^{k-1}b(j)m^{(2)}_{j+1}+(1-\alpha)
                              \sum_{j=n}^{k}b(j)m^{(2)}_{j+1}\|\\
                              &\leq\alpha\|\sum_{j=n}^{k-1}b(j)m^{(2)}_{j+1}\|+(1-\alpha)\|\sum_{j=n}^{k}b(j)m^{(2)}_{j+1}
                              \|\\
                              &\leq \sup_{n\leq k\leq \tau(n,T)}\|\sum_{j=n}^{k}b(j)m^{(2)}_{j+1}\|.
 \end{align*}
 Since the r.h.s. of the above inequality is independent of $q\in[0,T]$, we have, $\\\sup_{0\leq q\leq T}\|\bar{y}(t^s(n)+q)-
 \tilde{y}^{(l)}(q;t^s(n))\|\leq\sup_{n\leq k\leq \tau^2(n,T)}\|\sum_{j=n}^{k}b(j)m^{(2)}_{j+1}\|$. Therefore, 
 $\\\lim_{n\to\infty}\sup_{0\leq q\leq T}\|\bar{y}(t^s(n)+q)-\tilde{y}^{(l)}(q;t^s(n))\|\leq\lim_{n\to\infty}
 \sup_{n\leq k\leq \tau^2(n,T)}\|\sum_{j=n}^{k}b(j)m^{(2)}_{j+1}\|$. Now the claim follows follows from assumption $(A7)$.
 
 \section{Proof of Lemma \ref{lim2}}
 \label{prflim2}
 Fix $l\geq1$, $\omega\in\Omega_1$. By assumption $(A8)$, we know that there exists $r>0$ such that 
 $\sup_{n\geq0}(\| x_n\|+\|y_n\|)\leq r$ and hence $\sup_{t\geq0}\tilde{y}^{(l)}(0;t)=\sup_{t\geq0}\bar{y}(t)\leq r$. 
 
 For any $t\geq0$, let $[t]:=\max\left\{n\geq0:t^s(n)\leq t\right\}$. For every $t\geq0$ and $q_1,q_2\in[0,\infty)$ (w.l.o.g. assume $q_1<q_2$) 
 we have, 
 \begin{align*}
  \|\tilde{y}^{(l)}(q_1;t)-\tilde{y}^{(l)}(q_2;t)\|&=\|\int_{q_1}^{q_2}h^{(l)}_2(x_{[t+q]},y_{[t+q]},s^{(2)}_{[t+q]},u^{(l)}_{[t+q]})dq\|\\
                                         &\leq\int_{q_1}^{q_2}\| h^{(l)}_2(x_{[t+q]},y_{[t+q]}s^{(2)}_{[t+q]},u^{(l)}_{[t+q]})\| dq\\
                                         &\leq\int_{q_1}^{q_2}K^{(l)}(1+\| x_{[t+q]}\|+\|y_{[t+q]}\|)dq\\
                                         &\leq C^{(l)}(q_2-q_1),
 \end{align*}
where $C^{(l)}:=K^{(l)}(1+r)$ and $r>0$ is such that, $\sup_{n\geq0}(\| x_n\|+\|y_n\|)\leq r$. Thus 
$\left\{\tilde{y}^{(l)}(\cdot;t)\right\}_{t\geq0}$ is an equicontinuous family. Now the claim follows from Arzella-Ascoli theorem.

\bibliographystyle{IEEEtran}
\bibliography{Ref}
\end{document}